\pgfplotsset{compat=1.18}
\tikzset{>=latex}
\newcommand{\vperp}{W}
\newcommand{\bb}{\mathbb}
\newcommand{\GL}{\text{GL}}
\newcommand{\gl}{\mathfrak{gl}}
\newcommand{\gm}{\mathfrak{g}}
\theoremstyle{plain}
\newtheorem{lemma}{Lemma}
\newtheorem{proposition}[lemma]{Proposition}
\newtheorem{theorem}[lemma]{Theorem}
\theoremstyle{definition}
\newtheorem*{remark}{Remark}
\newcommand{\SO}{\mathop{\rm SO}}
\newcommand{\RR}{\mathbb{R}}
\renewcommand{\d}{\partial}
\newcommand{\ISO}{\operatorname{ISO}}
\newcommand{\g}{\mathfrak{g}}
\newcommand{\ZZ}{\mathbb{Z}}
\newcommand{\ann}{\operatorname{ann}}
\newcommand{\End}{\operatorname{End}}
\newcommand{\Hom}{\operatorname{Hom}}
\newcommand{\coker}{\operatorname{coker}}
\newcommand{\ad}{\operatorname{ad}}
\newcommand{\im}{\operatorname{im}}
\newcommand{\gr}{\operatorname{gr}}
\newcommand{\into}{\hookrightarrow}
\newcommand{\onto}{\twoheadrightarrow}
\newcommand{\isom}{\xrightarrow{\cong}}
\newcommand{\so}{\mathfrak{so}}
\newcommand{\V}{\mathbb{V}}
\newcommand{\rank}{\operatorname{rank}}
\newcommand{\cK}{\mathscr{K}}
\newcommand{\cT}{\mathscr{T}}
\newcommand{\Ort}{\operatorname{O}}
\newcommand{\NN}{\mathbb{N}}
\newcommand{\EE}{\mathbb{E}}
\newcommand{\FF}{\mathbb{F}}
\newcommand{\id}{\operatorname{id}}
\newcommand{\pr}{\operatorname{pr}}
\newcommand{\Ad}{\operatorname{Ad}}
\newcommand{\tr}{\operatorname{tr}}
\newcommand{\eX}{\mathcal{X}}
\newcommand{\eL}{\mathcal{L}}
\definecolor{darkergreen}{rgb}{0.0, 0.5, 0.0}
\newcommand{\JR}[1]{\textcolor{black}{#1}}
\newcommand{\IR}[1]{\textcolor{black}{#1}}
\numberwithin{equation}{section}
\title{\bf $p\;$-brane Galilean and Carrollian Geometries and Gravities}
\author{}
\date{}
\begin{document}

\begin{flushright}
\small
\today\\
\normalsize
\end{flushright}
{\let\newpage\relax\maketitle}
\maketitle
\def\equationautorefname~#1\null{eq.~(#1)\null}
\def\tableautorefname~#1\null{tab.~(#1)\null}

\vspace{0.8cm}

\begin{center}
\renewcommand{\thefootnote}{\alph{footnote}}
{\sl\large E.~A.~Bergshoeff$^{~1}$}\footnote{Email: {\tt e.a.bergshoeff[at]rug.nl}},
{\sl\large J.~M.~Figueroa-O'Farrill$^{~2}$}\footnote{Email: {\tt j.m.figueroa[at]ed.ac.uk}},
{\sl\large K.~S.~van~Helden$^{~1,3}$}\footnote{Email: {\tt k.s.van.helden[at]rug.nl}},\\[.1truecm]
{\sl\large J.~Rosseel$^{~4}$}\footnote{Email: {\tt rosseelj[at]gmail.com}},
{\sl\large I.~J.~Rotko$^{~1}$}\footnote{Email: {\tt i.j.rotko[at]student.rug.nl}},
{\sl\large T.~ter Veldhuis$^{~1}$}\footnote{Email: {\tt terveldhuis@macalester.edu}}\footnote{On leave of absence from Macalester College, Saint Paul (USA).}
\setcounter{footnote}{0}
\renewcommand{\thefootnote}{\arabic{footnote}}

\vspace{0.5cm}
${}^1${\it Van Swinderen Institute, University of Groningen\\
Nijenborgh 4, 9747 AG Groningen, The Netherlands}\\
\vskip .2truecm
${}^2${\it  School of Mathematics and Maxwell Institute,\\
The University of Edinburgh, Edinburgh EH9 3FD, Scotland, UK}\\
\vskip .2truecm
${}^3${\it Bernoulli Institute, University of Groningen\\
Nijenborgh 9, 9747 AG Groningen, The Netherlands}\\
\vskip .2truecm
${}^4${\it Division of Theoretical Physics, Rudjer Bo\v{s}kovi\'c Institute,\\
Bijeni\v{c}ka 54, 10000 Zagreb, Croatia}\\

\vspace{1.8cm}

{\bf Abstract}
\end{center}
\begin{quotation}
{\small \noindent
  We study $D$-dimensional $p$-brane Galilean geometries via the
  intrinsic torsion of their adapted connections.  These
  non-Lorentzian geometries are examples of $G$-structures whose
  characteristic tensors consist of two degenerate ``metrics'' of ranks $(p+1)$ and
  $(D-p-1)$.  We carry out the analysis in two different ways.  In one way,
  inspired by Cartan geometry, we analyse in detail the space of intrinsic
  torsions (technically, the cokernel of a Spencer differential) as a
  representation of $G$, exhibiting for generic $(p,D)$ five classes of
  such geometries, which we then proceed to interpret geometrically.
  We show how to re-interpret this classification in terms of
  ($D-p-2$)-brane Carrollian geometries.  The same result is recovered
  by methods inspired by similar results in the physics literature:
  namely by studying how far an adapted connection can be determined
  by the characteristic tensors and by studying which components of
  the torsion tensor do not depend on the connection.  As an
  application, we derive a gravity theory with underlying $p$-brane
  Galilean geometry as a non-relativistic limit of Einstein--Hilbert
  gravity and discuss how it gives a gravitational realisation of some
  of the intrinsic torsion constraints found in this paper. Our
  results also have implications for gravity theories with an
  underlying ($D-p-2$)-brane Carrollian geometry.}
\end{quotation}

\newpage

\tableofcontents

\section{Introduction}
\label{sec:introduction}

During the last hundred years or so, our description of spacetime,
born out of Einstein's theories of Special and General Relativity, has
been based on Lorentzian geometry, a variant of Riemannian
geometry in which the metric tensor is no longer positive-definite.
This seemingly insignificant change has far-reaching consequences and
makes Lorentzian geometry in many ways richer than Riemannian
geometry, cf. the richer variety of submanifolds that a Lorentzian
manifold can admit.  Nevertheless, for all their differences,
Riemannian and Lorentzian geometry share one important feature: the
so-called Fundamental Theorem of Riemannian Geometry, which states
that there exists a unique metric-compatible torsion-free affine
connection: the Levi-Civita connection.  Properly understood, this
theorem contains two separate statements. The first, which follows
paying close attention to the usual proof of the theorem, is that a
metric-compatible affine connection is uniquely determined by its
torsion.  The second says that any metric-compatible connection,
possibly with torsion, can be \emph{uniquely} modified (by the addition of a
\emph{unique} contorsion tensor) so that the resulting connection is
still metric-compatible but now also torsion-free.

There are two ways we can understand this result. This dichotomy is
reflected in the two seemingly different approaches followed in this
paper to the study of non-Lorentzian $p$-brane geometries. The
approach we normally teach undergraduate students in a differential
geometry course is by explicit construction of the Levi-Civita
connection -- we actually solve for the Levi-Civita connection in terms 
of the metric tensor. This can be done using notation with or without indices.
The version with indices results in the famous expression for the
Christoffel symbols of the metric $g$:
\begin{equation}
  \label{eq:Christoffel}
  \Gamma_{\lambda\rho}{}^\mu = \tfrac12 g^{\mu\nu} \left( \d_\lambda
    g_{\nu\rho} + \d_\rho g_{\lambda\nu} - \d_\nu g_{\lambda\rho} \right),
\end{equation}
where as usual $g_{\mu\nu}$ and $g^{\mu\nu}$ are the components of the
metric tensor and its inverse relative to local coordinates $x^\mu$.
The Christoffel symbols enter in the definition of an affine
connection via $\nabla_\mu \d_\nu = \Gamma_{\mu\nu}{}^\rho \d_\rho$.
The version without indices results in the equally famous Koszul
formula and says that
\begin{equation}
  \label{eq:Koszul}
  2 g(\nabla_X Y, Z) = X g(Y,Z) + Y g(X,Z) - Z g(X,Y) + g(Z,[X,Y]) -
  g(Y,[X,Z]) - g(X,[Y,Z]),
\end{equation}
where $X,Y,Z$ are any three vector fields on the manifold. Both
expressions have their virtues and which one one prefers is largely a question of
aesthetics.

There is, however, a different approach to prove the Fundamental Theorem
of Riemannian Geometry, which does not require deriving the explicit
expression for the connection and which is normally not taught to
undergraduates. This approach is based on Cartan geometry, where we work locally relative to a chosen local (pseudo)orthonormal frame $e_a$ with canonically dual
coframe $\theta^a$. The coframe defines a fibrewise isomorphism of
every tangent space with $\RR^D$, with $D$ the dimension of the
manifold. The metric tensor has a local expression
$g = \eta_{ab}\theta^a\theta^b$ where $\eta_{ab} = g(e_a,e_b)$ is
constant and defines an inner product on $\RR^D$. Let $G$ denote (the
identity component of) the subgroup of $\GL(D,\RR)$ which preserves
the inner product $\eta_{ab}$: in the Riemannian case $G$ will be the rotation group $\SO(D)$, in the Lorentzian case $G$ is the proper orthochronous Lorentz group
$\SO(D-1,1)_0$, et cetera. A metric-compatible
connection is locally given by a one-form $\omega$ with values in the
Lie algebra $\g$ of $G$. The torsion tensor of the connection is
described in this language by the torsion $2$-form $\Theta$ defined
via Cartan's first structure equation:
\begin{equation}
  \label{eq:fse}
  \Theta^a = d\theta^a + \omega^a{}_b \wedge \theta^b,
\end{equation}
where the condition that $\omega$ is $\g$-valued simply says that
$\omega_{ab} := \eta_{ac}\omega^c{}_b = - \omega_{ba}$.  Let $\omega'$
be the connection one-form of another metric-compatible connection.
Then $\kappa:= \omega' - \omega$ is a one-form with values in $\g$,
but unlike $\omega$ or $\omega'$, which transform as gauge fields under
a change of local (pseudo)orthonormal frame, $\kappa$ transforms
tensorially.  It then follows from the structure
equation~\eqref{eq:fse} that the torsion $2$-forms of $\omega$ and
$\omega'$ are related via
\begin{equation}
  \Theta'{}^a = \Theta^a + \kappa^a{}_b \wedge \theta^b.
\end{equation}
The linear map sending $\omega' - \omega$ to $\Theta' - \Theta$ is an
instance of a so-called Spencer differential, and it sends $\kappa$,
which is a local $1$-form with values in $\g$ to $\kappa^a{}_b \wedge
\theta^b$ which is a local $2$-form with values in the so-called fake
tangent bundle.  This is induced from a linear-algebraic map
\begin{equation}
  \label{eq:spencer}
  \d : \Hom(\RR^D,\g) \to \Hom(\wedge^2\RR^D,\RR^D)
\end{equation}
where
\begin{equation}
  \label{eq:specer-explicit}
  (\d\kappa)(v\wedge w) = \kappa(v)w - \kappa(w)v.
\end{equation}
The kernel of $\d$ consists of those $\kappa$ which obey $\kappa^a{}_b
\wedge \theta^b = 0$; that is, those $\kappa$ which do not change the
torsion $2$-form.  The image of $\d$ are the possible changes of the
torsion due to changes in the connection.  In this language, the
Fundamental Theorem of Riemannian Geometry simply says that $\d$ is an
isomorphism and, as intimated earlier, it consists of two
statements\footnote{In this case, the two vector spaces in
  Equation~\eqref{eq:spencer} have the same dimension, so injectivity
  implies surjectivity (and viceversa) by the Rank Theorem.}: $\d$ is
surjective, so that given any connection $\omega$ we can find some
$\kappa$ so that $\Theta' = 0$ (existence of a torsion-free
metric-compatible connection) and $\d$ is injective, so that this
$\kappa$ is unique.  In other words, there exists a unique
metric-compatible torsion-free connection.

Comparing this latter proof with the standard one, we see that in some
sense the Cartan-theoretic proof boils down to linear algebra, with
the heavy lifting from the linear algebra to the differential
geometry being done by Cartan's theory of moving frames, which is the
traditional name for what is now called the theory of $G$-structures,
itself a special case of the notion of a Cartan geometry.

It is by now well-understood that Lorentzian geometry stops being an
adequate description of spacetimes in many limiting
scenarios. For instance, this is the case for the so-called non- or
ultra-relativistic limits, where the local speed of light becomes
infinite or zero, respectively.  In such limits, the metric
degenerates and the resulting geometries, while no longer being
Lorentzian, are nevertheless still instances of $G$-structures, only
with a different choice of subgroup $G$ of $\GL(D,\RR)$.  (In the case
of geometries obtained via limits from Lorentzian geometry, the new
$G$ is a contraction of the Lorentz group.)  Perhaps the
simplest examples of these non-Lorentzian geometries are Galilean,
which is the relevant geometry for the non-relativistic limit, and
Carrollian, which is the relevant geometry for the ultra-relativistic
limit.  The relevant subgroups $G \subset \GL(D,\RR)$ in these two
cases are abstractly isomorphic to the Euclidean group $\ISO(D-1)$ and
related by transposition inside $\GL(D,\RR)$, but crucially they are
not conjugate in $\GL(D,\RR)$ and hence describe truly different
geometries.\footnote{This serves to emphasise the fact that the $G$ in
  a $G$-structure is not an abstract group but a group $G$ together
  with a faithful representation in $\RR^D$ or, equivalently, a
  subgroup of $\GL(D,\RR)$, analogously to the fact that the holonomy
  group of a connection should more properly be referred to as the
  holonomy representation.}  Non-Lorentzian geometries are defined by
characteristic tensor fields other than a metric.  These tensor fields
are constructed out of $G$-invariant tensors by the use of the moving
frames and coframes in precise analogy to how the metric tensor in
(pseudo)Riemannian geometry is constructed as
$g = \eta_{ab}\theta^a\theta^b$ with $\eta_{ab}$ an invariant tensor
of $G$.  (In fact, that's how $G$ was defined.)

As an example, let us consider Galilean geometry. In the same way that
a Lorentzian manifold is said to look locally like Minkowski
spacetime, a Galilean manifold $M$ looks locally like Galilei
spacetime.  The characteristic tensors replacing the Lorentzian metric
are now a clock one-form $\tau \in \Omega^1(M)$, whose square may be
interpreted as a very degenerate rank-$1$ ``metric'' on $M$, and a
ruler, which is a positive-definite metric on the distribution
annihilated by $\tau$.  The ruler so described is not actually a
tensor field, being a section of a quotient of a tensor bundle, so
instead we replace it by a ``spatial co-metric'' $h$, a
positive-semidefinite metric on one-forms which has rank $D-1$ (so it
is degenerate) and obeys $h(\tau,\alpha)=0$ for all one-forms
$\alpha$.  The triple $(M,\tau,h)$ is sometimes called a weak
Galilean structure, reserving the unadorned name Galilean structure
for a quartet $(M,\tau,h,\nabla)$ with $\nabla$ an affine
connection which is adapted to the $G$-structure and hence compatible
with the characteristic tensors in that $\nabla\tau = 0$ and
$\nabla h = 0$. But therein lies one crucial difference between
Lorentzian and non-Lorentzian (e.g., Galilean) geometries: there is no
non-Lorentzian analogue of the Fundamental Theorem of Riemannian
Geometry.

The way this fact manifests itself in our two approaches to the proof
of the Fundamental Theorem of Riemannian Geometry is as follows.  In
the first approach, one simply cannot solve for the compatible
connection in terms of $\tau$ and $h$.  There is always an
ambiguity and hence a choice to be made.  In the second approach this
follows from the linear algebraic fact that the Spencer differential
is no longer an isomorphism: it may have nontrivial kernel and/or may
fail to be surjective.  In the case of Galilean and Carrollian
$G$-structures, the two vector spaces in Equation~\eqref{eq:spencer}
have the same dimension, so having a nontrivial kernel implies that it
is not surjective and vice versa.  In the first approach, the lack of
surjectivity manifests itself in components of the torsion which do
not actually depend on the connection one-form.   Such components
signal the existence of nontrivial intrinsic torsion, which is more
properly defined as the cokernel of the Spencer differential: the
quotient of the space of torsions by the image of the Spencer
differential.

Since the Spencer differential is $G$-equivariant, both its kernel and
its cokernel are representations of $G$.  They are typically not
irreducible, but neither are they typically fully reducible.  Every
subrepresentation of $G$ in the cokernel of the Spencer
differential defines an intrinsic torsion class.  This is not something
we are familiar with from Lorentzian geometry, since one consequence
of the Fundamental Theorem of Riemannian Geometry is that (pseudo)Riemannian
$G$-structures have zero intrinsic torsion.  In contrast, as shown in
\cite{Figueroa-OFarrill:2020gpr} (but going back to the work of
Künzle~\cite{MR334831}), in generic dimension there are three intrinsic
torsion classes of Galilean geometries, typically called torsion-free,
twistless torsional and torsional \cite{Christensen:2013lma}.  Another
side of the same coin is that there is not a unique compatible affine
connection to a given Galilean structure, hence we cannot solve for
the connection just from a knowledge of the clock and ruler: the
ambiguity is an arbitrary $2$-form.

The study of intrinsic torsion for Galilean and Carrollian
$G$-structures are by now well understood (see, e.g.,
\cite{Figueroa-OFarrill:2020gpr}) and the Galilean case will be
briefly reviewed in our two formalisms below.  The aim of this paper
is to initiate the systematic study of intrinsic torsion in more
general types of non-Lorentzian geometries, collectively called
$p$-brane non-Lorentzian geometries, to be defined below, which can
often be obtained from limits of Lorentzian geometry in which the
local speed of light is sent to infinity or zero in some of the
spatial dimensions, while keeping it finite and nonzero in the
remaining dimensions. They form the natural spacetimes in which non-
or ultra-relativistic $p$-dimensional objects propagate.  For instance,
the target spacetimes of non-relativistic string theory
\cite{Gomis:2000bd,Danielsson:2000gi} exhibit String Newton--Cartan
geometry \cite{Andringa:2012uz,Bidussi:2021ujm,Bergshoeff:2022fzb}, an
example of $p$-brane non-Lorentzian geometry with $p=1$.  Here, we
will confine ourselves to $p$-brane Galilean and Carrollian geometry
and mostly focus on the $p$-brane Galilean case; although we
will explain how to read off the results for $p$-brane Carrollian
geometry from those of ($D-p-2$)-brane Galilean structures (cf. the
formal duality map of \cite{Barducci:2018wuj,Bergshoeff:2020xhv}).

In this paper, we will study $p$-brane Galilean geometry in both a
Cartan-geometric way in the language of $G$-structures, but also in a more
traditional treatment that is similar to discussions of the $p=0$ case
that have previously appeared in the physics literature.  The group
$G$ defining the $G$-structure is isomorphic to
\begin{equation}
  G \cong (\Ort(1,p) \times \Ort(D-p-1)) \ltimes \mathbb{R}^{(p+1)(D-p-1)}\,,
\end{equation}
but as mentioned above it is not the isomorphism class of the group
that determines the $G$-structure, but how it sits inside $\GL(D,\RR)$.
There are many ways to embed $(\Ort(1,p) \times \Ort(D-p-1)) \ltimes
\mathbb{R}^{(p+1)(D-p-1)}$ inside $\GL(D,\RR)$, but each way requires
a choice of splitting of $\RR^D$ into a ($p+1$)-dimensional
``longitudinal'' space and a ($D-p-1$)-dimensional ``transverse''
space.  One of the things that the Cartan-geometric approach will teach
us is that whereas the transverse space is an invariant notion
independent of the frame, the longitudinal space is frame-dependent
and hence not intrinsic to the geometry.\footnote{This is, of course,
physically desirable since the longitudinal space corresponds to the
tangent space to the brane worldvolume and if this were forced to
coincide with a given subspace of the tangent space, it would not
allow for interesting brane dynamics.}  The first aim in this paper is
the determination of the intrinsic torsion classes of a $p$-brane
Galilean geometry and their geometric characterisation in terms of the
characteristic tensors of the $G$-structure.  We will do this in two
ways, which we could describe as with and without choices.  We
illustrate the two approaches in Section~\ref{sec:Galilean}, where we
revisit the classification of Galilean structures in terms of
intrinsic torsion.

This paper is organised as follows.  In Section~\ref{sec:Galilean},
which we intend to serve as part of a Rosetta Stone between our two
approaches, we briefly review the classification of particle Galilean
geometries in the two languages used in the rest of the paper.  We
start with the standard description of such geometries in the physics
literature and then we translate into the Cartan-geometric language of
$G$-structures.  Section~\ref{sec:without} contains the classification
of $p$-brane Galilean structures in the language of Cartan geometry
and at the start of that section we have a more detailed description
of its contents.  Section~\ref{sec:dictionary}, which we intend to
serve as the second part of our Rosetta Stone, contains a dictionary
between the mathematical Section~\ref{sec:without} and the more
physical Section~\ref{sec:TNC2}.  We continue with
Section~\ref{sec:galgrav} where we investigate, using a physics language,
which $p$-brane Galilean and Carrollian geometries can be obtained by
taking a special limit of the Einstein-Hilbert action of general
relativity with its underlying Lorentzian geometry. The intrinsic
torsion classes play an important role in this discussion. They arise
in different ways depending on whether one takes a limit of the
Einstein-Hilbert action in a first-order or second-order
formulation. In the first case we find that the limit leads to
constraints on the intrinsic torsion classes specifying which Galilean
and Carrollian geometries can be obtained in this way. In the second
case, one finds an additional invariant, called
`electric' Galilei/Carroll gravity in the physics literature, that
does not have a first-order formulation and therefore does not arise in
the first case. To compare the remaining invariant that is common to
the two limits we need to apply a so-called Hubbard--Stratonovich
transformation in the second-order formulation introducing a Lagrange
multiplier.  Finally, Section~\ref{sec:conclusions} contains some
conclusions and points to future work.


\section{Galilean Geometry}
\label{sec:Galilean}

In this section we will review the classification of Galilean
geometries via their intrinsic torsion in two different ways,
illustrating in a simple (but nontrivial) example the two approaches
followed in this paper.

Roughly speaking, a $D$-dimensional Galilean manifold $M$ looks
locally like $D$-dimensional Galilei spacetime, which is a
$D$-dimensional affine space, with a distinguished notion of clock and
ruler.  The clock defines a fibration onto the affine line (``time'')
whose fibres are affine hypersurfaces of simultaneity, copies of
($D-1$)-dimensional affine space on which the ruler agrees with the
Euclidean distance.

More precisely, then, a Galilean manifold is a triple
$(M,\tau,\gamma)$, where $\tau \in \Omega^1(M)$ is a nowhere-vanishing
clock one-form defining a corank-$1$ subbundle
$E = \ker\tau \subset TM$, and $\gamma$ (the ruler) is a Riemannian
metric on $E$. This means that $\gamma$ is a positive-definite section
of $\odot^2 E^*$. It is alas a linear-algebraic fact that the dual of
a subspace is not a subspace of the dual, but rather a quotient, and
so too for vector bundles. In particular, $E^* \cong T^*M/\ann E$,
where $\ann E \subset T^*M$ is the annihilator of $E$, which in this
case is the line bundle spanned by $\tau$. Therefore $\gamma$ is
\emph{not} a tensor field on $M$, not being a section of any tensor
bundle\footnote{By a tensor bundle on a manifold $M$ we mean a vector
  subbundle of $\underbrace{TM \otimes \cdots \otimes TM}_r \otimes
  \underbrace{T^*M \otimes \cdots T^*M}_s$ for some non-negative
  integers $r$ and $s$.}, but rather of a quotient of a tensor bundle.
Since we are used to working with tensor fields, the way to remedy
this is to observe that a metric on $E$ defines a metric on $E^*$,
which is then a section $h$, say, of $\odot^2 E \subset \odot^2 TM$
and hence a bona fide tensor field on $M$. This so-called
\emph{spatial cometric} $h$ has corank $1$ and is such that
$h(\tau,-)=0$. We may therefore rephrase a Galilean manifold as a
triple $(M,\tau,h)$ with the above properties.  Relative to a local
chart with coordinates $x^\mu$, we have local expressions:
$\tau = \tau_\mu dx^\mu$ with $\tau_\mu = \tau(\d_\mu)$ and
$h = h^{\mu\nu} \d_\mu \otimes \d_\nu$, with
$h^{\mu\nu} = h(dx^\mu,dx^\nu)$, where $h^{\mu\nu}\tau_\nu = 0$.

There is a Galilean analogue of the notion of ``local orthonormal
frames'' in Lorentzian geometry.  It is simpler to start by defining
a distinguished family of coframes
$(\tau, e^a)$, for $a =1,\dots,D-1$, which are one-forms on some chart
$U \subset M$.  Whereas $\tau$ extends to a global one-form, the $e^a$
are only locally defined, but they satisfy $h(e^a, e^b) = h^{\mu\nu}
e^a_\mu e^b_\nu = \delta^{ab}$, with $\delta^{ab}$ the Kronecker delta.  It is possible
to cover $M$ by charts on which we have such distinguished coframes
and on the overlaps between two such charts, the coframes are related
via local $G$-transformations, where $G$ is the subgroup of
$\GL(D,\RR)$ which preserves $\tau$ and 
$h$-orthonormality:
\begin{equation}
  \tau \mapsto \tau \qquad\text{and}\qquad e^a \mapsto e^b R_b{}^a +
  v^a \tau,
\end{equation}
with $R \in \Ort(D-1)$ and $v \in \RR^{D-1}$.  This group $G$ is
abstractly isomorphic to the Euclidean group $\Ort(D-1) \ltimes
\RR^{D-1}$, but there are many such subgroups of $\GL(D,\RR)$ and it
is important to describe it not up to isomorphism but as an actual
subgroup of $\GL(D,\RR)$.  As a matrix group, $G$ is given by $D\times
D$ matrices of the form
\begin{equation}
  \label{eq:G-as-matrix-group}
  \begin{pmatrix}
    1 & \boldsymbol{0}^T \\ \boldsymbol{v} & R
  \end{pmatrix} \qquad\text{where $\boldsymbol{v} \in \RR^{D-1}$ and
    $R \in \Ort(D-1)$.}
\end{equation}
Infinitesimally, the above transformations are
\begin{equation}
  \delta \tau = 0 \qquad\text{and}\qquad \delta e^a = - \lambda^a{}_b
  e^b + \lambda^a \tau.
\end{equation}
The index $a$ will be freely raised and lowered with a Kronecker delta
$\delta^{ab}$ or $\delta_{ab}$ in what follows. The parameters
$\lambda^a{}_b$ obey $\lambda_{ab} := \delta_{ac}\lambda^c{}_b = -
\lambda_{ba}$ and are those of an infinitesimal spatial rotation,
while $\lambda^a$ are those of an infinitesimal Galilean boost.

It is now imperative to rewrite the above transformation law in terms
of the components of the coframe relative to a local chart:
\begin{align} \label{eq:galvb}
  \delta \tau_\mu &= 0 \,, \qquad & \delta e_\mu{}^a &= -\lambda^a{}_b\, e_\mu{}^b + \lambda^a\, \tau_\mu \,,
\end{align}
since, following time-honoured tradition, we are going to use the
\emph{same} letters for the canonically dual frame, namely $(\tau^\mu,
e_a{}^\mu)$.  Notice that were we to leave out the local coordinate
indices, we would not be able to tell whether $\tau$ was a one-form or
a vector field.  Notice also that whereas the one-form $\tau$ is a
global one-form, the vector field $\tau$ is only locally defined.  The
dual frame fields $\tau^\mu$ and $e_a{}^\mu$ are defined by the
equations
\begin{align}
  \tau_\mu \tau^\mu = 1 \,, \qquad \tau_\mu e_a{}^\mu = 0 \,, \qquad e_\mu{}^a \tau^\mu = 0 \,, \qquad                                                                                                & e_\mu{}^a e_b{}^\mu = \delta^a_b \,, \qquad \tau_\mu \tau^\nu + e_\mu{}^a e_a{}^\nu = \delta_\mu^\nu \,,
\end{align}
which are just the canonical dual relations expanded out.

In order to define parallel transport in Galilean geometry, we must
introduce affine connections.  In the standard theoretical physics
procedure one introduces \emph{two} connections: an affine connection
with coefficients $\Gamma^\rho_{\mu\nu}$ as well as spin connection
with coefficients $\omega_\mu{}^{ab} = -\omega_\mu{}^{ba}$ and
$\omega_\mu{}^a$ for spatial rotations and Galilean boosts,
respectively.  The spin connection transforms under infinitesimal
local $G$-transformations according to the following rules:
\begin{align}
  \delta \omega_\mu{}^{ab} = \partial_\mu \lambda^{ab} - 2 \lambda^{[a|}{}_c \, \omega_\mu{}^{c|b]}\,, \qquad \quad
  \delta \omega_\mu{}^a = \partial_\mu \lambda^a - \lambda^{a}{}_{b}\, \omega_\mu{}^b + \lambda^b\, \omega_\mu{}^a{}_b \,.
\end{align}
The affine connection and spin connections are of course not independent of each
other but are related via the following ``Vielbein postulate'':
\begin{align} \label{eq:galvbpost}
   \partial_\mu \tau_\nu - \Gamma^\rho_{\mu\nu} \tau_\rho = 0 \,, \qquad \qquad
   \partial_\mu e_\nu{}^a + \omega_\mu{}^a{}_b\, e_\nu{}^b - \omega_\mu{}^a\, \tau_\nu - \Gamma^\rho_{\mu\nu} e_\rho{}^a = 0 \,.
\end{align}
This postulate ensures that the connection is compatible with the
rank-1 temporal metric $\tau_{\mu\nu} = \tau_\mu \tau_\nu$ and the
rank-$(D-1)$ spatial co-metric $h^{\mu\nu} = e_a{}^\mu e_b{}^\nu
\delta^{ab}$ of Galilean geometry.

Anti-symmetrization of \eqref{eq:galvbpost} leads to Cartan's first
structure equation of Galilean geometry:\begin{subequations} \label{eq:fststreqs}
\begin{alignat}{2}
  &  2 \partial_{[\mu} \tau_{\nu]} = T_{\mu\nu}{}^0 \qquad \qquad & & \text{with } \ T_{\mu\nu}{}^0 \equiv 2 \Gamma^\rho_{[\mu\nu]} \tau_\rho  \,, \label{eq:fststreqs1} \\
  &  2 \partial_{[\mu} e_{\nu]}{}^a + 2 \omega_{[\mu|}{}^a{}_b\, e_{|\nu]}{}^b - 2 \omega_{[\mu}{}^a\, \tau_{\nu]} = T_{\mu\nu}{}^a \qquad \qquad & & \text{with } \ T_{\mu\nu}{}^a \equiv  2 \Gamma^\rho_{[\mu\nu]} e_\rho{}^a  \label{eq:fststreqs2} \,,
\end{alignat}
\end{subequations}
where $T_{\mu\nu}{}^0$ and $T_{\mu\nu}{}^a$ are the temporal and
spatial components of the torsion tensor $T_{\mu\nu}{}^\rho \equiv 2
\Gamma_{[\mu\nu]}^\rho$.

Equation \eqref{eq:fststreqs2} can be viewed as a system of linear
equations for the spin connection components $\omega_\mu{}^{ab}$ and
$\omega_\mu{}^a$. Solving it leads to expressions for some of these
components in terms of derivatives of the coframe fields, the frame
fields and the spatial torsion tensor components $T_{\mu\nu}{}^a$.
Note, however, that \eqref{eq:fststreqs2} does not suffice to express
all components of $\omega_\mu{}^{ab}$ and $\omega_\mu{}^a$ in this
way. In particular, one finds that the $D(D-1)/2$ components
$\tau^\mu \omega_{\mu}{}^{ab}$ and $\tau^\mu \omega_\mu{}^a$ are not
determined by \eqref{eq:fststreqs2}. Unlike what happens in Lorentzian
geometry, one thus sees that metric-compatible connections in Galilean
geometry are not uniquely determined in terms of the metric structure
and torsion tensor. Rather, there is a family of metric-compatible
connections with given torsion that is parametrised by the
undetermined spin connection components $\tau^\mu \omega_{\mu}{}^{ab}$
and $\tau^\mu \omega_\mu{}^a$.

Another difference with Lorentzian geometry concerns the different
role that the temporal and spatial torsion tensor fields
$T_{\mu\nu}{}^0$ and $T_{\mu\nu}{}^a$ play in Galilean
geometry. Whereas setting components of $T_{\mu\nu}{}^a$ equal to zero
merely amounts to a choice of connection, setting components of
$T_{\mu\nu}{}^0$ equal to zero also entails extra constraints on the
geometry of $M$. Indeed, there are two possible ways in which
components of $T_{\mu\nu}{}^0$ can be set to zero in a $G$-covariant
manner:
\begin{align} \label{eq:intTcond}
  e_a{}^\mu e_b{}^\nu\, T_{\mu\nu}{}^0 = 0  \qquad\qquad\text{or}\qquad\qquad T_{\mu\nu}{}^0 = 0 \, .
\end{align}
From \eqref{eq:fststreqs1}, the first possibility implies that
\begin{align} \label{eq:tdt}
  e_a{}^\mu e_b{}^\nu \partial_{[\mu} \tau_{\nu]} = 0 \qquad
  \Leftrightarrow \qquad \tau_{[\mu} \partial_\nu \tau_{\rho]} = 0
  \qquad \Leftrightarrow \qquad \partial_{[\mu} \tau_{\nu]} =
  \alpha_{[\mu} \tau_{\nu]}\, ,
\end{align}
for some (global) one-form $\alpha_\mu$.  It then follows from the
Frobenius Integrability Theorem that $M$ is foliated by spatial
hypersurfaces, whose tangent spaces are spanned by the $D-1$ vector
fields $e_a{}^\mu$.  The second possibility $T_{\mu\nu}{}^0 = 0$ implies that
$\tau_\mu$ is closed, a constraint stronger than
\eqref{eq:tdt}.  In that case, if $\tau$ were to be exact (e.g., if
$M$ were simply connected), so that $\tau_\mu = \partial_\mu T$, the
time function $T$ would define an absolute time on $M$.  If $\tau$ is
closed but not exact, then the time function only exists locally.

Torsion tensor components like those of $T_{\mu\nu}{}^0$ in Galilean
geometry, whose vanishing leads to geometric constraints, are
indicative of intrinsic torsion.  From the above example one sees that
intrinsic torsion can be informally defined as those torsion tensor
components for which the corresponding structure
equations (Equation~\eqref{eq:fststreqs1} in Galilean geometry) do not
contain any spin connection components.

Let us now translate the above into the language of Cartan geometry
and $G$-structures.  First, some notation: we introduce the shorthand
$\V = \RR^D$ for convenience.  Notice that $\V$ is not to be thought
of as an abstract vector space, but is just a notation for $\RR^D$.  Let
$G\subset \GL(\V) = \GL(D,\RR)$ be a Lie subgroup.  A $G$-structure on
a $D$-dimensional manifold $M$ is a subset of
distinguished frames, which are related not by general linear
transformations, but only by those in $G$.  A more precise definition
is that it is a principal $G$-subbundle $P$, say, of the frame bundle
of $M$. Every $G$-structure comes with a soldering form $\theta$,
which is a one-form on $P$ with values in $\V$.  Its value at a frame
$u \in P$ at $p \in M$ on a vector $X \in T_uP$ tangent to $P$ at $u$
is just the coordinate vector of the projection of $X$ to $T_pM$
relative to the frame $u$.  The components of $\theta$ relative to the
canonical basis of $\V$ define local coframes\footnote{We choose to
  index them starting from $0$ by analogy with the familiar Lorentzian
  case.} $\theta^0,\theta^1,\dots,\theta^{D-1}$, where
$\theta^0 =\tau$ and canonical dual local frame
$e_0,e_1,\dots,e_{D-1}$. The soldering form is instrumental in the
dictionary between the linear algebra of representations of $G$ and
the tensor bundles over $M$. Every representation of $G$ gives rise to
an associated vector bundle over $M$. Since $G\subset \GL(\V)$, it acts
naturally on $\V$ and hence there is a vector bundle over $M$
associated to $\V$. This bundle is called the fake tangent bundle and the
soldering form, which defines a one-form on $M$ with values in the
fake tangent bundle, defines an isomorphism from the real to the fake
tangent bundle.

Any $G$-equivariant linear map between representations of $G$ induces
a bundle map from the corresponding associated vector bundles. For
example, a subrepresentation $W \subset \V$ defines a subbundle $E
\subset TM$.  Indeed, many of the natural maps between vector bundles
in the differential geometry of manifolds with a $G$-structure are
induced from linear-algebraic maps between representations of $G$.
Similarly, $G$-invariant tensors on $\V$ give rise to tensor fields on
$M$ which are said to be characteristic to the $G$-structure. For
example, in Lorentzian geometry, $G$ is defined as the orthogonal
subgroup of $\GL(\V)$ preserving some inner product. That inner
product gives rise to a metric on $M$. In Galilean geometry, $G$ is
such that it leaves invariant a covector in $\V^*$ and a symmetric
tensor in $\odot^2\V$, giving rise to the clock one-form and the
spatial cometric on $M$.

In this language connections adapted to the $G$-structure arise as
follows. One picks an Ehresmann connection on the $G$-structure, which
we recall is a principal $G$-bundle $P$ over $M$. It can be described
in many equivalent ways: as a $G$-invariant horizontal distribution in
$TP$, as a $\g$-valued one-form on $P$ or, locally on $M$, as a
$\g$-valued one-form (i.e., a gauge field). An Ehresmann connection in
turn gives a Koszul connection on any associated vector bundle to the
$G$-structure, and therefore in particular on the fake tangent bundle. Since the
soldering form sets up a bundle isomorphism between the fake and real
tangent bundles, we may transport the Koszul connection on
the fake tangent bundle to an affine connection on the tangent bundle.
This affine connection is said to be adapted to the $G$-structure and
is compatible with the characteristic tensor fields. We may
contrast this with the approach outlined above, which introduces both
an affine and a spin connections subject to a Vielbein postulate. The
Vielbein postulate encodes the fact that the affine connection is the
one obtained from the Koszul connection on the fake tangent bundle via
the soldering form. Indeed, the Vielbein postulate simply says that
the soldering form (as a one-form on $M$ with values in the fake
tangent bundle) is parallel, where we use the affine connection on the
one-form indices and the spin connection on the frame indices. It is
then an easy consequence of the fact that the characteristic tensor
fields come from $G$-invariant tensors, that any adapted affine
connection is compatible with the characteristic tensor fields.

It is often convenient to work on the total space $P$ of the
$G$-structure, since the objects of interest are globally defined. Let
$\omega$ be the $\g$-valued one-form of the Ehresmann connection and
let $\theta$ be the $\V$-valued soldering form.  Cartan's first
structure equation defines the $\V$-valued torsion $2$-form $\Theta$
by
\begin{equation}
  \Theta = d\theta + \omega \wedge \theta,
\end{equation}
where the second term in the RHS involves the action of $\omega$, which
is $\g$-valued, on $\theta$ which is $\V$-valued.  $\Theta$ is a
globally defined $2$-form on $P$ with values in $\V$.  It can also be
understood as a $2$-form on $M$ with values in the fake tangent bundle
or, via the soldering form, also as a $2$-form on $M$ with values in
the tangent bundle, which is the usual description of the torsion
tensor of an affine connection.  Indeed, for any two vector fields
$X,Y \in \eX(M)$,
\begin{align*}
\Theta^a(X,Y) &= d\theta^a(X,Y) + \omega^a{}_b(X)\theta^b(Y) - \omega^a{}_b(Y)\theta^b(X)\\
              &= X \theta^a(Y) - Y\theta^a(X) - \theta^a([X,Y]) + \omega^a{}_b(X)\theta^b(Y) - \omega^a{}_b(Y)\theta^b(X)\\
              &= \theta^a(\nabla_X Y) - \theta^q(\nabla_Y X) - \theta^a([X,Y]) &\tag{by the vielbein postulate}\\
              &= \theta^a(\nabla_X Y - \nabla_Y X - [X,Y])\\
              &= \theta^a(T^\nabla(X,Y)),
\end{align*}
where we have used the vielbein postulate in the form
\begin{equation}
  X \theta^a(Y) + \omega^a{}_b(X) \theta^b(Y) - \theta^a(\nabla_X Y) = 0.
\end{equation}

Now suppose that $\omega'$ is the connection one-form of another
Ehresmann connection.  Then $\omega' = \omega + \kappa$, where
$\kappa$ is a one-form with values in $\g$, descending to $M$ as a
one-form with values in the vector bundle $\Ad P$ associated to the
adjoint representation $\g$.  The torsion $2$-form $\Theta'$ of
$\omega'$ is related to that of $\omega$ by
\begin{equation}
  \Theta'= d\theta + \omega' \wedge \theta = \Theta + \kappa \wedge
  \theta.
\end{equation}
The passage from $\kappa$ to $\kappa\wedge \theta$ can be described on
many levels: it is a bundle map between vector bundles on $M$ and a
tensorial map between the space of sections.  These sections are
$G$-equivariant functions on $P$ with values in the corresponding
representations of $G$ and the tensorial map is simply composition
with a linear map between the corresponding representations. This
linear map $\d : \g \otimes \V^* \to \V \otimes \wedge^2\V^*$ is an
instance of a Spencer differential.  We also use the same notation for
the associated bundle map and map between sections.  In particular we can
write $\Theta'= \Theta + \d\kappa$, where for any two vector fields
$X,Y \in \eX(M)$,
\begin{equation}
  \d\kappa(X,Y) = \kappa(X) Y - \kappa(Y) X.
\end{equation}
Any linear map has a kernel and cokernel and $\d$ is no exception.
But there is more: since $\d$ is $G$-equivariant, its kernel and
cokernel are $G$-submodules, and we get a four-term exact sequence of
$G$-modules and $G$-equivariant linear maps:
\begin{equation}
  \label{eq:spencer-4-term-exact-sequence}
  \begin{tikzcd}
    0 \arrow[r] & \ker\d \arrow[r] & \g \otimes \V^* \arrow[r,"\d"] & \V
    \otimes \wedge^2\V^* \arrow[r] & \coker\d \arrow[r] & 0,
  \end{tikzcd}
\end{equation}
where $\coker\d = (\V \otimes \wedge^2\V^*)/\im \d$.  This sequence is
illustrated in Figure~\ref{fig:orlandogram-for-spencer}.

\begin{figure}[ht!]
  \begin{centering}
    \graphicspath{{diagrams_that_should_not_bear_my_name/}}
    \def\svgwidth{.62\linewidth}
\begingroup%
  \makeatletter%
  \providecommand\color[2][]{%
    \errmessage{(Inkscape) Color is used for the text in Inkscape, but the package 'color.sty' is not loaded}%
    \renewcommand\color[2][]{}%
  }%
  \providecommand\transparent[1]{%
    \errmessage{(Inkscape) Transparency is used (non-zero) for the text in Inkscape, but the package 'transparent.sty' is not loaded}%
    \renewcommand\transparent[1]{}%
  }%
  \providecommand\rotatebox[2]{#2}%
  \newcommand*\fsize{\dimexpr\f@size pt\relax}%
  \newcommand*\lineheight[1]{\fontsize{\fsize}{#1\fsize}\selectfont}%
  \ifx\svgwidth\undefined%
    \setlength{\unitlength}{820.53900146bp}%
    \ifx\svgscale\undefined%
      \relax%
    \else%
      \setlength{\unitlength}{\unitlength * \real{\svgscale}}%
    \fi%
  \else%
    \setlength{\unitlength}{\svgwidth}%
  \fi%
  \global\let\svgwidth\undefined%
  \global\let\svgscale\undefined%
  \makeatother%
  \begin{picture}(1,0.54122349)%
    \lineheight{1}%
    \setlength\tabcolsep{0pt}%
    \put(0,0){\includegraphics[width=\unitlength,page=1]{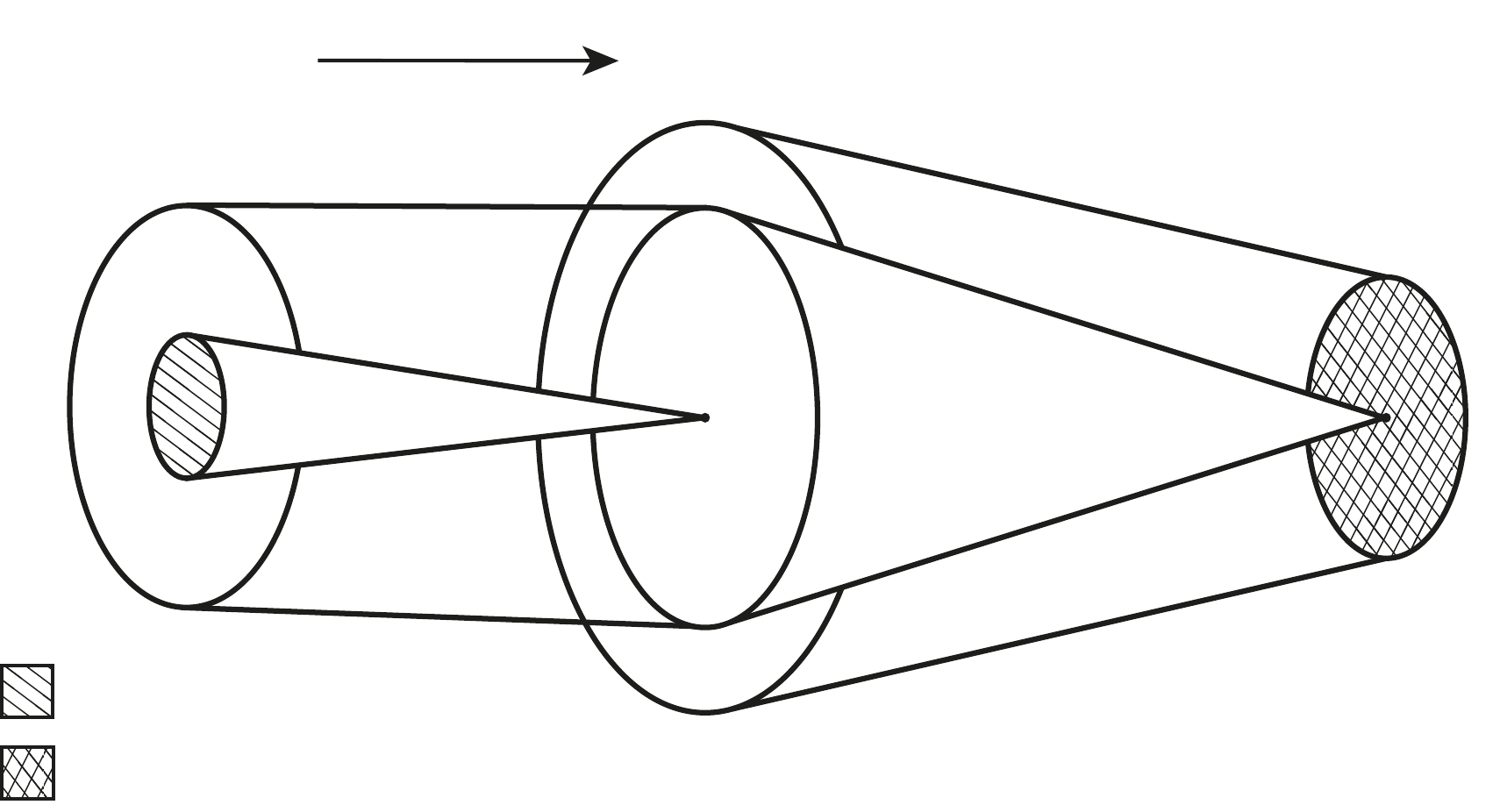}}%
    \put(0.2851361,0.51133971){\color[rgb]{0.11372549,0.11372549,0.10588235}\makebox(0,0)[lt]{\lineheight{1.25}\smash{\begin{tabular}[t]{l}$\partial$\end{tabular}}}}%
    \put(0.05051521,0.01257771){\color[rgb]{0.11372549,0.11372549,0.10588235}\makebox(0,0)[lt]{\lineheight{1.25}\smash{\begin{tabular}[t]{l}$\coker\partial = \Hom(\wedge^2 \mathbb{V}, \mathbb{V}) / \operatorname{im}\partial$\end{tabular}}}}%
    \put(0.05036199,0.06916378){\color[rgb]{0,0,0}\makebox(0,0)[lt]{\lineheight{1.25}\smash{\begin{tabular}[t]{l}$\operatorname{ker}\partial$\end{tabular}}}}%
    \put(0.03760362,0.48817555){\color[rgb]{0,0,0}\makebox(0,0)[lt]{\lineheight{1.25}\smash{\begin{tabular}[t]{l}$\Hom(\mathbb{V},\gm)$\end{tabular}}}}%
    \put(0.42035472,0.48750405){\color[rgb]{0,0,0}\makebox(0,0)[lt]{\lineheight{1.25}\smash{\begin{tabular}[t]{l}$\Hom(\wedge^2 \mathbb{V}, \mathbb{V})$\end{tabular}}}}%
  \end{picture}%
\endgroup%

    \caption{This figure illustrates the different spaces in the
      $4$-term exact sequence~\eqref{eq:spencer-4-term-exact-sequence}
      defined by the Spencer differential $\partial$.  The larger of
      the two left ellipses stands for the space
      $\Hom (\V,\mathfrak{g}) \cong \g \otimes \V^*$ of differences of
      adapted connections, whereas the smaller shaded ellipse is the
      subspace of such differences which do not alter the torsion: the kernel of $\d$.  The
      larger of the two middle ellipses stands for the vector space
      $\Hom(\wedge^2\V,\V) \cong \V \otimes \wedge^2\V^*$ of all
      torsion tensor components, whereas the smaller one stands for
      the image of $\d$.  The rightmost ellipse stands for the
      quotient vector space
      $\coker \d = \Hom (\Lambda^2 \V, \V)/\im \partial$ of
      equivalence classes of torsion tensors, where two torsion
      tensors are equivalent if their difference is due to a change in
      the connection.  Therefore these equivalence classes are, by
      definition, independent of the connection and hence intrinsic to
      the $G$-structure.}
  \label{fig:orlandogram-for-spencer}
  \end{centering}
\end{figure}

The bundles associated to the representations in the
sequence~\eqref{eq:spencer-4-term-exact-sequence} have the following
geometric interpretation:
\begin{itemize}
\item $\g \otimes \V^*$ corresponds to the differences between adapted
  connections;
\item $\V \otimes \wedge^2\V$ corresponds to the space of torsions;
\item $\ker\d \subset \g\otimes \V^*$ are the changes in the
  connection which do not alter the torsion; and
\item $\coker\d$ is the space of intrinsic torsions, which is the main
  object of study in this paper.
\end{itemize}
The reason we say that $\coker\d$ consists of intrinsic torsions is
that the difference in the torsions of any two adapted connections
lives in the image of the Spencer differential, hence the projection
of the torsion to $\coker\d$ is independent on the adapted connection.
It is therefore an intrinsic property of the $G$-structure.  This
means that we can calculate the intrinsic torsion departing from any
adapted connection.

For Galilean $G$-structures, as calculated in
\cite{MR334831,Figueroa-OFarrill:2020gpr}, one has isomorphisms
$\ker\d \cong \coker\d \cong \wedge^2\V^*$ as representations of $G$.
On the one hand, the isomorphism $\ker\d \cong \wedge^2 \V^*$ implies
that the torsion does not uniquely determine an adapted connection:
one may modify the connection by a section of the associated vector
bundle to $\wedge^2\V^*$ (i.e., by a $2$-form) without altering the
torsion.  On the other hand, the isomorphism $\coker\d \cong
\wedge^2\V^*$ says that the intrisinc torsion of a Galilean structure
is captured by a $2$-form, which one calculates to be the composition
of the torsion $T^\nabla$ of the adapted affine connection $\nabla$
with the clock one-form $T^\nabla  \mapsto \tau \circ T^\nabla$.
Since $T^\nabla$ is a $2$-form with values in vector fields, its
composition with the one-form $\tau$ is simply a $2$-form.  Using that
$\tau$ is parallel with respect to $\nabla$, one calculates that $\tau
\circ T^\nabla = d\tau$, which, as expected, is independent of the
connection.

The story does not end there. As a representation of $G$, the vector
space $\wedge^2\V^*$ is not irreducible; although it is
indecomposable.  In generic dimension (here, $\dim \V \neq 2$)
it has a unique proper subrepresentation, whose associated vector
bundle is the subbundle of the bundle of $2$-forms whose sections are
characterised by the property that they are in the kernel of wedging
with $\tau$.  Therefore we have three $G$-subrepresentations of
$\wedge^2\V^*$ and hence three intrinsic torsion classes:
\begin{itemize}
\item vanishing intrinsic torsion: $d\tau = 0$;
\item ``twistless'' intrinsic torsion: $d\tau \neq 0$ but $\tau \wedge
  d\tau = 0$; and
\item generic intrinsic torsion: $\tau \wedge d\tau \neq 0$.
\end{itemize}
This classification agrees with the torsionless, twistless torsional
and torsional Newton--Cartan geometries introduced in
\cite{Christensen:2013lma}.


\section{\texorpdfstring{$p\,$-brane Galilean geometries à la Cartan}{p-brane Galilean geometries à la Cartan}}
\label{sec:without}

In the previous section we have discussed the classification of
Galilean geometries in terms of the intrinsic torsion of the
associated $G$-structure.  In this paper we follow the philosophy that
such geometries describe the Galilean structure on a manifold on which
a point particle propagates. If instead we are interested in
describing the propagation of extended objects (strings or, more
generally, $p$-branes), then the relevant structure is the so-called
$p$-brane Galilean structure.

Intuitively, a $p$-brane propagating in a $D$-dimensional manifold $M$
is defined by an embedding $\Sigma \to M$, where the image of $\Sigma$
in $M$ is the ($p+1$)-dimensional worldvolume of the $p$-brane.  Let
us consider $\Sigma$ to be Lorentzian, for the sake of exposition, so
that it has a (pseudo-)orthonormal coframe
$\vartheta^0,\dots,\vartheta^p$ where the metric is given by
$-(\vartheta^0)^2 + (\vartheta^1)^2 + \dots + (\vartheta^p)^2$.
This suggests that in $M$ we should have distinguished coframes
$(\theta^0,\dots,\theta^{D-1})$ with
$\eta = -(\theta^0)^2 + (\theta^1)^2 + \dots + (\theta^p)^2$ a global
section of $\odot^2 T^*M$.  The common kernel of
$\theta^0,\dots,\theta^p$ defines a subbundle $E \subset TM$.  Dually,
we say that $\theta^0,\dots,\theta^p$ span the annihilator
$\ann E \subset T^*M$.  We see that $\eta$ is then a section of
$\odot^2\ann E \subset \odot^2T^*M$, which is nondegenerate in that it
defines a Lorentzian metric on $TM/E$.  By analogy with the particle
case, we take the ``ruler'' now to be a Riemannian metric $\gamma$ on
$E$, which as before we invert to a cometric $h$ on $E^*$ which is a
tensor field on $M$, being a section of $\odot^2 E\subset \odot^2 TM$.
Therefore we take a $p$-brane Galilean structure on $M$ to be a pair
of tensors $\eta$ and $h$, where $\eta$ defines a metric on $TM/E$,
with $E \subset TM$ a corank-($p+1$) subbundle of $TM$, and $h$, a
section of $\odot^2E \subset \odot^2 TM$, defines a Riemannian
cometric: a Riemannian metric on $E^*$.  This gives rise to
distinguished coframes $(\theta^0,\dots,\theta^{D-1})$ where the
$\theta^0,\dots,\theta^p$ are such that
$\eta = - (\theta^0)^2 + \dots + (\theta^p)^2$ and
$h(\theta^i,\theta^j)=\delta^{ij}$ for $i,j=p+1,\dots,D-1$.  Such
distinguished coframes are related on the overlaps by local
$G$-transformations where $G \subset \GL(D,\RR)$ is the subgroup which
preserves $\eta$ and $h$.  The canonical dual frames to these
distinguished coframes define a $G$-structure on $M$ and it is one of
the aims of this paper to explain the classification of such
$G$-structures in terms of their intrinsic torsion classes.

Given a frame $(e_0,e_1,\dots,e_{D-1})$ in the $G$-structure, the
tangent spaces to $M$ break up into two subspaces: a longitudinal
subspace spanned by $e_0,\dots,e_p$ and a transverse subspace spanned
by $e_{p+1},\dots,e_{D-1}$.  This latter subspace is the fibre to the
subbundle $E$ and hence it is well-defined and independent of the
frame. However, this is not so for the longitudinal subspace, which is
not preserved by $G$.  Indeed, whereas $\theta^0,\dots,\theta^p$ do
transform into each other (analogous to how $\theta^0$ is invariant in
the particle case -- i.e., $p=0$), the remaining
$\theta^{p+1},\dots,\theta^{D-1}$ do not.  Dually, whereas the span of
the $e_{p+1},\dots,e_{D-1}$ is preserved, that of the
$e_0,\dots,e_{p}$ is not.  What this says is that the defining
representation of $G$ is not irreducible (the transverse subspace is
invariant), but it is indecomposable in that the transverse subspace
has no invariant complement.  This somewhat complicates the treatment
and suggests rephrasing the discussion in terms of filtered
$G$-representations to be useful.

This section is organised as follows. In Section~\ref{sec:filtrations}
we briefly recap the basic language of filtered modules and their
associated graded modules. In Section~\ref{sec:group-interest} we
identify the group $G$ of interest. The group $G$ is isomorphic to the
semidirect product \begin{equation*}
  \Hom(\RR^{p+1},\RR^{D-p-1}) \rtimes (\Ort(p,1) \times \Ort(D-p-1)),
\end{equation*}
but this isomorphism is not canonical.  What we do have is an abelian
extension of $H:= \Ort(p,1) \times \Ort(D-p-1)$ by the abelian normal
subgroup $B:=\Hom(\RR^{p+1},\RR^{D-p-1})$, with $(g,h) \in H$ acting
on $b \in B$ by $b \mapsto h \circ b \circ g^{-1}$.  This extension
splits, but not canonically.\footnote{This is analogous to the
  Poincaré group being described as the semidirect product of the
  Lorentz group and the translations, suggesting misleadingly that
  there is a preferred Lorentz subgroup of the Poincaré group: in
  fact, there is one for every point in Minkowski spacetime and they
  are all conjugate (under translations) subgroups of the Poincaré
  group.  The translations, being a normal subgroup, are invariant
  under conjugation and hence they are unambiguously defined.} In
Section~\ref{sec:g-structures-adapted-connections} we briefly recall
the notion of the intrinsic torsion of a $G$-structure and set up the
problem we wish to solve: in the first instance, the determination of
the $G$-submodules of the cokernel of the Spencer differential, and
then the geometric characterisation of each of the intrinsic torsion
classes.  In this section we work only with $G$-invariant
objects: $G$-modules and $G$-equivariant maps. This comes at a price:
since the group $G$ does not act fully reducibly in the vector spaces
of interest, we have $G$-modules and $G$-submodules without
$G$-invariant complements. In other words, we have filtered
$G$-modules, with Section~\ref{sec:some-filtered-g} displaying the relevant
$G$-invariant filtrations of the vector spaces. In
Section~\ref{sec:kern-spenc-diff} we determine the kernel of the
Spencer differential and its structure as a $G$-module. We do this
while maintaining manifest $G$-symmetry via the trick of passing from
a filtered module to its associated graded module. This results in
non-isomorphic $G$-modules with isomorphic underlying vector spaces,
hence equivalent for the purposes of counting dimension. In
Section~\ref{sec:intr-tors-class} we establish an explicit
$G$-equivariant isomorphism between the kernel and cokernel of the
Spencer differential. This allows us to transport the $G$-submodules
of the kernel and arrive at the $G$-submodules of the cokernel: that
is, the intrinsic torsion classes, culminating in
Theorem~\ref{thm:intrinsic-torsion-classes}. In
Section~\ref{sec:geom-interpr} we describe each of the torsion classes
algebraically as submodules of the space of torsions and we derive
alternative geometric characterisations of each of the classes in
terms of the subbundle $E$ and the characteristic tensor fields of the
$G$-structure. The main result is summarised in
Theorem~\ref{thm:summary}.

Throughout this section we have used the language of $p$-brane
Galilean structures, which assumes that $\eta$ is Lorentzian and
$h$ Riemannian, but nothing in the formalism uses the signatures.
By taking $\eta$ Riemannian and $h$ Lorentzian, we can describe
($D-p-2$)-brane Carrollian structures.  This means that the
classification of standard (particle) Carrollian $G$-structures
follows formally from that of the ($D-2$)-brane Galilean structures,
where the worldvolume of the brane is a domain wall.  In this section
we are taking $p$ to be generic, and this leaves some special cases to
be further discussed, chiefly stringy Galilean and Carrollian
structures, which will be the subject of a later work.

\subsection{Brief recap of filtrations}
\label{sec:filtrations}

In this section we will be making use of some basic language of
filtered representations of a Lie group $G$, or, with some abuse of
language, filtered $G$-modules.

Suppose that $V$ is a $G$-module which admits a $G$-submodule
$W \subset V$; that is, a vector subspace which is preserved by $G$,
so that $g \cdot v \in W$ for all $v \in W$ and $g \in G$. This gives
rise to a canonical short exact sequence of $G$-modules:
\begin{equation}
  \begin{tikzcd}
    0 \arrow[r] & W \arrow[r] & V \arrow[r] & V/W \arrow[r] & 0.
  \end{tikzcd}
\end{equation}
In an ideal world, one would expect that there should be a
complementary $G$-submodule $W' \subset V$ such that
$V = W \oplus W'$.  If so, we would say that the above sequence splits
(in the category of $G$-modules): the splitting being the $G$-module
isomorphism $V/W \to W'\subset V$.  Alas, the real world is far from
ideal and it happens very often (and we will see plenty of examples
below) that no such complementary submodule exists.  Of course, we
always have a complementary vector subspace, but it will not be 
preserved by $G$.  This situation is a paradigmatic example of a
filtered $G$-module.

More generally, a (finite, descending) filtration $V^\bullet$ of a
$G$-module $V$ is sequence of $G$-submodules
\begin{equation}
  V = V^0 \supset V^{-1} \supset V^{-2} \supset \cdots \supset  V^{-N} = 0,
\end{equation}
for some $N \in \NN$.  We often find it convenient to extend the
filtration infinitely in both directions, declaring that $V^ i = V$
for all $i \geq 0$ and $V^i = 0$ for all $i \leq -N$.  We say that
$v\in V^i$ has \emph{filtration degree} $\leq i$.  It follows that
every quotient $V_i := V^i/V^{i-1}$ is naturally a $G$-module.  These
are the graded pieces of the associated graded module to the filtered
module $V$:
\begin{equation}
  \gr V = \bigoplus_{i\leq 0} V_i = V_0 \oplus V_{-1} \oplus \cdots \oplus V_{-N+1},
\end{equation}
where $V_{-N+1} = V^{-N+1}$, since $V^{-N} = 0$.  Again, we sometimes
find it convenient to extend the grading to a $\ZZ$-grading, declaring
that $V_i=0$ for $i>0$ and $i\leq -N$.  If $v \in V^i$, we will let
$[v]$ denote its image in $V_i$.  We say that $[v]$ has degree $i$.
Counting dimensions, we see that
\begin{align*}
  \dim \gr V &= \dim V_0 + \dim V_{-1} + \cdots + \dim V_{-N+1}\\
             &= \left( \dim V^0 - \dim V^{-1}\right) + \left(\dim V^{-1} - \dim V^{-2}  \right) + \cdots + \left( V^{-N+1}- \dim V^{-N}\right) \\
             &= \dim V^0 - \dim V^{-N}\\
             &= \dim V.
\end{align*}
Despite the fact that $\gr V$ and $V$ are $G$-modules of the same
dimension, they need not be isomorphic as $G$-modules.  Nevertheless,
it is often convenient to pass to $\gr V$ in order to count
dimension.  We shall see some examples of this procedure below.

A filtration $V^\bullet$ induces a filtration of the dual module $V^*$:
\begin{equation}
  0 = \ann V^0 \subset \ann V^{-1} \subset \ann V^{-2} \subset \cdots
  \subset \ann V^{-N+1} \subset \ann V^{-N} = V^*,
\end{equation}
where $\ann$ denotes the annihilator of a subspace.  The associated
graded $G$-module $\gr V^*$ is now positively graded:
\begin{equation}
  \gr V^* = \bigoplus_{i\geq 1} \left(V^*\right)_i,
\end{equation}
where
\begin{equation}
  \left( V^* \right)_i = \ann V^{-i}/\ann V^{-i+1} \cong \left(
    V^{-i+1}/V^{-i} \right)^* = V_{-i+1}^*.
\end{equation}
This implies that
\begin{equation}
  \gr V^* \cong \bigoplus_{i\geq 1} V_{-i+1}^* = \bigoplus_{i\leq 0} V_i^*,
\end{equation}
where we have re-indexed the direct sum.

Let $V^\bullet$ and $W^\bullet$ be two filtered $G$-modules and let
$\phi : V \to W$ be a $G$-equivariant linear map.  Assume further that
$\phi$ is compatible with the filtration, in that $\phi : V^i \to W^i$
for all $i$.  Then $\phi$ induces a linear map
$\gr \phi : \gr V \to \gr W$ by declaring for all $v \in V^i$
\begin{equation}
  \gr\phi([v]) = [\phi(v)]
\end{equation}
or, equivalently, $\gr\phi(v + V^{i-1}) = \phi(v) + W^{i-1}$.  Since
$\phi$ is $G$-equivariant, its kernel is a $G$-submodule of $V$.
Moreover, it's a filtered $G$-submodule, inheriting the filtration from
that of $V$ via
\begin{equation}
  (\ker\phi)^i = V^i \cap \ker \phi.
\end{equation}
Its associated graded $G$-module is $\gr\ker\phi = \bigoplus_i
(\ker\phi)_i$, where $(\ker\phi)_i$ consists of those $[v]$ where $v
\in (\ker\phi)^i$.  How does this compare with the kernel of $\gr
\phi$?  The kernel of $\gr\phi$ is a graded $G$-submodule of $\gr V$,
where $\ker\gr\phi = \bigoplus_i (\ker\gr\phi)_i$ with
$(\ker\gr\phi)_i \subset V_i$ consisting of those $[v] \in V_i$ such
that $\phi(v) \in W_{i-1}$.  Therefore we see that $\gr\ker\phi
\subseteq \ker\gr\phi$ as a graded $G$-submodule.  A natural question
is when this inclusion becomes an equality.

The linear map $\phi$ typically has pieces of different filtration
degrees defined as follows.  We have vector space isomorphisms $j_V: V
\to \gr V$ and $j_W: W \to \gr W$, so that given $\phi: V \to W$ we
obtain a map $\Phi:= j_W \circ \phi \circ j_V^{-1} : \gr V \to \gr W$
defined by the commutativity of the following square:
\begin{equation}
  \begin{tikzcd}
    V \arrow[d,"j_V"] \arrow[r,"\phi"] & W \arrow[d,"j_W"] \\
    \gr V \arrow[r,"\Phi"] & \gr W.
  \end{tikzcd}
\end{equation}
This map $\Phi$ between graded $G$-modules breaks up into pieces
$\Phi = \phi_0 + \phi_{-1} + \phi_{-2} + \cdots$, where
$\phi_j : V_i \to W_{i+j}$ for all $i$ and only non-positive degrees
appear because $\phi$ respects the filtration.  It is clear that
$\phi_0 = \gr\phi$.  Notice also that $\ker \Phi = \gr\ker\phi$ and
hence, as we saw above, $\ker \Phi \subseteq \ker \phi_0$.  If
$\Phi = \phi_0$, so that $\phi_j = 0$ for all $j<0$, then clearly
$\ker \Phi = \ker \phi_0$ and hence $\gr\ker\phi = \ker\gr\phi$.  We
summarise this discussion as follows:

\begin{lemma}
  \label{lem:gr-ker=ker-gr}
  If $\Phi = \phi_0$, so that $\phi_j = 0$ for all $j<0$, then
  $\ker\gr\phi = \gr \ker \phi$.
\end{lemma}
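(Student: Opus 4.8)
The plan is to unwind the two descriptions of the relevant graded submodules of $\gr V$ and observe that under the hypothesis they coincide. First I would recall the two objects in play. On the one hand, $\gr\ker\phi$ is the graded submodule of $\gr V$ whose degree-$i$ piece consists of classes $[v]$ with $v \in V^i$ and $\phi(v) = 0$; equivalently, using the vector-space identification $j_V : V \to \gr V$, it is $j_V(\ker\phi)$, which is a graded subspace precisely because $\ker\phi$ is a filtered submodule. On the other hand, by the construction in the paragraph preceding the lemma, $\Phi = j_W \circ \phi \circ j_V^{-1}$, so $\ker\Phi = j_V(\ker\phi) = \gr\ker\phi$ as graded submodules of $\gr V$; this is the identity already noted in the text. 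So the content of the lemma is entirely the statement that $\ker\Phi = \ker\phi_0$ when $\phi_j = 0$ for all $j < 0$, i.e. when $\Phi = \phi_0 = \gr\phi$.

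The second step is then immediate: if $\Phi = \phi_0$ as maps $\gr V \to \gr W$, then trivially $\ker\Phi = \ker\phi_0 = \ker\gr\phi$. Chaining the two identities gives $\gr\ker\phi = \ker\Phi = \ker\phi_0 = \ker\gr\phi$, which is the claim. I would spell this out at the level of homogeneous elements to make the degree bookkeeping transparent: for $[v] \in V_i$ with representative $v \in V^i$, the condition $[v] \in \ker\gr\phi$ means $\gr\phi([v]) = 0$ in $W_i$, i.e. $\phi(v) \in W^{i-1}$; the condition $[v] \in \gr\ker\phi$ means $v$ can be chosen inside $\ker\phi$, i.e. $\phi(v) = 0$. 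The inclusion $\gr\ker\phi \subseteq \ker\gr\phi$ always holds (already observed). For the reverse, suppose $\phi(v) \in W^{i-1}$; writing $\Phi = \phi_0$ means that the lower-degree components $\phi_{-1}, \phi_{-2}, \dots$, which are exactly what could produce a nonzero element of $W^{i-1}$ from $v \in V^i$, all vanish, forcing $\phi(v) = 0$ after possibly correcting $v$ by an element of $V^{i-1}$ — but since that correction also lies in $\ker\phi$ under the hypothesis, we may take $v \in (\ker\phi)^i$, so $[v] \in \gr\ker\phi$.

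There is essentially no obstacle here: the lemma is a formal consequence of the definitions, and the only mild subtlety is keeping straight the distinction between "$\phi(v) = 0$" and "$\phi(v) \in W^{i-1}$", i.e. between the kernel of the honest map and the kernel of its associated graded. The hypothesis $\Phi = \phi_0$ is precisely the condition that collapses this distinction. In the write-up I would therefore keep it short, emphasising that the genuinely useful direction is the converse inclusion and that it is the vanishing of the strictly-negative-degree components of $\Phi$ that does all the work. This is the form in which the lemma will be applied later when computing the kernel of the Spencer differential by passing to the associated graded.
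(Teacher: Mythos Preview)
Your proposal is correct and takes exactly the paper's approach: the paper's proof is the one-line chain $\gr\ker\phi = \ker\Phi = \ker\phi_0 = \ker\gr\phi$ drawn from the discussion immediately preceding the lemma, which you reproduce in your first paragraph. One caution: the assertion that $j_V(\ker\phi)$ is a graded subspace ``precisely because $\ker\phi$ is a filtered submodule'' is not right in general---it is graded here because of the hypothesis $\Phi=\phi_0$, which forces $\ker\Phi=\ker\phi_0$ to be homogeneous---so your second-paragraph element-wise verification (correcting $v$ by $v'\in V^{i-1}$ to land in $\ker\phi$) is in fact the substantive argument.
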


\subsection{\texorpdfstring{The Lie group $G$ of interest}{The Lie group G of interest}}
\label{sec:group-interest}

In this section we will define the Lie group $G$ of the $p$-brane
Galilean $G$-structure and will identify it as a subgroup of
$\GL(D,\RR)$.

Let $\V = \RR^D$ and let $W \subset \V$ be a nonzero subspace.  We
will let $\dim W =  D-p-1$.  Let $\ann W \subset \V^*$ denote the
annihilator of $W$.  It therefore has dimension $\dim \ann W = 
p+1$.  Notice that $W^* \not\subset \V^*$, but rather $W^* \cong
\V^*/\ann W$.

We let $\eta \in \odot^2 \ann W \subset \odot^2 \V^*$ and $h \in
\odot^2 W \subset \odot^2 \V$ be non-degenerate, in the sense that we
have short exact sequences
\begin{equation}
  \label{eq:eta}
  \begin{tikzcd}
    0 \arrow[r] & W \arrow[r] & \V \arrow[r,"\eta^\flat"] & \ann W
    \arrow[r] & 0,
  \end{tikzcd}
\end{equation}
and
\begin{equation}
  \label{eq:delta}
  \begin{tikzcd}
    0 \arrow[r] & \ann W \arrow[r] & \V^* \arrow[r,"h^\sharp"] &
    W \arrow[r] & 0,
  \end{tikzcd}
\end{equation}
where $\eta^\flat : \V \to \ann W$ and $h^\sharp : \V^* \to W$
are defined by
\begin{equation}
  \label{eq:eta-and-delta}
  \eta(v,v') = \left<\eta^\flat(v),v'\right>  \qquad\text{and}\qquad
  h(\alpha,\alpha') = \left<\alpha, h^\sharp(\alpha')\right>,
\end{equation}
for all $v,v' \in \V$ and $\alpha,\alpha' \in \V^*$, where
$\left<-,-\right> : \V^* \times \V \to \RR$ denotes the dual pairing.
Notice that $\eta^\flat \circ h^\sharp = 0$ and $h^\sharp
\circ \eta^\flat = 0$, but in fact, more is true: $\im h^\sharp =
\ker \eta^\flat$ and $\im \eta^\flat = \ker h^\sharp$, so that we
have an \emph{exact pair}
\begin{equation}\label{eq:exact-pair-delta-eta}
  \begin{tikzcd}
    \V \arrow[r, shift left, "\eta^\flat"] & \V^* \arrow[l,shift
    left, "h^\sharp"].
  \end{tikzcd}
\end{equation}

It follows that $\eta$ and $h$ induce symmetric bilinear inner
products $\overline\eta$ on $\V/W$ and $\overline h$ on $W^* \cong
\V^*/\ann W$, respectively.  Of course, $\overline h$ in turn
defines an inner product on $W$, which we will denote $\gamma$: explicitly,
\begin{equation}
  \label{eq:gamma}
  \gamma(h^\sharp \alpha ,w) := \left<\alpha, w\right>.
\end{equation}
Since $\im h^\sharp = W$, this allows us to calculate
$\gamma(w',w)$ simply by choosing $\alpha$ with
$h^\sharp(\alpha)=w'$.  Such $\alpha$ is not unique, but the
ambiguity lies in $\ann W$ and hence $\gamma$ is
well-defined.\footnote{An alternative definition of $\gamma$ is
  $\gamma(w,w') = h(\alpha,\alpha')$ where $\alpha,\alpha' \in
  \V^*$ are such that $h^\sharp\alpha = w$ and
  $h^\sharp\alpha' = w'$, which again is well-defined despite the
  fact that $\alpha$ and $\alpha'$ are only defined up to $\ann W$.}

Let $G \subset \GL(\V) = \GL(D,\RR)$ be the subgroup preserving $\eta$
and $h$. In other words, $g \in \GL(\V)$ belongs to $G$ if and
only if
\begin{equation}
  \label{eq:G-def}
  \eta(g \cdot v, g \cdot v') = \eta(v,v') \qquad\text{and}\qquad
  h(g \cdot \alpha, g \cdot \alpha') = h(\alpha,\alpha')
\end{equation}
for all $v, v'\in \V$ and $\alpha,\alpha'\in \V^*$.  Here $g \cdot $
means the natural action of $g \in \GL(\V)$ in $\V$ or $\V^*$, which
are related by
\begin{equation}
  \label{eq:dual-action-G}
  \left<g \cdot \alpha, v\right> = \left<\alpha, g^{-1} \cdot v\right>.
\end{equation}

\begin{lemma}
  \label{lem:G}
  Let $g \in \GL(\V)$.  Then $g \in G$ if and only if
  \begin{equation}
    \label{eq:G}
    g \cdot \eta^\flat(v) = \eta^\flat(g \cdot v) \qquad\text{and}\qquad
    g \cdot h^\sharp(\alpha) = h^\sharp(g \cdot \alpha),
  \end{equation}
  for all $v \in \V$ and $\alpha \in \V^*$. 
\end{lemma}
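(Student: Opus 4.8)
The plan is to prove the two identities in~\eqref{eq:G} separately, exploiting the fact that the defining conditions~\eqref{eq:G-def} decouple: I will show that the first identity in~\eqref{eq:G} is equivalent to $\eta$-invariance and the second to $h$-invariance. Each equivalence is an instance of the elementary fact that a linear isomorphism preserves a bilinear form precisely when the associated musical map intertwines the relevant actions. The only inputs needed are the non-degeneracy of the canonical pairing $\langle-,-\rangle \colon \V^* \times \V \to \RR$, the relation~\eqref{eq:dual-action-G}, and the invertibility of $g$.

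First I would record the immediate consequence of~\eqref{eq:dual-action-G} that $\langle g\cdot\alpha,\,g\cdot v\rangle = \langle\alpha,v\rangle$ for all $\alpha\in\V^*$ and $v\in\V$. For the $\eta$ half, using $\eta(u,u')=\langle\eta^\flat(u),u'\rangle$ together with this identity, one computes for all $v,v'\in\V$
\[
  \eta(g\cdot v,\,g\cdot v') = \langle\eta^\flat(g\cdot v),\,g\cdot v'\rangle \qquad\text{and}\qquad \eta(v,v') = \langle\eta^\flat(v),v'\rangle = \langle g\cdot\eta^\flat(v),\,g\cdot v'\rangle .
\]
Hence $\eta(g\cdot v,g\cdot v')=\eta(v,v')$ for all $v'$ is equivalent to $\langle\eta^\flat(g\cdot v)-g\cdot\eta^\flat(v),\,g\cdot v'\rangle=0$ for all $v'$; since $g$ is invertible the vectors $g\cdot v'$ exhaust $\V$, so by non-degeneracy of the pairing this says exactly $\eta^\flat(g\cdot v)=g\cdot\eta^\flat(v)$. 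Quantifying over $v$ gives the first equivalence. (One may note in passing that $\eta$-invariance forces $g$ to preserve $W=\ker\eta^\flat$, so that the right-hand side of the equation does land in $\ann W$ in accordance with~\eqref{eq:eta}; but this observation is not required.)

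The $h$ half is formally dual, obtained by swapping $\V$ with $\V^*$, $\eta^\flat$ with $h^\sharp$, and $\eta$ with $h$: from $h(\beta,\beta')=\langle\beta,h^\sharp(\beta')\rangle$ and the same pairing identity one gets that $h(g\cdot\alpha,g\cdot\alpha')=h(\alpha,\alpha')$ for all $\alpha$ is equivalent to $\langle g\cdot\alpha,\,h^\sharp(g\cdot\alpha')-g\cdot h^\sharp(\alpha')\rangle=0$ for all $\alpha$, which, again because $g$ is invertible and the pairing non-degenerate, is equivalent to $h^\sharp(g\cdot\alpha')=g\cdot h^\sharp(\alpha')$. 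Combining, $g\in G$ if and only if both identities in~\eqref{eq:G} hold, which is the claim.

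I do not expect a genuine obstacle here: the argument is a routine unwinding of definitions. The only points requiring mild care are the bookkeeping of which space each object lives in (so that, e.g., one does not conflate the action of $g$ on $\V$ with its action on $\V^*$), and the use of surjectivity of $u\mapsto g\cdot u$ to upgrade a ``vanishes against all test vectors'' statement to an equality via non-degeneracy of $\langle-,-\rangle$. In particular, the fact that $\eta$ and $h$ are degenerate as bilinear forms on $\V$ and $\V^*$ plays no role, since at no point do we need to invert them.
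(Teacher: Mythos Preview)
Your proof is correct and follows essentially the same approach as the paper: both arguments unwind the definitions via the dual pairing and use its non-degeneracy to pass from ``vanishes against all test vectors'' to equality of the musical-map expressions. The only cosmetic difference is that the paper applies $g^{-1}$ to reduce the test vector to $v'$ itself, whereas you keep $g\cdot v'$ and invoke surjectivity of $g$; these are equivalent bookkeeping choices.
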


\begin{proof}
  This follows from the fact that the equations in
  \eqref{eq:G-def} and \eqref{eq:G} coincide.  Let us first discuss
  the first equation.  Let $g \in \GL(\V)$ and let $v, v'  \in \V$.
  Then from the first equation in \eqref{eq:eta-and-delta},
  \begin{align*}
    \eta(g \cdot v, g \cdot v') &= \left<\eta^\flat(g\cdot v), g \cdot v'\right>\\
                                &= \left<g^{-1} \cdot \eta^\flat(g\cdot v), v'\right>
  \end{align*}
  and hence $g$ satisfies the first equation in \eqref{eq:G-def} if
  and only if for all $v, v' \in \V$,
  \begin{equation*}
    \left<g^{-1} \cdot \eta^\flat(g\cdot v), v'\right> = \left<\eta^\flat(v), v'\right>.
  \end{equation*}
  Since this holds for all $v'$, nondegeneracy of the dual pairing
  says that we may abstract $v'$ and hence arrive at
  \begin{equation*}
    g^{-1}\cdot \eta^\flat(g \cdot v) = \eta^\flat(v) \qquad\text{for
      all $v \in \V$},
  \end{equation*}
  which is equivalent to $\eta^\flat(g \cdot v) = g \cdot
  \eta^\flat(v)$ for all $v \in \V$, which is the first equation in
  \eqref{eq:G}.

  To prove the equivalence between the second equations, we again let
  $g \in \GL(\V)$ and now $\alpha,\alpha'\in \V^*$.  From the second
  equation in \eqref{eq:eta-and-delta}, we have that
  \begin{align*}
    h(g\cdot\alpha',g \cdot\alpha) &= \left<g \cdot \alpha', h^\sharp(g \cdot \alpha)\right>\\
                                   &= \left<\alpha', g^{-1}\cdot h^\sharp(g \cdot \alpha)\right>
  \end{align*}
  and hence $g$ satisfies the second equation in \eqref{eq:G-def} if
  and only if for all $\alpha,\alpha'\in \V^*$,
  \begin{equation*}
    \left<\alpha', h^\sharp(\alpha)\right> = \left<\alpha', g^{-1}\cdot
      h^\sharp(g \cdot \alpha)\right>.
  \end{equation*}
  By non-degeneracy of the dual pairing we may abstract $\alpha'$,
  arriving at
  \begin{equation*}
    h^\sharp(\alpha) = g^{-1}\cdot h^\sharp(g \cdot \alpha)
    \qquad\text{for all $\alpha \in \V^*$},
  \end{equation*}
  which is equivalent to $h^\sharp(g\cdot \alpha) = g \cdot
  h^\sharp(\alpha)$ for all $\alpha \in \V^*$, which is the second
  equation in \eqref{eq:G}.
\end{proof}

Notice that $W \subset \V$ and $\ann W \subset \V^*$ are
$G$-submodules, since they are kernels of $G$-equivariant linear maps.
In other words, we have the following short exact sequences of
$G$-modules:
\begin{equation}
  \label{eq:V-as-G-mod}
  \begin{tikzcd}
    0 \arrow[r] & W \arrow[r] & \V \arrow[r] & \V/W \arrow[r] & 0
  \end{tikzcd}
\end{equation}
and, dually,
\begin{equation}
  \label{eq:Vdual-as-G-mod}
  \begin{tikzcd}
    0 \arrow[r] & \ann W \arrow[r] & \V^* \arrow[r] & \V^*/\ann W \arrow[r] & 0.
  \end{tikzcd}
\end{equation}
Neither of these sequences splits (as $G$-modules), so that there is
no $G$-submodule of $\V$ (resp.~$\V^*$) complementary to $W$
(resp.~$\ann W$). The exact sequences \eqref{eq:eta} and \eqref{eq:delta}, as well as the
exact pair \eqref{eq:exact-pair-delta-eta}, are also exact sequences
of $G$-modules with $G$-equivariant linear maps.  Indeed, these
sequences exhibit $\V$ and $\V^*$ as filtered $G$-modules.

\begin{lemma}
  $G$ preserves the inner product $\gamma$ on $W$ defined by
  Equation~\eqref{eq:gamma}.
\end{lemma}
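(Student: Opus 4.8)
The plan is to unwind the definition~\eqref{eq:gamma} of $\gamma$ and reduce the claimed invariance to the two facts already at hand: that every $g \in G$ commutes with $h^\sharp$ (Lemma~\ref{lem:G}), and that the dual pairing is $\GL(\V)$-invariant in the sense of~\eqref{eq:dual-action-G}. The well-definedness of $\gamma$ has already been settled in the paragraph introducing~\eqref{eq:gamma}: if $h^\sharp(\alpha) = h^\sharp(\alpha')$ then $\alpha - \alpha' \in \ker h^\sharp = \ann W$ by the exact pair~\eqref{eq:exact-pair-delta-eta}, and $\langle \ann W, W\rangle = 0$, so the value $\gamma(h^\sharp\alpha, w)$ does not depend on the chosen preimage $\alpha$.

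Concretely, I fix $g \in G$ and $w, w' \in W$. Since $\im h^\sharp = W$, I choose $\alpha \in \V^*$ with $h^\sharp(\alpha) = w'$. By Lemma~\ref{lem:G}, $g \cdot w' = g \cdot h^\sharp(\alpha) = h^\sharp(g \cdot \alpha)$, so $g\cdot\alpha$ is a legitimate preimage of $g\cdot w'$ under $h^\sharp$. Applying the definition~\eqref{eq:gamma} twice and~\eqref{eq:dual-action-G} once,
\begin{equation*}
  \gamma(g \cdot w', g \cdot w) = \gamma\bigl(h^\sharp(g \cdot \alpha),\, g \cdot w\bigr) = \langle g \cdot \alpha,\, g \cdot w \rangle = \langle \alpha,\, g^{-1} g \cdot w \rangle = \langle \alpha, w \rangle = \gamma(h^\sharp\alpha, w) = \gamma(w', w).
\end{equation*}
As $g$, $w$, $w'$ were arbitrary, this shows that $G$ preserves $\gamma$.

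I do not expect any genuine obstacle here: once well-definedness of $\gamma$ is invoked, the proof is the single chain of equalities above, whose only nontrivial input is the commutation relation $g\cdot h^\sharp = h^\sharp \cdot (g\cdot)$ from Lemma~\ref{lem:G}. (One could instead run the symmetric argument using the alternative description $\gamma(w,w') = h(\alpha,\alpha')$ from the footnote to~\eqref{eq:gamma} together with the second equation of~\eqref{eq:G-def}, but the route above is shorter and makes the role of Lemma~\ref{lem:G} most transparent.)
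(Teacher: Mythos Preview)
Your proof is correct and follows essentially the same approach as the paper: both reduce to the commutation relation $g \cdot h^\sharp = h^\sharp \cdot (g\cdot)$ from Lemma~\ref{lem:G}. The only cosmetic difference is that the paper picks preimages for both arguments and passes through the footnote characterisation $\gamma(h^\sharp\alpha, h^\sharp\alpha') = h(\alpha,\alpha')$ together with the $G$-invariance of $h$, whereas you pick a preimage for only one argument and use the primary definition~\eqref{eq:gamma} together with the invariance~\eqref{eq:dual-action-G} of the dual pairing --- exactly the alternative you mention in your closing parenthetical.
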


\begin{proof}
  Let $w = h^\sharp(\alpha)$ and $w' = h^\sharp(\alpha')$
  for some $\alpha,\alpha' \in \V^*$ and let $g \in G$.  Then
  \begin{align*}
    \gamma(g\cdot w, g \cdot w') &= \gamma(g \cdot h^\sharp(\alpha), g \cdot h^\sharp(\alpha'))\\
                                 &= \gamma(h^\sharp(g \cdot \alpha), h^\sharp(g \cdot \alpha')) &\tag{by the second equation in \eqref{eq:G}}\\
                                 &= h(g \cdot \alpha, g \cdot\alpha') &\tag{by definition of $\gamma$}\\
                                 &= h(\alpha,\alpha') &\tag{since $h$ is $G$-invariant}\\
                                 &= \gamma(h^\sharp(\alpha), h^\sharp(\alpha')) &\tag{by definition of $\gamma$}\\
                                 &= \gamma(w,w').
  \end{align*}
\end{proof}

Let $\g$ denote the Lie algebra of $G$.  Since $G \subset \GL(\V)$, we
have that $\g$ is a Lie subalgebra of $\gl(\V)$.  This means that we
may identify every $X \in \g$ with the corresponding endomorphism $X
\in \End\V$.  The dual representation on $\V^*$ is such that $X
\mapsto -X^t \in \End\V^*$, where $X^t \in \End\V^*$ is the
transpose map, defined by
\begin{equation}
  \label{eq:transpose}
  \left<X^t \cdot \alpha, v\right> = \left<\alpha, X \cdot v\right>,
\end{equation}
for all $\alpha \in \V^*$ and $v \in \V$.

The following lemma follows from Lemma~\ref{lem:G} by taking $g =
\exp(t X)$ in Equation~\eqref{eq:G} and differentiating with respect
to $t$ at $t=0$.

\begin{lemma}
  \label{lem:lie-algebra-g}
  The Lie algebra $\g$ of $G$ consists of those endomorphisms $X \in
  \End\V$ such that
  \begin{equation}
    \label{eq:Lie-algebra-G}
    \eta^\flat \circ X + X^t \circ \eta^\flat = 0 
    \qquad\text{and}\qquad
    X \circ h^\sharp + h^\sharp \circ X^t = 0.
  \end{equation}
\end{lemma}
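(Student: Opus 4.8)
The plan is to deduce the Lie-algebra description directly from Lemma~\ref{lem:G} by the familiar device of differentiating one-parameter subgroups, recalling that $X \in \gl(\V)$ lies in $\g$ precisely when $g_t := \exp(tX) \in G$ for all $t$. So I would translate the two equations in \eqref{eq:G} characterising membership in $G$ into their infinitesimal forms and match them against \eqref{eq:Lie-algebra-G}.

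For the forward inclusion, assume $X \in \g$. Then for every $t$, every $v \in \V$ and every $\alpha \in \V^*$, Lemma~\ref{lem:G} gives $g_t \cdot \eta^\flat(v) = \eta^\flat(g_t \cdot v)$ and $g_t \cdot h^\sharp(\alpha) = h^\sharp(g_t \cdot \alpha)$. To differentiate at $t=0$ I would use two elementary facts: on $\V$ one has $\frac{d}{dt}\big|_{t=0}(g_t \cdot v) = X \cdot v$, while on $\V^*$, using the dual action \eqref{eq:dual-action-G}, one has $\frac{d}{dt}\big|_{t=0}(g_t \cdot \beta) = -X^t \cdot \beta$, with $X^t$ the transpose of \eqref{eq:transpose}. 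The one point demanding care is keeping straight which of these applies: $\eta^\flat(v)$ lives in $\ann W \subset \V^*$, so $G$ (and hence $\g$) acts on it by the dual representation, whereas $h^\sharp(\alpha) \in W \subset \V$ carries the standard action. Since $\eta^\flat$ and $h^\sharp$ are linear, the $t$-derivative passes through them, and the first identity becomes $-X^t \circ \eta^\flat = \eta^\flat \circ X$ while the second becomes $X \circ h^\sharp = -h^\sharp \circ X^t$; these are exactly the two equations in \eqref{eq:Lie-algebra-G}.

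For the reverse inclusion, merely differentiating is not enough: given $X$ satisfying \eqref{eq:Lie-algebra-G} I must show $g_t \in G$ for all $t$, which I would do by an integration/uniqueness argument. Fix $v \in \V$ and set $f(t) := g_t \cdot \eta^\flat(v) - \eta^\flat(g_t \cdot v) \in \V^*$, so that $f(0) = 0$. Using $g_{t+s} = g_s g_t$ one gets $f'(t) = -X^t \cdot (g_t \cdot \eta^\flat(v)) - \eta^\flat(X \cdot g_t \cdot v)$, and feeding in $\eta^\flat \circ X = -X^t \circ \eta^\flat$ collapses the right-hand side to $-X^t \cdot f(t)$. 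Thus $f$ solves a linear ODE with $f(0)=0$, so $f \equiv 0$, i.e. $g_t \cdot \eta^\flat(v) = \eta^\flat(g_t \cdot v)$ for all $t$ and $v$. Running the identical argument with the second relation in \eqref{eq:Lie-algebra-G} gives $g_t \cdot h^\sharp(\alpha) = h^\sharp(g_t \cdot \alpha)$ for all $t$ and $\alpha$. By Lemma~\ref{lem:G}, $g_t \in G$ for all $t$, hence $X \in \g$.

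I do not anticipate a genuine obstacle: the forward direction is a one-line differentiation once the sign conventions are in place, and the only slightly non-formal ingredient is the ODE step in the converse (equivalently, one could note that \eqref{eq:Lie-algebra-G} cuts out a linear subspace of $\gl(\V)$ closed under the commutator whose dimension agrees with $\dim\g$, but the uniqueness argument is cleaner and dispenses with a dimension count). The main risk is purely clerical — mismatched signs or confusing the actions on $\V$ and $\V^*$.
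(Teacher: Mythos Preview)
Your proposal is correct and follows the same approach as the paper, which simply says the lemma ``follows from Lemma~\ref{lem:G} by taking $g = \exp(tX)$ in Equation~\eqref{eq:G} and differentiating with respect to $t$ at $t=0$.'' You are in fact more thorough: the paper only sketches the forward direction, whereas you also supply the reverse inclusion via the ODE/uniqueness argument, which is a genuine (if routine) step the paper leaves implicit.
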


The $G$-submodule $W \subset \V$ defines a $G$-invariant filtration of
$\V$ and a corresponding dual filtration of $\V^*$:
\begin{equation}
  \label{eq:filtrations-V}
    0 \subset W \subset \V
  \qquad\text{and}\qquad
    0 \subset \ann W \subset \V^*
\end{equation}
and these in turn give rise to the following $G$-submodules of
$\End\V \cong \V \otimes \V^*$:
\begin{equation}
  \label{eq:submodules-EndV}
  \begin{tikzcd}
    & & W \otimes \V^* \arrow[rd] & \\
    0 \arrow[r] & W \otimes \ann W \arrow[ru] \arrow[rd] & & \V \otimes \V^*,\\
    & & \V \otimes \ann W \arrow[ru] & \\
  \end{tikzcd}
\end{equation}
with arrows depicting inclusions.  This results in a
filtration\footnote{We use the notation $V_1 + V_2$ for the vector
  space sum of subspaces $V_1$ and $V_2$ which need not be direct,
  since $V_1 \cap V_2$ need not be $0$.  Indeed, in the filtration of
  $\End\V$ given by Equation~\eqref{eq:filtration-EndV}, the
  intersection of $W \otimes \V^*$ and $\V \otimes \ann W$ is
  precisely $W \otimes \ann W$.} of $\V\otimes \V^*$:
\begin{equation}
  \label{eq:filtration-EndV}
  0 \subset W \otimes \ann W \subset W \otimes \V^* + \V \otimes \ann W  \subset \V \otimes \V^*.
\end{equation}
By intersecting with $\g$ we get $G$-submodules of $\g \subset \End\V$
\begin{equation}
  \label{eq:submodules-g}
  \begin{tikzcd}
    & & \g \cap (W \otimes \V^*) \arrow[rd] & \\
    0 \arrow[r] & \g \cap (W \otimes \ann W) \arrow[ru] \arrow[rd] & & \g,\\
    & & \g \cap (\V \otimes \ann W) \arrow[ru] & \\
  \end{tikzcd}
\end{equation}
which in turn gives rise to a filtration of $\g$:
\begin{equation}
  \label{eq:filtration-g}
  0 \subset \g \cap (W \otimes \ann W) \subset \g \cap (W \otimes \V^*) + \g \cap (\V \otimes \ann W) \subset \g. 
\end{equation}

Let us now start to understand the structure of $\g$.

\begin{lemma}
  \label{lem:im-X-in-W-in-ker-X}
  Let $X \in \End\V$.
  \begin{enumerate}[label=(\alph*)]
  \item The condition $\im X \subset W$ implies that $\eta^\flat \circ
    X + X^t \circ \eta^\flat = 0$.
  \item The condition $W \subset \ker X$ implies that $X \circ
    h^\sharp + h^\sharp \circ X^t = 0$.
  \end{enumerate}
\end{lemma}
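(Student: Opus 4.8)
The plan is to prove each of the two statements directly from the characterisation of $W$ and $\ann W$ as kernels and images of $\eta^\flat$ and $h^\sharp$ supplied by the exact sequences~\eqref{eq:eta} and~\eqref{eq:delta}, together with the defining relation~\eqref{eq:transpose} of the transpose. The crucial bookkeeping fact is that $W = \ker\eta^\flat = \im h^\sharp$ while $\ann W = \im\eta^\flat = \ker h^\sharp$; this lets us show that each of the two composites making up a given equation in~\eqref{eq:Lie-algebra-G} vanishes \emph{separately} under the stated hypothesis, so that their sum vanishes for the trivial reason.

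For part (a), assume $\im X \subset W$. Since $W = \ker\eta^\flat$, this gives $\eta^\flat\circ X = 0$ at once. For the second composite, I would note that $\im X \subset W$ translates, via~\eqref{eq:transpose}, into $\ann W \subset \ker X^t$: for any $\alpha \in \ann W$ and $v \in \V$ we have $\langle X^t\alpha, v\rangle = \langle\alpha, Xv\rangle = 0$ because $Xv \in W$ and $\alpha$ annihilates $W$. Since $\im\eta^\flat = \ann W$ by~\eqref{eq:eta}, it follows that $X^t\circ\eta^\flat = 0$, and adding the two vanishing composites yields $\eta^\flat\circ X + X^t\circ\eta^\flat = 0$.

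Part (b) is the mirror image under the exact pair~\eqref{eq:exact-pair-delta-eta}. Assuming $W \subset \ker X$ and using $\im h^\sharp = W$ immediately gives $X\circ h^\sharp = 0$. Dually, $W \subset \ker X$ forces $\im X^t \subset \ann W$: for any $\alpha \in \V^*$ and $w \in W$ we have $\langle X^t\alpha, w\rangle = \langle\alpha, Xw\rangle = 0$. Since $\ker h^\sharp = \ann W$ by~\eqref{eq:delta}, this yields $h^\sharp\circ X^t = 0$, and hence $X\circ h^\sharp + h^\sharp\circ X^t = 0$.

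I do not expect a genuine obstacle here: the whole argument is a short diagram chase resting on the two exact sequences and the definition of the transpose, and it produces each composite vanishing on the nose rather than only in sum (which is in fact a slightly stronger statement than what the lemma asks). The only point requiring care is keeping track of which of $W$, $\ann W$ plays the role of a kernel and which of an image on the $\V$ versus the $\V^*$ side; the exact pair~\eqref{eq:exact-pair-delta-eta} is precisely the device that makes the proof of (b) formally the transpose of the proof of (a).
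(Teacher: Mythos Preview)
Your proof is correct and follows essentially the same approach as the paper: both show that each of the two composites in~\eqref{eq:Lie-algebra-G} vanishes separately by combining the kernel/image descriptions of $W$ and $\ann W$ from the exact sequences with the defining property of the transpose. The only cosmetic difference is that the paper routes the second vanishing through the symmetric bilinear forms $\eta$ and $h$, whereas you phrase it directly as $\ann W \subset \ker X^t$ (respectively $\im X^t \subset \ann W$); these are the same computation.
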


\begin{proof}
  \begin{enumerate}[label=(\alph*)]
  \item Since $W = \ker\eta^\flat$, if $\im X \subset W$, then
    $\eta^\flat \circ X = 0$.  But this means that for all $v, v' \in
    \V$,
    \begin{align*}
      0 &= \eta(v,Xv')\\
        &= \left<\eta^\flat v, X v'\right>\\
        &= \left<X^t \eta^\flat v, v'\right>
    \end{align*}
    so that $X^t \circ \eta^\flat = 0$.
  \item Since $W = \im h^\sharp$, the fact that $W \subset
    \ker X$ says that $X \circ h^\sharp = 0$.  But this means
    that for all $\alpha,\alpha' \in \V^*$,
    \begin{align*}
      0 &= \left<\alpha', X h^\sharp \alpha\right>\\
        &= \left<X^t \alpha', h^\sharp \alpha\right>\\
        &= h(X^t\alpha', \alpha)\\
        &= h(\alpha, X^t\alpha')\\
        &= \left<\alpha, h^\sharp X^t \alpha'\right>
    \end{align*}
    and hence $h^\sharp \circ X^t = 0$.
  \end{enumerate}
\end{proof}

The endomorphisms in $W \otimes \ann W$ -- i.e., those whose image
lies in $W$ and whose kernel contains $W$ -- belong to $\g$  and can
in fact be interpreted as the $p$-brane analogue of Galilean boosts.

\begin{lemma}\label{lem:boosts}
  $W \otimes \ann W \subset \g$.
\end{lemma}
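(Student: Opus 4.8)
The plan is to reduce the statement to the two lemmas immediately preceding it. First I would unpack what it means for an endomorphism to lie in the subspace $W \otimes \ann W \subset \V \otimes \V^* \cong \End\V$. A decomposable element $w \otimes \alpha$ with $w \in W$ and $\alpha \in \ann W$ acts on $\V$ by $v \mapsto \langle\alpha,v\rangle\, w$; since $\langle\alpha,v\rangle \in \RR$, its image is contained in the line $\RR w \subset W$, and since $\alpha$ annihilates $W$ it kills all of $W$. Taking linear combinations, a general $X \in W \otimes \ann W$ therefore satisfies the two conditions $\im X \subset W$ and $W \subset \ker X$. (Conversely these two conditions characterise $W \otimes \ann W$ inside $\End\V$, though for this lemma only the stated inclusion is needed, so I would only argue the easy direction.)

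Next I would invoke Lemma~\ref{lem:im-X-in-W-in-ker-X}. Part~(a) applied to the condition $\im X \subset W$ gives $\eta^\flat \circ X + X^t \circ \eta^\flat = 0$, and part~(b) applied to $W \subset \ker X$ gives $X \circ h^\sharp + h^\sharp \circ X^t = 0$. These are precisely the two defining equations of $\g$ recorded in Lemma~\ref{lem:lie-algebra-g}, so $X \in \g$. Since $X \in W \otimes \ann W$ was arbitrary, this proves $W \otimes \ann W \subset \g$.

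There is essentially no obstacle here: the whole content has been front-loaded into Lemma~\ref{lem:im-X-in-W-in-ker-X}, and the only thing to be careful about is the bookkeeping identification $\End\V \cong \V \otimes \V^*$ under which $W \otimes \ann W$ corresponds exactly to the endomorphisms with image in $W$ and kernel containing $W$. If one wanted to be fully self-contained one could instead verify the equations of Lemma~\ref{lem:lie-algebra-g} directly on a decomposable $X = w \otimes \alpha$: for such $X$ one has $X^t = \alpha \otimes w$ (up to the sign convention built into the dual action), $\eta^\flat \circ X = 0$ because $w \in W = \ker \eta^\flat$, $X^t \circ \eta^\flat = 0$ because $\eta^\flat$ lands in $\ann W$ on which $\alpha$ may be paired but the output $w$ is killed after pairing with $\ann W$... — but this duplicates the proof of Lemma~\ref{lem:im-X-in-W-in-ker-X}, so the clean route is simply to cite it.
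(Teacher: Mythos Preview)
Your proposal is correct and matches the paper's proof essentially verbatim: the paper simply observes that any $X \in W \otimes \ann W$ has $\im X \subset W$ and $W \subset \ker X$, then cites Lemma~\ref{lem:im-X-in-W-in-ker-X} and Lemma~\ref{lem:lie-algebra-g}. Your additional unpacking of the identification $W \otimes \ann W \subset \End\V$ via decomposable tensors is fine extra detail, and you are right that the direct verification you sketch at the end would merely reprove Lemma~\ref{lem:im-X-in-W-in-ker-X}.
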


\begin{proof}
  Any endomorphism $X \in W \otimes \ann W$ is such that $W \subset
  \ker X$ and $\im X \subset W$, hence by
  Lemma~\ref{lem:im-X-in-W-in-ker-X} and
  Lemma~\ref{lem:lie-algebra-g}, $X \in \g$.
\end{proof}

The spaces in the above filtration~\eqref{eq:filtration-g} of $\g$ can
be characterised as follows:
\begin{equation}\label{eq:filtered-space-chars}
  \begin{split}
    W \otimes \ann W &= \left\{X \in \End\V ~ \middle |~ \im X \subset
    W \quad\text{and}\quad W \subset \ker X\right\}\\
  \g \cap (W \otimes \V^*)  &= \left\{ X \in \End\V ~\middle |~ \im X \subset W \quad\text{and}\quad X \circ h^\sharp +
    h^\sharp \circ X^t = 0\right\}\\
  \g \cap (\V \otimes \ann W) &= \left\{ X \in \End\V ~\middle |~ W
    \subset \ker X \quad\text{and}\quad X^t \circ \eta^\flat +
    \eta^\flat \circ X = 0\right\}.
  \end{split}
\end{equation}

The fact that $W\subset \V$ is a $G$-submodule can be rephrased as
saying that every $X \in \g$ is a sum $X = Y + Z$ of an endomorphism
$Y$ with $W \subset \ker Y$ and an endomorphism $Z$ with $\im Z
\subset W$.  A consequence of that observation is that whereas $(W
\otimes \V^*) +  (\V \otimes \ann W) \subsetneq \V \otimes \V^*$, it
is nevertheless true that
\begin{equation}
 \g \cap (W \otimes \V^*) + \g \cap (\V \otimes \ann W) = \g.
\end{equation}
It follows that $\g$ is a filtered Lie algebra:
\begin{equation}\label{eq:filtered-g}
    0 \subset W \otimes \ann W \subset \g,
\end{equation}
where $W \otimes \ann W = \g \cap (W \otimes \V^*) \cap (\V \otimes
\ann W)$.

Recall that the filtration of $\V$ induces a filtration of $\V^*$, by
declaring that the dual pairing should have degree zero.  This in turn
induces a filtration of $\End\V \cong \V \otimes \V^*$ and hence of
$\g \subset \End\V$.  One way to rephrase all this is that the
filtration of the Lie algebra $\g$ is compatible with the filtrations
of the modules $\V$ and $\V^*$.  Indeed, letting $\g^0= \g$, $\g^{-1}
= W \otimes \ann W$ and $\g^{-2}=0$ and similarly $\V^0=\V$, $\V^{-1}=W$ and
$\V^{-2}=0$, then it is clear that $\g^i \times
\V^j \to \V^{i+j}$ under the action of $\g$.\footnote{For
  this to hold for every $i,j$, it is necessary to extend the
filtrations in both directions, by declaring $\g^i = \g$ for all
$i\geq 0$ and $\g^i = 0$ for all $i \leq -2$ and similarly for $\V$.
We similarly extend the filtration of $\V^*$ for what follows.}  Similarly, if we let $(\V^*)^1 = \V^*$,
$(\V^*)^0 = \ann W$ and $(\V^*)^{-1} = 0$, then $\g^i \times (\V^*)^j
\to (\V^*)^{i+j}$.  Of course, the adjoint module of $\g$ itself is
filtered: $[\g^i,\g^j] \subset \g^{i+j}$, since $\g$ is a filtered Lie
algebra.

\begin{lemma}
  \label{lem:common-ideal}
  The subspaces $\g \cap (W \otimes \V^*)$ and $\g \cap (\V \otimes
  \ann W)$ are Lie subalgebras of $\g$ containing $W \otimes \ann W$
  as an abelian ideal.
\end{lemma}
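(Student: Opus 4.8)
The plan is to reduce everything to one elementary observation: if $X,Y \in \End\V$ satisfy $\im X \subset W$ and $W \subset \ker Y$, then $YX = 0$, since $YX$ factors through $\im X \subset W \subset \ker Y$. I would work throughout with the endomorphism descriptions in~\eqref{eq:filtered-space-chars}, and recall that $W\otimes\ann W \subset \g$ by Lemma~\ref{lem:boosts}; since moreover $W\otimes\ann W \subset W\otimes\V^*$ and $W\otimes\ann W \subset \V\otimes\ann W$, the subspace $W\otimes\ann W$ does sit inside both $\g\cap(W\otimes\V^*)$ and $\g\cap(\V\otimes\ann W)$, which is the first thing the statement asserts.

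Next I would check that $\g\cap(W\otimes\V^*)$ and $\g\cap(\V\otimes\ann W)$ are Lie subalgebras. Since $\g$ is already a Lie algebra, it suffices to see that each of the two subspaces is closed under the commutator bracket of $\End\V$. If $\im X\subset W$ and $\im Y\subset W$, then $\im(XY)\subset\im X\subset W$ and likewise $\im(YX)\subset W$, so $\im[X,Y]\subset W$; hence $\g\cap(W\otimes\V^*)$ is a subalgebra by~\eqref{eq:filtered-space-chars}. Dually, if $W\subset\ker X$ and $W\subset\ker Y$, then $W\subset\ker(XY)\cap\ker(YX)\subset\ker[X,Y]$, so $\g\cap(\V\otimes\ann W)$ is a subalgebra.

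Then I would show $W\otimes\ann W$ is an abelian ideal in each. Take $B\in W\otimes\ann W$, so $\im B\subset W$ and $W\subset\ker B$. If $X\in\g\cap(W\otimes\V^*)$, so $\im X\subset W$, then $BX=0$ by the observation (applied with $\im X\subset W\subset\ker B$), so $[X,B]=XB$; and $\im(XB)\subset\im X\subset W$ while $\ker(XB)\supset\ker B\supset W$, whence $[X,B]\in W\otimes\ann W$. If instead $Y\in\g\cap(\V\otimes\ann W)$, so $W\subset\ker Y$, then $YB=0$ (now using $\im B\subset W\subset\ker Y$), so $[Y,B]=-BY$; and $\im(BY)\subset\im B\subset W$ while $BYw=B(Yw)=0$ for $w\in W$, so $\ker(BY)\supset W$ and again $[Y,B]\in W\otimes\ann W$. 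Finally $W\otimes\ann W$ is abelian, since for $B,B'\in W\otimes\ann W$ the same factoring gives $BB'=B'B=0$, hence $[B,B']=0$.

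I do not expect a genuine obstacle: the statement is purely linear-algebraic and is driven entirely by the factorisation $YX=0$ whenever $\im X\subset W\subset\ker Y$. The only care needed is bookkeeping — invoking the correct half of~\eqref{eq:filtered-space-chars} in each case, and noting that one need not separately re-verify the defining conditions of Lemma~\ref{lem:lie-algebra-g} for the various brackets, either because $W\otimes\ann W\subset\g$ is already known (Lemma~\ref{lem:boosts}) or because, once $\im\subset W$ and $W\subset\ker$ are established, membership in $\g$ is automatic by Lemma~\ref{lem:im-X-in-W-in-ker-X}.
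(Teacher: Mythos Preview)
The proposal is correct and follows essentially the same approach as the paper, using the endomorphism characterisations in~\eqref{eq:filtered-space-chars} together with the vanishing $YX=0$ whenever $\im X\subset W\subset\ker Y$. Your version is in fact slightly more economical: whereas the paper explicitly re-verifies the condition $[X,Y]\circ h^\sharp + h^\sharp\circ[X,Y]^t=0$ when showing $\g\cap(W\otimes\V^*)$ is a subalgebra, you correctly observe that $[X,Y]\in\g$ is automatic (since $\g$ is already a Lie algebra) and only the image/kernel conditions defining $W\otimes\V^*$ or $\V\otimes\ann W$ need checking.
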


\begin{proof}
  This follows easily from the characterisations
  \eqref{eq:filtered-space-chars} of the subspaces $\g\cap(W \otimes
  \V^*)$ and $\g \cap (\V \otimes \ann W)$.  For example, let $X,Y \in
  \g \cap (W \otimes \V^*)$.  Then the image of $[X,Y] = X \circ Y - Y
  \circ X$ is certainly contained in $W$ since this is true for $X$
  and $Y$ separately.  The condition $X \circ h^\sharp +
  h^\sharp \circ X^t = 0$ says that $h$ is invariant, which
  is preserved under commutators of endomorphisms:
  \begin{align*}
    [X,Y]\circ h^\sharp &= X \circ Y \circ h^\sharp - Y  \circ X \circ h^\sharp\\
                             &= - X \circ h^\sharp \circ Y^t + Y \circ h^\sharp \circ X^t\\
                             &= h^\sharp \circ X^t \circ Y^t - h^\sharp \circ Y^t \circ X^t\\
                             &= h^\sharp \circ [X^t, Y^t]\\
                             &= h^\sharp \circ [Y,X]^t\\
                             &= - h^\sharp \circ [X,Y]^t.
  \end{align*}
  The proof that $\g \cap (\V \otimes \ann W)$ is a Lie subalgebra is
  similar, mutatis mutandis. Finally, let us show that both Lie
  algebras have $W \otimes \ann W$ as an abelian ideal. That
  $W \otimes \ann W$ is abelian is clear since if $X,Y
  \in W \otimes \ann W$, then $X \circ Y = 0$.  To show that it is an
  ideal of $\g \cap (\V \otimes \ann W)$, let $X \in \g \cap (\V
  \otimes \ann W)$ and $Y \in W \otimes \ann W$.  Then $X \circ Y = 0$
  and $Y \circ X$ has image in $W$ because $Y$ does and annihilates
  $W$ since $X$ does. Therefore $[X,Y] = - Y \circ X \in W \otimes
  \ann W$.  The proof for $\g  \cap (W \otimes \V^*)$ is similar,
  and results in $[X,Y] = X \circ Y$ for $X \in \g \cap (W \otimes
  \V^*)$ and $Y \in W \otimes \ann W$, which is clearly in $W \otimes
  \ann W$.
\end{proof}

We now identify the quotient Lie algebras of
$\g \cap (W \otimes \V^*)$ and $\g \cap (\V \otimes \ann W)$ by their
ideal $W \otimes \ann W$. As discussed above, $\eta$ induces a
non-degenerate inner product $\overline\eta$ on the quotient vector
space $\V/W$. Let $\so(\V/W,\overline\eta)$ denote the
$\overline\eta$-skewsymmetric endomorphisms of $\V/W$. To be more
concrete, let us denote by $v \mapsto \overline v$ the canonical
surjection $\V \to \V/W$. Then
$\overline\eta (\overline v, \overline v') = \eta(v,v')$, which is
well defined since $W = \ker \eta^\flat$. Let $X \in \End \V$ preserve
$W$, so that $X w \in W$ for all $w \in W$. Then $X$ induces an
endomorphism $\overline X \in \End(\V/W)$ by
$\overline X\, \overline v = \overline{X v}$, which is well-defined
precisely because $X$ preserves $W$. Moreover, all endomorphisms of
$\V/W$ are of the form $\overline X$ for some $X \in \End\V$
preserving $W$. Finally, $\so(\V/W,\overline\eta)$ consists of those
endomorphisms $\overline X$ of $\V/W$ such that
$\overline\eta(\overline X \, \overline v, \overline v') = -
\overline\eta(\overline v, \overline X \, \overline v')$ for all
$\overline v, \overline v' \in \V/W$. Similarly, $\so(W,\gamma)$ are
those endomorphisms $Y$ of $W$ such that
$\gamma(Y w, w') = - \gamma(w,Y w')$for all $w,w' \in W$.

\begin{proposition}
  \label{prop:so-isos}
  There are Lie algebra isomorphisms
  \begin{equation}
    \label{eq:so-isos}
    \frac{\g \cap (W \otimes \V^*)}{W \otimes \ann W}  \cong
    \so(W,\gamma) \qquad\text{and}\qquad
    \frac{\g \cap (\V \otimes \ann W)}{W \otimes \ann W}  \cong \so(\V/W,\overline\eta).
  \end{equation}
\end{proposition}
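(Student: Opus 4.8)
The plan is to realise each of the two isomorphisms through the first isomorphism theorem applied to an evident restriction, resp.\ quotient, map, and then to identify image and kernel using the characterisations in \eqref{eq:filtered-space-chars}.

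For the first isomorphism, I would consider the map $\Psi \colon \g \cap (W \otimes \V^*) \to \End W$, $X \mapsto X|_W$. It is well defined because $\im X \subset W$ forces $X$ to preserve $W$, and it is a Lie algebra homomorphism because $(XY)|_W = X|_W \circ Y|_W$ for any $X,Y$ preserving $W$. Next I would show $\im\Psi = \so(W,\gamma)$. By Lemma~\ref{lem:im-X-in-W-in-ker-X}(a), on the subspace $W \otimes \V^* = \{X : \im X \subset W\}$ the condition $\eta^\flat \circ X + X^t \circ \eta^\flat = 0$ is automatic, so by \eqref{eq:filtered-space-chars} membership in $\g$ is equivalent to $X \circ h^\sharp + h^\sharp \circ X^t = 0$. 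Writing an arbitrary $w \in W$ as $w = h^\sharp(\alpha)$ (possible since $\im h^\sharp = W$) and using the defining relation $\gamma(h^\sharp\alpha, w') = \left<\alpha, w'\right>$ of \eqref{eq:gamma} together with the transpose relation \eqref{eq:transpose}, one checks that $X h^\sharp\alpha = - h^\sharp X^t\alpha$ for all $\alpha$ if and only if $\gamma(Xw, w') + \gamma(w, Xw') = 0$ for all $w, w' \in W$ — nondegeneracy of $\gamma$ being what lets one pass between the two forms. Hence $\g \cap (W \otimes \V^*) = \{X \in W \otimes \V^* : X|_W \in \so(W,\gamma)\}$, so $\Psi$ surjects onto $\so(W,\gamma)$ (any $Y \in \so(W,\gamma)$ extends to such an $X$, e.g.\ by composing $Y$ with a linear projection $\V \onto W$). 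Finally $\ker\Psi = \{X \in \g : \im X \subset W,\ W \subset \ker X\}$, which by the first line of \eqref{eq:filtered-space-chars} and Lemma~\ref{lem:boosts} is exactly $W \otimes \ann W$. The first isomorphism theorem then yields the first claimed isomorphism.

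The second isomorphism is treated dually. I would consider $\overline\Psi \colon \g \cap (\V \otimes \ann W) \to \End(\V/W)$ sending $X$ to the induced endomorphism $\overline X$; this is well defined and a Lie algebra homomorphism because $W \subset \ker X$. On $\V \otimes \ann W = \{X : W \subset \ker X\}$ the condition $X \circ h^\sharp + h^\sharp \circ X^t = 0$ is now automatic by Lemma~\ref{lem:im-X-in-W-in-ker-X}(b), so membership in $\g$ reduces to $X^t \circ \eta^\flat + \eta^\flat \circ X = 0$; using $\overline\eta(\overline v, \overline v') = \eta(v,v') = \left<\eta^\flat v, v'\right>$ (well defined since $W = \ker\eta^\flat$) this condition is equivalent to $\overline\eta(\overline X \overline v, \overline v') + \overline\eta(\overline v, \overline X \overline v') = 0$ for all $\overline v, \overline v'$. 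So $\overline\Psi$ surjects onto $\so(\V/W,\overline\eta)$, while $\ker\overline\Psi = \{X \in \g : W \subset \ker X,\ \im X \subset W\} = W \otimes \ann W$, and again the first isomorphism theorem concludes.

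The step I expect to require the most care is the image computation — specifically, unwinding the non-canonical identifications $W^* \cong \V^*/\ann W$ and $(\V/W)^* \cong \ann W$ that are built into the very definitions of $\gamma$ and $\overline\eta$, and keeping the transposes and dual pairings straight, so as to see cleanly that $X \circ h^\sharp + h^\sharp \circ X^t = 0$ is precisely the $\gamma$-skewness of $X|_W$ and that $X^t \circ \eta^\flat + \eta^\flat \circ X = 0$ is precisely the $\overline\eta$-skewness of $\overline X$. Everything else — bracket-compatibility of $\Psi$ and $\overline\Psi$, surjectivity, and the kernel computations — is then formal.
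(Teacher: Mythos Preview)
Your proposal is correct and follows essentially the same route as the paper's proof: both define the restriction map $X \mapsto X|_W$ (resp.\ the induced map $X \mapsto \overline X$), verify it is a Lie algebra homomorphism, show its image is $\so(W,\gamma)$ (resp.\ $\so(\V/W,\overline\eta)$) by unwinding the relation between the $\g$-condition and $\gamma$- (resp.\ $\overline\eta$-) skewness via \eqref{eq:gamma} and \eqref{eq:transpose}, identify the kernel with $W\otimes\ann W$, and invoke the first isomorphism theorem. Your packaging of the image computation as the single equivalence $\g \cap (W\otimes\V^*) = \{X \in W\otimes\V^* : X|_W \in \so(W,\gamma)\}$ is slightly tidier than the paper's, which proves the two directions of that equivalence in separate places, but the content is the same.
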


\begin{proof}
  Let $X : \V \to W$ and let $X_| \in \End W$ denote its restriction
  to $W$.  This defines a surjective map $\Hom(\V,W) \to \End W$ with
  kernel $W \otimes \ann W$.  This map is actually a Lie algebra
  homomorphism.  Indeed, if we let $X,Y \in \Hom(\V,W)$, then
  \begin{align*}
    [X_|,Y_|] &= X_| \circ Y_| - Y_| \circ X_| \\
              &= X \circ Y_| - Y \circ X_| & \tag{since $\im X \subset W$ and $\im Y \subset W$}\\
              &= (X \circ Y)_| - (Y \circ X)_|\\
              &= [X,Y]_|.
  \end{align*}
  If in addition $X \in \g$, so that $X \in \g \cap (W
  \otimes \V^*)$ then $X_| \in \so(W,\gamma)$. Indeed, let $w =
  h^\sharp \alpha$ and $w'= h^\sharp \alpha'$, then
  \begin{equation*}
    X w = X  h^\sharp \alpha = (X \circ
    h^\sharp)(\alpha) = - (h^\sharp \circ X^t)(\alpha) = -
    h^\sharp(X^t\alpha)
  \end{equation*}
  and hence
  \begin{align*}
    \gamma(X w,w') &= - \gamma(h^\sharp(X^t\alpha), h^\sharp(\alpha'))\\
                         &= - \left<X^t\alpha, h^\sharp\alpha'\right> & \tag{by definition of $\gamma$}\\
                         &= - \left<\alpha, X h^\sharp\alpha'\right>& \tag{by definition of $X^t$}\\
                         &= \left<\alpha, h^\sharp (X^t \alpha')\right> &\tag{since $X \in \g$}\\
                         &= \gamma(h^\sharp\alpha, h^\sharp(X^t \alpha')) & \tag{by definition of $\gamma$}\\
                         &= -\gamma(w, X w').
  \end{align*}
  Furthermore, we claim that the linear map $\g \cap (W \otimes \V^*) \to
  \so(W,\gamma)$ is surjective.  Indeed, let $X_| \in \so(W,\gamma)$
  for some $X \in W \otimes \V^*$ where $X$ is defined up to $W \otimes
  \ann W$.  We want to show that $X \in \g \cap (W \otimes \V^*)$,
  which, by the characterisation \eqref{eq:filtered-space-chars} and
  since $\im X \subset W$, is tantamount to showing that $X \circ
  h^\sharp + h^\sharp \circ X^t = 0$.  The first observation
  is that if it holds for some $X \in W \otimes \V^*$, it holds for $X +
  B$ for any $B \in W \otimes \ann W$, since $B \circ h^\sharp =
  0$ (since $\im h^\sharp  = W$) and $h^\sharp \circ B = 0$
  (since $\ker h^\sharp = \ann W$).  So it is enough to show it
  for any such $X$.  Since the restriction to $W$ lies in
  $\so(W,\gamma)$, we have that for all $w,w' \in W$,
  \begin{equation*}
    \gamma(w,Xw') + \gamma(w',Xw)= 0.
  \end{equation*}
  Since $h^\sharp : \V^* \to W$ is surjective, this is equivalent
  to
  \begin{equation}
    \label{eq:X-in-so}
    \gamma(h^\sharp \alpha, X h^\sharp \beta) +
    \gamma(h^\sharp \beta, X h^\sharp \alpha)=0
  \end{equation}
  for all $\alpha,\beta \in \V^*$.  Therefore,
  \begin{align*}
    0 &= \gamma(h^\sharp \alpha, X h^\sharp \beta) +
    \gamma(h^\sharp \beta, X h^\sharp \alpha) &\tag{by Equation~\eqref{eq:X-in-so}}\\
      &= \left<\alpha, Xh^\sharp \beta\right> + \left<\beta,X h^\sharp\alpha\right> &\tag{by definition of $\gamma$}\\
      &= \left<\alpha, Xh^\sharp \beta\right> + \left<X^t\beta, h^\sharp\alpha\right> &\tag{by definition of $X^t$}\\
      &= \left<\alpha, Xh^\sharp \beta\right> + \left<\alpha, h^\sharp X^t\beta\right> & \tag{since $h$ is symmetric}\\
      &= \left<\alpha, (X \circ h^\sharp + h^\sharp \circ X^t)\beta\right>.
  \end{align*}
  Since $\alpha,\beta \in \V^*$ are arbitrary, this shows that $X
  \circ h^\sharp + h^\sharp \circ X^t = 0$ and hence $X \in
  \g \cap (W \otimes \V^*)$.  In summary, we have a surjective Lie
  algebra homomorphism $\g \cap (W \otimes \V^*) \to \so(W,\gamma)$
  with kernel the ideal $W \otimes \ann W$, which proves the first
  isomorphism.

  To prove the second isomorphism, we observe that we have an
  isomorphism $\V \otimes \ann W \cong \Hom(\V/W,\V)$.  Composing with
  the projection $\V \to \V/W$ we obtain a surjective map $\V \otimes
  \ann W \to \End(\V/W)$, sending $X \mapsto \overline X$ with kernel
  $W \otimes \ann W$.   This map is again a Lie algebra homomorphism.
  Indeed, if $X,Y \in \V \otimes \ann W$, then
  \begin{align*}
    [\overline X, \overline Y] \overline v &= \overline X (\overline Y  (\overline v))  - \overline Y (\overline X  (\overline v))\\
                                           &= \overline X (\overline {Y v})  - \overline Y (\overline {Xv}) &\tag{by definition of $\overline X,\overline Y$}\\
                                           &= \overline {X Y v}  - \overline {Y X v} &\tag{by definition of $\overline X,\overline Y$}\\
                                           &= \overline {[X,Y]v}\\
                                           &= \overline {[X,Y]} \overline v. &\tag{by definition of $\overline{[X,Y]}$}
  \end{align*}
  Since this holds for every $\overline v$, we may abstract it and hence $[\overline X, \overline Y] = \overline{[X,Y]}$.
  
  If now $X \in \g \cap (\V \otimes \ann W)$, then for all $v, v' \in \V$,
    \begin{align*}
    \overline\eta(\overline X \overline v, \overline{v'}) &= \overline\eta(\overline{X  v}, \overline{v'})\\
                                                                &= \eta(X  v, v')\\
                                                                &= - \eta(v, X v') &\tag{since $X \in \g$}\\
                                                                &= - \overline\eta(\overline v, \overline{X v'})\\
                                                                &= - \overline\eta(\overline v, \overline X \cdot \overline{v'}),
  \end{align*}
  so that $\overline X \in \so(\V/W,\overline\eta)$. Furthermore, we
  claim that this map $\g \cap (\V \otimes \ann W) \to \so(\V/W,\overline\eta)$ is
  surjective.  Every $\overline X \in \End(\V/W)$ comes
  from some $X \in \V \otimes\ann W$, which is only defined up to the
  addition of some $B \in W \otimes \ann W$.  We want to show that if
  $\overline X \in \so(\V/W,\overline\eta)$, then $X \in \g \cap (\V
  \otimes \ann W)$.  From the characterisation
  \eqref{eq:filtered-space-chars} and since for $X \in \V \otimes \ann
  W$, it follows that $\ker X \supset W$, we need only show that $X^t
  \circ \eta^\flat + \eta^\flat \circ X = 0$.  As before, we observe
  that if this holds for some $X \in \V \otimes \ann W$, it holds for
  $X + B$ for any $B \in W \otimes \ann W$.  This is because
  $B^t\circ\eta^\flat = 0$ (since $\im \eta^\flat = \ann W \subset
  \ker B^t$) and $\eta^\flat \circ B = 0$ (since $\im B \subset W =
  \ker\eta^\flat$).  Now let $u,v \in \V$ be arbitrary and consider
  \begin{align*}
    \left<(X^t \circ \eta^\flat + \eta^\flat \circ X) u, v\right> &= \left<X^t \eta^\flat u, v\right> + \left<\eta^\flat X u, v\right>\\
          &= \left<\eta^\flat u, X v\right> + \left<\eta^\flat X u, v\right> &\tag{by definition of $X^t$}\\
          &= \eta(u, X v) + \eta(Xu, v) &\tag{by definition of $\eta$}\\
          &= \overline\eta(\overline u, \overline{X v}) + \overline\eta(\overline{Xu}, \overline v) &\tag{by definition of $\overline\eta$}\\
          &= \overline\eta(\overline u, \overline X \overline v) + \overline\eta(\overline X \overline u, \overline v) &\tag{by definition of $\overline X$}\\
          &= 0. \tag{since $\overline X \in \so(\V/W,\overline \eta)$}
  \end{align*}
  Therefore we have established a surjective Lie algebra homomorphism
  $\g \cap (\V \otimes \ann W) \to \so(\V/W,\overline\eta)$ with
  kernel the ideal $W \otimes \ann W$, which proves the second
  isomorphism.
\end{proof}

Finally, we can characterise the Lie algebra $\g$.

\begin{proposition}
  \label{prop:g-as-lie-algebra}
  The Lie algebra $\g$ is isomorphic to an abelian extension
  \begin{equation}
    \label{eq:Lie-algebra-g-iso}
    \begin{tikzcd}
      0 \arrow[r] & \Hom(\V/W,W) \arrow[r] & \g \arrow[r] &
      \so(W,\gamma)\oplus \so(\V/W,\overline\eta)\arrow[r] & 0,
    \end{tikzcd}
  \end{equation}
  where the action of $(X_|,\overline Y) \in \so(W,\gamma)\oplus \so(\V/W,\overline\eta)$
  on $Z \in \Hom(\V/W,W)$ is given by
  \begin{equation}
    [(X_|,\overline Y),Z] = X \circ Z - Z \circ Y.
  \end{equation}
\end{proposition}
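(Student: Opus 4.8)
The statement will follow by quotienting $\g$ by the abelian ideal $W\otimes\ann W$ and identifying the pieces using Lemma~\ref{lem:boosts}, Lemma~\ref{lem:common-ideal} and Proposition~\ref{prop:so-isos}. The first step is to pin down the canonical identification $W\otimes\ann W\cong\Hom(\V/W,W)$: since $\ann W\cong(\V/W)^{*}$ as $G$-modules, tensoring with $W$ gives $W\otimes\ann W\cong\Hom(\V/W,W)$, under which $Z\in\Hom(\V/W,W)$ corresponds to $\widehat Z:=\iota\circ Z\circ\pi\in\End\V$, with $\pi\colon\V\to\V/W$ the projection and $\iota\colon W\hookrightarrow\V$ the inclusion. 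By Lemma~\ref{lem:boosts}, $\widehat Z\in\g$, and by Lemma~\ref{lem:common-ideal} (together with the remark that $\g$ is a filtered Lie algebra with $\g^{-1}=W\otimes\ann W$), this is an abelian ideal of $\g$. This copy of $\Hom(\V/W,W)$, embedded via $Z\mapsto\widehat Z$, is the kernel appearing in \eqref{eq:Lie-algebra-g-iso}, and the fact that it is abelian is what makes the extension abelian.

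Next I would identify the quotient. Put $A=\g\cap(W\otimes\V^{*})$, $B=\g\cap(\V\otimes\ann W)$ and $I=W\otimes\ann W$; by Lemma~\ref{lem:common-ideal}, $A$ and $B$ are Lie subalgebras with $I$ as a common abelian ideal. From \eqref{eq:filtered-g} and the surrounding discussion we have $\g=A+B$ with $A\cap B=I$, and from the characterisations \eqref{eq:filtered-space-chars} one gets $[A,B]\subseteq I$: for $X\in A$ and $Y\in B$ one has $Y\circ X=0$ (since $\im X\subset W=\ker Y$), so $[X,Y]=X\circ Y$, whose image is in $W$ and whose kernel contains $W$, i.e.\ $[X,Y]\in I$. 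Hence the decomposition $g=a+b$ of any $g\in\g$ is unique modulo $I$ (if $a+b=a'+b'$ then $a-a'=b'-b\in A\cap B=I$), so $[a+b]\mapsto([a],[b])$ is a well-defined linear bijection $\g/I\to(A/I)\oplus(B/I)$, and $[A,B]\subseteq I$ makes it a Lie algebra isomorphism. Composing with the isomorphisms $A/I\cong\so(W,\gamma)$ and $B/I\cong\so(\V/W,\overline\eta)$ of Proposition~\ref{prop:so-isos} gives $\g/I\cong\so(W,\gamma)\oplus\so(\V/W,\overline\eta)$, and hence the short exact sequence in \eqref{eq:Lie-algebra-g-iso}, the surjection being $\g\to\g/I$ followed by this isomorphism.

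It remains to compute the induced adjoint action of the quotient on the abelian ideal. Given $(X_|,\overline Y)\in\so(W,\gamma)\oplus\so(\V/W,\overline\eta)$, choose a lift $X+Y\in\g$ with $X\in A$ restricting to $X_|$ on $W$ and $Y\in B$ inducing $\overline Y$ on $\V/W$; the action on $Z\in\Hom(\V/W,W)\cong I$ is then $[X+Y,\widehat Z]$. The commutator computations already carried out in the proof of Lemma~\ref{lem:common-ideal} give $[X,\widehat Z]=X\circ\widehat Z$ and $[Y,\widehat Z]=-\widehat Z\circ Y$, so $[X+Y,\widehat Z]=X\circ\widehat Z-\widehat Z\circ Y$. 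Using $X\circ\iota=\iota\circ X_|$ (valid because $\im X\subset W$) and $\pi\circ Y=\overline Y\circ\pi$ (valid because $Y$ kills $W$), we get $X\circ\widehat Z=\widehat{X_|\circ Z}$ and $\widehat Z\circ Y=\widehat{Z\circ\overline Y}$, so under the identification $I\cong\Hom(\V/W,W)$ the action is $Z\mapsto X_|\circ Z-Z\circ\overline Y$, which is the asserted formula (with the usual slight abuse of writing $X$ for $X_|$ and $Y$ for $\overline Y$). Independence of the lift is immediate: changing $X$ by $B_{1}\in I$ changes $X\circ\widehat Z$ by $B_{1}\circ\widehat Z=0$ (as $\im\widehat Z\subset W=\ker B_{1}$), and changing $Y$ by $B_{2}\in I$ changes $\widehat Z\circ Y$ by $\widehat Z\circ B_{2}=0$ (as $\im B_{2}\subset W=\ker\widehat Z$).

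The only genuinely delicate point is the second step: that the vector-space sum $\g=A+B$ descends to a bracket-compatible direct-sum decomposition of $\g/I$, which rests entirely on the inclusion $[A,B]\subseteq I$ coming from \eqref{eq:filtered-space-chars}. Everything else is bookkeeping of the canonical identification $W\otimes\ann W\cong\Hom(\V/W,W)$ and a direct re-use of the commutator identities established in the proof of Lemma~\ref{lem:common-ideal}.
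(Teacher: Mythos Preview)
Your proof is correct and follows essentially the same route as the paper: both identify $W\otimes\ann W\cong\Hom(\V/W,W)$ as an abelian ideal, use $\g=A+B$ with $A\cap B=I$ and $[A,B]\subseteq I$ to compute the quotient, invoke Proposition~\ref{prop:so-isos} for the two factors, and read off the action from the commutator identities of Lemma~\ref{lem:common-ideal}. The only presentational differences are that the paper phrases the quotient computation via two applications of the Second Isomorphism Theorem (rather than your direct ``unique-mod-$I$ decomposition'' argument) and leaves the well-definedness of the action implicit, whereas you spell out the independence-of-lift check.
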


\begin{proof}
  Recall that $\g$ admits the filtration~\eqref{eq:filtered-g} and that a
  filtered Lie algebra is isomorphic (as a vector space) to its
  associated graded Lie algebra, giving a vector space isomorphism
  \begin{equation}
    \label{eq:g-vs-iso}
    \begin{split}
      \g & \cong (W \otimes \ann W) \oplus \frac{\g \cap (W \otimes \V^*)}{W
        \otimes \ann W} \oplus \frac{\g \cap (\V \otimes \ann W)}{W
        \otimes \ann W}\\
      & \cong \Hom(\V/W,W) \oplus \so(W,\gamma)\oplus \so(\V/W,\overline\eta),
    \end{split}
  \end{equation}
  where we have used Proposition~\ref{prop:so-isos}.

  We show that $W \otimes \ann W$ is an ideal of $\g$.  Indeed, let $X \in \g$ and $Y
  \in W \otimes \ann W$.  Then
  \begin{equation*}
    [X,Y] = X \circ Y - Y \circ X.
  \end{equation*}
  Since $\im Y \subset W$ and $W \subset \V$ is a submodule, we see
  that $\im [X,Y] \subset W$.  If $w \in W$,
  \begin{equation*}
    [X,Y]w = X (Y w) - Y(X w).
  \end{equation*}
  The first term is zero because $W \subset \ker Y$ and the second
  term is zero because, in addition, $X w \in W$.  Therefore $W
  \subset \ker[X,Y]$ and hence $[X,Y] \in W \otimes \ann W$.
  
  The subalgebras $\g \cap (\V \otimes \ann W)$ and
  $\g \cap (W \otimes \V^*)$ do not commute: if
  $X \in \g\cap(\V \otimes \ann W)$ and
  $Y \in \g\cap(W \otimes \V^*)$, then since $\im Y \subset \ker X$,
  we have $X \circ Y = 0$ and hence their commutator is $[X,Y] = -Y
  \circ X$, which belongs to the ideal $W \otimes \ann W$. Therefore
  they do commute when we quotient by the ideal.

  Using the shorthand $A = \g \cap (\V \otimes \ann W)$ and $B= \g
  \cap (W \otimes \V^*)$ and observing that $W\otimes\ann W = A \cap
  B$, we have the following exact sequence
  \begin{equation*}
    \begin{tikzcd}
      0 \arrow[r] & A\cap B \arrow[r] & A + B \arrow[r] & (A+B)/B \oplus (A+B)/A \arrow[r] & 0,
    \end{tikzcd}
  \end{equation*}
  where the middle map is made out of the two canonical surjections
  $A+B \to  (A+B)/B$ and $A+B \to (A+B)/A$.  Applying twice the Second
  Isomorphism Theorem to conclude that $(A+B)/B \cong A/(A\cap B)$ and
  $(A+B)/A \cong B/(A\cap B)$, we obtain
  \begin{equation*}
    \g/(W \otimes \ann W) = (A+B)/(A \cap B) \cong (A/A\cap B) \oplus (B/A\cap B),
  \end{equation*}
  so that by Proposition~\ref{prop:so-isos} the quotient is the direct sum
 $\so(W,\gamma)\oplus \so(\V/W,\overline\eta)$.

  Finally, that the action of $\so(W,\gamma)\oplus
  \so(\V/W,\overline\eta)$ on $W \otimes \ann W$ is as stated, follows
  from the proof of Lemma~\ref{lem:common-ideal}.
\end{proof}

The ideal $\Hom(\V/W,W)$ is abelian, whereas when
$\dim W,\dim \V/W \geq 3$, the Lie algebra
$\so(W,\gamma) \oplus \so(\V/W,\overline\eta)$ is semisimple.
Therefore, the extension \eqref{eq:Lie-algebra-g-iso} splits: indeed,
$\Hom(\V/W,W)$ is the radical of $\g$ and it follows from the
Levi--Malcev Theorem that $\g$ splits over its radical and the Levi
factor (i.e., a Lie subalgebra of $\g$ isomorphic to
$\so(W,\gamma) \oplus \so(\V/W,\overline\eta)$) is only defined up to
conjugation. Even if either $\dim W =2$ or $\dim \V/W =2$, so that
$\so(W,\gamma) \oplus \so(\V/W,\overline\eta)$ fails to be semisimple,
the sequence again splits, but the splitting in any case is never
canonical. Every such splitting implies a choice of complementary
subspace to $W$ in $\V$.  Without making this choice,
Proposition~\ref{prop:g-as-lie-algebra} is the best description
possible of the Lie algebra of $G \subset \GL(\V)$.

\subsection{\texorpdfstring{$G$-structures and adapted connections}{G-structures and adapted connections}}
\label{sec:g-structures-adapted-connections}

We will briefly review the basic definitions of $G$-structure and
adapted connections and use them to state the problem we wish to
solve.  We shall make a simplifying assumption on our manifolds in
that they admit a $G$-structure which reduces to the connected
component of the identity of the group we have been discussing.  This
would be automatic if, for example, $M$ were simply-connected.

Hence from now on we will let $G$ be the \emph{connected} subgroup of
$\GL(\V)$ generated by $\g$; that is, the identity component of what
we used to call $G$ in the previous section. We now let $M$ be a
$D$-dimensional manifold admitting a $G$-structure $\pi: P \to M$. In
other words, $P$ is a (right) $G$-principal subbundle of the bundle of
frames of $M$. As reviewed in \cite{Figueroa-OFarrill:2020gpr}, we
will go back and forth between representations $\EE$ of $G$ and the
corresponding associated vector bundles $P\times_G \EE \to M$, and
between $G$-equivariant linear maps $\phi: \EE \to \FF$ and the
corresponding bundle maps $\Phi: P\times_G \EE \to P\times_G \FF$. An
Ehresmann connection $\omega\in \Omega^1(P;\g)$ on $P$ gives rise to a
Koszul connection on any associated vector bundle $P\times_G \EE$. In
particular, if we take $\EE= \V$, we get a Koszul connection on the
fake tangent bundle $P\times_G \V$. The soldering form
$\theta \in \Omega^1(P;\V)$ defines an isomorphism between
$P\times_G \V$ and the tangent bundle $TM$ and the Koszul connection
on $P\times_G \V$ induces an affine connection $\nabla$ on $TM$ which
is said to be \emph{adapted} to the $G$-structure. Adapted connections
parallelise the characteristic tensor fields of the $G$-structure.

Let $\d : \Hom(V, \g) \to \Hom(\wedge^2\V, \V)$ denote the \emph{Spencer
differential}, defined as the composition
\begin{equation}
  \label{eq:spencer-differential}
  \begin{tikzcd}
    \Hom(\V,\g) \arrow[r,"\cong"] & \g \otimes \V^* \arrow[r,"\rho
    \otimes \id_{\V^*}"] & \V \otimes \V^* \otimes \V^*
    \arrow[r,"\id_\V \otimes \wedge^2"] & \V \otimes \wedge^2 \V^*
    \arrow[r,"\cong"] & \Hom(\wedge^2 \V, \V),
  \end{tikzcd}
\end{equation}
where $\rho : \g \to \V \otimes \V^*$ is the representation map $\g
\to \End \V$ (which is just the inclusion) composed with the
isomorphism $\End \V \cong \V \otimes \V^*$.

\begin{lemma}
  The linear map $\d : \Hom(\V,\g) \to \Hom(\wedge^2\V, \V)$ defined in
  Equation~\eqref{eq:spencer-differential} is $G$-equivariant.
\end{lemma}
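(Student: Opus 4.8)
The plan is to exploit the fact that $\d$ is by construction a composition of four linear maps in \eqref{eq:spencer-differential}, so it suffices to check that each factor is $G$-equivariant; equivariance is then inherited by the composite. First I would dispatch the two end isomorphisms $\Hom(\V,\g)\cong\g\otimes\V^*$ and $\V\otimes\wedge^2\V^*\cong\Hom(\wedge^2\V,\V)$: these are the canonical natural isomorphisms, and with the standard $G$-actions on $\Hom$-spaces and on tensor products they are $G$-module isomorphisms, so they contribute nothing. Likewise $\id_\V\otimes\wedge^2:\V\otimes\V^*\otimes\V^*\to\V\otimes\wedge^2\V^*$ is $G$-equivariant because antisymmetrisation $\V^*\otimes\V^*\to\wedge^2\V^*$ is a natural map and tensoring with $\id_\V$ preserves equivariance.

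The one point that deserves a sentence is the middle arrow $\rho\otimes\id_{\V^*}$, i.e.\ the equivariance of $\rho:\g\to\End\V\cong\V\otimes\V^*$. Here $\g\subset\End\V$ by Lemma~\ref{lem:lie-algebra-g}. Since $G$ preserves $\eta$ and $h$ by definition, it preserves the invariance conditions \eqref{eq:Lie-algebra-G} cutting out $\g$, so that $\Ad_g$ maps $\g$ to $\g$ and is given there by conjugation $X\mapsto gXg^{-1}$; but this is exactly the restriction to $\g$ of the $G$-action on $\End\V\cong\V\otimes\V^*$, using \eqref{eq:dual-action-G} on the $\V^*$-leg. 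Since $\rho$ is literally the inclusion, it intertwines the two actions, and hence $\d$ is $G$-equivariant.

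Alternatively — and this is the version I would actually commit to paper, as it is self-contained and avoids unwinding \eqref{eq:spencer-differential} — I would verify equivariance directly on the explicit formula $(\d\kappa)(v\wedge w)=\kappa(v)w-\kappa(w)v$. Writing the $G$-actions as $(g\cdot\kappa)(v)=g\,\kappa(g^{-1}v)\,g^{-1}$ on $\Hom(\V,\g)$ and $(g\cdot T)(v\wedge w)=g\cdot T(g^{-1}v\wedge g^{-1}w)$ on $\Hom(\wedge^2\V,\V)$, one computes
\begin{align*}
(\d(g\cdot\kappa))(v\wedge w) &= (g\cdot\kappa)(v)\,w-(g\cdot\kappa)(w)\,v \\
&= g\,\kappa(g^{-1}v)\,g^{-1}w - g\,\kappa(g^{-1}w)\,g^{-1}v \\
&= g\bigl(\kappa(g^{-1}v)(g^{-1}w)-\kappa(g^{-1}w)(g^{-1}v)\bigr) \\
&= g\cdot(\d\kappa)(g^{-1}v\wedge g^{-1}w) = (g\cdot\d\kappa)(v\wedge w),
\end{align*}
which is the claim.

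I do not anticipate a genuine obstacle: the whole content is that $\g$ is a $G$-submodule of $\End\V$ (already established) together with the functoriality of the tensor and $\Hom$ constructions. The only thing to be careful about is bookkeeping — keeping the inverse in the dual action \eqref{eq:dual-action-G} in the right place and recalling that the adjoint action of $G$ on $\g\subset\End\V$ is plain conjugation, which is precisely what makes $\rho$ intertwining. At the Lie-algebra level the same argument with $\Ad_g$ replaced by $\ad_X$ would give $\g$-equivariance, but since the statement concerns the group $G$ this refinement is not needed.
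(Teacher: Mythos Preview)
Your proposal is correct and your first approach is essentially identical to the paper's: decompose the Spencer differential into the four maps of \eqref{eq:spencer-differential}, note that the end isomorphisms and the antisymmetrisation are trivially $G$-equivariant, and then observe that the inclusion $\rho:\g\hookrightarrow\End\V$ intertwines the adjoint action with conjugation, $\Ad_gX=gXg^{-1}$. Your alternative direct computation using the explicit formula $(\d\kappa)(v\wedge w)=\kappa(v)w-\kappa(w)v$ is a clean bonus that the paper does not include.
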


\begin{proof}
  We argue that all the maps in
  Equation~\eqref{eq:spencer-differential} are $G$-equivariant.  This
  is clear for the isomorphisms $\Hom(\V,\g) \cong \g \otimes \V^*$
  and $\V \otimes \wedge^2\V^* \cong \Hom(\wedge^2\V ,\V)$ and
  similarly for the map $\id_{\V} \otimes \wedge^2$, since the
  identity and the alternation $\wedge^2$ are $\GL(\V)$-equivariant,
  so in particular equivariant under any subgroup, such as $G$.
  It remains to show that $\rho : \g \to \V \otimes \V^*$ is
  $G$-equivariant.  This is equivalent to showing that the
  inclusion $\g \to \End\V$, is $G$-equivariant, but this is almost a
  tautology.  Indeed, let $X \in \g$ and $g \in G$.  Then $\Ad_g X = g
  X g^{-1}$, and the LHS can be interpreted as the inclusion of $g
  \cdot X$ in $\End\V$, whereas the RHS is the composition of the
  inclusion of $X$ with the action of $g$ on $\End \V$.  The equality
  between the LHS and the RHS is precisely $G$-equivariance.
\end{proof}

The explicit form of the Spencer differential on $\kappa : \V \to \g$
is given by $\d\kappa : \wedge^2 \V \to \V$, where for all $v,v' \in
\V$,
\begin{equation}
  \label{eq:spencer-explicit}
  (\d\kappa)(v \wedge v') = \kappa(v)v' - \kappa(v')v,
\end{equation}
and we extend linearly to all of $\wedge^2\V$.  Being a linear map and
$G$-equivariant, the Spencer differential belongs to an exact sequence
of $G$-modules:
\begin{equation}
  \label{eq:intrinsic-torsion-exact-sequence}
  \begin{tikzcd}
    0 \arrow[r] & \ker\d \arrow[r] & \Hom(\V,\g) \arrow[r,"\d"] &
    \Hom(\wedge^2\V, \V) \arrow[r, "\pr"] & \coker \d \arrow[r] & 0,
  \end{tikzcd}
\end{equation}
resulting in a corresponding exact sequence of associated vector
bundles.  The bundle corresponding to $\Hom(\V,\g)$ is the bundle of
one-forms with values in the adjoint bundle $\ad P$: this is the
bundle of differences between two adapted connections.  The bundle
associated to $\Hom(\wedge^2\V, \V)$ is the bundle of $2$-forms with
values in $TM$, where the torsion tensor af an affine connection
lives.  The image under (the bundle version of) the Spencer
differential of $\kappa\in\Omega^1(M;\ad P)$ is the effect on the
torsion of an adapted connection of modifying the connection by adding
$\kappa$.  Symbolically, $T^{\nabla + \kappa} = T^\nabla + \d \kappa$.
Therefore the bundle corresponding to $\coker\d$ is the bundle of
intrinsic torsions: the part of the torsion which is independent of
the connection and hence intrinsic to the $G$-structure.

We now proceed to determine the $G$-module structure of $\coker\d$
more precisely, in order to arrive at the $G$-submodules of $\coker\d$
and hence at the intrinsic torsion classes.

\subsection{\texorpdfstring{Some filtered $G$-modules}{Some filtered G-modules}}
\label{sec:some-filtered-g}

We now start the study of the $G$-modules appearing in the intrinsic
torsion exact sequence \eqref{eq:intrinsic-torsion-exact-sequence}.
We will first exhibit them as filtered $G$-modules.

We have already seen three filtered $G$-modules: $\V$, $\V^*$  and
$\g$, where the $G$-invariant filtrations are given as follows:
\begin{equation}
  \label{eq:filtered-G-modules}
    0 \subset W \subset \V,
  \qquad
  0 \subset \ann W \subset \V^*
  \qquad\text{and}\qquad
    0 \subset W \otimes \ann W \subset \g.
\end{equation}
These filtrations give rise to the following $G$-submodules of
$\Hom(\V,\g) \cong \g \otimes \V^*$:
\begin{equation}\label{eq:hom-V-g-filtered}
  \begin{tikzcd}
      & & \g \otimes \ann W \arrow[rd] & \\
    0 \arrow[r] & W \otimes \ann W \otimes \ann W \arrow[ru] \arrow[rd] & & \g \otimes \V^*\\
      & & W \otimes \ann W \otimes \V^* \arrow[ru] & \\
  \end{tikzcd}
\end{equation}
and of $\Hom(\wedge^2\V,\V) \cong \V \otimes \wedge^2\V^*$:
\begin{equation}
  \begin{tikzcd}\label{eq:torsions-filtered}
      & & & W \otimes \wedge^2\V^* \arrow[rd] & \\
      & & W \otimes \ann W \wedge \V^* \arrow[ru] \arrow[rd] & & \V \otimes \wedge^2\V^*\\
    0 \arrow[r] & W \otimes \wedge^2\ann W \arrow[ru] \arrow[rd] & & \V
    \otimes \ann W \wedge \V^* \arrow[ru]\\
      & & \V \otimes \wedge^2\ann W \arrow[ru] & & \\
  \end{tikzcd}.
\end{equation}
We have written these submodules as submodules of $\g \otimes \V^*$
and $\V \otimes \wedge^2\V^*$, respectively, but it is convenient to
understand them in terms of linear maps in $\Hom(\V,\g)$
and $\Hom(\wedge^2\V,\V)$.  This is easy to do by inspection,
obtaining the following characterisations of the $G$-submodules of
$\Hom(\V,\g)$:
\begin{equation} 
  \label{eq:hom-v-g-submodules}
  \begin{split}
    \g \otimes \ann W &= \left\{\hspace{0.25em}\kappa \in \Hom(\V,\g)  ~\middle |~ \kappa(w) = 0,\quad \forall~ w \in W\right\}\\
    W \otimes \ann W \otimes \V^* &=  \left\{\hspace{0.25em}\kappa \in \Hom(\V,\g)  ~\middle |~ \kappa(v)w = 0\quad\text{and}\quad \kappa(v)v' \in W\quad\forall~ v,v' \in \V,~w \in W\right\}\\
    W \otimes \ann W \otimes \ann W &= \left\{\kappa \in \Hom(\V,\g)  ~\middle | \hspace{-0.25em} \begin{array}{l} \kappa(w)=0,\\ \kappa(v)w = 0\quad\text{and} \quad \kappa(v)v' \in W\end{array}~ \hspace{0.2em} \forall~ v,v' \in \V,~w \in W\right\},
  \end{split}
\end{equation}
and those of $\Hom(\wedge^2\V,\V)$:
\begin{equation}
  \label{eq:torsions-submodules}
  \begin{split}
    W \otimes \wedge^2\V^* &= \left\{\hspace{0.18em} T\in\Hom(\wedge^2\V,\V) ~\middle |~ T(v \wedge v') \in W, \quad\forall~v,v' \in \V\right\}\\
    \V \otimes \ann W \wedge \V^* &= \left\{\hspace{0.18em} T\in\Hom(\wedge^2\V,\V) ~\middle |~ T(w \wedge w') =0, \quad\forall~w,w' \in W\right\}\\
    W \otimes \ann W \wedge \V^* &= \left\{ T\in\Hom(\wedge^2\V,\V) ~\middle |~ \hspace{-0.45em} \begin{array}{l} T(w \wedge w') =0\quad\text{and}\\ T(v\wedge v') \in W,\end{array}~ \quad\forall~w,w' \in W,~v,v' \in \V\right\}\\
    \V \otimes \wedge^2\ann W &= \left\{\hspace{0.18em} T\in\Hom(\wedge^2\V,\V) ~\middle |~ T(v \wedge w)=0, \quad\forall~v \in \V,~w \in W\right\}\\
    W \otimes \wedge^2 \ann W &= \left\{ T\in\Hom(\wedge^2\V,\V) ~\middle |~ \hspace{-0.45em} \begin{array}{l} T(v \wedge w) =0\quad\text{and}\\ T(v\wedge v') \in W,\end{array}~ \quad\forall~w \in W,~v,v' \in \V\right\}.
  \end{split}
\end{equation}

It follows from the description of the $G$-submodules of $\Hom(\V,\g)$
in Equation~\eqref{eq:hom-v-g-submodules} that
\begin{equation}
  \left( \g \otimes \ann W \right) \cap \left( W \otimes \ann W \otimes \V^* \right) = W \otimes \ann W \otimes \ann W
\end{equation}
and hence we may derive the following filtration of $\g \otimes \V^*$ from \eqref{eq:hom-V-g-filtered}:
\begin{equation}
  \label{eq:hom-v-g-filtration}
  0 \subset W \otimes \ann W \otimes \ann W \subset \left( \g
    \otimes \ann W  + W \otimes \ann W \otimes \V^* \right) \subset \g \otimes \V^*.
\end{equation}

Similarly, it follows from the description of the $G$-submodules of
$\Hom(\wedge^2\V,\V)$ given in \eqref{eq:torsions-submodules} that
\begin{equation}
  W \otimes \ann W \wedge \V^* = \left(W \otimes \wedge^2\V^*\right) \cap \left(\V  \otimes \ann W \wedge \V^*\right)
\end{equation}
and that
\begin{equation}
  W \otimes \wedge^2\ann W = \left( W \otimes \ann W \wedge \V^* \right) \cap \left( \V \otimes \wedge^2\ann W \right).
\end{equation}
This results in the following filtration of $\V \otimes \wedge^2\V^*$:
\begin{multline}
  \label{eq:torsions-filtration}
    0 \subset W \otimes \wedge^2\ann W \subset  \left( W \otimes \ann
      W \wedge \V^* + \V \otimes \wedge^2\ann W \right)\\
    \subset \left( W \otimes \wedge^2\V^* + \V \otimes \ann W \wedge
      \V^* \right) \subset  \V \otimes \wedge^2\V^*.
\end{multline}
  
As discussed in Section~\ref{sec:filtrations}, every filtered
$G$-module has an associated graded module which is the direct sum of
the consecutive quotients in the filtration.  Of course a filtered
$G$-module and its associated graded $G$-module are not necessarily
isomorphic as $G$-modules; although they are of course isomorphic as
vector spaces.  Therefore their utility in this paper lies mostly in
counting dimensions.

It may help the discussion to assign explicit filtration degrees.
Recall that we have filtered $\V$ as $\V = \V^0 \supset \V^{-1} = W
\supset \V^{-2} = 0$ and $\V^*$ as $\V^* = (\V^*)^1 \supset (\V^*)^0 =
\ann W \supset (\V^*)^{-1} = 0$.  The associated graded modules are
$\gr \V = \V_0 \oplus \V_{-1}$ with $\V_0 = V$ and $\V_{-1} = W$,
where we have introduced the shorthand $V = \V/W$, and $\gr \V^* =
\V^*_1 \oplus \V^*_0$, with $\V^*_1 = \V/\ann W \cong W^*$ and $\V_0^*
= \ann W \cong V^*$.  These assignments then force $\eta$ (and hence
$\eta^\flat$) to have degree $0$ and $h$ (and hence
$h^\sharp$) to have degree $-2$.  The filtration of $\g$ is
induced from that of $\V \otimes \V^*$ by demanding that the
representation map $\g \to \V\otimes \V^*$ should preserve the
filtration.  Its associated graded Lie algebra is $\gr \g = \g_0 \oplus
\g_{-1}$, with $\g_{-1} = W \otimes V^*$ and $\g_0 = \so(W) \oplus
\so(V)$, where we have used the isomorphism $\ann W \cong V^*$ and
abbreviated $\so(W,\gamma)$ to $\so(W)$ and $\so(V,\overline\eta)$ to
$\so(V)$.

We may work out $\gr (\g \otimes \V^*)$ simply by declaring it to
inherit the grading from $\g$ and $\V^*$, resulting in
\begin{equation}
  \begin{split}
    \gr (\g \otimes \V^*) &= (\g_0 \oplus \g_{-1}) \otimes (\V^*_1 \oplus \V^*_0)\\
    &= (\g_0 \otimes \V_1^*) \oplus (\g_0 \otimes \V_0^* \oplus \g_{-1} \otimes \V_1^*) \oplus (\g_{-1} \otimes \V_0^*),
  \end{split}
\end{equation}
from where we read off (now reverting to writing $\g \otimes \V^*$ as $\Hom(\V,\g)$)
\begin{equation}\label{eq:explicit-grading-hom-v-g}
  \gr \Hom(\V,\g) = \bigoplus_{i=-1}^1 \Hom(\V,\g)_i,
\end{equation}
where
\begin{equation}\label{eq:hom-v-g-graded-components}
  \begin{split}
    \Hom(\V,\g)_{-1} &= \Hom(V,\Hom(V,W))\\
    \Hom(\V,\g)_{0} &= \Hom(W,\Hom(V,W)) \oplus \Hom(V,\so(V) \oplus \so(W))\\
    \Hom(\V,\g)_{1} &= \Hom(W, \so(V) \oplus \so(W)).
  \end{split}
\end{equation}
Similarly, we grade $\wedge^2\V^*$ as
\begin{equation}
  \gr \wedge^2\V^* = \wedge^2(\V_1^* \oplus \V_0^*) = \wedge^2 \V_1^*
  \oplus (\V_1^* \wedge \V_0^*) \oplus \wedge^2 \V_0^* = \wedge^2 W^* \oplus (V^* \wedge W^*) \oplus \wedge^2 V^*
\end{equation}
and in turn this allows us to grade $\V \otimes \wedge^2\V^*$ as
\begin{equation}
  \begin{split}
    \gr (\V \otimes \wedge^2\V^*) &= (\V_0 \oplus \V_{-1}) \otimes  ((\wedge^2\V^*)_2\oplus (\wedge^2\V^*)_1 \oplus (\wedge^2 \V^*)_0)\\
    &= \left(\V_0 \otimes (\wedge^2\V^*)_2\right) \oplus \left( \V_{-1}\otimes (\wedge^2\V^*)_2 \oplus \V_0 \otimes (\wedge^2\V^*)_1\right)\\
    & \qquad {} \oplus \left( \V_0\otimes (\wedge^2\V^*)_0 \oplus \V_{-1} \otimes (\wedge^2\V^*)_1 \right) \oplus (\V_{-1} \otimes (\wedge^2\V^*)_0),\\
  \end{split}
\end{equation}
from where we read off
\begin{equation}\label{eq:explicit-grading-torsions}
  \gr \Hom(\wedge^2\V,\V) = \bigoplus_{i=-1}^2 \Hom(\wedge^2\V,\V)_i,
\end{equation}
where
\begin{equation}\label{eq:torsions-graded-components}
  \begin{split}
    \Hom(\wedge^2\V,\V)_{-1} &= \Hom(\wedge^2V,W)\\
    \Hom(\wedge^2\V,\V)_0 &= \Hom(V \wedge W, W) \oplus \Hom(\wedge^2V, V)\\
    \Hom(\wedge^2\V,\V)_1 &= \Hom(\wedge^2 W, W) \oplus \Hom(V \wedge W, V)\\
    \Hom(\wedge^2\V,\V)_2 &= \Hom(\wedge^2W, V).
  \end{split}
\end{equation}

This now allows us to determine the kernel and cokernel of the Spencer
differential as $G$-modules.  We will first deal with the kernel.

\subsection{The kernel of the Spencer differential}
\label{sec:kern-spenc-diff}

Since the Spencer differential $\d : \Hom(\V,\g) \to
\Hom(\wedge^2\V,\V)$ is $G$-equivariant, we may restrict it to the
$G$-submodules of $\Hom(\V,\g)$ listed in
Equation~\eqref{eq:hom-v-g-submodules}.

\begin{lemma}\label{lem:ker-d1-2-3}
  The Spencer differential induces the following $G$-equivariant linear
  maps:
  \begin{equation}
    \label{eq:d1-2-3}
    \begin{tikzcd}
      W \otimes \ann W \otimes \ann W \arrow[r,"\d_1"] & W \otimes \wedge^2 \ann W\\
      W \otimes \ann W \otimes \V^* \arrow[r,"\d_2"] & W \otimes \ann W \wedge \V^*\\
      \g \otimes \ann W \arrow[r,"\d_3"] & \V \otimes \ann W \wedge \V^*,
    \end{tikzcd}
  \end{equation}
  whose kernels are given by
  \begin{equation}
    \label{eq:kernel-submodules}
    \begin{split}
      \ker \d_1 &= \left\{\kappa \in \Hom(\V,\g) ~\middle |~ \kappa(w)=0\quad\text{and}\quad \kappa(v)v'=\kappa(v')v \in W,~\forall~w\in W,~v,v' \in \V\right\}\\
      \ker \d_2 &=\left\{\kappa \in \Hom(\V,\g) ~\middle |~ \kappa(w)w'=0\quad\text{and}\quad \kappa(v)v'=\kappa(v')v \in W,~\forall~w,w'\in W,~v,v' \in \V\right\}\\
    \ker\d_3 &= \left\{\kappa \in \Hom(\V,\g) ~\middle |~ \kappa(w)=0\quad\text{and}\quad \kappa(v)v' = \kappa(v')v,~\forall w\in W,~v,v'\in\V\right\},
    \end{split}
  \end{equation}
  which are $G$-submodules of $\ker\d$.  Furthermore, $\d_1$ is
  surjective and $\ker \d_1 = \ker\d_2 \cap \ker \d_3$.
\end{lemma}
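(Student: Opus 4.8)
The plan is to realise $\d_1$, $\d_2$ and $\d_3$ as the restrictions of the Spencer differential $\d$ to the three $G$-submodules of $\Hom(\V,\g)$ listed in Equation~\eqref{eq:hom-v-g-submodules}, working throughout with the explicit formula $(\d\kappa)(v\wedge v') = \kappa(v)v' - \kappa(v')v$. First I would verify that $\d$ sends each domain into the claimed codomain, reading off the conditions from Equation~\eqref{eq:torsions-submodules}. For $\d_3$, if $\kappa(w)=0$ for all $w\in W$ then $(\d\kappa)(w\wedge w')=0$, so $\d\kappa\in\V\otimes\ann W\wedge\V^*$. For $\d_2$, the extra clauses $\kappa(v)w=0$ and $\kappa(v)v'\in W$ give $(\d\kappa)(w\wedge w')=0$ and $(\d\kappa)(v\wedge v')\in W$, so $\d\kappa\in W\otimes\ann W\wedge\V^*$. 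For $\d_1$, the further clause $W\subset\ker\kappa(v)$ makes $(\d\kappa)(v\wedge w)=0$ as well, so $\d\kappa\in W\otimes\wedge^2\ann W$. In every case the $G$-equivariance of $\d_i$, and the fact that its kernel is a $G$-submodule of $\ker\d$, are inherited from the corresponding properties of $\d$, since all domains and codomains are $G$-submodules.

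Next I would compute the kernels. Since $\ker\d = \{\kappa\mid \kappa(v)v'=\kappa(v')v\ \text{for all}\ v,v'\}$, each $\ker\d_i$ is just $\ker\d$ intersected with the $i$-th domain, so one only has to impose the symmetry $\kappa(v)v'=\kappa(v')v$ on top of the defining clauses of that domain and remove the redundancies. The point to watch is that in $\ker\d$ the clause $\kappa(w)=0$ already forces $\kappa(v)w=\kappa(w)v=0$, and the image conditions interact with the symmetry in the same way, so that several of the listed constraints are dependent; carrying this out gives the stated descriptions of $\ker\d_1$, $\ker\d_2$ and $\ker\d_3$, which in particular are nested with $\ker\d_1$ contained in each of the other two. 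The identity $\ker\d_1 = \ker\d_2\cap\ker\d_3$ is then automatic: as observed just before Equation~\eqref{eq:hom-v-g-filtration}, the domain of $\d_1$ is $(\g\otimes\ann W)\cap(W\otimes\ann W\otimes\V^*)$, the intersection of the domains of $\d_3$ and $\d_2$, and the kernel of the restriction of a linear map to a subspace is the intersection of the kernel with that subspace.

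The only genuinely constructive step is the surjectivity of $\d_1$, for which I would pass to quotients by $W$. An element of $W\otimes\ann W\otimes\ann W$ is precisely a map $\kappa$ that annihilates $W$ and whose values $\kappa(v)$ both annihilate $W$ and take values in $W$; it therefore descends to $\overline\kappa\in\Hom\bigl(\V/W,\Hom(\V/W,W)\bigr)$. Likewise an element of $W\otimes\wedge^2\ann W$ descends to $\overline T\in\Hom\bigl(\wedge^2(\V/W),W\bigr)$. Under these identifications $\d_1$ becomes the antisymmetrisation map $\Hom\bigl(\V/W,\Hom(\V/W,W)\bigr)\cong\Hom\bigl((\V/W)\otimes(\V/W),W\bigr)\longrightarrow\Hom\bigl(\wedge^2(\V/W),W\bigr)$, which is surjective because it is split, up to a nonzero scalar, by the inclusion of alternating $2$-tensors. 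Lifting such a splitting back through the quotient by $W$ produces, for any $T$ in the target, a $\kappa\in W\otimes\ann W\otimes\ann W$ with $\d_1\kappa=T$.

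I expect the only real obstacle to be organisational rather than conceptual: the careful bookkeeping in the second paragraph, keeping track of which of the many defining clauses of each domain remain independent after the symmetry $\kappa(v)v'=\kappa(v')v$ is imposed, so that the three kernels emerge exactly as stated. The surjectivity of $\d_1$ is essentially formal once one has quotiented by $W$, because the relevant piece of $\g$ is the abelian ``boost'' subalgebra $W\otimes\ann W$ (Lemma~\ref{lem:boosts}), whose Spencer differential is nothing but an alternation map and thus has an obvious section.
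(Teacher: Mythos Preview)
Your approach coincides with the paper's: check the codomains directly from $(\d\kappa)(v\wedge v')=\kappa(v)v'-\kappa(v')v$, recognise $\d_1$ as the alternation $W\otimes\ann W\otimes\ann W\to W\otimes\wedge^2\ann W$ (your quotient-by-$W$ phrasing is equivalent to the paper's ``skew-symmetrisation of the last two tensorands''), and obtain each $\ker\d_i$ as the intersection of the $i$-th domain with $\ker\d$.  Your domain-intersection argument for $\ker\d_1=\ker\d_2\cap\ker\d_3$ is, if anything, cleaner than the paper's appeal to the explicit kernel descriptions.

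One caveat on the bookkeeping you flag: for $\ker\d_2$ the domain $W\otimes\ann W\otimes\V^*$ demands $\kappa(v)w=0$ for \emph{all} $v\in\V$, which is strictly stronger than the stated clause $\kappa(w)w'=0$.  The honest intersection therefore does not literally reproduce the displayed description of $\ker\d_2$; what is written there actually characterises all of $\ker\d$ (this is Proposition~\ref{prop:kernel-d}, established afterwards via Lemma~\ref{lem:ker-d-props}).  The paper's own proof of the present lemma does not verify the three kernel formulae line by line either---it only works out the codomains and the surjectivity of $\d_1$---so you are not missing an argument; just be aware that the $\ker\d_2$ clause as stated anticipates a later result rather than dropping out of the simplification you outline.
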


\begin{proof}
  We must work out the codomains of the linear maps $\d_1,\d_2,\d_3$ in
  Equation~\eqref{eq:d1-2-3} and their kernels.  We do this one map at
  a time.

  \begin{enumerate}[label=($\d_{\arabic*}$)]
  \item Suppose that $\kappa \in W \otimes \ann W \otimes \ann W$, so that
    $\kappa(w)=0$, $\kappa(v)w = 0$ and $\kappa(v)v' \in W$ for all
    $v,v' \in \V$ and $w \in W$.  Let $T = \d\kappa$.  Then
    \begin{align*}
      T(v \wedge w) &= \kappa(v)w - \kappa(w)v = 0 &\tag{since $\kappa(w) = 0$ and $\kappa(v)w=0$ $\forall v\in \V, w\in W$}\\
      T(v\wedge v') &= \kappa(v)v' - \kappa(v')v \in W, &\tag{since $\kappa(v)v' \in W$ for all $v,v' \in \V$}
    \end{align*}
    so that $T \in W \otimes \wedge^2\ann W$.  Therefore
    $\d_1 : W \otimes \ann W \otimes \ann W \to W \otimes \wedge^2\ann
    W$ is simply the skew-symmetrisation of the last two tensorands,
    which is clearly surjective with kernel the symmetrisation $W \to
    \odot^2\ann W$.

  \item Now suppose that $\kappa \in W \otimes \ann W \otimes \V^*$, so that
    $\kappa(v)w = 0$ and $\kappa(v)v' \in W$ for all $v,v'\in \V$ and $w
    \in W$.  Let $T = \d\kappa$.  Then
    \begin{align*}
      T(w \wedge w') &= \kappa(w)w' - \kappa(w')w = 0 &\tag{since $\kappa(v)w=0$ for all $v \in \V$ and $w\in W$}\\
      T(v \wedge v') &= \kappa(v)v' - \kappa(v')v \in W, & \tag{since $\kappa(v)v' \in W$ for all $v,v' \in \V$}
    \end{align*}
    so that $T \in W \otimes \ann W \wedge \V^*$.  Therefore
    $\d_2: W \otimes \ann W \otimes \V^* \to W \otimes \ann W \wedge
    \V^*$.
    \item Finally, let $\kappa \in \g \otimes \ann W$, so that $\kappa(w) =
    0$.  Let $T = \d\kappa$.  Then
    \begin{align*}
      T(w \wedge w') &= \kappa(w)w' - \kappa(w')w = 0. &\tag{since $\kappa(w)=0$ for all $w \in W$}
    \end{align*}
    Therefore $T \in \V \otimes \ann W \wedge \V^*$ and hence $\d_3: \g
    \otimes \ann W \to \V \otimes \ann W \wedge \V^*$.
  \end{enumerate}
  Finally, the
  intersection $\ker\d_2 \cap \ker\d_3 = \ker\d_1$ is evident from the
  descriptions of the kernels.
\end{proof}

The following result allows to better understand $\ker \d$ as a
$G$-module.

\begin{lemma}\label{lem:ker-d-props}
  Let $\kappa \in \ker\d$.  Then
  \begin{enumerate}[label=(\alph*)]
  \item For all $v,v' \in \V$, $\kappa(v)v' \in W$;
  \item for all $w,w' \in W$, $\kappa(w)w' = 0$.
  \end{enumerate}
\end{lemma}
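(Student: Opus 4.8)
The plan is to package each part as an instance of the elementary fact that a trilinear form which is symmetric in its first two arguments and antisymmetric in its last two must vanish identically (since applying the two symmetries alternately returns $f(a,b,c)$ to $-f(a,b,c)$). A map $\kappa\in\ker\d$ carries exactly two such symmetries: the condition $\kappa\in\ker\d$ says, via \eqref{eq:spencer-explicit}, that $\kappa(v)v' = \kappa(v')v$ for all $v,v'\in\V$ --- an ``index symmetry'' in the two inputs --- while each $\kappa(v)\in\g$ carries a metric skew-symmetry coming from Lemma~\ref{lem:lie-algebra-g}. Pairing these against the appropriate invariant bilinear form will give (a) and (b) in essentially the same way the Koszul formula is extracted in the Riemannian case.

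For (a), I would consider the trilinear form $C(u,v,v') := \eta\bigl(\kappa(u)v,\, v'\bigr)$ on $\V$. Since $\kappa(u)\in\g$, the first relation in Lemma~\ref{lem:lie-algebra-g}, applied to $X=\kappa(u)$, gives $\eta^\flat\circ\kappa(u) = -\kappa(u)^t\circ\eta^\flat$; pairing with $v'$ shows $\eta(\kappa(u)v,v') = -\eta(\kappa(u)v',v)$, so $C$ is antisymmetric in its last two slots. On the other hand $\kappa\in\ker\d$ gives $\kappa(u)v = \kappa(v)u$, so $C$ is symmetric in its first two slots. Hence $C\equiv 0$, i.e. $\eta(\kappa(u)v,v') = 0$ for all $v'\in\V$; by nondegeneracy of the dual pairing this means $\eta^\flat(\kappa(u)v)=0$, and since $W = \ker\eta^\flat$ by \eqref{eq:eta} we conclude $\kappa(u)v\in W$, which is (a).

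For (b), I would run the identical argument with $\gamma$ in place of $\eta$, restricted to $W$. By (a) each $\kappa(w)$ maps $\V$ --- hence in particular $W$ --- into $W$, and because $\kappa(w)\in\g$ the second relation in Lemma~\ref{lem:lie-algebra-g} forces the restriction $\kappa(w)_|$ to lie in $\so(W,\gamma)$; this is exactly the computation carried out in the proof of Proposition~\ref{prop:so-isos} (alternatively one expands $\gamma(\kappa(w)w',w'') + \gamma(w',\kappa(w)w'')$ directly, writing $w' = h^\sharp\alpha'$ and using $\kappa(w)\circ h^\sharp = -h^\sharp\circ\kappa(w)^t$ with \eqref{eq:gamma}). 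Thus $\Gamma(w,w',w'') := \gamma\bigl(\kappa(w)w',\,w''\bigr)$ is a trilinear form on $W$ that is symmetric in its first two arguments (from $\ker\d$) and antisymmetric in its last two, hence $\Gamma\equiv 0$; as $\kappa(w)w'\in W$ by (a) and $\gamma$ is nondegenerate on $W$, this forces $\kappa(w)w' = 0$. The only point requiring care --- more bookkeeping than obstacle --- is that (b) must be invoked only after (a), so that $\kappa(w)$ genuinely restricts to an endomorphism of $W$ and the $\gamma$-skew-symmetry is available; everything else is a short index chase.
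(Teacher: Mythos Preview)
Your proof is correct and follows essentially the same approach as the paper: both parts are reduced to the vanishing of a trilinear form that is symmetric in its first two arguments (from $\kappa\in\ker\d$) and antisymmetric in its last two (from $\kappa(v)\in\g$), using $\eta$ for (a) and $\gamma$ for (b). One minor remark: you invoke (a) to ensure $\kappa(w)$ preserves $W$ before proving (b), but this is not strictly necessary---$W$ is a $\g$-submodule (as the kernel of the $G$-equivariant map $\eta^\flat$), so any element of $\g$, and in particular $\kappa(w)$, automatically maps $W$ to $W$; the paper's proof of (b) proceeds independently of (a) for this reason.
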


\begin{proof}
  Both results follow from the well-known fact (see, e.g.,
  \cite[Lemma~2]{Figueroa-OFarrill:2020gpr}) that for any vector space
  $U$, the two subspaces $U \otimes \wedge^2 U$ and $\odot^2 U\otimes
  U$ of $U^{\otimes 3}$ have zero intersection.  Dually, any
  third-rank tensor $S \in (U^*)^{\otimes 3}$ satisfying $S(v_1,v_2,v_3) = S(v_2,v_1,v_3) = -
  S(v_1,v_3,v_2)$ is identically zero.
  \begin{enumerate}[label=(\alph*)]
  \item Let $S \in (\V^*)^{\otimes 3}$ be defined by
    \begin{equation*}
      S(v_1,v_2,v_3) := \eta(\kappa(v_1)v_2,v_3) \qquad\text{for all
        $v_1,v_2,v_3 \in \V$},
    \end{equation*}
    which obeys
    \begin{align*}
      S(v_1,v_2,v_3) &= S(v_2,v_1,v_3) &\tag{because $\d\kappa = 0$}\\
                     &= - S(v_1,v_3,v_2). &\tag{because $\kappa(v_1)\in\g$}
    \end{align*}
    Therefore $S=0$ and hence for all $v_1,v_2 \in \V$,
    $\kappa(v_1)v_2 \in \ker\eta^\flat = W$.
  \item Define $S \in (W^*)^{\otimes 3}$ by
    \begin{equation*}
      S(w_1,w_2,w_3) := \gamma(\kappa(w_1)w_2,w_3) \qquad\text{for all
      $w_1,w_2,w_3\in W$},
    \end{equation*}
    which obeys
    \begin{align*}
      S(w_1,w_2,w_3) &= S(w_2,w_1,w_3) &\tag{because $\d\kappa = 0$}\\
                     &= - S(w_1,w_3,w_2). &\tag{because $\kappa(w_1)\in\g$}
    \end{align*}
    This says that $S=0$ and hence, since $\gamma$ is an inner product
    on $W$, this shows that $\kappa(w_1)w_2=0$ for all $w_1,w_2\in W$.
  \end{enumerate}
\end{proof}

It follows from Lemmas~\ref{lem:ker-d1-2-3} and \ref{lem:ker-d-props}
that $\ker \d \subset \ker \d_2$, but since $\d_2$ is the restriction of
$\d$ to a subspace, we also have that $\ker \d_2 \subset \ker \d$,
hence we conclude that they are the same.

\begin{proposition}\label{prop:kernel-d}
  The kernel of the Spencer differential $\d: \Hom(\V,\g) \to
  \Hom(\wedge^2\V, \V)$ is given by
  \begin{equation}
    \label{eq:kernel-d}
    \ker \d = \left\{\kappa \in \Hom(\V,\g) ~\middle |~ \kappa(w)w'=0\quad\text{and}\quad \kappa(v)v'=\kappa(v')v \in W,~\forall~w,w'\in W,~v,v' \in \V\right\}.
  \end{equation}
\end{proposition}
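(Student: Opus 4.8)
The plan is to combine the tautological description of $\ker\d$ afforded by the explicit formula~\eqref{eq:spencer-explicit} with the structural information already established in Lemma~\ref{lem:ker-d-props}. By~\eqref{eq:spencer-explicit}, a map $\kappa\in\Hom(\V,\g)$ lies in $\ker\d$ precisely when $(\d\kappa)(v\wedge v') = \kappa(v)v' - \kappa(v')v = 0$ for all $v,v'\in\V$; in other words, $\ker\d$ is the subspace of $\Hom(\V,\g)$ cut out by the single ``symmetry'' condition $\kappa(v)v'=\kappa(v')v$. This already gives the inclusion $\supseteq$ in~\eqref{eq:kernel-d}: any $\kappa\in\Hom(\V,\g)$ satisfying the three conditions listed on the right-hand side in particular satisfies $\kappa(v)v'=\kappa(v')v$, hence $\d\kappa=0$; the two extra conditions $\kappa(v)v'\in W$ and $\kappa(w)w'=0$ play no role in this direction.

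For the reverse inclusion $\subseteq$ I would simply invoke Lemma~\ref{lem:ker-d-props}. Given $\kappa\in\ker\d$, part~(a) of that lemma yields $\kappa(v)v'\in W$ for all $v,v'\in\V$ and part~(b) yields $\kappa(w)w'=0$ for all $w,w'\in W$; together with $\kappa(v)v'=\kappa(v')v$ (which holds because $\d\kappa=0$) these are exactly the three defining conditions of the set appearing in~\eqref{eq:kernel-d}. Hence $\ker\d$ is contained in that set, and combined with the previous paragraph we obtain equality. Equivalently, one can route the argument through the maps $\d_1,\d_2,\d_3$ of Lemma~\ref{lem:ker-d1-2-3}: that lemma identifies $\ker\d_2$ with precisely the set on the right-hand side of~\eqref{eq:kernel-d} and records that it sits inside $\ker\d$, while Lemmas~\ref{lem:ker-d1-2-3} and~\ref{lem:ker-d-props} together force every $\kappa\in\ker\d$ to satisfy the conditions defining $\ker\d_2$; hence $\ker\d=\ker\d_2$.

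I do not expect a genuine obstacle at this stage, since all the real work sits inside Lemma~\ref{lem:ker-d-props}, whose proof in turn rests on the elementary fact that $\odot^2 U\otimes U$ and $U\otimes\wedge^2 U$ meet only in $0$ inside $U^{\otimes 3}$ (applied with $U=\V$ via $\eta$ and with $U=W$ via $\gamma$). The one point to make explicit when writing up Proposition~\ref{prop:kernel-d} is that the inclusion $\supseteq$ is trivial --- the conditions $\kappa(v)v'\in W$ and $\kappa(w)w'=0$ are automatic consequences of membership in $\ker\d$ rather than additional constraints defining it --- so that the substance of the statement is really the assertion that $\ker\d$ is already this small, which is exactly what Lemma~\ref{lem:ker-d-props} delivers.
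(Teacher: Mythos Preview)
Your proposal is correct and follows essentially the same approach as the paper: the paper's argument (in the paragraph immediately preceding the proposition) is precisely that $\ker\d_2\subset\ker\d$ because $\d_2$ is a restriction of $\d$, while Lemmas~\ref{lem:ker-d1-2-3} and~\ref{lem:ker-d-props} force $\ker\d\subset\ker\d_2$, and the set on the right-hand side of~\eqref{eq:kernel-d} is exactly the description of $\ker\d_2$ from Lemma~\ref{lem:ker-d1-2-3}. Your direct verification of the two inclusions and your alternative phrasing via $\d_2$ are both faithful to this.
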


In particular, this shows that $\ker \d_3 = \ker\d_1$ and we arrive at
the following $G$-invariant filtration of $\ker \d$:
\begin{equation}
  \label{eq:ker-d-filtration}
    0 \subset \ker \d_1 \subset \ker \d.
\end{equation}
The map $\d_1 : W \otimes \ann W \otimes \ann W \to W \otimes \wedge^2
\ann W$ is simply skew-symmetrising the last two tensorands; that is,
$\d_1 = \id_W \otimes \wedge$, whose kernel is the symmetric part $W
\otimes \odot^2 \ann W$.  Since $\eta \in \odot^2\ann W$ is $G$-invariant,
$\ker \d_1$ has a submodule\footnote{If $p=0$, so that $\dim \ann W =
  1$, then this is all of the kernel of $\d_1$. That was the case
  treated in \cite{Figueroa-OFarrill:2020gpr}, which results in the
  filtration $0 \subset W \otimes \RR\eta \subset \ker\d$, which as we
  will see below induces a similar filtration of $\coker\d$ resulting in
  the well-known classification of Galilean structures into
  torsionless, twistless torsional and torsional.} $W \otimes \RR
\eta$, where $\RR\eta$ is the one-dimensional subpace of $\ann W
\otimes \ann W$ spanned by $\eta$; that is,
\begin{equation}
  \label{eq:trace-submodule-ker}
  W \otimes \RR\eta = \left\{\kappa \in \Hom(\V,\g) ~ \middle | ~
    \kappa(v)v' = w\, \eta(v,v')\quad\text{for some $w\in W$ and all
      $v,v' \in \V$}\right\}.
\end{equation}
Alternatively, we see that $W \otimes \RR\eta$ is the image of the
$G$-equivariant linear map $W \to \Hom(\V,\g)$ given by $w \mapsto w
\otimes \eta^\flat \in W \otimes \ann W \otimes \V^* \subset \g
\otimes \V^*$.

The $G$-submodule $W \otimes \RR \eta$ admits a $G$-invariant
complement, described as follows.  Recall that we are using the
shorthand $V = \V/W$.  There is a canonical isomorphism
$\ann W \cong V^*$ under which any $\tau \in \odot^2 \ann W$ is
sent to $\overline\tau \in \odot^2V^*$. This applies in
particular to $\eta$, resulting in the inner product $\overline\eta$
on $V$ which was introduced at the start of this section. Being an
inner product, it defines a musical isomorphism $\overline\eta^\sharp:
V^* \to V$ and this allows us to define a linear map
\begin{equation}
  \label{eq:eta-trace}
  \tr_\eta: \odot^2\ann W \to \RR
\end{equation}
as the following composition of linear maps:
\begin{equation}
  \begin{tikzcd}
    \odot^2\ann W \arrow[r,hook] & \ann W \otimes \ann W
    \arrow[r,"\cong"] & V^* \otimes V^*
    \ar[r,"\overline\eta^\sharp\otimes \id"] & V \otimes V^* \arrow[r,"\cong"]
    & \End V \arrow[r, "\tr"] & \RR,
  \end{tikzcd}
\end{equation}
where the last map is the usual trace of endomorphisms.  Notice that
we may apply this to $\eta$ itself, resulting in the trace of the
identity endomorphism, so that $\tr_\eta \eta = \dim V = \dim \ann W =
p+1$.  This allows us to decompose every $\tau \in \odot^2\ann W$
uniquely into
\begin{equation}
  \tau = \tau_0 + \frac{\tr_\eta \tau}{p+1} \eta,
\end{equation}
where $\tau_0 = \tau - \frac{\tr_\eta \tau}{p+1}\eta \in \odot_0^2\ann
W$, with $\odot_0^2\ann W$ the kernel of the trace map $\tr_\eta :
\odot^2\ann W \to \RR$ defined in Equation~\eqref{eq:eta-trace}.  We
may summarise this discussion as follows.

\begin{lemma}
  The $G$-submodule $\odot^2\ann W \subset \odot^2 \V^*$ is reducible:
  \begin{equation}
    \odot^2\ann W = \odot^2_0 \ann W \oplus \RR \eta.
  \end{equation}
\end{lemma}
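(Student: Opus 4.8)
The plan is to use the projection constructed just above. The decomposition $\tau = \tfrac{\tr_\eta\tau}{p+1}\eta + \tau_0$, valid for every $\tau\in\odot^2\ann W$, already shows that $\odot^2\ann W = \odot_0^2\ann W + \RR\eta$ as vector spaces, where $\odot_0^2\ann W = \ker\tr_\eta$; the sum is direct because $\tr_\eta\eta = \dim V = p+1 \neq 0$, so $\RR\eta\cap\ker\tr_\eta = 0$. Hence everything reduces to checking that each of the two summands is a $G$-submodule of $\odot^2\V^*$.

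For $\RR\eta$ this is immediate: $\eta$ is $G$-invariant by the definition of $G$ in Equation~\eqref{eq:G-def}, so $\RR\eta$ is a (trivial) one-dimensional $G$-submodule. For $\odot_0^2\ann W$ it suffices to show that $\tr_\eta:\odot^2\ann W\to\RR$ is $G$-equivariant when $\RR$ carries the trivial action, since then $\odot_0^2\ann W = \ker\tr_\eta$ is automatically $G$-invariant. This we read off from the chain of maps defining $\tr_\eta$: the inclusion $\odot^2\ann W\hookrightarrow\ann W\otimes\ann W$, the identification $\ann W\otimes\ann W\cong V^*\otimes V^*$ (via the canonical $G$-isomorphism $\ann W\cong V^*$), the identification $V\otimes V^*\cong\End V$, and the trace $\End V\to\RR$ are all $G$-equivariant --- the trace because $\tr(gAg^{-1}) = \tr A$ --- while $\overline\eta^\sharp\otimes\id_{V^*}$ is $G$-equivariant because the inner product $\overline\eta$ on $V=\V/W$ is $G$-invariant (it satisfies $\overline\eta(\overline{g v},\overline{g v'}) = \eta(gv,gv') = \eta(v,v') = \overline\eta(\overline v,\overline{v'})$, using that $W$ is a $G$-submodule so that $g$ descends to $\V/W$), and hence its musical isomorphism $\overline\eta^\sharp:V^*\to V$ intertwines the $G$-actions. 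A composition of $G$-equivariant maps is $G$-equivariant, so $\tr_\eta$ is $G$-equivariant and $\odot_0^2\ann W$ is a $G$-submodule.

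There is no genuine obstacle; the only thing that needs attention is verifying that every link in the composition defining $\tr_\eta$ respects the $G$-action, the sole non-tautological input being the $G$-invariance of the induced inner product $\overline\eta$ on the quotient $V$. Combining this with the vector-space splitting above yields the asserted $G$-module decomposition $\odot^2\ann W = \odot_0^2\ann W\oplus\RR\eta$.
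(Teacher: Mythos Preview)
Your argument is correct and follows the same approach as the paper: the paper simply states the lemma as a summary of the preceding discussion about the decomposition $\tau = \tau_0 + \tfrac{\tr_\eta\tau}{p+1}\eta$ without giving a separate proof, and you have supplied exactly the details that make that summary rigorous, in particular the $G$-equivariance of $\tr_\eta$ via the $G$-invariance of $\overline\eta$.
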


In summary, letting $\cK = \ker\d_1 = \cK_0 \oplus \cK_{\mathrm{tr}}$
denote the above decomposition with $\cK_0 = W \otimes \odot_0^2\ann
W$ and $\cK_{\mathrm{tr}} = W \otimes \RR \eta$, we have the following
$G$-submodules of $\ker \d$:
\begin{equation}\label{eq:ker-d-as-g-module}
  \begin{tikzcd}
    & \cK_{\mathrm{tr}} \arrow[rd] & & \\
    0 \arrow[ru] \arrow[rd] & & \cK \arrow[r] & \ker \d.\\
    & \cK_0 \arrow[ru] & & 
  \end{tikzcd}
\end{equation}

Lemma~\ref{lem:ker-d1-2-3} defines $\cK = \ker\d_1$ and $\ker\d =
\ker\d_2$.  The two $G$-submodules of $\cK$ can be characterised as
follows:
\begin{equation}
  \label{eq:g-submods-ker}
  \begin{split}
    \cK_{\mathrm{tr}} &= \left\{\kappa \in \Hom(\V,\g)~\middle |~\kappa(v)v' = \eta(v,v') \tilde{w} ~\forall~v,v' \in \V\quad\text{and some}~ \tilde{w} \in W\right\}\\
    \cK_0 &= \left\{\kappa \in \Hom(\V,\g)~\middle |~\kappa(w)=0~\forall~w \in W\quad\text{and}\quad \tr_\eta \kappa = 0\right\},
  \end{split}
\end{equation}
where in the second line $\tr_\eta : \Hom(\odot^2V,W) \to W$ is the
$\eta$-trace of the linear map $\odot^2V \to W$ induced by the
linear map $\odot^2\V \to W$ given by sending $(v,v') \mapsto
\kappa(v)v'$, which is symmetric for $\kappa \in \ker\d$.

Up to this point we have been agnostic as to the signatures of
$\gamma$ and $\eta$, only assuming that they are nondegenerate.
However, their signatures do play a rôle when discussing the
(ir)reducibility of $\cK_{\mathrm{tr}}$ and $\cK_0$, to which we now
turn.

It is clear that if $\dim W \geq 3$, then $\cK_{\mathrm{tr}}$, which is
isomorphic to $W$ as a $G$-module, is irreducible, since $G$ acts on
$W$ via $\SO(W,\gamma)$ and the vector representation of the special
orthogonal group is irreducible in dimension $\geq 3$.  Of course, if
$\dim W = 1$, then $W$ is, by definition, irreducible.  It remains to
discuss $\dim W =2$.  If $\gamma$ is positive-definite, then $W$
(being real) is also irreducible, but if $\gamma$ is indefinite then
it breaks up into a direct sum of two one-dimensional irreducible
submodules. Indeed, let $e_\pm$ be a Witt (a.k.a. lightcone) basis for
$W$ relative to which $\gamma(e_+,e_-)=1$ and $\gamma(e_\pm,e_\pm)=0$.
The Lorentz group consists only of boosts.  Infinitesimally, the
boost generator $B$ is such that $B e_\pm = \pm e_\pm$ and hence $W =
W_+ \oplus W_-$, where $W_\pm = \RR e_\pm$.  In summary,
$\cK_{\mathrm{tr}}$ is irreducible unless $\dim W = 2$ and $\gamma$ is
indefinite.

Similarly, $\cK_0 \cong W \otimes \odot^2_0 \ann W$ is irreducible
unless $\dim W = 2$ and $\gamma$ is indefinite (as for
$\cK_{\mathrm{tr}}$) or $\dim\ann W = 2$ and $\eta$ is indefinite,
since in that case $\dim \odot_0^2\ann W = 2$ and again it breaks up
into a direct sum of two one-dimensional submodules. Indeed, in terms
of a Witt basis $\theta^\pm$ for $\ann W$, a basis for $\odot_0^2 \ann
W$ is given by their symmetric squares $(\theta^\pm)^2$ and under the
action of the infinitesimal boost generator $B$, $B(\theta^\pm)^2 = \mp 2
(\theta^\pm)^2$.

We have not made a choice of signature for $\gamma$ and $\eta$, but
there are two perspectives we can take.  On the one hand, we can think
in terms of $p$-brane Galilean structures, where $\eta$ has signature
$(1,p)$ and $\gamma$ is positive-definite of size $D-p-1$.  This means
that $\cK_{\mathrm{tr}}$ is always irreducible, but $\cK_0$ is only
irreducible if $p\neq 1$.  The reducible case corresponds to stringy
Galilean structures.

On the other hand, if we were to think in terms of ($D-p-2$)-brane
Carrollian structures, then $\eta$ would be positive-definite of size
$p+1$ and $\gamma$ would be indefinite of signature $(1,D-p-2)$ and
hence both $\cK_{\mathrm{tr}}$ and $\cK_0$ would fail to be
irreducible when $p = D-3$, which is the case of stringy Carrollian
structures.

Hence our results in this section are incomplete for the stringy
Galilean and Carrollian structures.  In terms of our original
data, we will therefore not claim completeness for $p=1$ ($\eta$
indefinite), and $p=D-3$ ($\gamma$ indefinite). We reiterate that most
of our treatment goes through in these cases as well, but there will
be more intrinsic torsion classes than in the generic case due to some
of the representations occurring no longer being irreducible.

\subsection{The associated graded Spencer differential}
\label{sec:assoc-grad-spenc}

The Spencer differential $\d : \Hom(\V,\g) \to \Hom(\wedge^2\V,\V)$ is
a $G$-equivariant map and it preserves the filtrations on both the
domain and codomain.  In Section~\ref{sec:some-filtered-g} we
discussed the associated graded modules $\gr\Hom(\V,\g)$ and
$\gr\Hom(\wedge^2\V,\V)$, which are given respectively by
Equations~\eqref{eq:explicit-grading-hom-v-g} and
\eqref{eq:hom-v-g-graded-components} for $\gr \Hom(\V,\g)$ and by
Equations~\eqref{eq:explicit-grading-torsions} and
\eqref{eq:torsions-graded-components} for $\gr\Hom(\wedge^2\V,\V)$.
These associated graded $G$-modules are isomorphic (as vector spaces,
but not necessarily as $G$-modules) to the filtered $G$-modules.  Let
$j_1 : \Hom(\V,\g) \to \gr \Hom(\V,\g)$ and
$j_2 : \Hom(\wedge^2\V,\V) \to \gr\Hom(\wedge^2\V,V)$ be the
corresponding vector space isomorphisms.  Then the Spencer
differential induces a map
$\Delta:= j_2 \circ \d \circ j_1^{-1}: \gr \Hom(\V,\g) \to
\gr \Hom(\wedge^2\V,\V)$ by demanding the commutativity of the
following square
\begin{equation}
  \begin{tikzcd}
    \Hom(\g,\V) \arrow[d,"j_1"] \arrow[r,"\d"] & \Hom(\wedge^2\V,\V)\arrow[d,"j_2"] \\
    \gr \Hom(\g,\V)  \arrow[r, "\Delta"] & \gr \Hom(\wedge^2\V,\V).
  \end{tikzcd}
\end{equation}
This linear map $\Delta$ between two graded vector spaces may have
components of different (non-positive, since $\d$ preserves the
filtration) degrees: $\Delta = \d_0 + \d_{-1} + \d_{-2} + \cdots$.
The degree-$0$ component $\d_0$ is the associated graded map $\gr\d$,
but there could in principle be components of negative degrees.  To
see that this does not actually happen, we study the map $\Delta$.
From the explicit form of the Spencer differential in Equation
\eqref{eq:spencer-differential}, we see that
\begin{equation}
  \begin{tikzcd}
      \so(V) \otimes W^* \oplus \so(W) \otimes W^*  \arrow[r,"\Delta"] &
      V \otimes V^* \wedge W^* \oplus W \otimes \wedge^2 W^*\\
      \so(V) \otimes V^* \oplus \so(W)\otimes V^* \oplus W \otimes V^*
      \otimes W^* \arrow[r,"\Delta"] & V \otimes \wedge^2 V^* \oplus W
      \otimes V^* \wedge W^*\\
      W \otimes V^* \otimes V^* \arrow[r, "\Delta"] & W \otimes \wedge^2 V^*\\
  \end{tikzcd}
\end{equation}
so that $\Delta$ maps $\Hom(\V,\g)_i \to \Hom(\wedge^2\V,\V )_i$ for all $i=
-1,0,1$. Hence $\Delta = \d_0 = \gr \d$ and hence by
Lemma~\ref{lem:gr-ker=ker-gr}, $\ker\gr\d = \gr\ker\d$.  It follows
therefore that $\dim \ker\d = \dim \gr\ker\d = \dim \ker\gr\d$ and
from the Rank Theorem also that $\rank \d = \rank \gr\d$.

We now analyse $\gr\d$ in more detail to determine its kernel and
count its dimension.

\begin{proposition}
  The linear map $\gr \d : \gr \Hom(\V,\g) \to \gr
  \Hom(\wedge^2\V,\V)$ induced by the Spencer differential, decomposes
  into the following maps:
  \begin{enumerate}
  \item a surjective map $\Hom(V,\Hom(V,W)) \onto \Hom(\wedge^2V,W)$,
    sending $\kappa$ to $\d\kappa(v\wedge v') = \kappa(v)v' -
    \kappa(v')v$, whose kernel is given by
    \begin{equation}
      \label{eq:kernel-gr-d-1}
      K_{-1} = \left\{\kappa \in \Hom(V,\Hom(V,W))~\middle |~ \kappa(v)v' =
        \kappa(v')v \in W,\quad\forall~v,v' \in V\right\};
    \end{equation}
  \item a surjective map
    \begin{equation}
      \Hom(V,\so(V)) \oplus \Hom(V,\so(W)) \oplus \Hom(W,\Hom(V,W))
      \onto \Hom(\wedge^2V,V) \oplus \Hom(V \wedge W, W),
    \end{equation}
    which itself breaks up into two maps:
    \begin{enumerate}
    \item an isomorphism $\Hom(V,\so(V)) \isom \Hom(\wedge^2V,V)$, and
    \item a surjective map $\Hom(V,\so(W)) \oplus \Hom(W,\Hom(V,W)) \onto
      \Hom(V \wedge W,W)$, whose kernel given by
      \begin{align}\label{eq:kernel-gr-d-2}
        K_0 = \left\{\kappa + \lambda \in \Hom(V,\so(W)) \oplus
          \Hom(W,\Hom(V,W))
         ~\middle |~ \begin{array}{l} \kappa(v)w = \lambda(w)v \in
          W\\\quad\forall~v\in V,~w \in W\end{array}\right\};
      \end{align}
    \end{enumerate}
  \item an injective map $\Hom(W,\so(W))\oplus \Hom(W,\so(V)) \into
    \Hom(\wedge^2W,W) \oplus \Hom(V\wedge W, V)$, which itself breaks up
    into two maps:
    \begin{enumerate}
    \item an isomorphism $\Hom(W,\so(W)) \isom \Hom(\wedge^2 W,W)$, and
    \item an injective map $\Hom(W,\so(V)) \into \Hom(V \wedge W, V)$,
      which sends $\kappa$ to $\d\kappa(v \wedge w) = - \kappa(w)v$.
    \end{enumerate}
  \end{enumerate}
\end{proposition}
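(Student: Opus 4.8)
The plan is to compute $\gr\d$ one graded degree at a time. By the discussion in Section~\ref{sec:assoc-grad-spenc}, $\gr\d$ preserves the grading, so it is the direct sum of three maps, of degrees $-1$, $0$ and $1$, between the graded pieces listed in Equations~\eqref{eq:hom-v-g-graded-components} and \eqref{eq:torsions-graded-components}. On each piece, $\gr\d$ is still computed by the formula $(\gr\d\,\kappa)(x\wedge y)=\kappa(x)y-\kappa(y)x$, with $x,y\in\gr\V=V\oplus W$ and $\kappa$ valued in $\gr\g=\so(V)\oplus\so(W)\oplus(W\otimes V^*)$. So the whole argument reduces to bookkeeping of the graded action: $\so(V)$ acts on $V$ and kills $W$; $\so(W)$ acts on $W$ and kills $V$; the boost block $W\otimes V^*=\Hom(V,W)$ maps $V$ into $W$ and kills $W$; and a $\kappa$ whose $\V^*$-slot lies in degree $j$ kills the graded summand $\V_{j'}$ whenever $j'\neq -j$. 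With these rules in hand, each of the three maps can be read off directly.

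First I would dispose of degrees $-1$ and $1$. In degree $-1$ only the boost-valued component $\Hom(V,\Hom(V,W))=(W\otimes V^*)\otimes V^*$ survives, and $\gr\d$ is just skew-symmetrisation of the two covariant slots, $\id_W\otimes\wedge\colon W\otimes V^*\otimes V^*\to W\otimes\wedge^2 V^*$; this is surjective with kernel the symmetric part, which is exactly the space $K_{-1}$ of~\eqref{eq:kernel-gr-d-1}. In degree $1$ the domain is $\Hom(W,\so(W))\oplus\Hom(W,\so(V))$. For $\kappa\in\Hom(W,\so(W))$ the only nonzero output is $(\gr\d\,\kappa)(w\wedge w')=\kappa(w)w'-\kappa(w')w\in W$ (mixed and $\wedge^2 V$ arguments vanish because $\so(W)$ kills $V$ and $\kappa$ kills $V$), so $\gr\d$ restricts to the Spencer differential of the $\so(W)$-structure on $W$, which is an isomorphism $\Hom(W,\so(W))\isom\Hom(\wedge^2 W,W)$ --- this is precisely the linear-algebraic content of the Fundamental Theorem of Riemannian Geometry (equivalently, $\so(\dim W)$ has vanishing first prolongation and the two spaces have equal dimension). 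For $\kappa\in\Hom(W,\so(V))$ the only nonzero output is $(\gr\d\,\kappa)(v\wedge w)=-\kappa(w)v\in V$, defining a map $\Hom(W,\so(V))\to\Hom(V\wedge W,V)$ that is injective because $\kappa(w)v=0$ for all $v\in V$ forces $\kappa(w)=0$. As these two summands land in complementary summands of the codomain, the degree-$1$ map is injective and splits as stated.

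Next I would treat degree $0$, where the only genuine interaction occurs. Writing a general element as $\kappa_1+\kappa_2+\lambda$ with $\kappa_1\in\Hom(V,\so(V))$, $\kappa_2\in\Hom(V,\so(W))$ and $\lambda\in\Hom(W,\Hom(V,W))$, the action rules give: on $v,v'\in V$ only $\kappa_1$ contributes, producing $\d\kappa_1(v\wedge v')=\kappa_1(v)v'-\kappa_1(v')v\in V$; on $v\in V$, $w\in W$ the $\kappa_1$-terms vanish and one is left with $\kappa_2(v)w-\lambda(w)v\in W$; on $w,w'\in W$ everything vanishes (consistent with $\Hom(\wedge^2 W,W)$ sitting in degree $1$). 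Hence the degree-$0$ map is the direct sum of $\kappa_1\mapsto\d\kappa_1$, the Spencer differential of the $\so(V)$-structure and hence an isomorphism $\Hom(V,\so(V))\isom\Hom(\wedge^2 V,V)$, and the map $(\kappa_2,\lambda)\mapsto\bigl((v,w)\mapsto\kappa_2(v)w-\lambda(w)v\bigr)$ into $\Hom(V\wedge W,W)$. This last map is surjective: given $T$, set $\kappa_2=0$ and $\lambda(w):=-T(\,\cdot\,,w)\in\Hom(V,W)$, whence $\kappa_2(v)w-\lambda(w)v=T(v,w)$. Its kernel is manifestly the space $K_0$ of~\eqref{eq:kernel-gr-d-2}. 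Together with the degree-$0$ isomorphism this also shows the full degree-$0$ map is surjective, which finishes the proof.

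The main (if modest) obstacle is getting the graded-action bookkeeping right --- which block of $\gr\g$ maps which summand of $\gr\V$ into which --- because a slip there corrupts all three degree components at once. The only input that is not pure bookkeeping is the classical vanishing of the first prolongation of $\so(n)$, which is what makes the two ``diagonal'' blocks isomorphisms; everything else is the identity $\kappa(x)y-\kappa(y)x$ applied with care.
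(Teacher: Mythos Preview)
Your proposal is correct and follows essentially the same approach as the paper: both decompose $\gr\d$ degree by degree, invoke the vanishing of the first prolongation of $\so$ (the paper phrases this via the symmetric/skew argument of Lemma~\ref{lem:ker-d-props}, you via the Fundamental Theorem) for the two isomorphisms in (2)(a) and (3)(a), and establish surjectivity in (2)(b) by observing that $\Hom(W,\Hom(V,W))$ alone already surjects onto $\Hom(V\wedge W,W)$. Your explicit bookkeeping of which block of $\gr\g$ acts on which summand of $\gr\V$ is exactly what the paper summarises as ``by inspection''.
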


\begin{proof}
  The fact that the domains and codomains of the maps are as described
  follows by inspection as was stated above.

  That the maps in (2)(a) and (3)(a) are isomorphisms can be proved as
  follows.  Let $\kappa \in \Hom(V,\so(V))$ be in the kernel of
  $\gr\d$.  Then $\kappa(v_1)v_2= \kappa(v_2)v_1$ for all $v_1,v_2\in V$.
  This shows that $\overline\eta(\kappa(v_1)v_2, v_3)$ is symmetric in
  $v_1 \leftrightarrow v_2$ and, since $\kappa(v_1) \in \so(V)$, it is
  also skew-symmetric in $v_2 \leftrightarrow v_3$.  Hence as in the
  proof of Lemma~\ref{lem:ker-d-props}, we conclude that
  $\overline\eta(\kappa(v_1)v_2, v_3)=0$ for all $v_1,v_2,v_3\in V$,
  and since $\overline\eta$ is an inner product, this implies that
  $\kappa = 0$.  Therefore the map in (2)(a) is injective, but since
  it maps between two spaces of equal dimension, the Rank Theorem says
  that it is an isomorphism.  A similar argument shows that the map in
  (3)(a) is an isomorphism.

  It is clear from the explicit expression that the map in (3)(b) is
  injective, since if $\kappa(w)v = 0$ for all $v\in V$ and $w \in W$,
  then $\kappa(w) = 0$ for all $w\in W$ and thus $\kappa = 0$.  The
  kernels $K_{-1}$ in (1) and and $K_0$ in (2)(b) follow from the
  explicit expression of the Spencer differential.  The fact that the
  map in (1) is surjective can be seen as follows: the map in question
  is the composition $\Hom(V,\Hom(V,W) \cong \Hom(V\otimes V, W) \to
  \Hom(\wedge^2 V, W)$, where the first map is the inverse of the
  currying isomorphism and the second is skew-symmetrisation, which is
  certainly surjective.  Finally, the map in (2)(b) is surjective
  already when restricted to $\Hom(W,\Hom(V,W))$ since it is up to a
  sign the inverse of the currying isomorphism $\Hom(W,\Hom(V,W))
  \cong \Hom(V\otimes W, W)$ composed with the isomorphism $V \wedge W
  \cong V \otimes W$.
\end{proof}

It follows from the previous discussion and the fact that the rank and nullity of
$\d$ and $\gr\d$ agree, that
\begin{equation}
  \label{eq:rank-spencer}
  \begin{split}
    \rank\d = \rank \gr\d &= \dim \Hom(\wedge^2V,W) + \dim\Hom(\wedge^2V,V) + \dim \Hom(V \wedge W,W)\\
     & \qquad {} + \dim\Hom(\wedge^2W,W) + \dim\Hom(W,\so(V))\\
    &= \tfrac12 D^2((D-p-1)^2+Dp)
  \end{split}
\end{equation}
and from the Rank Theorem
\begin{equation}
  \label{eq:dim-ker-spencer}
  \dim\ker\d = \dim\ker\gr\d = \dim \Hom(\V,\g) - \rank \d 
  =\tfrac12 D(D-p-1)(p+1).
\end{equation}
Notice that $\gr\ker\d = K_{-1} \oplus K_0$, so that $\dim\ker\d =
\dim K_{-1} + \dim K_0$.  Together with the fact that $K_{-1} \cong
\Hom(\odot^2V,W)$ has dimension $\dim K_{-1} 
=\tfrac12 (D-p-1)(p+1)(p+2)$, we
see that $\dim K_0  
=\tfrac12 (D-p-1)(p+1)(D-p-2)$.

\subsection{Intrinsic torsion classes}
\label{sec:intr-tors-class}

Notice that $\dim \g = \dim \wedge^2\V$, so that $\dim \Hom(\V,\g) =
\dim \Hom(\wedge^2\V,\V)$ and applying the Euler--Poincaré principle
to the exact sequence~\eqref{eq:intrinsic-torsion-exact-sequence}, we
conclude that $\dim \ker\d = \dim \coker \d$.  More is true, however,
and they are isomorphic as $G$-modules, as we now show.

Define $\varphi: \Hom(\wedge^2\V,\V) \to \Hom(\V,\End\V)$ by $T
\mapsto \varphi_T$ where
\begin{equation}
  \label{eq:phi-map}
  \left<\alpha, \varphi_T(v)v' \right> := \eta(T(v \wedge
  h^\sharp\alpha), v') + \eta(T(v' \wedge h^\sharp\alpha), v)
\end{equation}
for all $\alpha \in \V^*$ and $v,v' \in \V$.

\begin{lemma}
  \label{lem:phi-v-in-hom-v-w}
  $\varphi_T(v)v' = \varphi_T(v')v \in W$ for all $v,v'\in\V$.
\end{lemma}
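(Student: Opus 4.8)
The plan is to check the two assertions of the lemma separately, each by a short unwinding of the defining equation~\eqref{eq:phi-map}; no real computation is needed beyond recalling the basic structure of the exact pair~\eqref{eq:exact-pair-delta-eta}.

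For the symmetry $\varphi_T(v)v' = \varphi_T(v')v$, I would simply observe that the right-hand side of~\eqref{eq:phi-map} is the sum of two terms which are exchanged under the swap $v \leftrightarrow v'$, so that $\langle \alpha, \varphi_T(v)v'\rangle = \langle \alpha, \varphi_T(v')v\rangle$ for every $\alpha \in \V^*$. Since the dual pairing is nondegenerate, one may abstract $\alpha$ and conclude the equality in $\V$. (Along the way this also confirms that $\varphi_T$ is well defined: the right-hand side of~\eqref{eq:phi-map} is linear in both $\alpha$ and $v'$ because $h^\sharp$, $T$ and $\eta$ are all linear, so it genuinely determines an element $\varphi_T(v)v' \in \V$ and a map $\varphi_T \in \Hom(\V, \End\V)$.)

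For the containment $\varphi_T(v)v' \in W$, the key point is the identification $\ann W = \ker h^\sharp$, which is exactly the exactness of the sequence~\eqref{eq:delta} (equivalently, it is part of the exact pair~\eqref{eq:exact-pair-delta-eta}). Since $\V$ is finite-dimensional and, by definition of the annihilator, $W$ is contained in $\{v \in \V : \langle \beta, v\rangle = 0 \text{ for all } \beta \in \ann W\}$, a comparison of dimensions ($\dim W = D-p-1 = D - \dim \ann W$) shows that $W$ equals this double annihilator. Hence it suffices to verify that $\langle \beta, \varphi_T(v)v'\rangle = 0$ for every $\beta \in \ann W$. But for such $\beta$ we have $h^\sharp\beta = 0$, so in~\eqref{eq:phi-map} each argument of $T$ becomes a wedge with $0$ and therefore vanishes, giving $\langle \beta, \varphi_T(v)v'\rangle = 0$. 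Thus $\varphi_T(v)v' \in W$, which completes the proof.

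I do not expect any genuine obstacle here: the only thing one must get right is that membership in $W$ is detected by pairing against $\ann W$, and that $\ann W$ lies in the kernel of $h^\sharp$ — both of which are immediate from the set-up preceding the lemma.
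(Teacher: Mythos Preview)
Your proof is correct and follows essentially the same approach as the paper. The only cosmetic difference is that the paper detects membership in $W$ via $W = \ker\eta^\flat$ (pairing $\varphi_T(v)v'$ against $\eta^\flat v''$ and using $h^\sharp \circ \eta^\flat = 0$), whereas you detect it via the double-annihilator characterisation of $W$ and pair directly against $\beta \in \ann W = \ker h^\sharp$; both routes are the same use of the exact pair~\eqref{eq:exact-pair-delta-eta}.
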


\begin{proof}
  The symmetry in $v \leftrightarrow v'$ is clear from the
  definition.  To show that the result lies in $W$, let $v''\in \V$ be
  arbitrary and calculate
  \begin{align*}
    \eta(v'', \varphi_T(v)v') &= \left<\eta^\flat v'', \varphi_T(v)v'\right> &\tag{by definition of $\eta$}\\
                              &= \eta(T(v \wedge h^\sharp\eta^\flat v''), v') + \eta(T(v' \wedge h^\sharp\eta^\flat v''), v)\\
                              &= 0, &\tag{since $h^\sharp \circ \eta^\flat = 0$}
  \end{align*}
  so that $\varphi_T(v)v' \in W$.
\end{proof}

\begin{lemma}\label{lem:im-phi-in-ker-d}
  The image of $\varphi$ lies in $\ker\d \subset \Hom(\V,\g)$.
\end{lemma}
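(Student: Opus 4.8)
The plan is to split the claim into two parts: first that $\varphi_T$ actually takes values in $\g$ (a priori~\eqref{eq:phi-map} only makes $\varphi_T(v)$ an element of $\End\V$), and second that $\varphi_T \in \ker\d$. Granting the first part, the second is immediate: by Lemma~\ref{lem:phi-v-in-hom-v-w} we have $\varphi_T(v)v' = \varphi_T(v')v$ for all $v,v'\in\V$, so that $(\d\varphi_T)(v\wedge v') = \varphi_T(v)v' - \varphi_T(v')v = 0$. Thus the substantive task is to show $\varphi_T(v)\in\g$ for every $v\in\V$.

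For that I would use the characterisation of $\g$ in Lemma~\ref{lem:lie-algebra-g}, which asks for the two identities $\eta^\flat\circ\varphi_T(v) + \varphi_T(v)^t\circ\eta^\flat = 0$ and $\varphi_T(v)\circ h^\sharp + h^\sharp\circ\varphi_T(v)^t = 0$. The first comes for free: Lemma~\ref{lem:phi-v-in-hom-v-w} gives $\im\varphi_T(v)\subset W = \ker\eta^\flat$, and part (a) of Lemma~\ref{lem:im-X-in-W-in-ker-X} turns this into exactly the first identity.

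The second identity is where the one genuine computation lies, and it is the step I expect to be the main (if modest) obstacle. I would compare the two maps $\V^*\to\V$ by pairing each against an arbitrary $\beta\in\V^*$. Expanding $\left<\beta,\,\varphi_T(v)\,h^\sharp\alpha\right>$ via~\eqref{eq:phi-map} (with the role of $v'$ there played by $h^\sharp\alpha$), one of the two $\eta$-terms drops because $h^\sharp\alpha\in W$ and $\eta(u,w)=0$ whenever $w\in W$ (since $\eta^\flat$ lands in $\ann W$ by~\eqref{eq:eta}), leaving $\eta\bigl(T(h^\sharp\alpha\wedge h^\sharp\beta),\,v\bigr)$. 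For the other piece, symmetry of $h$ and the definition of the transpose give $\left<\beta,\, h^\sharp\varphi_T(v)^t\alpha\right> = h(\beta,\varphi_T(v)^t\alpha) = \left<\alpha,\,\varphi_T(v)\,h^\sharp\beta\right>$, and the same expansion with $\alpha$ and $\beta$ interchanged produces $\eta\bigl(T(h^\sharp\beta\wedge h^\sharp\alpha),\,v\bigr)$. Because $T$ is genuinely antisymmetric on $\wedge^2\V$, these two contributions are negatives of one another, so their sum vanishes; hence $\varphi_T(v)\in\g$, and with the first paragraph $\varphi_T\in\ker\d$.

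As a cross-check one could instead verify directly that $\varphi_T$ satisfies the defining relations of $\ker\d$ in Proposition~\ref{prop:kernel-d} — for instance $\varphi_T(w)w'=0$ for $w,w'\in W$ likewise follows from~\eqref{eq:phi-map} by the same "$\eta$ against $W$" observation — but since Proposition~\ref{prop:kernel-d} presupposes membership in $\Hom(\V,\g)$, this does not circumvent the computation above. The two structural facts that make that computation collapse are that $h^\sharp$ always lands in $W$ and that $T$ is a $2$-form.
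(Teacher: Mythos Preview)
Your proposal is correct and follows essentially the same route as the paper: both reduce the claim to showing $\varphi_T(v)\in\g$ (the $\ker\d$ part being immediate from the symmetry in Lemma~\ref{lem:phi-v-in-hom-v-w}), handle the $\eta$-condition via $\im\varphi_T(v)\subset W$, and verify the $h$-condition by the same computation --- expanding $\left<\alpha,\varphi_T(v)h^\sharp\beta\right>$, dropping the term where $\eta$ hits $W$, and observing the remaining $\eta\bigl(T(h^\sharp\alpha\wedge h^\sharp\beta),v\bigr)$ is antisymmetric in $\alpha,\beta$. The paper packages the last step as showing $h(\varphi_T(v)^t\alpha_1,\alpha_2)$ is skew, which is your computation read from the other side.
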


\begin{proof}
  It is clear from the definition that $\varphi_T(v)v' =
  \varphi_T(v')v$, so that provided its image lies in $\Hom(\V,\g)$ then it
  lies in $\ker\d$.  From Lemma~\ref{lem:phi-v-in-hom-v-w}, we see
  that $\varphi_T : \V \to \Hom(\V,W)$ and hence $\varphi_T(v)$
  trivially preserves $\eta$: not just is $\eta(\varphi_T(v)v',v') =
  0$, but in fact $\eta(\varphi_T(v)v',v'') =0$.
  
  To see that $\varphi_T(v)$ also preserves $h$, let
  $\alpha_1,\alpha_2\in \V^*$, and calculate
  \begin{align*}
    h(\varphi_T(v)^t\alpha_1, \alpha_2)&= \left<\varphi_T(v)^t\alpha_1, h^\sharp(\alpha_2)\right>\\
                                              &= \left<\alpha_1, \varphi_T(v) h^\sharp(\alpha_2)\right>\\
                                              &= \eta(T(v\wedge h^\sharp\alpha_1), h^\sharp\alpha_2) + \eta(T(h^\sharp \alpha_2 \wedge h^\sharp \alpha_1), v)\\
    &= -\eta(T(h^\sharp \alpha_1 \wedge h^\sharp \alpha_2), v), &\tag{since $h^\sharp \alpha_2 \in W$}
  \end{align*}
  which is clearly skew-symmetric in $\alpha_1$ and $\alpha_2$, so that
  \begin{equation*}
    h(\varphi_T(v)^t\alpha_1, \alpha_2) + h(\varphi_T(v)^t\alpha_2, \alpha_1) = 0.
  \end{equation*}
\end{proof}

As a result of the lemma, we will henceforth think of $\varphi$ as a
linear map $\varphi: \Hom(\wedge^2\V,\V) \to \Hom(\V,\g)$, which is
clearly $G$-equivariant, since it is constructed out of the
$G$-invariant tensors $\eta$ and $h$.

Just like the Spencer differential, the map $\varphi$ too induces a
linear map $\Phi: \gr \Hom(\wedge^2\V,\V) \to \gr\Hom(\V,\g)$ between
graded vector spaces, which can be seen to be of degree $-2$: the dual
pairing and $\eta$ have degree $0$, whereas $h^\sharp$ has degree
$-2$.  Therefore $\Phi : \Hom(\wedge^2\V,\V)_i \to \Hom(\V,\g)_{i-2}$,
where $i=-1,0,1,2$.

\begin{lemma}\label{lem:im-phi-onto-ker-d}
  The linear map $\Phi: \gr \Hom(\wedge^2\V,\V) \to
  \gr\Hom(\V,\g)$ induced by $\varphi$ has two nonzero components:
  \begin{enumerate}
  \item a map $\Hom(V\wedge W,V) \to \Hom(V,\Hom(V,W))$ whose image is
    $K_{-1}$ in Equation~\eqref{eq:kernel-gr-d-1}, and
  \item a map $\Hom(\wedge^2W, V) \to \Hom(V,\so(W)) \oplus
    \Hom(W,\Hom(V,W))$ whose image is $K_0$ in
    Equation~\eqref{eq:kernel-gr-d-2}.
  \end{enumerate}
\end{lemma}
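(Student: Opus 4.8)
The plan is to read off the two nonzero components of $\Phi$ directly from the defining formula~\eqref{eq:phi-map} for $\varphi$, exploiting that $\Phi$ is homogeneous of degree $-2$. Since $\gr\Hom(\V,\g)$ is concentrated in degrees $-1,0,1$, such a $\Phi$ necessarily kills $\Hom(\wedge^2\V,\V)_{-1}$ and $\Hom(\wedge^2\V,\V)_{0}$ (their images would sit in degrees $-3$ and $-2$), and it can only map $\Hom(\wedge^2\V,\V)_{1}$ into $\Hom(\V,\g)_{-1}=\Hom(V,\Hom(V,W))$ and $\Hom(\wedge^2\V,\V)_{2}=\Hom(\wedge^2W,V)$ into $\Hom(\V,\g)_{0}$. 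Moreover, by Lemma~\ref{lem:im-phi-in-ker-d} the image of $\varphi$ lies in $\ker\d$, and since the filtration on $\ker\d$ is the one inherited from $\Hom(\V,\g)$, the image of $\Phi$ is contained in $\gr\ker\d\hookrightarrow\gr\Hom(\V,\g)$; as $\gr\ker\d=K_{-1}\oplus K_0$ (Lemma~\ref{lem:gr-ker=ker-gr} together with the dimension count above), with $K_{-1}$ in degree $-1$ and $K_0$ in degree $0$, we obtain for free that $\Phi$ maps $\Hom(\wedge^2\V,\V)_{1}$ into $K_{-1}$ and $\Hom(\wedge^2\V,\V)_{2}$ into $K_0$. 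What remains is to show that the degree-$1$ component vanishes on the $\Hom(\wedge^2W,W)$ summand, to write the surviving two maps explicitly, and to prove they are onto.

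For the degree-$1$ piece I would split $\Hom(\wedge^2\V,\V)_{1}=\Hom(\wedge^2W,W)\oplus\Hom(V\wedge W,V)$. Any $T$ representing the first summand has image in $W=\ker\eta^\flat$, and then $\varphi_T=0$ outright, because every term on the right of~\eqref{eq:phi-map} is of the form $\eta(T(\,\cdot\,),v')=\langle\eta^\flat T(\,\cdot\,),v'\rangle=0$. So only $\Hom(V\wedge W,V)$ contributes. Specialising~\eqref{eq:phi-map} to $v=v_1,v'=v_2$ lifting $\overline v_1,\overline v_2\in V$, using that $\eta$ descends to $\overline\eta$ on $V$ and that $\langle\alpha,u\rangle=\gamma(h^\sharp\alpha,u)$ for $u\in W$, and keeping the leading (degree-$-1$) term, one finds $\Phi(T)=\kappa$ characterised by $\gamma\bigl(w,\kappa(\overline v_1)\overline v_2\bigr)=\overline\eta\bigl(T(\overline v_1\wedge w),\overline v_2\bigr)+\overline\eta\bigl(T(\overline v_2\wedge w),\overline v_1\bigr)$; this is symmetric in $\overline v_1\leftrightarrow\overline v_2$, hence lies in $K_{-1}$ by~\eqref{eq:kernel-gr-d-1}. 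Conversely, given $\kappa\in K_{-1}$, the assignment $\overline\eta\bigl(T(\overline v\wedge w),\overline v'\bigr):=\tfrac12\gamma\bigl(w,\kappa(\overline v)\overline v'\bigr)$ defines, via nondegeneracy of $\overline\eta$, a $T\in\Hom(V\wedge W,V)$, and the symmetry of $\kappa$ gives $\Phi(T)=\kappa$; so this map is onto $K_{-1}$.

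For the degree-$2$ piece, $\Hom(\wedge^2W,V)$ maps to $\Hom(\V,\g)_{0}=\Hom(W,\Hom(V,W))\oplus\Hom(V,\so(W))\oplus\Hom(V,\so(V))$. I would first note that the $\Hom(V,\so(V))$ component of $\Phi(T)$ is always zero: by Lemma~\ref{lem:phi-v-in-hom-v-w} each $\varphi_T(v)$ has image in $W$, hence induces the zero endomorphism of $\V/W$, which is exactly the $\so(V)$ part of an element of $\g$. The two surviving components are extracted from~\eqref{eq:phi-map} by taking $(v,v')\in V\times W$ and $(v,v')\in W\times V$ respectively; both reduce (again using $\eta(-,w)=0$ for $w\in W$ and $\langle\alpha,u\rangle=\gamma(h^\sharp\alpha,u)$) to $\gamma\bigl(w'',\kappa(\overline v)w\bigr)=-\overline\eta\bigl(T(w''\wedge w),\overline v\bigr)=\gamma\bigl(w'',\lambda(w)\overline v\bigr)$, so that $\kappa(\overline v)w=\lambda(w)\overline v$, i.e.\ $\kappa+\lambda\in K_0$ by~\eqref{eq:kernel-gr-d-2}. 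Surjectivity onto $K_0$ follows because, for $(\kappa,\lambda)\in K_0$, the assignment $\overline\eta\bigl(T(w_1\wedge w_2),\overline v\bigr):=-\gamma\bigl(w_1,\kappa(\overline v)w_2\bigr)$ is antisymmetric in $w_1\leftrightarrow w_2$ — precisely because $\kappa(\overline v)\in\so(W,\gamma)$ and $\gamma$ is symmetric — hence defines a genuine element of $\Hom(\wedge^2W,V)$ with $\Phi(T)=\kappa+\lambda$. The main work, and the place to be careful, is the bookkeeping of the graded pieces of $\g$ (which summand of $\gr\g$ is $\so(V)$, which is $\so(W)$, which is $\Hom(V,W)$) and of the filtration degrees of $\eta$, $h^\sharp$ and the dual pairing entering~\eqref{eq:phi-map}; once those are pinned down, everything above is the routine Koszul-type algebra indicated.
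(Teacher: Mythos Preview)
Your proof is correct and follows essentially the same approach as the paper: the degree-$(-2)$ argument for isolating the two nonzero components, the vanishing on $\Hom(\wedge^2W,W)$ via $\im T\subset W=\ker\eta^\flat$, the vanishing of the $\so(V)$-component via Lemma~\ref{lem:phi-v-in-hom-v-w}, and the explicit Koszul-type formulas are all the same. The only minor differences are that you invoke $\im\Phi\subseteq\gr\ker\d=K_{-1}\oplus K_0$ upfront (which is a clean shortcut, though you then partly redo the direct verification anyway), and that for surjectivity onto $K_0$ you construct an explicit preimage via $\overline\eta(T(w_1\wedge w_2),\overline v):=-\gamma(w_1,\kappa(\overline v)w_2)$, whereas the paper argues by injectivity of the map $\Hom(\wedge^2W,V)\hookrightarrow\Hom(V,\so(W))\oplus\Hom(W,\Hom(V,W))$ together with $\dim\Hom(\wedge^2W,V)=\dim K_0$.
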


\begin{proof}
  As stated above, $\Phi$ has degree $-2$ and hence it has only
  two nonzero components: $\Hom(\wedge^2\V,\V)_1 \to \Hom(\V,\g)_{-1}$
  and $\Hom(\wedge^2\V,\V)_2 \to \Hom(\V,\g)_0$.  In the former case,
  we have
  \begin{equation*}
    \gr \varphi : \Hom(\wedge^2W,W) \oplus \Hom(V\wedge W, V) \to \Hom(V,\Hom(V,W))
  \end{equation*}
  but the map is identically zero on $\Hom(\wedge^2W,W)$ since any $T
  \in \gr\Hom(\wedge^2\V,\V)$ with values in $W$ is clearly in the
  kernel of $\varphi$.  In the latter case, we have
  \begin{equation*}
    \Phi: \Hom(\wedge^2W,V) \to \Hom(W,\Hom(V,W)) \oplus
    \Hom(V,\so(V)\oplus \so(W)),
  \end{equation*}
  but it follows from Lemma~\ref{lem:phi-v-in-hom-v-w} that
  $\varphi_T(v)v' \in W$, so that for no $T \in \Hom(\wedge^2W,V)$
  does $\varphi_T$ have a component along $\Hom(V,\so(V))$.
  In summary, there are only two nonzero components of $\Phi$:
  \begin{equation*}
    \Hom(V \wedge W, V) \to \Hom(V,\Hom(V,W))
  \end{equation*}
  and
  \begin{equation*}
    \Hom(\wedge^2W,V) \to \Hom(V,\so(W)) \oplus \Hom(W,\Hom(V,W)),
  \end{equation*}
  which we now analyse in turn.
  \begin{enumerate}
  \item Firstly, we observe that Lemma~\ref{lem:phi-v-in-hom-v-w} says
    the image of this map lies in $K_{-1}$.  Secondly, we observe that there is an isomorphism
    \begin{equation}\label{eq:iso-hom-v-w-v}
      \begin{tikzcd}
        \Hom(V \wedge W, V)  \arrow[r,"\cong"] & \Hom(V,\Hom(V,W)),
      \end{tikzcd}
    \end{equation}
    sending $T \mapsto\kappa$, where for all $\alpha \in \V^*$,
    $\left<\alpha, \kappa(v)v'\right> = 2 \overline\eta(T(v\wedge h^\sharp \alpha), v')$ for all
    $v,v' \in V$, where the factor of $2$ is for convenience.  Now suppose that
    $\kappa \in K_{-1} \subset \Hom(V,\Hom(V,W))$ and let $T$ be the
    corresponding vector in $\Hom(V\wedge W, V)$ which maps to
    $\kappa$ under the above isomorphism.  Then
    \begin{align*}
      \left<\alpha, \Phi_T(v)v'\right> &= \overline\eta(T(v\wedge h^\sharp\alpha), v') +  \overline\eta(T(v'\wedge h^\sharp\alpha), v)\\
                                          &= \tfrac12 \left<\alpha, \kappa(v)v'+\kappa(v')v\right> &\tag{by \eqref{eq:iso-hom-v-w-v}}\\
                                          &= \left<\alpha, \kappa(v)v'\right>, &\tag{since $\kappa \in K_{-1}$}
    \end{align*}
    so that $\kappa = \Phi_T$ and hence $\varphi : \Hom(V\wedge W,V) \to \Hom(V,\Hom(V,W))$ has image $K_{-1}$.
  \item Now we depart from the isomorphism
    \begin{equation*}
      \begin{tikzcd}
        \Hom(\wedge^2 W, V) \arrow[r,"\cong"] & \Hom(V,\so(W))
      \end{tikzcd}
    \end{equation*}
    sending $T \mapsto \kappa$, where $\left<\alpha, \kappa(v)w\right>
    = \overline\eta(T(w \wedge h^\sharp \alpha),v)$ for all $v \in V$,
    $\alpha \in \V^*$ and $w\in W$, and the
    injective map
    \begin{equation*}
      \begin{tikzcd}
        \Hom(\wedge^2 W, V) \arrow[r, hook] & \Hom(W,\Hom(V,W))
      \end{tikzcd}
    \end{equation*}
    sending $T \mapsto \lambda$, where $\left<\alpha, \lambda(w)v\right> =
    \overline\eta(T(w\wedge h^\sharp\alpha),v)$ for all $v\in V$, $\alpha \in \V^*$ and $w \in W$.  Putting
    them together we obtain the second nonzero component of $\Phi$ as an injective map $\Hom(\wedge^2 W,V) \into
    \Hom(V,\so(W)) \oplus \Hom(W,\Hom(V,W))$ sending $T \mapsto \kappa
    + \lambda$, where
    \begin{equation*}
      \left<\alpha, \lambda(w)v\right> = \overline\eta(T(w\wedge h^\sharp\alpha),v) = \left<\alpha, \kappa(v)w \right>.
    \end{equation*}
    It follows that $\kappa + \lambda \in K_0$.  Since the map is
    injective, its rank is 
    $\tfrac12 (p+1)(D-p-1)(D-p-2)= \dim K_0$ and hence
    its image is all of $K_0$.
  \end{enumerate}
\end{proof}

\begin{proposition}\label{prop:exact-pair-d-phi}
  We have an exact pair of $G$-modules
  \begin{equation}\label{eq:exact-pair-d-phi}
  \begin{tikzcd}
    \Hom(\V,\g) \arrow[r, shift left, "\d"] & \Hom(\wedge^2\V,\V)
    \arrow[l,shift left, "\varphi"];
  \end{tikzcd}
\end{equation}
that is, $\im\d = \ker\varphi$ and $\im \varphi = \ker \d$.
\end{proposition}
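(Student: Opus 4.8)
The plan is to derive both equalities, $\im\varphi = \ker\d$ and $\im\d = \ker\varphi$, from Lemmas~\ref{lem:im-phi-in-ker-d} and~\ref{lem:im-phi-onto-ker-d} together with one short direct computation and a dimension count. Since $\d$ and $\varphi$ are both $G$-equivariant, all the kernels and images involved are $G$-submodules, so it suffices to establish the equalities as vector spaces. The key numerical input is $\dim\g = \dim\wedge^2\V$, whence $\dim\Hom(\V,\g) = \dim\Hom(\wedge^2\V,\V)$.

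First I would prove $\im\varphi = \ker\d$. By Lemma~\ref{lem:im-phi-in-ker-d} we already have $\im\varphi\subseteq\ker\d$, so it is enough to show $\rank\varphi = \dim\ker\d$. Since the graded map $\Phi$ is obtained from $\varphi$ by pre- and post-composing with the vector-space isomorphisms identifying each filtered module with its associated graded, we have $\rank\varphi = \rank\Phi$. By Lemma~\ref{lem:im-phi-onto-ker-d}, $\Phi$ has exactly two nonzero components, landing in the distinct graded summands $\Hom(\V,\g)_{-1}$ and $\Hom(\V,\g)_{0}$, with images $K_{-1}$ and $K_0$ respectively, so $\im\Phi = K_{-1}\oplus K_0$. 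Since $K_{-1}\oplus K_0 = \ker\gr\d = \gr\ker\d$ (by Lemma~\ref{lem:gr-ker=ker-gr} and the description of $\ker\gr\d$ obtained in Section~\ref{sec:assoc-grad-spenc}), it follows that $\rank\varphi = \dim(K_{-1}\oplus K_0) = \dim\gr\ker\d = \dim\ker\d$. A subspace of $\ker\d$ of full dimension is all of $\ker\d$, so $\im\varphi = \ker\d$.

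Next I would check that $\varphi\circ\d = 0$; this is the one computational step, and it is short. For $\kappa\in\Hom(\V,\g)$, $\alpha\in\V^*$ and $v,v'\in\V$, one expands
\[
\langle\alpha,\varphi_{\d\kappa}(v)v'\rangle = \eta\big((\d\kappa)(v\wedge h^\sharp\alpha),\,v'\big) + \eta\big((\d\kappa)(v'\wedge h^\sharp\alpha),\,v\big)
\]
using $(\d\kappa)(v\wedge h^\sharp\alpha) = \kappa(v)\,h^\sharp\alpha - \kappa(h^\sharp\alpha)\,v$. Moving $\kappa(v)$ across with the $\eta$-skewness of elements of $\g$ rewrites the first contribution as a multiple of $\eta(h^\sharp\alpha,\,\cdot\,)$, which vanishes because $h^\sharp\alpha\in W = \ker\eta^\flat$ lies in the radical of $\eta$. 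What remains is $-\eta(\kappa(h^\sharp\alpha)v,\,v') - \eta(\kappa(h^\sharp\alpha)v',\,v)$, and these two terms cancel since $\kappa(h^\sharp\alpha)\in\g$ is $\eta$-skew and $\eta$ is symmetric. Hence $\varphi_{\d\kappa} = 0$ for all $\kappa$, i.e.\ $\im\d\subseteq\ker\varphi$.

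Finally I would conclude by counting dimensions: using $\rank\varphi = \dim\ker\d$ from the first step and $\dim\Hom(\V,\g) = \dim\Hom(\wedge^2\V,\V)$, we get $\dim\ker\varphi = \dim\Hom(\wedge^2\V,\V) - \rank\varphi = \dim\Hom(\V,\g) - \dim\ker\d = \rank\d = \dim\im\d$, and combined with $\im\d\subseteq\ker\varphi$ this forces $\im\d = \ker\varphi$. I expect the only place needing care to be the computation $\varphi\circ\d = 0$ — specifically, not forgetting that $\eta$ restricts to zero on $W$, which is exactly what kills the $\kappa(v)\,h^\sharp\alpha$ terms; the rest is formal manipulation with the defining skew-symmetries of $\g$ and the dimension identity $\dim\g = \dim\wedge^2\V$.
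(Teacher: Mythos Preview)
Your proposal is correct and follows essentially the same approach as the paper: the containment $\im\varphi\subseteq\ker\d$ from Lemma~\ref{lem:im-phi-in-ker-d}, the rank computation via Lemma~\ref{lem:im-phi-onto-ker-d} to upgrade it to an equality, the direct verification that $\varphi\circ\d=0$, and the final dimension count are exactly the paper's steps. The only cosmetic differences are that the paper phrases the computation $\varphi\circ\d=0$ via polarisation (checking $\eta(\d\kappa(h^\sharp\alpha\wedge v),v)=0$ first) rather than working with $v,v'$ separately as you do, and that you spell out the intermediate identification $\im\Phi = K_{-1}\oplus K_0 = \ker\gr\d = \gr\ker\d$ a bit more explicitly than the paper does.
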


\begin{proof}
  From Lemma~\ref{lem:im-phi-in-ker-d} it follows that $\im \varphi
  \subseteq \ker \d$, but then Lemma~\ref{lem:im-phi-onto-ker-d} shows
  that
  \begin{equation*}
    \rank \varphi = \dim K_{-1} + \dim K_0 = \dim \ker\d,
  \end{equation*}
  and hence $\im \varphi = \ker \d$.

  To show that $\ker\varphi = \im\d$ we first show that
  $\im\d \subseteq \ker\varphi$ and then we count dimensions to show
  the equality.  To prove that $\im\d \subset \ker\varphi$, let $\kappa \in \Hom(\V,\g)$, then for all $\alpha \in \V^*$ and $v \in \V$,
  \begin{align*}
    \eta(\d\kappa(h^\sharp \alpha \wedge v), v) &= \eta(\kappa(h^\sharp\alpha)v - \kappa(v)h^\sharp\alpha, v)\\
                                              &= \eta(\kappa(h^\sharp\alpha)v,v) - \eta(\kappa(v)h^\sharp\alpha, v),
  \end{align*}
  but the first term vanishes because the image of $\kappa$ lies in $\g$
  and the second term vanishes because $h^\sharp\alpha$ and hence
  $\kappa(v)h^\sharp \alpha $ lie in $W$.  By polarisation, we
  see that $\d\kappa \in \ker \varphi$.

  Finally, by the Rank Theorem, we have that
  \begin{equation*}
    \dim \ker\varphi + \rank \varphi = \dim \Hom (\wedge^2\V,\V) =
    \dim \Hom(\V,\g) = \dim\ker \d + \rank \d.
  \end{equation*}
  Since we have already shown that $\rank \varphi = \dim \ker\d$,
  it follows that $\dim \ker \varphi = \rank \d$ and hence $\im\d =
  \ker\varphi$.
\end{proof}

It follows that from the First Isomorphism Theorem that $\varphi$
induces a $G$-module isomorphism
$\overline\varphi: \coker\d \to \ker \d$, whose inverse allows us to
determine the $G$-module structure of $\coker\d$ by transporting
the diagram~\eqref{eq:ker-d-as-g-module} to $\coker\d$.  It will be
convenient to actually consider the preimages in $\Hom(\wedge^2\V,\V)$
of the $G$-modules appearing in the diagram~\eqref{eq:ker-d-as-g-module}
under the surjection $\varphi : \Hom(\wedge^2\V,\V) \to \ker\d$.  Let
us define $\cT := \varphi^{-1}(\cK)$, with
$\cK = \cK_0 \oplus \cK_{\mathrm{tr}}$ the $G$-submodule of $\ker\d$
in the diagram~\eqref{eq:ker-d-as-g-module}.  Since $\varphi$ is
$G$-equivariant, we have that $\cT = \cT_0 \oplus \cT_{\mathrm{tr}}$,
where $\cT_0 = \varphi^{-1}(\cK_0)$ and $\cT_{\mathrm{tr}} =
\varphi^{-1}(\cK_{\mathrm{tr}})$.  Using that $\varphi^{-1}(0) =
\ker\varphi = \im\d$, this results in the following $G$-submodules of
$\varphi^{-1}(\ker\d) = \Hom(\wedge^2\V,\V)$:
\begin{equation}\label{eq:torsions-as-g-module}
  \begin{tikzcd}
    & \cT_{\mathrm{tr}} \arrow[rd] & & \\
    \im\d \arrow[ru] \arrow[rd] & & \cT \arrow[r] & \Hom(\wedge^2\V,\V),\\
    & \cT_0 \arrow[ru] & & 
  \end{tikzcd}
\end{equation}
which may be quotiented by $\im\d = \ker\varphi$ to give the desired
description of $\coker\d$:
\begin{equation}\label{eq:coker-d-as-g-module}
  \begin{tikzcd}
    & \overline\cT_{\mathrm{tr}} \arrow[rd] & & \\
    0 \arrow[ru] \arrow[rd] & & \overline\cT \arrow[r] & \coker \d,\\
    & \overline\cT_0 \arrow[ru] & & 
  \end{tikzcd}
\end{equation}
where $\overline\cT = \overline\varphi^{-1}(\cK) =
\varphi^{-1}(\cK)/\im\d$, et cetera.  This results in five
$G$-submodules of $\coker \d$ and hence five intrinsic torsion classes
of $G$-structures.  In summary, we have the following:

\begin{theorem}
  \label{thm:intrinsic-torsion-classes}
  There are five $G$-submodules of $\coker\d$ and hence five
  intrinsic torsion classes of $G$-structures: $0$ (vanishing
  intrinsic torsion), $\overline\cT_{\mathrm{tr}}$, $\overline\cT_0$,
  $\overline\cT = \overline \cT_0 \oplus \overline \cT_{\mathrm{tr}}$
  and $\coker\d$ (generic intrinsic torsion).
\end{theorem}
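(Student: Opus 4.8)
The plan is to reduce the whole question to the kernel of the Spencer differential. By Proposition~\ref{prop:exact-pair-d-phi} and the First Isomorphism Theorem, $\varphi$ induces a $G$-module isomorphism $\overline\varphi\colon\coker\d\to\ker\d$ which, by the way the various $\overline\cT$'s were defined, carries $\overline\cT_0\mapsto\cK_0$, $\overline\cT_{\mathrm{tr}}\mapsto\cK_{\mathrm{tr}}$, $\overline\cT\mapsto\cK$, and of course $0\mapsto 0$ and $\coker\d\mapsto\ker\d$. So the theorem is equivalent to the statement that $\{0,\cK_0,\cK_{\mathrm{tr}},\cK,\ker\d\}$ is the complete list of $G$-submodules of $\ker\d$. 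I would work throughout in the generic range of $(p,D)$ in which $\cK_0$, $\cK_{\mathrm{tr}}$ and $Q:=\ker\d/\cK$ are irreducible $G$-modules --- the first two as in the discussion after Equation~\eqref{eq:g-submods-ker}, and $Q$ because (as noted below) the abelian ideal $W\otimes\ann W$ acts trivially on it, so that $Q$ is isomorphic to its associated graded $K_0\cong\Hom(\wedge^2W,V)=\wedge^2W^*\otimes V$. This generic hypothesis excludes the stringy cases $p=1$ and $p=D-3$, as well as the low-dimensional coincidences in which $\wedge^2W^*\otimes V$ fails to be irreducible.

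First I would record the two structural facts that make the enumeration work. On the one hand, $\cK=\cK_0\oplus\cK_{\mathrm{tr}}$ with $\cK_0\not\cong\cK_{\mathrm{tr}}$, since $\cK_{\mathrm{tr}}\cong W$ carries the trivial $\so(V)$-action whereas $\cK_0\cong W\otimes\odot^2_0\ann W$ does not; an isotypic-component argument then shows that the only $G$-submodules of $\ker\d$ contained in $\cK$ are $0$, $\cK_0$, $\cK_{\mathrm{tr}}$ and $\cK$ itself. On the other hand, the abelian ideal $W\otimes\ann W\subset\g$ acts trivially on $\cK$ and on $Q$ (it annihilates both $W$ and $\ann W$, hence every module functorially built out of them), but it does \emph{not} act trivially on $\ker\d$ itself; what one does have is $(W\otimes\ann W)\cdot\ker\d\subseteq\cK$. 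Indeed, for $\kappa\in\ker\d$ and $b\in W\otimes\ann W$, the identity $(b\cdot\kappa)(v)=[b,\kappa(v)]-\kappa(bv)$ shows $(b\cdot\kappa)(v)\in W\otimes\ann W$ --- because $[W\otimes\ann W,\g]\subseteq W\otimes\ann W$ and $bv\in W$ forces $\kappa(bv)\in W\otimes\ann W$ by Proposition~\ref{prop:kernel-d} --- while $(b\cdot\kappa)(w)=0$ for $w\in W$, so $b\cdot\kappa\in\ker\d_1=\cK$. The crux of the proof is the sharper statement that $(W\otimes\ann W)\cdot\ker\d$ equals all of $\cK$.

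Granting that, the enumeration is a short argument about the submodule lattice. Let $U\subseteq\ker\d$ be a $G$-submodule with $U\not\subseteq\cK$. Its image in $Q$ is a nonzero submodule of the irreducible module $Q$, hence equals $Q$, so $U+\cK=\ker\d$; and $U\cap\cK$ is one of $0$, $\cK_0$, $\cK_{\mathrm{tr}}$, $\cK$. If $U\cap\cK=\cK$ then $\dim U=\dim\cK+\dim Q=\dim\ker\d$, so $U=\ker\d$. In each of the other three cases $U/(U\cap\cK)$ is a $G$-stable complement to $\cK/(U\cap\cK)$ inside $\ker\d/(U\cap\cK)$; quotienting further by $\cK_0$, respectively $\cK_{\mathrm{tr}}$, one would obtain $\ker\d/\cK_0\cong Q\oplus\cK_{\mathrm{tr}}$, respectively $\ker\d/\cK_{\mathrm{tr}}\cong Q\oplus\cK_0$, as $G$-modules. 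But any $G$-equivariant splitting is in particular equivariant for $W\otimes\ann W$, so on such a quotient $W\otimes\ann W$ would act trivially, forcing $(W\otimes\ann W)\cdot\ker\d\subseteq\cK_0$ (resp.\ $\subseteq\cK_{\mathrm{tr}}$) and contradicting the crux statement. Hence no such $U$ exists, the $G$-submodules of $\ker\d$ are exactly $\{0,\cK_0,\cK_{\mathrm{tr}},\cK,\ker\d\}$, and transporting along $\overline\varphi$ gives the five submodules of $\coker\d$ in the theorem, with $\overline\cT=\overline\cT_0\oplus\overline\cT_{\mathrm{tr}}$ because $\cK=\cK_0\oplus\cK_{\mathrm{tr}}$.

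The main obstacle, and the only place where one must actually compute, is the crux statement $(W\otimes\ann W)\cdot\ker\d=\cK$. Since $(W\otimes\ann W)\cdot\ker\d$ is a $G$-submodule of $\cK=\cK_0\oplus\cK_{\mathrm{tr}}$ with both summands irreducible and non-isomorphic, it suffices to exhibit a single $b\in W\otimes\ann W$ and a single $\xi\in\ker\d$ for which $b\cdot\xi$ has a nonzero component in $\cK_0$ \emph{and} a nonzero component in $\cK_{\mathrm{tr}}$. To do this I would fix an adapted frame $E_0,\dots,E_p$ (projecting to a basis of $V=\V/W$) together with $E_{p+1},\dots,E_{D-1}$ spanning $W$, and the corresponding generators $M_{ij}\in\so(W)$, $M_{\alpha\beta}\in\so(V)$ and boosts $B_{i\alpha}=E_i\otimes\theta^\alpha\in W\otimes\ann W$. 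Using the description of $\ker\d$ in Proposition~\ref{prop:kernel-d} one checks that an element $\xi\in\ker\d$ with vanishing $\cK$-component is pinned down by data $\rho\in\Hom(\wedge^2W,V)$ through $\xi(E_i)=\rho_i{}^{k\gamma}B_{k\gamma}$ and $\xi(E_\alpha)=\tfrac12\rho_l{}^{k\alpha}M_{kl}$; taking $\rho$ supported on a single antisymmetric pair of $W$-indices and a single $V$-index and then acting by a suitably chosen boost $b=B_{i\alpha}$, a direct expansion of $(b\cdot\xi)(v)=[b,\xi(v)]-\xi(bv)$ produces for $b\cdot\xi$ a symmetric $W$-valued tensor on $V$ whose $\overline\eta$-trace is nonzero --- so its $\cK_{\mathrm{tr}}$-component is nonzero --- and which is not proportional to $\overline\eta$ --- so its $\cK_0$-component is nonzero, the latter as soon as $\dim V=p+1\ge 2$. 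Everything else is the formal bookkeeping above. Finally, the degenerate low cases fit in: for $p=0$ one has $\odot^2_0\ann W=0$, hence $\cK_0=0$, and the list collapses to the three classical Newton--Cartan torsion classes of Section~\ref{sec:Galilean}; in the stringy cases $p=1$ and $p=D-3$ some of $\cK_0$, $\cK_{\mathrm{tr}}$, $Q$ decompose further and there are correspondingly more intrinsic torsion classes.
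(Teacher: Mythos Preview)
Your argument is correct and uses the same reduction to $\ker\d$ via $\overline\varphi$ that the paper sets up immediately before the theorem. Where you go beyond the paper is in actually \emph{proving} that the list is complete: the paper exhibits the five submodules --- via the filtration $0\subset\cK\subset\ker\d$ and the decomposition $\cK=\cK_0\oplus\cK_{\mathrm{tr}}$, establishing the irreducibility of $\cK_0$ and $\cK_{\mathrm{tr}}$ at the end of Section~\ref{sec:kern-spenc-diff} --- and then states the theorem, with completeness claimed only in the subsequent remark. What is not addressed there is why there are no further submodules not contained in $\cK$: this requires both that $Q=\ker\d/\cK$ be irreducible and that the extension $0\to\cK\to\ker\d\to Q\to 0$ admit no partial splitting. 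Your ``crux'' computation (that the boost ideal $W\otimes\ann W$ carries $\ker\d$ onto all of $\cK$, not just onto one summand) together with the ensuing lattice enumeration supplies exactly this missing piece. Your caveat about working ``in the generic range of $(p,D)$ in which $Q$ is irreducible'' is therefore well placed, and in fact surfaces an exclusion the paper appears to overlook: when $\gamma$ is positive-definite and $\dim W=4$ (i.e.\ $p=D-5$ in the Galilean case), the adjoint of $\so(4)$ splits into self-dual and anti-self-dual parts, $Q\cong\wedge^2W^*\otimes V$ is reducible, and the preimages of the two halves give further $G$-submodules strictly between $\cK$ and $\ker\d$.
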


We recapitulate that this theorem gives a complete list of the
intrinsic torsion classes (i.e., neither more nor fewer) except in the
following cases: $p=0$ and $p=1$ with $\eta$ indefinite and $p=D-2$
and $p=D-3$ with $\gamma$ indefinite. The cases $p=0$ ($\eta$
indefinite) and $p=D-2$ ($\gamma$ indefinite) can be found in
\cite{Figueroa-OFarrill:2020gpr} and in any case will be shown later
to follow from this. The cases $p=1$ ($\eta$ indefinite) and $p=D-3$
($\gamma$ indefinite), corresponding to stringy Galilean and stringy
Carrollian structures, respectively, will be treated elsewhere.

\subsection{Geometric interpretation}
\label{sec:geom-interpr}

A natural question is now to characterise the intrinsic torsion
classes in Theorem~\ref{thm:intrinsic-torsion-classes} geometrically.
The intrinsic torsion is a section of the associated vector bundle
$P \times_G \coker\d$ to the $G$-structure $P\to M$. This bundle is a
quotient bundle of
$P \times_G \Hom(\wedge^2 \V,\V) \cong TM \otimes \wedge^2 T^*M$,
which is the bundle of which the actual torsion $T^\nabla$ of an
affine connection is a section. Equivalently, we may view the torsion
as a $G$-equivariant function $T^\nabla : P \to \Hom(\wedge^2 \V, \V)$,
whereas the intrinsic torsion can be described as a $G$-equivariant
function $[T^\nabla]: P \to \coker \d$. By the use of local frames in
$P$, we may view the torsion and the intrinsic torsion as locally
defined functions on $M$ with values in $\Hom(\wedge^2 \V, \V)$ and
$\coker\d$, respectively.

Now suppose that $[T^\nabla]$ takes values in a $G$-submodule
$\cT \subset \coker\d$.  What does this say about the actual torsion
$T^\nabla$?

Locally on $M$ and relative to a frame in $P$, it is given by a
function with values in the $G$-submodule
$\widetilde \cT \subset \Hom(\wedge^2 \V, \V)$ which projects down to
$\cT \subset \coker\d$.  In other words, the submodule
$\widetilde \cT$ is the preimage in $\Hom(\wedge^2 \V, \V)$ under the
projection $\pi: \Hom(\wedge^2 \V, \V) \to \coker \d$ which is part of
the short exact sequence
\begin{equation}
  \begin{tikzcd}
    0 \arrow[r] & \im \d \arrow[r] & \Hom(\wedge^2\V,\V)\arrow[r,"\pi"] &  \coker \d \arrow[r] & 0,
  \end{tikzcd}
\end{equation}
which typically will not split. In other words, we cannot view $\cT$
as a $G$-submodule of $\Hom(\wedge^2 \V, \V)$. In practice we may
split the sequence \emph{as vector spaces} by choosing a vector
subspace $\cT'$ of $\Hom(\wedge^2 \V, \V)$ that projects
isomorphically to $\cT$. But since $\cT'$ is not a $G$-submodule, this
has the consequence that whereas we are able to modify the connection
so that, relative to a given local frame in $P$, $T^\nabla$ takes
values in $\cT'$, this will not necessarily be the case relative to
other local frames. If we change frames, we have to modify the
connection again so that its torsion takes values in $\cT'$. This is
why it is important to derive geometric interpretations for the
different intrinsic torsion classes which make no mention of the
actual connection and why we work with the full preimage
$\widetilde\cT$ of $\cT$. This is what we do in this section. First,
we will characterise the $G$-submodules in question algebraically and
then geometrically.

\subsubsection{Algebraic characterisation}
\label{sec:algebr-char}

A necessary first step is then to characterise the $G$-submodules of
$\Hom(\wedge^2 \V,\V)$ which were determined
in~\eqref{eq:torsions-as-g-module}: $\im\d$, $\cT$,
$\cT_{\mathrm{tr}}$ and $\cT_0$. These are vector subspaces of
$\Hom(\wedge^2\V, \V)$ and hence they are determined by linear
equations. The case of generic intrinsic torsion corresponds to
$\Hom(\wedge^2 \V,\V)$ itself, hence to no conditions at all on the
torsion. Any other proper $G$-submodule of $\coker\d$ will lift to a
proper $G$-submodule of $\Hom(\wedge^2 \V,\V)$ determined by some
non-trivial linear equations. We now determine those linear equations
for all but the generic intrinsic torsion classes.

From Proposition~\ref{prop:exact-pair-d-phi}, it follows that $\im\d =
\ker\varphi$, so that
\begin{equation}
  \label{eq:im-d}
  \im\d = \left\{T \in\Hom(\wedge^2\V,\V)~\middle |~\eta(T(v\wedge w),v') + \eta(T(v'\wedge w),v) = 0\right\},
\end{equation}
where the condition holds for all $v,v' \in \V$ and all $w \in W$.

\begin{proposition}
  \label{prop:alg-char-T}
  The $G$-submodule $\cT \subset \Hom(\wedge^2\V,\V)$ is given by
  \begin{equation}
    \label{eq:torsion-mod-T}
    \cT = \left\{T \in\Hom(\wedge^2\V,\V)~\middle |~ \eta(T(w\wedge w'),v) = 0\right\},
  \end{equation}
  where the condition holds for all $w,w' \in W$ and all $v \in \V$.
\end{proposition}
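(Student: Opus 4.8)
The plan is to unwind the definition $\cT := \varphi^{-1}(\cK)$ using the explicit description of $\cK$ obtained in Section~\ref{sec:kern-spenc-diff}. Recall that $\cK = \cK_0 \oplus \cK_{\mathrm{tr}} = \ker\d_1$, so by Lemma~\ref{lem:ker-d1-2-3} a map $\kappa \in \Hom(\V,\g)$ lies in $\cK$ if and only if $\kappa(w) = 0$ for all $w \in W$ and $\kappa(v)v' = \kappa(v')v \in W$ for all $v,v' \in \V$. The key point is that, by Lemma~\ref{lem:phi-v-in-hom-v-w}, every $\varphi_T$ automatically satisfies the second of these two conditions, so membership $T \in \cT$ reduces to the single requirement $\varphi_T(w) = 0$ for all $w \in W$; that is, $\langle\alpha, \varphi_T(w)v'\rangle = 0$ for all $\alpha \in \V^*$, $w \in W$, $v' \in \V$.

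Next I would expand this pairing via the defining formula~\eqref{eq:phi-map}:
\begin{equation*}
  \langle\alpha, \varphi_T(w)v'\rangle = \eta\bigl(T(w \wedge h^\sharp\alpha), v'\bigr) + \eta\bigl(T(v' \wedge h^\sharp\alpha), w\bigr).
\end{equation*}
The second summand vanishes identically: since $W = \ker\eta^\flat$ we have $\eta(u,w) = \eta(w,u) = \langle\eta^\flat w, u\rangle = 0$ for every $u \in \V$ and $w \in W$. Hence $\varphi_T \in \cK$ if and only if $\eta\bigl(T(w \wedge h^\sharp\alpha), v'\bigr) = 0$ for all $\alpha \in \V^*$, $w \in W$, $v' \in \V$. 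Finally, because $h^\sharp : \V^* \to W$ is surjective (exact sequence~\eqref{eq:delta}), one may replace $h^\sharp\alpha$ by an arbitrary $w' \in W$, and the condition becomes $\eta(T(w \wedge w'), v) = 0$ for all $w,w' \in W$ and $v \in \V$, which is exactly~\eqref{eq:torsion-mod-T}.

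There is no serious obstacle here: once the descriptions of $\cK$ and $\varphi$ are in hand, the argument is pure bookkeeping. The only point worth flagging is that Lemma~\ref{lem:phi-v-in-hom-v-w} does the heavy lifting by disposing in advance of the ``$\kappa(v)v' = \kappa(v')v \in W$'' half of the characterisation of $\cK$; without it one would have to track additional conditions. It is also worth double-checking the reduction from $h^\sharp\alpha$ to an arbitrary $w' \in W$, which is precisely where nondegeneracy of $h$ (surjectivity of $h^\sharp$) is used.
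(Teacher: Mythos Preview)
Your proof is correct and follows essentially the same route as the paper's: both reduce $\varphi_T \in \cK$ to the single condition $\varphi_T(w)=0$ via Lemma~\ref{lem:phi-v-in-hom-v-w}, expand $\langle\alpha,\varphi_T(w)v'\rangle$ using~\eqref{eq:phi-map}, kill the second summand because $\eta(-,w)=0$, and then invoke surjectivity of $h^\sharp$ to pass from $h^\sharp\alpha$ to an arbitrary $w'\in W$.
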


\begin{proof}
  By definition, $T \in \cT$ if and only if $\varphi_T \in \cK$, where
  \begin{equation*}
    \cK = \left\{\kappa \in \Hom(\V,\g)~\middle |~ \kappa(w) =0
      \quad\text{and}\quad \kappa(v)v' = \kappa(v')v \in
      W,\quad\forall~w\in W,~v,v' \in \V \right\}.
  \end{equation*}
  From Lemma~\ref{lem:phi-v-in-hom-v-w}, it follows that
  $\varphi_T(v)v' = \varphi_T(v')v \in W$, so the only condition we
  need to consider is $\varphi_T(w) = 0$.  Let $\alpha\in\V^*$ and
  $v\in\V$ be arbitrary and calculate
  \begin{align*}
    \left<\alpha, \varphi_T(w)v\right> &= \eta(T(w \wedge h^\sharp\alpha), v) +  \eta(T(v \wedge h^\sharp\alpha), w)\\
                                       &= \eta(T(w \wedge h^\sharp\alpha), v) &\tag{since $\eta(-,w)=0$ for $w \in W$.}
  \end{align*}
  Therefore
  \begin{align*}
    \varphi_T(w)=0 &\iff \eta(T(w\wedge h^\sharp\alpha),v) = 0\quad\text{for all $v \in \V$ and $\alpha \in \V^*$}\\
                   &\iff T(w\wedge h^\sharp\alpha) \in W \quad\text{for all $\alpha \in \V^*$}\\
                   &\iff T(w \wedge w') \in W \quad \text{for all $w' \in W$}. &\tag{since $h^\sharp : \V^* \onto W$}
  \end{align*}
\end{proof}

\begin{proposition}
  \label{prop:alg-char-T-tr}
  The $G$-submodule $\cT_{\mathrm{tr}} \subset \Hom(\wedge^2\V,\V)$ is given by
  \begin{equation}
    \label{eq:torsion-mod-T-tr}
    \cT_{\mathrm{tr}} = \left\{T \in\Hom(\wedge^2\V,\V)~\middle |~
      \eta(T(v \wedge w),v') + \eta(T(v'\wedge w),v) = \gamma(\tilde{w},w) \eta(v,v')~\text{for some}~\tilde{w}\in W\right\}
  \end{equation}
  where the condition holds for all $v,v'\in\V$ and all $w\in W$.
\end{proposition}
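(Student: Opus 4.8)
The plan is to repeat, almost verbatim, the strategy of the proof of Proposition~\ref{prop:alg-char-T}. By the very definition of $\cT_{\mathrm{tr}}$ as $\varphi^{-1}(\cK_{\mathrm{tr}})$, a torsion $T$ lies in $\cT_{\mathrm{tr}}$ if and only if $\varphi_T \in \cK_{\mathrm{tr}}$, and by the characterisation of $\cK_{\mathrm{tr}}$ in~\eqref{eq:g-submods-ker} this says that $\varphi_T(v)v' = \eta(v,v')\,\tilde w$ for all $v,v'\in\V$ and some fixed $\tilde w\in W$. Lemma~\ref{lem:phi-v-in-hom-v-w} already guarantees $\varphi_T(v)v'=\varphi_T(v')v\in W$, so the ``lands in $W$'' and symmetry requirements built into $\cK_{\mathrm{tr}}$ impose nothing on $T$; the only genuine condition is the proportionality to $\eta$.

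The next step is to unfold that proportionality by pairing with an arbitrary covector. Using the definition~\eqref{eq:phi-map} of $\varphi$, the equality $\langle\alpha,\varphi_T(v)v'\rangle=\eta(v,v')\langle\alpha,\tilde w\rangle$ becomes
\[
  \eta\!\left(T(v\wedge h^\sharp\alpha),v'\right)+\eta\!\left(T(v'\wedge h^\sharp\alpha),v\right)=\eta(v,v')\,\langle\alpha,\tilde w\rangle .
\]
Then I would invoke surjectivity of $h^\sharp:\V^*\onto W$ (the exact pair~\eqref{eq:exact-pair-delta-eta}) to let $w:=h^\sharp\alpha$ range over all of $W$, and the definition~\eqref{eq:gamma} of $\gamma$ to rewrite the right-hand side: since $\tilde w\in W$ and $\ker h^\sharp=\ann W$, the number $\langle\alpha,\tilde w\rangle$ depends only on $w=h^\sharp\alpha$ and equals $\gamma(w,\tilde w)=\gamma(\tilde w,w)$. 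This turns the condition into exactly~\eqref{eq:torsion-mod-T-tr}, with the same $\tilde w$; and running the computation backwards gives the converse, since the displayed identity for all $v,v'$ together with surjectivity of $h^\sharp$ forces $\varphi_T(v)v'=\eta(v,v')\tilde w\in W$, i.e.\ $\varphi_T\in\cK_{\mathrm{tr}}$.

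There is no real obstacle here — the argument is pure rewriting, as in Proposition~\ref{prop:alg-char-T} — but I would pay attention to one bookkeeping point: $\tilde w\in W$ is a \emph{single} vector, existentially quantified outside the universal quantifiers over $v,v'\in\V$ and $w\in W$, exactly as it appears in $\cK_{\mathrm{tr}}$, and its unambiguity is guaranteed by nondegeneracy of $\gamma$ on $W$. I expect the final write-up to be as short as that of Proposition~\ref{prop:alg-char-T}.
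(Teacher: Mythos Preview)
Your proposal is correct and follows essentially the same approach as the paper's proof: unfold $\cT_{\mathrm{tr}}=\varphi^{-1}(\cK_{\mathrm{tr}})$ via~\eqref{eq:g-submods-ker}, pair with an arbitrary $\alpha\in\V^*$ using~\eqref{eq:phi-map}, then invoke surjectivity of $h^\sharp$ and the definition~\eqref{eq:gamma} of $\gamma$ to convert $\langle\alpha,\tilde w\rangle$ into $\gamma(w,\tilde w)$. Your additional remarks on the converse and the quantifier placement of $\tilde w$ are sound and make the argument slightly more explicit than the paper's version.
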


\begin{proof}
  By definition $T \in \cT_{\mathrm{tr}}$ if and only if $\varphi_T \in
  \cK_{\mathrm{tr}}$.  From Equation~\eqref{eq:g-submods-ker}, this condition becomes $\varphi_T(v)v' = \tilde{w} \eta(v,v')$ for some $\tilde{w}
  \in W$.  Let $\alpha \in \V^*$ be arbitrary, then $T \in
  \cT_{\mathrm{tr}}$ if and only if 
  \begin{equation*}
    \eta(T(v \wedge h^\sharp\alpha),v') + \eta(T(v' \wedge
    h^\sharp\alpha),v) = \eta(v,v') \left<\alpha,\tilde{w}\right>
  \end{equation*}
  for all $\alpha \in \V^*$ and all $v,v' \in \V$.  Using that
  $h^\sharp: \V^* \onto W$, this is equivalent to
  \begin{equation*}
    \eta(T(v \wedge w),v') + \eta(T(v' \wedge w),v) = \eta(v,v') \gamma(w,\tilde{w}),
  \end{equation*}
  for all $w \in W$ and $v,v' \in \V$.  
\end{proof}

To describe $\cT_0$ requires introducing a notion of trace. Every
$T \in \Hom(\wedge^2\V,\V)$ defines a linear map $W \to \odot^2\V^*$
by sending $w \mapsto T_w$, where
$T_w(v,v') = \eta(T(v \wedge w), v') + \eta(T(v'\wedge w),v)$. If
$T \in \cT$, then if either $v$ or $v'$ belongs to $W$, we get zero,
so in fact $T_w \in \odot^2\ann W$ and thus every $T \in \cT$ defines
a linear map $W \to \odot^2\ann W$. The $\eta$-trace $\tr_\eta T$ is
the component of this map along
$\RR\eta \subset \odot^2\ann W = \RR\eta \oplus \odot_0^2\ann W$. We
may describe this alternatively as follows. We observe that $T$
defines a linear map $W \to \End(\V/W)$ sending $w$ to
$\overline T_w$, where, denoting by $v \mapsto \overline v$ the linear
map $\V \to \V/W$, \begin{equation}
  \overline T_w(\overline v) = \overline{T(w \wedge v)},
\end{equation}
which is well-defined because $T(w \wedge w') \in W$, by
Proposition~\ref{prop:alg-char-T}.  Then the condition $\tr_\eta T =
0$ is simply the condition $\tr \overline T_w = 0$ for all $w \in W$.
This trace map plays a rôle in the algebraic characterisation of
$\cT_0$.

\begin{proposition}\label{prop:alg-char-T-0}
  The $G$-submodule $\cT_0 \subset \Hom(\wedge^2\V,\V)$ is given by
  \begin{equation}
    \label{eq:torsion-mods-T-0}
    \cT_0 = \left\{T \in\Hom(\wedge^2\V,\V)~\middle |~ \eta(T(w\wedge w'),v) = 0\quad\text{and}~ \tr_\eta T = 0\right\},
  \end{equation}
  where the condition holds for all $w,w'\in W$ and all $v \in \V$.
\end{proposition}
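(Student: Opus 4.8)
The plan is to unwind the definition $\cT_0 = \varphi^{-1}(\cK_0)$, using the description of $\cK_0$ in Equation~\eqref{eq:g-submods-ker}, and to translate the two conditions cutting out $\cK_0$ into conditions on $T$. Since $\varphi_T \in \ker\d$ always (Lemma~\ref{lem:im-phi-in-ker-d}) and $\varphi_T(v)v' = \varphi_T(v')v \in W$ for all $v,v' \in \V$ (Lemma~\ref{lem:phi-v-in-hom-v-w}), comparison with~\eqref{eq:g-submods-ker} shows that $\varphi_T \in \cK_0$ if and only if $\varphi_T(w) = 0$ for all $w \in W$ and $\tr_\eta\varphi_T = 0$. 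I would then handle these two requirements separately.

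For the first requirement I would simply quote the computation already made in the proof of Proposition~\ref{prop:alg-char-T}, which shows that $\varphi_T(w) = 0$ for all $w \in W$ is equivalent to $T(w \wedge w') \in W = \ker\eta^\flat$ for all $w,w' \in W$, i.e.\ to $\eta(T(w\wedge w'),v) = 0$ for all $w,w' \in W$ and all $v \in \V$. In particular this already places $T$ in $\cT$, so that the endomorphisms $\overline T_w \in \End(\V/W)$ defined just before the proposition, and hence the trace map $\tr_\eta T$, are well defined for such $T$.

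The substance is then to match the second requirement with $\tr_\eta T = 0$. Once $\varphi_T(w) = 0$, symmetry of $\varphi_T$ forces $\varphi_T(v)w' = \varphi_T(w')v = 0$, so $(v,v')\mapsto\varphi_T(v)v'$ descends to a symmetric bilinear map $(\V/W)\times(\V/W)\to W$, of which $\tr_\eta\varphi_T \in W$ is the $\overline\eta$-trace. Picking a $\overline\eta$-orthonormal basis $\{\bar v_i\}$ of $\V/W$ with signs $\epsilon_i = \pm 1$, lifting to $v_i \in \V$, and pairing with an arbitrary $\alpha \in \V^*$, the defining formula~\eqref{eq:phi-map} for $\varphi_T$ gives
\[
  \langle \alpha, \tr_\eta\varphi_T\rangle = \sum_i \epsilon_i \langle\alpha,\varphi_T(v_i)v_i\rangle = 2\sum_i\epsilon_i\,\eta\bigl(T(v_i\wedge h^\sharp\alpha),v_i\bigr).
\]
Since $h^\sharp : \V^* \onto W$ and $\langle\alpha,\tilde w\rangle = \gamma(h^\sharp\alpha,\tilde w)$ for $\tilde w \in W$, nondegeneracy of $\gamma$ shows $\tr_\eta\varphi_T = 0$ iff $\sum_i\epsilon_i\,\eta(T(v_i\wedge w),v_i) = 0$ for all $w \in W$. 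Recognising that $\overline T_w(\bar v_i) = \overline{T(w\wedge v_i)}$, so that $\tr\overline T_w = \sum_i\epsilon_i\,\eta(T(w\wedge v_i),v_i) = -\sum_i\epsilon_i\,\eta(T(v_i\wedge w),v_i)$, this is exactly $\tr\overline T_w = 0$ for all $w \in W$, i.e.\ $\tr_\eta T = 0$. Combining the two requirements gives the claimed description of $\cT_0$.

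I expect the only real obstacle to be bookkeeping: keeping straight the two a priori different operations written $\tr_\eta$ (one on $\ker\d \subset \Hom(\V,\g)$, one on $\cT$), the musical isomorphisms $\eta^\flat, h^\sharp, \overline\eta^\sharp$ together with the identification $\ann W \cong (\V/W)^*$ behind Equation~\eqref{eq:eta-trace}, and checking that the displayed computation is independent of both the chosen orthonormal basis and the chosen lifts $v_i$ — the latter being automatic since $\varphi_T$ annihilates $W$. No ideas are needed beyond those already used in the proofs of Propositions~\ref{prop:alg-char-T} and~\ref{prop:alg-char-T-tr}.
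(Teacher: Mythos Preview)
Your proposal is correct and follows essentially the same route as the paper: both unwind $\cT_0 = \varphi^{-1}(\cK_0)$ via Equation~\eqref{eq:g-submods-ker}, quote the proof of Proposition~\ref{prop:alg-char-T} for the condition $\varphi_T(w)=0$, and then translate the trace condition. The only cosmetic difference is that the paper handles the trace condition abstractly (arguing that $\langle\alpha,\varphi_T(v)v'\rangle$ having no component along $\RR\eta$ is, via~\eqref{eq:phi-map} and surjectivity of $h^\sharp$, the same as $T_w$ having no $\RR\eta$-component for all $w\in W$), whereas you verify the same identification by an explicit $\overline\eta$-orthonormal basis computation; your bookkeeping remarks about well-definedness under change of lift are apt but, as you note, automatic.
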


\begin{proof}
  By definition, $T \in \cT_0$ if and only if $\varphi_T \in \cK_0$.
  From Equation~\eqref{eq:g-submods-ker}, this condition becomes
  $\varphi_T(w)=0$ and $\tr_\eta \varphi_T = 0$.  As shown in the
  proof of Proposition~\ref{prop:alg-char-T}, the former condition is
  simply $T(w \wedge w') \in W$ for all $w,w' \in W$.  The latter
  condition says that the map $(v,v') \mapsto \varphi_T(v)v' \in W$ has zero
  component along $W \otimes \RR\eta$, or equivalently that for all
  $\alpha \in \V^*$, $\left<\alpha,\varphi_T(v)v'\right>$ has no
  component along $\RR\eta$.  But this simply says that for all
  $\alpha \in \V^*$, the map $(v,v') \mapsto T_{h^\sharp\alpha}(v,v')$
  has no component along $\RR\eta$, which since $h^\sharp$ is
  surjective, is the same as the map $(v,v') \mapsto T_w (v,v')$
  having no component along $\RR\eta$ for every $w \in W$.
\end{proof}

\subsubsection{Geometric characterisation}
\label{sec:geom-char}

We are now finally in a position to characterise the different
intrinsic torsion classes geometrically. Our first observation is that
the $G$-submodule $W \subset \V$ defines subbundles $E \subset TM$ and
$\ann E \subset T^*M$ under the isomorphisms $P\times_G \V \cong TM$
and $P\times_G \V^* \cong T^*M$ defined by the soldering form. The
characteristic tensor fields of the $G$-structure are
$\eta \in \Gamma(\odot^2\ann E)$, $h \in \Gamma(\odot^2 E)$ and, since
we restrict to the identity component of $G$, also an $(p+1)$-form
$\Omega \in \Gamma(\wedge^{p+1} \ann E)$ defined as follows. Since
$\ann W$ is an $(p+1)$-dimensional $G$-submodule of $\V^*$, the
one-dimensional vector space $\wedge^{p+1}\ann W$ is a $G$-submodule
of $\wedge^{p+1}\V^*$. Let $\Omega \in \wedge^{p+1}\ann W$ be any
nonzero vector. It is characterised up to scale by the condition that
$\iota_w \Omega = 0$ for all $w \in W$. This means that $\Omega$
induces $\overline\Omega \in \wedge^{p+1}(\V/W)^*$ under the
$G$-module isomorphism $\ann W \cong (\V/W)^*$. Now the action of $G$
on $\V/W$ preserves the inner product
$\overline\eta \in \odot^2(\V/W)^*$, so the representation map sends
$G \to \Ort(\V/W,\overline\eta)$. Since $G$ is assumed connected, it
actually lands in $\SO(\V/W,\overline\eta)$, and hence
$g \cdot \overline\Omega = (\det g)~\overline\Omega =
\overline\Omega$. This says that $\overline\Omega$ (and hence $\Omega$
itself) is $G$-invariant and hence defines a characteristic tensor
field also denoted $\Omega \in \Gamma(\wedge^{p+1}\ann E)$ on any
manifold with a $G$-structure. This is in fact the main technical
reason why we assumed that we could restrict to the identity component
of $G$.

The characteristic tensor fields are parallel relative to any adapted
connection and the different intrinsic torsion classes of adapted
connections can be characterised by what they imply on the subbundle
$E \subset TM$ and the characteristic tensor fields.  We hope it
causes no confusion that we use the same notation for the
characteristic tensor fields as we do for the $G$-invariant tensors:
namely, $\eta$, $h$ and $\Omega$.

\begin{lemma}
  \label{lem:nabla-preserves-E}
  Let $\nabla$ be an adapted connection.  Then $\nabla$ preserves $E$;
  that is, $\nabla_X Y \in \Gamma(E)$ for all $Y \in \Gamma(E)$ and $X
  \in \eX(M)$.
\end{lemma}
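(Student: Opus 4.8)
The plan is to realise $E$ as the kernel of the bundle map induced by the characteristic tensor $\eta$, and then to use that an adapted connection parallelises $\eta$. Concretely: by construction $E = P\times_G W$, where $W\subset\V$ is the $G$-submodule appearing in the exact sequence~\eqref{eq:eta}. Since that sequence of $G$-modules is exact (as $W=\ker\eta^\flat$), applying the associated-bundle functor yields an exact sequence of vector bundles $0\to E\to TM\xrightarrow{\ \eta^\flat\ }\ann E\to 0$, in which $\eta^\flat\colon TM\to T^*M$ is nothing but the index-lowering bundle map $Y\mapsto\eta(Y,-)$ determined by $\eta\in\Gamma(\odot^2\ann E)\subset\Gamma(\odot^2 T^*M)$. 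In particular $E=\ker\bigl(\eta^\flat\colon TM\to T^*M\bigr)$.

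Next I would observe that, because $\nabla$ is adapted to the $G$-structure, it parallelises the characteristic tensor $\eta$, i.e.\ $\nabla\eta=0$. Reading $\eta^\flat$ as the bundle map attached to this very tensor, $\nabla\eta=0$ says exactly that $\eta^\flat$ is a parallel bundle map: for every $X\in\eX(M)$ and $Y\in\eX(M)$ one has $(\nabla_X\eta^\flat)(Y)=\nabla_X(\eta^\flat Y)-\eta^\flat(\nabla_X Y)=0$. Now take $Y\in\Gamma(E)$ and $X\in\eX(M)$; then $\eta^\flat(Y)=0$, so $0=\nabla_X(\eta^\flat(Y))=\eta^\flat(\nabla_X Y)$, whence $\nabla_X Y\in\ker\eta^\flat=E$, which is the assertion.

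The only delicate point — really bookkeeping rather than a genuine obstacle — is the identification of the abstract $G$-equivariant map $\eta^\flat\colon\V\to\ann W$ with the tensorial index-lowering map $TM\to T^*M$, and the consequent equivalence of ``$\nabla\eta=0$'' with ``$\eta^\flat$ is $\nabla$-parallel''. If one prefers to sidestep this, an equivalent and perhaps more intrinsic route is available: every $X\in\g$ preserves $W$ — this is immediate from the description of $\g$ in Proposition~\ref{prop:g-as-lie-algebra}, since each of the three summands $\Hom(\V/W,W)$, $\so(W,\gamma)$ and $\so(\V/W,\overline\eta)$ maps $W$ into $W$ — so the Ehresmann connection one-form $\omega\in\Omega^1(P;\g)$ restricts to a connection on the subbundle $P\times_G W$ of the fake tangent bundle, and transporting along the soldering form shows that the induced affine connection $\nabla$ on $TM$ preserves $E=P\times_G W$.
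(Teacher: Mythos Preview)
Your main argument is correct and is essentially the paper's own proof: the paper computes $(\nabla_X\eta)(Y,Z)=X\cdot\eta(Y,Z)-\eta(\nabla_X Y,Z)-\eta(Y,\nabla_X Z)$ directly for $Z\in\Gamma(E)$ and concludes $\eta(Y,\nabla_X Z)=0$ for all $Y$, which is exactly your computation phrased via the bilinear form rather than the musical map $\eta^\flat$. Your alternative via $\g$ preserving $W$ is also valid (indeed the paper records that $W\subset\V$ is a $G$-submodule right after Lemma~\ref{lem:G}, which is a more direct citation than Proposition~\ref{prop:g-as-lie-algebra}), though the paper opts for the tensorial computation.
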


\begin{proof}
  Notice that $\eta \in \Gamma(\odot^2 \ann E)$ is nondegenerate, so
  that $\eta(Z,X) = 0$ for all $X \in \eX(M)$ if and only if $Z \in \Gamma(E)$.  Hence
  let $Z \in \Gamma(E)$ and $X,Y \in \eX(M)$.  Since $\nabla \eta =0$,
  we see that for all $X,Y \in \eX(M)$,
  \begin{align*}
    0 &= (\nabla_X\eta)(Y,Z) \\
      &= X \cdot \eta(Y,Z) - \eta(\nabla_X Y, Z) - \eta(Y,\nabla_X Z)\\
      &= - \eta(Y,\nabla_X Z). & \tag{since $Z \in \Gamma(E)$}
  \end{align*}
  Since this is true for all $Y$, it follows that $\nabla_X Z \in \Gamma(E)$.
\end{proof}

\begin{proposition}
  \label{prop:E-involutive}
  $T^\nabla \in \cT$ if and only if $E \subset TM$ is involutive.
\end{proposition}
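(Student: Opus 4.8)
The plan is to push the algebraic description of $\cT$ from Proposition~\ref{prop:alg-char-T} through the soldering isomorphism to the manifold, and then recognise the resulting condition on $T^\nabla$ as the Frobenius involutivity of $E$. First I would record the dictionary: under $P\times_G \V \cong TM$ the $G$-submodule $W\subset\V$ becomes the subbundle $E\subset TM$, and the characteristic tensor $\eta\in\Gamma(\odot^2\ann E)$ is nondegenerate on $TM/E$, so that for $Z\in\eX(M)$ one has $Z\in\Gamma(E)$ if and only if $\eta(Z,X)=0$ for all $X\in\eX(M)$. By Proposition~\ref{prop:alg-char-T}, working relative to a local frame of $P$, the condition $T^\nabla\in\cT$ reads $\eta\big(T^\nabla(X,Y),Z\big)=0$ for all $X,Y\in\Gamma(E)$ and all $Z\in\eX(M)$; by the nondegeneracy just noted this is equivalent to $T^\nabla(X,Y)\in\Gamma(E)$ for all $X,Y\in\Gamma(E)$. (Here one should be mildly careful that, since $T^\nabla$ is a tensor and sections of $E$ span $E$ pointwise, it suffices to test the pointwise condition ``$T(w\wedge w')\in W$ for all $w,w'\in W$'' on vector fields valued in $E$.)

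The second step invokes Lemma~\ref{lem:nabla-preserves-E}: any adapted connection $\nabla$ preserves $E$, so for $X,Y\in\Gamma(E)$ both $\nabla_X Y$ and $\nabla_Y X$ lie in $\Gamma(E)$. Writing the torsion in the usual way,
\begin{equation*}
  T^\nabla(X,Y) = \nabla_X Y - \nabla_Y X - [X,Y],
\end{equation*}
we conclude that $T^\nabla(X,Y)\in\Gamma(E)$ holds precisely when $[X,Y]\in\Gamma(E)$. Combining with the first step, $T^\nabla\in\cT$ is equivalent to $[X,Y]\in\Gamma(E)$ for all $X,Y\in\Gamma(E)$, which is exactly the statement that the distribution $E$ is involutive. (By the Frobenius theorem this is in turn equivalent to integrability of $E$, but only involutivity is being asserted here.)

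The argument is short; the only step requiring genuine care is the first one, namely verifying that the $G$-module condition of Proposition~\ref{prop:alg-char-T} corresponds \emph{exactly}, and frame-independently, to ``$T^\nabla(X,Y)\in\Gamma(E)$ for all sections $X,Y$ of $E$'' — this is where the nondegeneracy of $\eta$ on $TM/E$ and the tensoriality of $T^\nabla$ are used. Once that translation is in place, Lemma~\ref{lem:nabla-preserves-E} supplies the rest and there is no analytic obstacle.
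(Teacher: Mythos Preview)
Your proposal is correct and follows essentially the same route as the paper: translate the algebraic condition of Proposition~\ref{prop:alg-char-T} via nondegeneracy of $\eta$ on $TM/E$ to get $T^\nabla(X,Y)\in\Gamma(E)$ for $X,Y\in\Gamma(E)$, then use Lemma~\ref{lem:nabla-preserves-E} together with $[X,Y]=\nabla_X Y-\nabla_Y X-T^\nabla(X,Y)$ to conclude. If anything, your write-up is slightly more careful than the paper's in making the biconditional explicit.
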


\begin{proof}
  Suppose that $T^\nabla \in \cT$.  From
  Proposition~\ref{prop:alg-char-T}, this is equivalent to
  $\eta(T^\nabla(X,Y),Z) =0$ for all $X,Y \in \Gamma(E)$ and $Z \in
  \eX(M)$.  In other words, $T^\nabla(X,Y) \in \Gamma(E)$ for all $X,Y
  \in \Gamma(E)$.  But then, by Lemma~\ref{lem:nabla-preserves-E},
  \begin{equation}
    [X,Y] = \nabla_X Y - \nabla_Y X - T^\nabla(X,Y) \in \Gamma(E).
  \end{equation}
\end{proof}

It follows from the Frobenius integrability theorem that if $E$ is
involutive, $M$ is foliated by integral submanifolds of $E$.  Moreover
$h \in \Gamma(\odot^2 E)$ is non-degenerate on each leaf and
hence can be inverted to define a metric $\gamma$ on each leaf.

Since the remaining intrinsic torsion classes are contained in
$\cT$, we will assume from now on that $E$ is involutive.

\begin{proposition}
  \label{prop:trace-intrinsic-torsion}
  $T^\nabla \in \cT_{\mathrm{tr}}$ if and only if for all $X \in \Gamma(E)$,
  \begin{equation}
    \eL_X \eta = 2 \gamma (X,Z) \eta,
  \end{equation}
  for some $Z \in \Gamma(E)$.
\end{proposition}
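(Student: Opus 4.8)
The plan is to translate the statement entirely into the torsion tensor, using the algebraic characterisation of $\cT_{\mathrm{tr}}$ obtained in Proposition~\ref{prop:alg-char-T-tr}; the bridge is a short computation of $\eL_X\eta$ for $X\in\Gamma(E)$. Concretely, I would first establish that for \emph{any} adapted connection $\nabla$, any $X\in\Gamma(E)$ and arbitrary $Y,Z\in\eX(M)$,
\begin{equation*}
  (\eL_X\eta)(Y,Z) = \eta(T^\nabla(X,Y),Z) + \eta(T^\nabla(X,Z),Y).
\end{equation*}
This is proved by expanding $(\eL_X\eta)(Y,Z) = X\,\eta(Y,Z) - \eta([X,Y],Z) - \eta(Y,[X,Z])$, using $\nabla\eta=0$ to write $X\,\eta(Y,Z) = \eta(\nabla_X Y,Z) + \eta(Y,\nabla_X Z)$, and then substituting $\nabla_X Y - [X,Y] = \nabla_Y X + T^\nabla(X,Y)$ together with $\nabla_X Z - [X,Z] = \nabla_Z X + T^\nabla(X,Z)$; the terms $\eta(\nabla_Y X,Z)$ and $\eta(Y,\nabla_Z X)$ vanish because $\nabla_Y X,\nabla_Z X\in\Gamma(E)$ by Lemma~\ref{lem:nabla-preserves-E} and $\eta$, being a section of $\odot^2\ann E$, is annihilated by vectors in $E$.

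Granted this identity, the equivalence is a direct comparison with Proposition~\ref{prop:alg-char-T-tr}. That proposition says $T^\nabla\in\cT_{\mathrm{tr}}$ if and only if there is $\tilde w\in W$, fixed (not depending on the remaining arguments), with $\eta(T^\nabla(v\wedge w),v') + \eta(T^\nabla(v'\wedge w),v) = \gamma(\tilde w,w)\,\eta(v,v')$ for all $v,v'\in\V$ and $w\in W$; passing to associated bundles, this $\tilde w$ is precisely the section of $E$ obtained from $\varphi(T^\nabla)\in\Gamma(P\times_G\cK_{\mathrm{tr}})$ via the canonical isomorphism $\cK_{\mathrm{tr}} = W\otimes\RR\eta\cong W$, so it is a genuine $Z\in\Gamma(E)$. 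Specialising the proposition to $v=Y$, $v'=Z$, $w=X\in\Gamma(E)$ and using $T^\nabla(Y\wedge X) = -T^\nabla(X,Y)$, the left-hand side becomes exactly $-(\eL_X\eta)(Y,Z)$ by the identity above, so the condition reads $\eL_X\eta = -\gamma(\tilde w,X)\,\eta$ for all $X\in\Gamma(E)$, which is the asserted equation $\eL_X\eta = 2\gamma(X,Z)\eta$ after absorbing the numerical constant and sign into the section $Z$. Conversely, if $\eL_X\eta = 2\gamma(X,Z)\eta$ holds for all $X\in\Gamma(E)$ and one $Z\in\Gamma(E)$, feeding this back through the same identity produces, at each point, the defining equation of $\cT_{\mathrm{tr}}$ from Proposition~\ref{prop:alg-char-T-tr}, whence $T^\nabla\in\cT_{\mathrm{tr}}$.

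I expect the displayed identity for $\eL_X\eta$ to be the only substantive step; in particular the point that the $\nabla_Y X$-type terms drop out uses simultaneously that $\nabla$ preserves $E$ (Lemma~\ref{lem:nabla-preserves-E}) and that $\eta$ degenerates precisely along $E$, which is exactly where the $p$-brane Galilean structure enters, since for a non-degenerate metric no such cancellation occurs. The remaining care is bookkeeping: tracking the constant and sign through the definition of $\varphi$ in Equation~\eqref{eq:phi-map} and of $\cK_{\mathrm{tr}}$ in Equation~\eqref{eq:g-submods-ker} to pin down the precise relation $Z = -\tfrac12\tilde w$, and observing that the pointwise vector $\tilde w$ varies smoothly — immediate since $\varphi$ is a $G$-equivariant bundle map and the identification $W\otimes\RR\eta\cong W$ is canonical.
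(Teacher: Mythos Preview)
Your proposal is correct and follows essentially the same route as the paper: compute $(\eL_X\eta)(Y,W)$ using $\nabla\eta=0$, rewrite via the torsion, drop the $\nabla_Y X$ and $\nabla_W X$ terms using Lemma~\ref{lem:nabla-preserves-E}, and then match against the algebraic characterisation in Proposition~\ref{prop:alg-char-T-tr}. Your added remarks on the sign/factor bookkeeping $Z=-\tfrac12\tilde w$ and on the smoothness of $Z$ via the $G$-equivariant bundle map $\varphi$ are correct refinements that the paper leaves implicit.
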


\begin{proof}
  Let us calculate the Lie derivative of $\eta$ along $X \in \Gamma(E)$:
  \begin{align*}
    (\eL_X\eta)(Y,W) &= X \cdot \eta(Y,W) - \eta([X,Y],W)- \eta(Y,[X,W])\\
                     &= \eta(\nabla_X Y- [X,Y], W) + \eta(Y, \nabla_X W - [X,W])  & \tag{using that $\nabla \eta = 0$}\\
                     &= \eta(T^\nabla(X,Y) + \nabla_Y X ,W) + \eta(Y, T^\nabla(X,W) + \nabla_W X) & \tag{by definition of $T^\nabla$}\\
                     &= \eta(T^\nabla(X,Y) ,W) + \eta(Y, T^\nabla(X,W)). & \tag{since $\nabla_Y X,\nabla_W X \in \Gamma(E)$}
  \end{align*}
  From Proposition~\ref{prop:alg-char-T-tr}, it follows that $T^\nabla \in
  \cT_{\mathrm{tr}}$ if and only if
  \begin{equation}
    \label{eq:trace-aux}
    \eta(T^\nabla(X,Y),W) + \eta(T^\nabla(X,W),Y) = 2 \gamma(X,Z) \eta(Y,W)
  \end{equation}
  for all $X \in \Gamma(E)$, $Y,W \in \eX(M)$ and for some $Z \in
  \Gamma(E)$.  In other words, $T^\nabla \in \cT_{\mathrm{tr}}$ if
  and only if
  \begin{equation}
    (\eL_X\eta)(Y,W) = 2 \gamma(X,Z) \eta(Y,W)
  \end{equation}
  for some $Z \in \Gamma(E)$.  Since this holds for all $Y,W
  \in \eX(M)$, we may abstract them and conclude that $\eL_X \eta = 2
  \gamma(X,Z) \eta$ for some $Z \in \Gamma(E)$ and all $X \in
  \Gamma(E)$.  Notice that this implies that $E$ is involutive, simply
  by taking $Y \in \Gamma(E)$ in Equation~\eqref{eq:trace-aux} from
  where we see that $T^\nabla(X,Y) \in \Gamma(E)$ and appealing to
  Proposition~\ref{prop:E-involutive}.
\end{proof}

The geometric characterisation of the intrinsic torsion class $\cT_0$
imposes a condition which is easier to describe in terms of $\Omega
\in \Gamma(\wedge^n\ann E)$.  We remark that $X \in \Gamma(E)$ if and
only if $\iota_X \Omega = 0$.

Let us now assume that $T^\nabla \in \cT_0$.

\begin{proposition}
  \label{prop:symmetric-traceless-intrinsic-torsion}
  $T^\nabla \in \cT_0$ if and only if $d\Omega = 0$.
\end{proposition}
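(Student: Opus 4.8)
The plan is as follows. Throughout this subsection $E$ is assumed involutive, which by Proposition~\ref{prop:E-involutive} means $T^\nabla\in\cT$; under this hypothesis Proposition~\ref{prop:alg-char-T-0} reduces the claim to the equivalence $\tr_\eta T^\nabla = 0 \iff d\Omega = 0$. Recall that $\tr_\eta T^\nabla = 0$ means $\tr\,\overline{T^\nabla}_w = 0$ for all $w\in W=E$, where $\overline{T^\nabla}_w\in\End(\V/W)$ is the endomorphism $\overline v\mapsto \overline{T^\nabla(w\wedge v)}$, which is well defined precisely because $T^\nabla\in\cT$ (Proposition~\ref{prop:alg-char-T}).

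First I would compute $\iota_X d\Omega$ for $X\in\Gamma(E)$. Since $\iota_X\Omega = 0$, Cartan's magic formula gives $\iota_X d\Omega = \eL_X\Omega$. The next step is to rewrite $\eL_X\Omega$ using the adapted connection: for vector fields $Y_1,\dots,Y_{p+1}$,
\[
  (\eL_X\Omega)(Y_1,\dots,Y_{p+1}) = (\nabla_X\Omega)(Y_1,\dots,Y_{p+1}) + \sum_{l=1}^{p+1}\Omega\big(Y_1,\dots,\nabla_{Y_l}X + T^\nabla(X,Y_l),\dots,Y_{p+1}\big).
\]
Since $\Omega$ is a characteristic tensor field and $\nabla$ is adapted, $\nabla\Omega = 0$; moreover $\nabla$ preserves $E$ by Lemma~\ref{lem:nabla-preserves-E}, so each $\nabla_{Y_l}X\in\Gamma(E)$ and the terms containing it vanish because $\iota_{(\cdot)}\Omega = 0$ on $E$. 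This leaves
\[
  (\eL_X\Omega)(Y_1,\dots,Y_{p+1}) = \sum_{l=1}^{p+1}\Omega\big(Y_1,\dots,T^\nabla(X,Y_l),\dots,Y_{p+1}\big).
\]
From this one sees that $\eL_X\Omega$ vanishes as soon as one argument $Y_m$ lies in $\Gamma(E)$: for $l\neq m$ the argument $Y_m$ survives inside $\Omega$, and for $l=m$ we have $T^\nabla(X,Y_m)\in\Gamma(E)$ because $T^\nabla\in\cT$. Hence $\eL_X\Omega\in\Gamma(\wedge^{p+1}\ann E)$, and since this line bundle is spanned by $\Omega$ we may write $\iota_X d\Omega = f(X)\,\Omega$ for a unique $f\in\Gamma(E^*)$.

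To identify $f$, I would fix a point and a local frame $(e_0,\dots,e_{D-1})$ with $e_{p+1},\dots,e_{D-1}$ spanning $E$, so that $\overline{e_0},\dots,\overline{e_p}$ is a basis of $\V/W$ and $\Omega(e_0,\dots,e_p)\neq 0$. Evaluating the last display on $(e_0,\dots,e_p)$ and expanding $\overline{T^\nabla(X,e_l)} = \sum_{k} a^{(l)}_k\,\overline{e_k}$, only the term $k=l$ survives in each summand, whence
\[
  (\eL_X\Omega)(e_0,\dots,e_p) = \Big(\sum_{l=0}^{p} a^{(l)}_l\Big)\,\Omega(e_0,\dots,e_p) = \tr\big(\overline{T^\nabla}_X\big)\,\Omega(e_0,\dots,e_p),
\]
so $f(X) = \tr\,\overline{T^\nabla}_X$. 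Since $\dim\ann E = p+1$ we have $\wedge^{p+2}\ann E = 0$, so a $(p+2)$-form vanishes as soon as all its contractions with sections of $E$ do; hence $d\Omega = 0$ iff $f\equiv 0$ iff $\tr\,\overline{T^\nabla}_w = 0$ for all $w\in E$ iff $\tr_\eta T^\nabla = 0$, which, given $T^\nabla\in\cT$, is exactly $T^\nabla\in\cT_0$ by Proposition~\ref{prop:alg-char-T-0}.

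The step I expect to demand the most care is the identification $(\eL_X\Omega)(e_0,\dots,e_p) = \tr(\overline{T^\nabla}_X)\,\Omega(e_0,\dots,e_p)$: one must carefully match the combinatorics of inserting $T^\nabla(X,e_l)$ into $\Omega$ with the precise definition of the $\eta$-trace recalled above, and it is here that the hypothesis $T^\nabla\in\cT$ re-enters — both to make $\overline{T^\nabla}_w$ well defined on $\V/W$ and to discard the $E$-valued part of $T^\nabla(X,e_l)$. One could equally replace Cartan's formula by the standard identity expressing $d\Omega$ directly through $\nabla$ and $T^\nabla$; both routes reduce to the same combinatorial core.
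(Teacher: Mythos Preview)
Your proof is correct and follows essentially the same route as the paper. The paper expresses $d\Omega$ directly via the standard formula and $\nabla\Omega=0$ to obtain Equation~\eqref{eq:domega-as-torsion}, then argues that $d\Omega$ vanishes unless exactly one argument lies in $\Gamma(E)$ and evaluates on a frame to extract the trace; you instead contract first with $X\in\Gamma(E)$ via Cartan's formula, express $\eL_X\Omega$ through $\nabla$ and $T^\nabla$, and use $\wedge^{p+2}\ann E=0$ to reduce $d\Omega=0$ to the vanishing of all such contractions. Both arguments hinge on the same combinatorial identification of the frame evaluation with $\tr\,\overline{T^\nabla}_X$ and on the same appeal to Proposition~\ref{prop:alg-char-T-0}; your final paragraph already notes the alternative route, which is precisely the one the paper takes.
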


\begin{proof}
  From the formula for the differential of $\Omega \in \Omega^{p+1}(M)$,
  we have that
  \begin{multline}
    \label{eq:d-omega}
    d\Omega(X_0,X_1,\dots,X_n) = \sum_{i=0}^{p+1} (-1)^i X_i \Omega(X_0, \dots,\widehat{X_i},\dots,X_{p+1})\\
    + \sum_{0\leq i < j \leq {p+1}} (-1)^{i+j}  \Omega([X_i,X_j], X_0,\dots,\widehat{X_i},\dots,\widehat{X_j},\dots,X_{p+1}),
  \end{multline}
  where the hat denotes omission.  Using that $\nabla \Omega = 0$, we
  may express this purely in terms of the torsion.  (This is not
  unexpected, since the exterior derivative is the skew-symmetrisation
  of the covariant derivative relative to any torsion-free connection,
  so if the connection were torsion-free, $\Omega$ would be closed.)
  From $\nabla\Omega=0$, we may write
  \begin{multline*}
    X_i \Omega(X_0,\dots,\widehat{X_i},\dots,X_{p+1}) = \sum_{0\leq j<i} (-1)^j \Omega(\nabla_{X_i} X_j, X_0,\dots,\widehat{X_j},\dots,\widehat{X_i},\dots,X_{p+1}) \\
               + \sum_{i<j\leq {p+1}} (-1)^{j-1} \Omega(\nabla_{X_i}X_j, X_0,\dots,\widehat{X_i},\dots,\widehat{X_j},\dots,X_{p+1}).
  \end{multline*}
  Relabelling the first sum, we find that
  \begin{multline*}
    \sum_{i=0}^{p+1} (-1)^i X_i \Omega(X_0, \dots,\widehat{X_i},\dots,X_{p+1})\\
    = \sum_{0\leq i < j \leq {p+1}} (-1)^{i+j} \Omega(\nabla_{X_j} X_i- \nabla_{X_i} X_j, X_0,\dots,\widehat{X_i},\dots,\widehat{X_j},\dots,X_{p+1}).
  \end{multline*}
  Inserting this into Equation~\eqref{eq:d-omega} and using the definition of torsion, one finds
  \begin{equation}\label{eq:domega-as-torsion}
    d\Omega(X_0,X_1,\dots,X_{p+1}) =- \sum_{0\leq i < j \leq {p+1}}
    (-1)^{i+j} \Omega(T^\nabla(X_i,X_j), X_0,\dots,\widehat{X_i},\dots,\widehat{X_j},\dots,X_{p+1}).
  \end{equation}
  For $T^\nabla \in \cT_0$, this expression vanishes if any two of the
  $X_i$ belong to $\Gamma(E)$, since either one of the two is outside
  $T^\nabla$, in which case the term vanishes because $\iota_Z\Omega =
  0$ for all $Z \in \Gamma(E)$, or they are both in $T^\nabla$, but
  then $T^\nabla(X,Y) \in \Gamma(E)$ for all $X,Y \in \Gamma(E)$ and
  again it vanishes since $\iota_Z \Omega = 0$ for all $Z \in \Gamma(E)$.

  Let $X_1,\dots,X_D$ be a local frame in the $G$-structure $P$
  adapted to the subbundle $E \subset TM$; that is, such that
  $X_{p+2},\dots,X_D$ define a local frame for $E$ and let
  $\theta^1,\dots,\theta^D$ denote the canonically dual local
  coframe.

  Then $d\Omega(X_{i_0},\dots,X_{i_{p+1}}) =0$ unless precisely one of the
  $i_j \in \{p+2,\dots,D\}$: if this holds for two of the $i_j$ then
  the expression is zero as argued above, and if this holds for none
  of the $i_j$ the expression is zero because at least two of the
  arguments coincide, by the pigeonhole principle.  Without loss of
  generality, let us thus consider $d\Omega(Y,X_1,\dots,X_{p+1})$ with $Y
  \in \Gamma(E)$.  From Equation~\eqref{eq:domega-as-torsion}, 
  \begin{equation}
    \begin{split}
      d\Omega(Y,X_1,\dots,X_{p+1}) &= - \sum_{j=1}^{p+1} (-1)^j  \Omega(T^\nabla(Y,X_j), X_1,\dots,\widehat{X_j},\dots,X_{p+1})\\
      &= \Omega(T^\nabla(Y,X_1),X_2,\dots,X_{p+1}) + \dots + \Omega(X_1,\dots,X_p,T^\nabla(Y,X_{p+1})),
    \end{split}
  \end{equation}
  since any term with $Y$ sitting outside $T^\nabla$ is zero because
  $\iota_Y \Omega = 0$.  Now notice that in a term such as
  \begin{equation}
    \Omega(X_1,\dots,T^\nabla(Y,X_i),\dots,X_{p+1}) 
  \end{equation}
  only the component of $T^\nabla(X,X_i)$ along $X_i$ contributes.  That
  component is $\theta^i(T^\nabla(X,X_i)$ (no sum) and hence
  \begin{equation}
    d\Omega(Y,X_1,\dots,X_{p+1}) = \left( \sum_i \theta^i(T^\nabla(Y,X_i))
    \right) \Omega(X_1,\dots,X_{p+1}),
  \end{equation}
  where the prefactor $\sum_i \theta^i(T^\nabla(Y,X_i))$ is precisely
  the trace of the endomorphism of $TM/E$ defined by $\overline X
  \mapsto \overline{T^\nabla(Y,X)}$, which, by
  Proposition~\ref{prop:alg-char-T-0}, vanishes if and only
  if $T^\nabla$ takes values in $\cT_0$.
\end{proof}

\begin{remark}
  We observe that $d\Omega = 0$ implies that $\eL_X\Omega = 0$ for all
  $X \in \Gamma(E)$.  Indeed, by the Cartan formula, $\eL_X \Omega =
  d\iota_X\Omega + \iota_X d\Omega$, with the first summand vanishing
  since $\iota_X\Omega = 0$ and the second since $d\Omega = 0$.  This
  implies that $E$ is involutive, so that it does not not need to be
  assumed.  This follows from another of the Cartan formulae.  Let
  $X,Y \in \Gamma(E)$, then $\iota_{[X,Y]} \Omega =
  [\eL_X,\iota_Y]\Omega$, but both $\eL_X\Omega = 0$ and $\iota_Y
  \Omega = 0$.  But a vector field $X \in \Gamma(E)$ if
  and only if $\iota_X \Omega = 0$, hence $[X,Y] \in \Gamma(E)$.
\end{remark}

Finally we treat the case of vanishing intrinsic torsion.

\begin{proposition}
  \label{prop:vanishing-intrinsic-torsion}
  An adapted connection $\nabla$ has vanishing intrinsic torsion if
  and only if $\eL_X\eta = 0$ for all $X \in \Gamma(E)$.
\end{proposition}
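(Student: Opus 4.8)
The plan is to unwind the definition of ``vanishing intrinsic torsion'' and match it against the Lie-derivative computation already carried out in the proof of Proposition~\ref{prop:trace-intrinsic-torsion}. By definition, $\nabla$ has vanishing intrinsic torsion precisely when $[T^\nabla]=0$ in $\coker\d$, i.e.\ when $T^\nabla\in\im\d$. By Proposition~\ref{prop:exact-pair-d-phi} we have $\im\d=\ker\varphi$, and Equation~\eqref{eq:im-d} records this, in bundle language via the soldering-form identification $P\times_G\V\cong TM$ (so that the fibre of $E$ is $W$), as the condition
\[
  \eta\big(T^\nabla(X,Z),Y\big)+\eta\big(T^\nabla(Y,Z),X\big)=0
\]
for all $X,Y\in\eX(M)$ and all $Z\in\Gamma(E)$.

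Next I would invoke the antisymmetry of the torsion, $T^\nabla(X,Z)=-T^\nabla(Z,X)$, to rewrite this condition as
\[
  \eta\big(T^\nabla(Z,X),Y\big)+\eta\big(T^\nabla(Z,Y),X\big)=0
  \qquad\text{for all }Z\in\Gamma(E),\ X,Y\in\eX(M).
\]
This is exactly the expression that appeared in the proof of Proposition~\ref{prop:trace-intrinsic-torsion}, where — using only $\nabla\eta=0$, the definition of the torsion, and Lemma~\ref{lem:nabla-preserves-E} to discard the terms $\nabla_X Z$ and $\nabla_Y Z$ (which lie in $\Gamma(E)$ and are therefore annihilated by $\eta(\cdot,-)$) — it was shown that, for $Z\in\Gamma(E)$,
\[
  (\eL_Z\eta)(X,Y)=\eta\big(T^\nabla(Z,X),Y\big)+\eta\big(T^\nabla(Z,Y),X\big).
\]
Hence the $\im\d$ condition holds for all $Z\in\Gamma(E)$ and all $X,Y\in\eX(M)$ if and only if $(\eL_Z\eta)(X,Y)=0$ for all such $Z,X,Y$; abstracting $X$ and $Y$ yields $\eL_Z\eta=0$ for all $Z\in\Gamma(E)$, which is the claim.

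Since every step is an application of results already in place, I do not expect a genuine obstacle; the only point requiring care is bookkeeping — namely that Equation~\eqref{eq:im-d} places the $E$-slot in the second argument of $T^\nabla$ whereas the Lie-derivative formula places it in the first, so the antisymmetry of the torsion must be used to reconcile the two. One may also note, as in the remark following Proposition~\ref{prop:symmetric-traceless-intrinsic-torsion}, that the condition $\eL_Z\eta=0$ for all $Z\in\Gamma(E)$ already forces $E$ to be involutive, so that this hypothesis need not be assumed: taking $X\in\Gamma(E)$ in the displayed identity kills the second term and gives $\eta(T^\nabla(Z,X),Y)=0$ for all $Y$, whence $T^\nabla(Z,X)\in\Gamma(E)$ for all $Z,X\in\Gamma(E)$, and Proposition~\ref{prop:E-involutive} applies.
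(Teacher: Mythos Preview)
Your proposal is correct and follows essentially the same route as the paper: invoke the Lie-derivative identity from the proof of Proposition~\ref{prop:trace-intrinsic-torsion}, match it against the characterisation of $\im\d$ in Equation~\eqref{eq:im-d}, and abstract the test vector fields. Your explicit remark about using the antisymmetry of $T^\nabla$ to reconcile the slot in which the $E$-section appears is a bookkeeping point the paper glosses over, and your observation that involutivity of $E$ need not be assumed separately matches the paper's closing sentence.
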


\begin{proof}
  The calculation in the proof of
  Proposition~\ref{prop:trace-intrinsic-torsion} shows that for all $X
  \in \Gamma(E)$ and all $Y,W \in \eX(M)$,
  \begin{equation}
    (\eL_X\eta)(Y,W) = \eta(T^\nabla(X,Y),W) + \eta(T^\nabla(X,W),Y)
  \end{equation}
  and from Equation~\eqref{eq:im-d}, $T^\nabla \in \im\d$ if and only if
  \begin{equation}
    \eta(T^\nabla(X,Y),W) + \eta(T^\nabla(X,W),Y) = 0,
  \end{equation}
  which implies that for all $X\in \Gamma(E)$ and all $Y,W \in \eX(M)$,
  \begin{equation}
    (\eL_X \eta)(Y,W) = 0.
  \end{equation}
  Abstracting $Y,W$, we arrive at the desired expression.  Notice that
  this also implies that $E$ is involutive, since if both $X,Y \in
  \Gamma(E)$, the above calculation shows that $T^\nabla(X,Y) \in
  \Gamma(E)$ and we can appeal to Proposition~\ref{prop:E-involutive}.
\end{proof}

We may finally summarise our main result as follows.

\begin{theorem}
  \label{thm:summary}
  Let $M$ be a $D$-dimensional manifold with a $G$-structure, with $G$
  (the identity component of) the group defined in
  Section~\ref{sec:group-interest}, and let $E \to M$ denote the
  vector bundle associated to the $G$-submodule $W \subset \RR^D$,
  which is a real vector bundle of rank $D-p-1$, where $1<p<D-3$.  Let
  $\eta \in \Gamma(\odot^2\ann E)$, $h \in \Gamma(\odot^2 E)$ and
  $\Omega \in \Gamma(\wedge^{p+1}\ann E)$ denote the characteristic
  tensor fields of the $G$-structure.  There are five intrinsic torsion
  classes of adapted connections to the $G$-structure in
  nondecreasing order of specialisation:
  \begin{enumerate}[label=($\cT^{\arabic*}$),start=0]
  \item generic intrinsic torsion;
  \item $E$ is involutive, so that $M$ is foliated by integral
    submanifolds of $E$ with $h$ inducing a metric $\gamma$ on each
    leaf of the foliation;
  \item $\eL_X\eta = 2 \gamma(X,Z) \eta$ for all $X \in \Gamma(E)$ and
    some $Z \in \Gamma(E)$;
  \item $d\Omega = 0$; and 
  \item $\eL_X \eta = 0$ for all $X \in \Gamma(E)$ (vanishing intrinsic torsion).
  \end{enumerate}
\end{theorem}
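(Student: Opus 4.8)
The plan is to obtain Theorem~\ref{thm:summary} by assembling results already in hand, so that the proof is mostly bookkeeping rather than new computation. First I would invoke Theorem~\ref{thm:intrinsic-torsion-classes}: under the hypothesis $1<p<D-3$ the signatures of $\eta$ and $\gamma$ avoid the small-dimensional indefinite exceptions, the modules $\cK_{\mathrm{tr}}$ and $\cK_0$ are irreducible, and hence the $G$-submodules of $\coker\d$ are exactly the five listed there, namely $0$, $\overline\cT_{\mathrm{tr}}$, $\overline\cT_0$, $\overline\cT = \overline\cT_0\oplus\overline\cT_{\mathrm{tr}}$ and $\coker\d$. The next observation to record is that whether the intrinsic torsion of the $G$-structure lies in a given one of these submodules can be tested on the torsion $T^\nabla$ of any single adapted connection $\nabla$: two adapted connections differ by a change in $\im\d$, and $\im\d$ is contained in each of the preimages $\cT$, $\cT_{\mathrm{tr}}$, $\cT_0$ inside $\Hom(\wedge^2\V,\V)$, so membership is connection-independent, i.e.\ intrinsic to the $G$-structure.

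Then I would translate each submodule-membership condition into the geometry of $E$ and the characteristic tensors, one class at a time. The class $\cT^0=\coker\d$ imposes no condition. For $\cT^1$ I would use Proposition~\ref{prop:E-involutive}: $T^\nabla\in\cT$ if and only if $E$ is involutive, and then Frobenius supplies the foliation with $h$ restricting to a metric $\gamma$ on each leaf. For $\cT^2$ I would use Proposition~\ref{prop:trace-intrinsic-torsion}: $T^\nabla\in\cT_{\mathrm{tr}}$ if and only if $\eL_X\eta = 2\gamma(X,Z)\eta$ for all $X\in\Gamma(E)$ and some $Z\in\Gamma(E)$. For $\cT^3$ I would use Proposition~\ref{prop:symmetric-traceless-intrinsic-torsion}: $T^\nabla\in\cT_0$ if and only if $d\Omega=0$. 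For $\cT^4$ I would use Proposition~\ref{prop:vanishing-intrinsic-torsion}: vanishing intrinsic torsion means $T^\nabla\in\im\d=\ker\varphi$, equivalently $\eL_X\eta=0$ for all $X\in\Gamma(E)$.

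Finally I would check that the labelled order $\cT^0,\dots,\cT^4$ is a linear extension of the actual partial order of these submodules, by recording the implications $\cT^4\Rightarrow\cT^2$ (the case $Z=0$) and $\cT^4\Rightarrow\cT^3$ (vanishing torsion forces $T^\nabla\in\cT_0$), together with $\cT^2\Rightarrow\cT^1$ and $\cT^3\Rightarrow\cT^1$, which are precisely the remarks that $\eL_X\eta=2\gamma(X,Z)\eta$ and $d\Omega=0$ each force $E$ to be involutive (noted in the proof of Proposition~\ref{prop:trace-intrinsic-torsion} and in the Remark after Proposition~\ref{prop:symmetric-traceless-intrinsic-torsion}). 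I do not expect a genuine obstacle here; the only point carrying real content is the exhaustiveness of the list of five, which is exactly the irreducibility input feeding Theorem~\ref{thm:intrinsic-torsion-classes}, and that is what the range $1<p<D-3$ secures. If an obstacle were to surface, it would be in reconciling the ``$T^\nabla\in(\cdot)$'' formulations of the propositions (which quantify over adapted connections) with the statement about intrinsic torsion classes of the $G$-structure, but this is settled by the connection-independence observation above.
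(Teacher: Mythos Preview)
Your proposal is correct and matches the paper's approach exactly: the paper presents Theorem~\ref{thm:summary} as a summary (``We may finally summarise our main result as follows'') with no separate proof block, since it is assembled precisely from Theorem~\ref{thm:intrinsic-torsion-classes} together with Propositions~\ref{prop:E-involutive}, \ref{prop:trace-intrinsic-torsion}, \ref{prop:symmetric-traceless-intrinsic-torsion}, and \ref{prop:vanishing-intrinsic-torsion}, exactly as you outline. Your additional remarks on connection-independence and on the partial order are correct and make explicit what the paper leaves implicit.
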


Let us remark that we may read off the results for the particle
Galilean and Carrollian $G$-structures from this theorem.  In
the Galilean case, for which $p=0$, $\eta = \tau^2$ where $\tau \in
\Omega^1(M)$ is the clock one-form, which is actually $\Omega$ in the
above theorem, so that $E = \ker \tau$.  The intrinsic torsion classes
$\cT^1$ and $\cT^2$ coincide: $\ker\tau$ is involutive if and only if
$d\tau = \tau \wedge \alpha$ for some $\alpha \in \Omega^1(M)$, but
then for any $X \in \Gamma(\ker\tau)$
\begin{align*}
  \eL_X \eta &= \eL_X \tau^2\\
             &= 2 \tau \eL_X\tau\\
             &= 2 \tau (d\iota_X + \iota_X d)\tau\\
             &= 2 \tau \iota_X (\tau \wedge \alpha)\\
             &= 2 \alpha(X) \tau^2\\
             &= 2 \alpha(X) \eta.
\end{align*}
Thus $\cT^1 = \cT^2$ agrees with the twistless torsional Galilean
structure.  Similarly, $\cT^3 = \cT^4$ corresponds to the case of
torsionless Galilean structure, since $d\tau = 0$ implies that $\eL_X
\tau = 0$ for all $X \in \Gamma(\ker\tau)$ as we saw above (for
$\Omega$) and hence $\eL_X \eta = 2 \tau \eL_X\tau = 0$.

In the case of particle Carrollian $G$-structure, which corresponds
here to $p=D-2$, $E$ is of rank $1$ and hence it is trivially
involutive, so that $\cT^1 = \cT^0$ is the generic case.  The other
cases typically remain distinct, resulting in the four classes of
Carrollian $G$-structures in \cite{Figueroa-OFarrill:2020gpr}.

In the degenerate case of $D=2$, then as explained in
\cite[Appendix~B.1]{Figueroa-OFarrill:2020gpr}, Galilean and
Carrollian structures coincide.  In the Galilean case, $\cT^1=\cT^0$
since $d\tau \wedge \tau = 0$ by dimension.  So we only have $\cT^0 = \cT^1 =
\cT^2$ as generic intrinsic torsion and $\cT^3=\cT^4$ as vanishing
intrinsic torsion.  The same holds for the Carrollian case.

The only cases which remain are the stringy Galilean ($p=1$) and
Carrollian ($p=D-3$) $G$-structures, for which the intrinsic classes
in Theorem~\ref{thm:summary} must be refined further as discussed
already in Section~\ref{sec:kern-spenc-diff}.  We hope to treat these
cases fully elsewhere.


\section{Dictionary}
\label{sec:dictionary}

In this section we will provide a dictionary to translate the
mathematical treatment in Section~\ref{sec:without} to the more
physical treatment in Section~\ref{sec:TNC2}. From this section
onwards, we will continue to denote the $D$-dimensional spacetime manifold we are
working on by $M$.  Furthermore, where applicable, Einstein summations
over uppercase Roman letters $A,B,C$ etc. run from $0$ to $p$, over
lowercase Roman letters $a,b,c$ etc. run from $p+1$ to $D-1$, and over
hatted capital Roman letters $\hat{A},\hat{B},\hat{C}$ etc. and Greek
letters $\mu,\nu, \rho$ etc. run from $0$ to $D-1$.

\paragraph{}

To write objects such as Vielbeine and (affine) connections in
indices, we need to locally choose a basis of the tangent space at
every point of spacetime which varies smoothly. As a tangent space
$T_pM$ at a specific point $p\in M$ of spacetime does not have a
canonical basis, we will have to choose the basis we want to work with
ourselves. Such a choice of a basis (i.e. linear isomorphism)
$u:\bb{R}^{D} \to T_pM$ at a specific point $p\in M$ of spacetime is
called a \emph{frame} at $p$. Such a choice is not unique: there are
plenty of choices of a basis of a tangent space. As we are considering
spacetime however, we do not allow for all basis transformations any
more, that is to say, there is a group $G\subseteq \GL(D,\bb{R})$ of
all the basis transformations that we still allow for, which we call
the local structure group.\footnote{Note that we are still free to
  choose a basis once, but after we do this, all other bases we could
  use are those related to the one we chose by a transformation in
  $G$.} In the Galilean $p$-brane case, $G$ is a Lie group that is
(abstractly) isomorphic to (the identity component of)
\begin{equation}
  \label{eq:structgroupSNC}
  (\Ort(1,p) \times \Ort(D-p-1)) \ltimes \bb{R}^{(p+1)(D-p-1)}\,.
\end{equation}
The group $G$ can be characterised in terms of the tensors that it
leaves invariant:
\begin{equation}
  \label{eq:structgroupSNCinv}
  G=\{g\in \GL(D,\bb{R}) \ | \ g \cdot h = h, g \cdot \eta = \eta\},
\end{equation}
where we have used $\cdot$ to denote the action of $G$ on the relevant
representations.  In this definition, $W$ is a $(D-p-1)$-dimensional
subspace of $\bb{R}^D$, and $\eta \in \bigodot^2 \ann \vperp$ and
$h\in \bigodot^2 \vperp$ are symmetric tensors. Once we pick a basis
of $\bb{R}^D$, these tensors can be expressed in index notation. If we
let $(e_{\hat{A}})_{\hat{A}=0}^{D-1}$ be a basis of $\bb{R}^D$ such
that $V\coloneqq\text{span}\{e_0,\dots,e_{p}\}$ and $W=\text{span}
\{e_{p+1},\dots,e_{D-1}\}$, and if we let
$(\theta^{\hat{A}})_{\hat{A}=0}^{D-1}$ denote the canonical dual basis, we obtain
\begin{align}
    \eta= \eta_{AB}\theta^A\theta^B \qquad \qquad \text{and} \qquad \qquad h= h^{ab} e_ae_b \,.
\end{align}
With respect to such a basis, the elements of $G$ correspond to $D
\times D$ matrices of the form:
\begin{equation}
  \label{eq:Gpbrane-as-matrix-group}
  \begin{pmatrix}
    \Lambda & \mathbf{0} \\ b & R
  \end{pmatrix}\,, 
\end{equation}
where $\Lambda \in \Ort(1,p)$, $R \in \Ort(D-p-1)$, $b$ is a real
$(D-p-1) \times (p+1)$ matrix and $\mathbf{0}$ denotes the
$(p+1)\times (D-p-1)$ zero matrix. In the physics literature, the
$\Ort(1,p)$ and $\Ort(D-p-1)$ factors are referred to as `longitudinal
Lorentz transformations' and `transversal orthogonal transformations',
respectively, whereas transformations (in the $\bb{R}^{(p+1)(D-p-1)}$
part of $G$) for which $\Lambda$ and $R$ are the identity matrices are
called `$p$-brane Galilean boosts'.

Picking a basis of $\bb{R}^D$, and in particular choosing a complement
$V$ of $W\subseteq \bb{R}^D$, allows us to also find partial inverses
of $\eta$ and $h$, which are
\begin{equation}
  \eta^{-1}=\eta^{AB}e_Ae_B\in {\bigodot}^2 V
\end{equation} 
and
\begin{equation}
    h^{-1}= h_{ab} \theta^a\theta^b\in {\bigodot}^2 \ann V
\end{equation}
defined by $\eta^{AB}\eta_{BC}=\delta_C^A$ and
$h^{ab}h_{bc}=\delta_c^a$, where $\delta$ is the Kronecker delta. In
the rest of this paper, we will frequently raise and lower indices
with these tensors.\footnote{In Section \ref{sec:without}, the
  ``raising and lowering'' of indices is achieved via the musical maps
  $\eta^\flat$ and $h^\sharp$ associated to $\eta$ and $h$.
  Restricting to $W$ or passing to the quotient $\V/W$ they become
  invertible.}  We remark that the subspace $W\subseteq \bb{R}^D$
remains invariant under a basis transformation in $G$, but picking a
complement of $W$ in $\bb{R}^D$, however, is a choice made to
simplify calculations, but one that is not invariant under all the
basis transformations that $G$ allows. If this would have been the
case, the movement of $p$-branes would be confined to a fixed
($p+1$)-dimensional subbundle of $TM$, which would severely hinder the
potential of any possible dynamics.  The desire to work with manifest
$G$-symmetry is the main reason why in Section~\ref{sec:without} we
only use $W$ and $\bb{V}/W$.  This results in our treatment in terms
of filtered representations of $G$; although by passing to their
associated graded representations allows us to make contact with the
choice of complement $V$ here, which in Section~\ref{sec:without}
appears as one of the graded pieces of the filtered representation
$\V$.  The group $G$ respects the filtration, but not the
grading, since the boosts have nonzero degree.  This explains why in
Section~\ref{sec:TNC2}, the symmetry which is manifest is that
associated to the degree-zero part of $G$, namely $\Ort(1,p)\times
\Ort(D-p-1)$, and one has to explicitly check how the boosts act on
the resulting objects.

Furthermore, we do not want to solely study objects at one point of
spacetime, but rather at (local or global) manifolds of spacetime,
which requires to also look at the smooth structure of choosing bases.
First, we note that for every point in our spacetime manifold, we have
a copy of the group $G$ of `allowed' basis transformations. To combine
all of those into a smooth structure, we discover the structure we are
actually studying: a \emph{$G$-structure}, that is, a principal
$G$-subbundle of the frame bundle of $M$. We call the principal bundle
we are studying here $P$. Its projection back to $M$ we denote by
$\pi: P\to M$. Then a choice of basis for all points in a local patch
$U\subseteq M$ of spacetime is given by an \emph{inverse Vielbein}, or a
\emph{moving frame}, that is, a local section of $P$, denoted by
$s:U\to P$ without indices.\footnote{We note that the terminology diverges on the definition of (inverse) Vielbeine, as moving frames are also commonly referred to as Vielbeine, and the term inverse Vielbein is in that case reserved for the moving coframes. We opt for the other convention, as it is in line with \cite{Bergshoeff:2022fzb}. } To denote an inverse Vielbein with
indices, we first write $E=s\in \Gamma(U,P_U)$ with
\begin{equation}
  \label{eq:dictinvvielb}
  E = {E_{\hat{A}}}^\mu\frac{\partial}{\partial x^\mu} \otimes \theta^{\hat{A}},
\end{equation}
where $(x^{\mu})_{\mu=0}^{D-1}: U\to \bb{R}^D$ are local coordinates, and where ${E_{\hat{A}}}^\mu\in C^\infty(U)$ is given by 
\begin{equation}
  {E_{\hat{A}}}^\mu(p) = d{x^\mu}_p (s(p)e_{\hat{A}}).
\end{equation}
We can split ${E_{\hat{A}}}^\mu$ up in ${\tau_A}^\mu,{e_a}^\mu\in C^\infty(U)$, which are given by
\begin{equation}
  {\tau_A}^\mu(p) \coloneqq {E_A}^\mu(p)= d{x^\mu}_p (s(p)e_A)
\end{equation}
and 
\begin{equation}
  {e_a}^\mu(p) \coloneqq {E_a}^\mu(p)=  d{x^\mu}_p (s(p)e_a).
\end{equation}
In the physics literature, $\tau_{A}{}^\mu (p)$ and $e_a{}^\mu (p)$
are referred to as the longitudinal and transversal inverse
Vielbeine, respectively.

Similarly, given an inverse Vielbein $s:U\to P$, we also have a notion
for a choice of a dual basis for all points in a local patch
$U\subseteq M$. Such a dual basis is given by a \emph{Vielbein}, or
a \emph{moving coframe}, that is, a local $\bb{R}^D$-valued one-form
of $U$, denoted by $s^*\theta \in \Omega(U,\bb{R}^D) $ without
indices.  Here, $\theta\in \Omega^1(P,\bb{R}^D)$ is the soldering form
on $P$. It picks out the part of tangent vectors at a point of $P$
that does not consider possible changes of bases above a point,
projects it down to a tangent vector on $M$ and composes it with the
(inverse of the) frame that is the considered point of $P$.
In a formula, it is given by
\begin{equation}
  \theta_u(X_u) = u^{-1}(\pi_* X_u),
\end{equation}
where $u\in P$ is a frame at a certain point of spacetime,
and where $X_u\in T_uP$ is a tangent vector of the $G$-structure.  In
words, it gives the components of the projection $\pi_*X_u$ of $X_u$
to $M$ relative to the frame defined by $u$.  Note that this is
well-defined because the $G$-structure is a subbundle of the frame
bundle.

To denote a Vielbein with indices, we first write $\tilde{E}=s^*\theta\in \Omega^1(U,\bb{R}^{D})$ with 
\begin{align}
\label{eq:dictvielb}
     \tilde{E} = {E_\mu}^{\hat{A}} d x^\mu e_{\hat{A}},
\end{align}
where ${E_\mu}^{\hat{A}} \in C^\infty(U)$ is given by 
\begin{align}
     \begin{split}
         {E_\mu}^{\hat{A}}(p) &= \theta^{\hat{A}}\left( s^*(\theta)(\frac{\partial}{\partial x^\mu}(p))\right) = \theta^{\hat{A}} \Big(s(p)^{-1}\frac{\partial}{\partial x^\mu}(p)\Big).
     \end{split}
\end{align}
Similarly as before, we can split ${E_{\mu}}^{\hat{A}}$ up in  ${\tau_\mu}^A,{e_\mu}^a\in C^\infty(U)$, which are given by
\begin{align}
    {\tau_\mu}^A(p) \coloneqq {E_\mu}^A(p)
     = \theta^A \Big(s(p)^{-1}\frac{\partial}{\partial x^\mu}(p)\Big)
\end{align}
and 
\begin{align}
    {e_\mu}^a(p) \coloneqq {E_\mu}^a(p)
     = \theta^a \Big(s(p)^{-1}\frac{\partial}{\partial x^\mu}(p)\Big).
\end{align}
In the physics literature, $\tau_\mu{}^{A} (p)$ and $e_\mu{}^a (p)$ are usually referred to as the longitudinal, respectively transversal Vielbeine. Equations \eqref{eq:dictinvvielb} and \eqref{eq:dictvielb} illustrate why (inverse) Vielbeine are often thought of as connecting curved indices (that is, vector fields and differential forms on the spacetime coordinates) to flat indices (a (dual) vector space with a preferred basis, such as $\bb{R}^D$ or $(\bb{R}^D)^*$).

Note that the above definitions imply that $E_{\hat{A}}{}^\mu$ and $E_\mu{}^{\hat{A}}$ are each other's inverse as $D \times D$ matrices, since
\begin{align}
    \begin{split}
        {E_\mu}^{\hat{A}}(p)  {E_{\hat{B}}}^\mu(p)
        &= \theta^{\hat{A}} \Big(s(p)^{-1}\frac{\partial}{\partial x^\mu}(p)\Big) d{x^\mu}_p (s(p)e_{\hat{B}})
        =\theta^{\hat{A}} \Big(s(p)^{-1}s(p)e_{\hat{B}}\Big) =\delta_{\hat{B}}^{\hat{A}}
    \end{split}
\end{align}
and that
\begin{align}
    \begin{split}
         {E_\mu}^{\hat{A}}(p)  {E_{\hat{A}}}^\nu(p) &
         = d{x^\nu}_p \Big(s(p)\big(e_{\hat{A}}\theta^{\hat{A}}  (s(p)^{-1}\frac{\partial}{\partial x^\mu}(p)\big)\Big)
         = d{x^\nu}_p (\frac{\partial}{\partial x^\mu}(p)) =\delta_\mu^\nu.
    \end{split}
\end{align}
In terms of $\tau_\mu{}^A$, $e_\mu{}^a$, $\tau_A{}^\mu$, $e_a{}^\mu$, this invertibility is expressed as:
\begin{align}
\begin{aligned}
  \label{eq:invVielbeineSNC}
  & \tau_A{}^\mu \tau_\mu{}^B = \delta_A^B \,  \qquad & & \tau_A{}^\mu e_\mu{}^{a} = 0 \, \qquad  & & e_{a}{}^\mu \tau_\mu{}^A = 0 \,\\
  & e_\mu{}^{a} e_{b}{}^\mu = \delta_{b}^{a} \,  \qquad & & \tau_\mu{}^A \tau_A{}^\nu + e_\mu{}^{a} e_{a}{}^\nu = \delta_\mu^\nu \,. & &
\end{aligned}
\end{align}

As can be seen from the aforementioned definitions, the explicit formulation of an (inverse) Vielbein depends on a chosen local frame field, that is, a section $s:U\to P$. Two different sections $s,s':U\to P$ are related via a transformation $g \in C^\infty(U,G)$, according to:\footnote{Technically, we would like to know how two different sections $s:U\to P$, $s': U'\to P$ can be related on the overlap $U\cap U'$ of their neighbourhoods, but for clarity, we will assume that $U'=U$.}
\begin{align}
    s'(p) = s(p) \circ g(p),
\end{align}
so $g(p) = s(p)^{-1} \circ s'(p)$.
In terms of indices, we thus find that the inverse Vielbeine transform as follows when we choose a different (local) frame field:
\begin{align} \label{eq:indvielbtrafo}
    \begin{split}
        E'_{\hat{A}}{}^\mu (p) :&= d{x^\mu}_p (s'(p) e_{\hat{A}}) 
    = d{x^\mu}_p ((s(p) \circ g(p))e_{\hat{A}})
    = {E_{\hat{B}}}^\mu(p)\theta^{\hat{B}}g(p)e_{\hat{A}}.
    \end{split}
\end{align}
For the Vielbeine, we find
\begin{align} \label{eq:indinvvielbtrafo}
   \begin{split}
        E'_\mu{}^{\hat{A}} (p):&= \theta^{\hat{A}}({s'}^*\theta)_p\Big(\frac{\partial}{\partial x^\mu}(p) \Big)
    = \theta^{\hat{A}}g(p)^{-1} s(p)^{-1}\Big(\frac{\partial}{\partial x^\mu}(p) \Big)
   = \theta^{\hat{A}}g(p)^{-1}e_{\hat{B}} {E_\mu}^{\hat{B}} (p)\,.
   \end{split}
\end{align}
In the physics literature, one usually writes these transformation rules in infinitesimal form. To do this, one writes $g(p) = \exp(-\lambda(p))$, where $\lambda$ is valued in the Lie algebra $\mathfrak{g}$ of $G$, i.e., its matrix elements $\lambda(p)^{\hat{A}}{}_{\hat{B}}\coloneqq \theta^{\hat{A}} \lambda(p) e_{\hat{B}}$ obey (omitting the underlying point of spacetime regularly from now on):
\begin{align}
  \lambda^{AB} \coloneqq {\lambda^A}_C\eta^{BC} = - \lambda^{BA} \, \qquad \quad \lambda^{ab} \coloneqq {\lambda^a}_ch^{bc}= - \lambda^{ba} \, \qquad \quad \lambda(p)^{A}{}_{a} = 0 \,.
\end{align}
Using $\delta E_\mu{}^{\hat{A}}$ to denote the difference $E'_\mu{}^{\hat{A}} - E_\mu{}^{\hat{A}}$ to first order in $\lambda$ (and mutatis mutandis for $E_{\hat{A}}{}^\mu $), Equations \eqref{eq:indvielbtrafo} and \eqref{eq:indinvvielbtrafo} then lead to:
\begin{align}
  \begin{split}
      \delta E_\mu{}^{\hat{A}} &= E_\mu{}^{\hat{B}} \theta^{\hat{A}} \lambda e_{\hat{B}} = \lambda^{\hat{A}}{}_{\hat{B}} E_\mu{}^{\hat{B}} \, \\
  \delta E_{\hat{A}}{}^\mu &= - E_{\hat{B}}{}^{\mu} \theta^{\hat{B}} \lambda e_{\hat{A}} = - E_{\hat{B}}{}^{\mu}  \lambda^{\hat{B}}{}_{\hat{A}} \,,
  \end{split}
\end{align}
or equivalently:
\begin{alignat}{2} 
  \delta \tau_\mu{}^A  &=  \lambda^A{}_B \tau_\mu{}^B \, \qquad &  \delta e_\mu{}^{a}  &= \lambda^{a}{}_{b} e_\mu{}^{b} + \lambda^a{}_A \tau_\mu{}^A \,  \label{eq:localtrafosframeSNC} \\
    \delta \tau_A{}^\mu &=  -\lambda^B{}_A \tau_B{}^\mu - \lambda^a{}_{A} e_{a}{}^\mu \, \qquad & \delta e_{a}{}^\mu &= -\lambda^{b}{}_a e_{b}{}^\mu \,.  \label{eq:localtrafosinvSNC}
\end{alignat}

To be able to relate different choices of basis at different points of
spacetime, we also have to introduce the notion of a connection.
Without indices, a \emph{connection one-form}
$\omega\in \Omega^1(P,\gm)$ is determined by a horizontal
$G$-invariant distribution $\ker \omega_u$, which says, among other
things, that $\ker(\pi_*)_u \oplus \ker \omega_u = T_uP $ for all
$u\in P$ and $\omega(\xi_x) = x$ for $x\in \gm$, where
$\xi_x\in \mathfrak{X}(P)$ is defined by
$\xi_x(u) = \frac{d}{dt}\big(u\circ e^{tx}\big)
\big|_{t=0}$.\footnote{Note that this definition of a connection is
  equivalent to that of an Ehresmann connection, where the horizontal
  $G$-invariant distribution of $P$ is given by
  $\mathcal{H}_u = \ker \omega_u$ for every $u\in P$.} Given a choice
of inverse Vielbeine $s\in (U,P_U)$, we can also deduce the expression
of the spin connection $\Omega \coloneqq s^*\omega \in
\Omega^1(U,\gm)$ with indices by
writing \begin{equation}
  \Omega = \Omega_\mu dx^\mu = {J_{\hat{A}}}^{\hat{B}}{{\omega_\mu}^{\hat{A}}}_{\hat{B}} dx^\mu ,
\end{equation} 
with ${{\omega_\mu}^{\hat{A}}}_{\hat{B}}\in C^\infty(U)$ given by
\begin{equation}
    {{\omega_\mu}^{\hat{A}}}_{\hat{B}}(p) = {\overline{J}^{\hat{A}}}_{\hat{B}}\left( (s^*\omega )_p (\frac{\partial}{\partial x^\mu}(p))\right) 
    = {\overline{J}^{\hat{A}}}_{\hat{B}}\left( \omega_{s(p)}(d s_p \frac{\partial}{\partial x^\mu}(p))\right).
\end{equation}
In these equations, the maps ${J_{\hat{A}}}^{\hat{B}}\in\text{End}(\bb{R}^D)$ are defined by ${J_{\hat{A}}}^{\hat{B}}e_{\hat{C}} = \delta_{\hat{C}}^{\hat{B}} e_{\hat{A}}$ and the ${\overline{J}^{\hat{A}}}_{\hat{B}}\in \text{End}((\bb{R}^D)^*) $ form the canonical dual basis. Note that, since $\omega$ is valued in the Lie algebra $\mathfrak{g}$ of $G$, one has:
\begin{equation}
  \omega_{\mu}{}^{AB}  \coloneqq{{\omega_\mu}^A}_B\eta^{BC} = - \omega_\mu{}^{BA} \, \qquad \quad \omega_\mu{}^{ab}  \coloneqq{{\omega_\mu}^a}_bh^{bc}= - \omega_\mu{}^{ba}  \, \qquad \quad \omega_\mu{}^{A}{}_a  = 0 \,.
\end{equation}
In the physics literature, one refers to $\omega_\mu{}^{AB}$,
$\omega_\mu{}^{ab}$ and $\omega_\mu{}^a{}_A$ as the spin connections
for longitudinal Lorentz transformations, transversal rotations and
$p$-brane Galilean boosts.

As can be found in \cite{Figueroa-OFarrill:2020gpr}, an equivalent way to express the connection 1-forms is the affine connection $\nabla: \mathcal{X}(M) \to \Omega^1(M,TM)$, which is given with indices in local coordinates by 
\begin{equation}
    \nabla = \Gamma_{\mu\nu}^\rho dx^\mu\otimes \frac{\partial}{\partial x^\rho}dx^\nu ,
\end{equation}
with $\Gamma_{\mu\nu}^\rho\in C^\infty(U)$ given by
\begin{equation}
    \Gamma_{\mu\nu}^\rho(p)= d{x^\rho}_p(\nabla_{ \frac{\partial}{\partial x^\mu}} \frac{\partial}{\partial x^\nu}).
\end{equation}
The affine connection $\nabla$ is induced from the connection 1-form $\omega$ (as described in \cite{Figueroa-OFarrill:2020gpr}), i.e., $\nabla$ and $\omega$ are related via the Vielbein postulates 
\begin{equation}
  \label{eq:VielbpostSNC}
  \begin{split}
    & \partial_\mu \tau_\nu{}^A - \omega_\mu{}^A{}_B \tau_\nu{}^B -  \Gamma_{\mu\nu}^\rho  \tau_\rho{}^A  = 0 \,\\
    &  \partial_\mu e_\nu{}^{a} - \omega_\mu{}^{ab} e_{\nu b} - \omega_\mu{}^{a A} \tau_{\nu A} - \Gamma_{\mu\nu}^\rho e_{\rho}{}^{a} = 0 \,.
  \end{split}
\end{equation}
The affine connection $\nabla$ is induced from the spin connection $\omega$ in such a way that it preserves the local smooth versions of the rank-$(p+1)$ `longitudinal' metric
\begin{equation}
\label{eq:localversioneta}
    \eta = dx^\mu \odot dx^\nu \tau_{\mu\nu} \in \Gamma(U,\bigodot{}^2T^*U) \qquad \text{with } \tau_{\mu\nu} = \tau_\mu{}^A\tau_{\nu}{}^B \eta_{AB}
\end{equation}
and the rank-$(D-p-1)$ `transversal' co-metric
\begin{equation}
\label{eq:localversionh}
    h = \frac{\partial}{\partial x^\mu} \odot \frac{\partial}{\partial x^\nu}h^{\mu\nu} \in \Gamma(U,\bigodot{}^2TU) \qquad \text{with  }h^{\mu\nu} = e_a{}^\mu e_b{}^\nu h^{ab}.
\end{equation}

Similarly as in the case with the Vielbeine, we would like to know how the spin connection transforms under a change of local frame field. When we reintroduce our local sections $s,s': U\to P$, we find that the spin connection transforms by 
\begin{align}
    ((s')^*\omega)_p
        = \text{Ad}(g(p)^{-1})\circ (s^*\omega)_{p}+  g(p)^{-1}\circ d g(p) \,.
\end{align}
In infinitesimal form, this transformation rule is given by:
\begin{align}
  \label{eq:localtrafosdepSNC}
    \delta \omega_\mu{}^{AB} &= \partial_\mu \lambda^{AB} + 2 \lambda^{[A|C|} \omega_{\mu C}{}^{B]} \, \qquad \qquad
    \delta \omega_\mu{}^{ab} = \partial_\mu \lambda^{ab} + 2 \lambda^{[a|c|} \omega_{\mu c}{}^{b]} \, \nonumber \\[.1truecm]
  \delta \omega_\mu{}^{a A} &= \partial_\mu \lambda^{a A} + \lambda^A{}_B \omega_\mu{}^{a B} + \lambda^{a}{}_{b} \omega_\mu{}^{b A} -  \lambda^{a B} \omega_\mu{}^A{}_B + \lambda^{b A} \omega_{\mu b}{}^{a} \,.
\end{align}

Using a connection 1-form, we can now discuss the Spencer differential. In the mathematics section, this is given by as a map $\partial: \Hom(\bb{V},\gm)\to \Hom(\wedge^2 \bb{V},\bb{V})$. 
This is a pointwise map that suffices for the mathematical treatment we are considering, because it is a map of $G$-representations. To be more concrete, because $\partial: \Hom(\bb{V},\gm)\to \Hom(\wedge^2 \bb{V},\bb{V})$ is $G$-equivariant, it induces a bundle map 
\begin{align}
    \begin{split}
        P\times_G \Hom(\bb{V},\gm) &\to P \times_G \Hom(\wedge^2 \bb{V},\bb{V})\\
        [(u,\kappa) ] &\mapsto [(u,\partial \kappa)],
    \end{split}
\end{align}
which in turn induces the Spencer map $\partial: \Omega^1(M,P\times_G
\gm) \to \Omega^2(M,P\times_G \bb{V})$, 
from the space of differences of connection 1-forms
$\Omega^1(M,P\times_G \gm)$ to the space of all possible torsions
$\Omega^2(M,P\times_G \bb{V})$, by identifying the associated vector
bundles $P\times_G \wedge^k \bb{V}^*$ to the space of differential
$k$-forms $\Omega^k(M)$ on $M$ using the soldering form.

This procedure of first considering the associated vector bundle and
secondly its isomorphism under the soldering form to (co)tangent
bundles applies also for $G$-submodules such as the subspace
$W\subset \bb{V}$ from the mathematics section, and the intrinsic
torsion classes $\mathcal{T}^i$ ($i=0,1,2,3,4$),\footnote{We note that
  the way mathematicians and physicists denote the classes of
  intrinsic notion is different, as mathematicians like to denote the
  $G$-submodule in which the torsion resides, such as
  $T\in \mathcal{T}_0$, whereas physicists prefer to denote equations
  that the torsion has to satisfy, such as ${{T_a}^A}_A=0$.} but also
the (partial) metrics $\eta \in \bigodot^2 \ann \vperp$ and
$h\in \bigodot^2 \vperp$, which are already given in Equation
\eqref{eq:localversioneta}-\eqref{eq:localversionh}. One of the
consequences hereof is that all connection 1-forms
$\omega\in \Omega^1(P,\gm)$ (and thus also all affine connections) are
compatible with the (partial) metrics. It also allows us to do the
mathematics in Section \ref{sec:without} pointwise, as long as the
objects we study are invariant under $G$-transformations.

Since the submodule in which the intrinsic torsion resides is an
invariant for the (spacetime) manifold $M$, that is, the constraints
the intrinsic torsion satisfies do not change under a transformation
of bases, it encodes geometric information about the manifold. The
specific geometric characterization of those different constraints is
given in Theorem~\ref{thm:summary} and we will elaborate on this
result and its physical implications in Section~\ref{sec:TNC2}.


\section{\texorpdfstring{$p\,$}{p}-brane Galilean Geometries with Indices} \label{sec:TNC2}

The aim of this section is to discuss the results on $p$-brane Galilean geometries of Section \ref{sec:without} using a physics language and index notation, explained in the dictionary \ref{sec:dictionary}, which is less precise but hopefully more easily accessible to a physics audience. Here, we will study $p$-brane Galilean geometries by trying to obtain explicit expressions for their spin connection components from Cartan's first structure equation and we will explain how this analysis reflects the results of that of the Spencer differential of Section \ref{sec:without}.  

We start from Cartan's first structure equation of $p$-brane Galilean geometry, which is obtained by antisymmetrizing the Vielbein postulates \eqref{eq:VielbpostSNC}:
\begin{align}
  T_{\mu\nu}{}^A &= 2 \partial_{[\mu} \tau_{\nu]}{}^A - 2 \omega_{[\mu}{}^A{}_B \tau_{\nu]}{}^B \, \label{eq:torsionVielbSNC1} \\  E_{\mu\nu}{}^{a} &= 2 \partial_{[\mu} e_{\nu]}{}^{a} - 2 \omega_{[\mu}{}^{ab} e_{\nu] b} - 2 \omega_{[\mu}{}^{aA} \tau_{\nu] A} \label{eq:torsionVielbSNC2} \,.
\end{align}
Here, we have split the torsion tensor $2 \Gamma^\rho_{[\mu\nu]}$ into `longitudinal torsion tensor' components $T_{\mu\nu}{}^A$ along $\tau_A{}^\rho$ and `transversal torsion tensor' components $E_{\mu\nu}{}^{a}$ along $e_{a}{}^\rho$ as follows:
\begin{align} \label{eq:torsiondecompSNC}
  & 2 \Gamma^\rho_{[\mu\nu]} = \tau_A{}^\rho T_{\mu\nu}{}^A + e_{a}{}^\rho E_{\mu\nu}{}^{a} \qquad \text{or} \qquad
    T_{\mu\nu}{}^A  = 2 \Gamma^\rho_{[\mu\nu]} \tau_\rho{}^A  \quad \text{and} \quad E_{\mu\nu}{}^{a} = 2 \Gamma^\rho_{[\mu\nu]} e_\rho{}^{a} \,.
\end{align}
Note that under local SO$(1,p)$, SO$(D-p-1)$ and $p$-brane Galilean boosts, $T_{\mu\nu}{}^A$ and $E_{\mu\nu}{}^{a}$  transform as follows:
\begin{align}
  \label{eq:trafoTATAp}
  \delta T_{\mu\nu}{}^A = \lambda^A{}_B T_{\mu\nu}{}^B \, \qquad \qquad \qquad \delta E_{\mu\nu}{}^{a} = \lambda^{a}{}_{b} E_{\mu\nu}{}^{b} + \lambda^{a}{}_A T_{\mu\nu}{}^A \,.
\end{align}
Rewriting \eqref{eq:torsionVielbSNC1} and \eqref{eq:torsionVielbSNC2} as
\begin{align}
  & \hat{T}_{\mu\nu}{}^A = - 2 \omega_{[\mu}{}^A{}_B \tau_{\nu]}{}^B  \, \label{eq:torsionVielbSNC12} \\
  & \hat{E}_{\mu\nu}{}^a = - 2 \omega_{[\mu}{}^{ab} e_{\nu] b} - 2 \omega_{[\mu}{}^{aA} \tau_{\nu] A} \,, \label{eq:torsionVielbSNC22} \\
  & \text{with } \hat{T}_{\mu\nu}{}^A \equiv T_{\mu\nu}{}^A - 2 \partial_{[\mu} \tau_{\nu]}{}^A  \text{ and }  \hat{E}_{\mu\nu}{}^a \equiv E_{\mu\nu}{}^a - 2 \partial_{[\mu} e_{\nu]}{}^a \label{eq:defhatThatE} \,,
\end{align}
one can attempt to solve these equations for the spin connection components. In doing so, one however finds that they do not determine the spin connections uniquely. In particular, careful analysis shows that some (combinations of) spin connection components do not occur in any of the above equations for the $\hat{T}$- and $\hat{E}$-components. These spin connection components can thus not be expressed in terms of $\hat{T}/\hat{E}$-components and instead parametrize a family of adapted connections. Furthermore, some of the equations \eqref{eq:torsionVielbSNC12}, \eqref{eq:torsionVielbSNC22} for the $\hat{T}/\hat{E}$-components do not contain any of the spin connection components so that those $\hat{T}/\hat{E}$-components vanish. The torsion tensor components that appear in them then constitute the intrinsic torsion of $p$-brane Galilean geometry. 

Below, we will verify these statements explicitly. Our findings can however be anticipated from the analysis of the Spencer differential $\partial$ in $p$-brane Galilean geometry of Section \ref{sec:without}. Indeed, the existence of (combinations of) spin connection components that do not occur in any of the $\hat{T}$- and $\hat{E}$-components in Equations \eqref{eq:torsionVielbSNC12}, \eqref{eq:torsionVielbSNC22} reflects that $\ker \partial$ is non-trivial and the number of such (combinations of) spin connection components is given by $\dim\ker \partial$. The fact that some of the $\hat{T}/\hat{E}$-components in \eqref{eq:torsionVielbSNC12}, \eqref{eq:torsionVielbSNC22} do not contain any of the spin connection components is related to the non-triviality of $\coker\partial$. The dimension $\dim \coker\partial$ equals the number of such $\hat{T}/\hat{E}$-components, while $\dim\im \partial $ gives the number of $\hat{T}/\hat{E}$ components that do contain connection components. The information about the system of equations \eqref{eq:torsionVielbSNC12}, \eqref{eq:torsionVielbSNC22} that is encoded in the kernel, cokernel and image of the Spencer differential can thus be summarized as follows:
\begin{eqnarray}
\ker \partial &\leftrightarrow& \textrm{those combinations of spin connection\ components\ that\ do\ not}\nonumber \\ && \textrm{occur in any of the $\hat{T}$- and $\hat{E}$-components;}\\[.1truecm]
\coker \partial &\leftrightarrow& \textrm{those $\hat{T}$- and $\hat{E}$-components that do not contain any of the}\nonumber\\
&&\textrm{spin connection components; and}\\[.1truecm]
\im \partial &\leftrightarrow& \textrm{those $\hat{T}$- and $\hat{E}$-components that do contain a spin connection}\nonumber\\
&&\textrm{component.}
\end{eqnarray}

In order to make the above discussion explicit, we distinguish the different spin connection and $\hat{T}$- and $\hat{E}$-components, by decomposing curved $\mu$-indices into longitudinal $A$- and transversal $a$-indices using the following decomposition rule (illustrated here for a one-form $V_\mu$):
\begin{equation}\label{decomposition}
V_\mu = \tau_\mu{}^A V_A + e_\mu{}^a V_a\hskip .5truecm \textrm{or}\hskip .5truecm V_A = \tau_A{}^\mu V_\mu\ \ \textrm{and}\ \ V_a = e_a{}^\mu V_\mu\,.
\end{equation}
Equations \eqref{eq:torsionVielbSNC12}, \eqref{eq:torsionVielbSNC22} can then be decomposed as follows:\,\footnote{Sometimes, if confusion could arise, we put a comma between the indices to indicate which two indices form (the projection of) an anti-symmetric pair, e.g.
\begin{align}
    \omega_{A,BC} = \tau_A{}^\mu\, \omega_{\mu BC}\,\hskip .5truecm
  \textrm{or}\hskip .5truecm  \hat{E}_{A[b,c]} = \tau_A{}^\mu e_{[b}{}^\nu \hat{E}_{|\mu\nu|c]}\,.
\end{align}\label{notation1}}
\begin{alignat}{2}\label{decompositions}
 \hat{T}_{BC,A} &= -2 \, \omega_{[B,|A|C]}  \, & \qquad \qquad \qquad \qquad \hat{E}_{bc,a} &= -2\, \omega_{[b,|a|c]} \, \nonumber \\
 \hat{T}_{a[A,B]} &= \omega_{a,AB} \, & \qquad \qquad \qquad \qquad \hat{E}_{A[a,b]} &= \omega_{A,ab} + \omega_{[a,b]A} \, \nonumber \\
 \hat{T}_{a(A,B)} &= 0 \, & \qquad \qquad \qquad \qquad \hat{E}_{A(a,b)} &= \omega_{(a,b)A} \, \nonumber \\
 \hat{T}_{ab,A} &= 0 \, & \qquad \qquad \qquad \qquad \hat{E}_{AB,a} &= -2\, \omega_{[A,|a|B]} \,.
\end{alignat}

The two equations on the first line can be seen to be equivalent to:
\begin{align}
  \hat{T}_{C[A,B]} - \frac12 \hat{T}_{AB,C} = \omega_{C,AB} \, \qquad \qquad \hat{E}_{c[a,b]} - \frac12 \hat{E}_{ab,c} = \omega_{c,ab} \,.
\end{align}
From this and the remaining equations in \eqref{decompositions}, one sees that the following (combinations of) spin connection components can be expressed in terms of $\hat{T}$- and $\hat{E}$-components:
\begin{equation}\label{dependent}
\hskip -.25truecm  \textrm{dependent spin connection components}\ :\ \omega_{[A}{}^a{}_{B]}\,,\,\, \omega^{(ab)C}\,,\,\,  \omega_{c,ab} \,,\,\, \omega_\mu{}^{AB} \,\,\, \textrm{and} \,\,\, \omega_C{}^{ab} + \omega^{[ab]}{}_C\,.
\end{equation}
The remaining spin connection components do not occur in any of the $\hat{T}$- and $\hat{E}$-components and remain as independent components that parametrize a family of adapted connections in $p$-brane Galilean geometry:
\begin{equation}\label{independent}
  \textrm{independent spin connection components}\, (\leftrightarrow \ker \partial): \omega_{\{A}{}^a{}_{B\}}\,,\omega^{Aa}{}_A
\ \ \textrm{and} \ \ \omega_C{}^{ab} - \omega^{[ab]}{}_C\,.
\end{equation}
We use here a notation where $\{AB\}$ indicates the symmetric traceless part of $AB$. The cases  $p = 0$ ($p =D-2$) are special with only one longitudinal  (transverse) direction. In those cases, the first (last) spin connection component vanishes. Note that the number of independent spin connection components is equal to $D(p+1)(D-p-1)/2$, which indeed agrees with the dimension of the kernel of the Spencer differential $\partial$ in $p$-brane Galilean geometry:
\begin{align} \label{eq:indepspinconndim}
  \#(\textrm{independent spin connection components}) = \dim \ker \partial = \frac12 D (p+1) (D-p-1) \,.
\end{align}

From the equations \eqref{decompositions} we also see that there are certain components of $\hat{T}_{\mu\nu}{}^A$ that do not contain any spin connection component. These are a manifestation of the non-triviality of $\coker \partial$ and their corresponding torsion tensor components $T_{\mu\nu}{}^A = \hat{T}_{\mu\nu}{}^A + 2 \partial_{[\mu} \tau_{\nu]}{}^A$ are the \textbf{intrinsic torsion} components of Section \ref{sec:without}. Setting (some of the) intrinsic torsion components to zero leads to geometric constraints and it is these different geometric constraints that lead to a classification of the possible $p$-brane Galilean geometries. Using \eqref{decompositions}, we find the following intrinsic torsion components:
\begin{align}
  \label{eq:intrinsictorsionSNC}
\hskip -1.5truecm \textrm{intrinsic torsion components}\, (\leftrightarrow \coker \partial):\hskip 1truecm T_a{}^{\{AB\}}\,,\,\, T_a{}^A{}_A \hskip .3truecm \textrm{and} \hskip .3truecm  T_{ab}{}^A \,.
\end{align}
Note that there are no $\hat{E}$-components that give rise to intrinsic torsion.\,\footnote{This is related to the fact that according to \eqref{eq:torsionVielbSNC22} these $\hat{E}$-components contain {\it two} different spin connection components which, as it turns out,  cannot be projected away both at the same time. Note that it can also be seen directly from Equation \eqref{eq:im-d} that every component of torsion that lies in $\Hom(\wedge^2 \V,W)$ (including all the $\hat{E}$-components) is in the image of $\partial$.} The number of intrinsic torsion components is given by the dimension of $\coker \partial$:
\begin{align} 
  \#(\textrm{intrinsic torsion components}) = \dim\coker \partial = \frac12 D (p+1) (D-p-1) \,.
\end{align}
Note that this equals the number \eqref{eq:indepspinconndim} of independent spin connection components. Furthermore, the intrinsic torsion components \eqref{eq:intrinsictorsionSNC} fall in the same representations of the structure group as the independent spin connection components \eqref{independent}. This reflects the fact that $\ker \partial$ and $\coker \partial$ are isomorphic, as was shown in Section \ref{sec:without}.  

The remaining $\hat{T}$- and $\hat{E}$-components all contain a spin connection component. Their number is given by the dimension of $\im \partial$. In the (super-)gravity literature, the equations of \eqref{eq:torsionVielbSNC12} and \eqref{eq:torsionVielbSNC22} that correspond to these $\hat{T}$- and $\hat{E}$-components are often called \textbf{conventional constraints}. They do not lead to constraints on the geometry but can instead be used to solve some of the spin connection components in terms of the Vielbein fields and their derivatives.
In the case at hand we find the following conventional $\hat{T}$- and $\hat{E}$-components:
\begin{align}
\label{conventional}
\hskip -3truecm \textrm{ conventional $\hat{T}$- and $\hat{E}$-components} \ (\leftrightarrow \im  \partial):\hskip .5truecm \hat{T}_a{}^{[AB]}\,,\,\,  \hat{T}_{AB}{}^C \,,\,\,   \hat{E}_{\mu\nu}{}^a\,.
\end{align}
It should be emphasized that the (combinations of) spin connection components given in Equation~\eqref{dependent} are dependent only if we impose the maximum set of conventional constraints (corresponding to the components given in \eqref{conventional}). Later, in Section \ref{sec:galgrav}, we will encounter an example where we only apply those conventional constraints that follow as equations of motion of a first-order action. In that case we will not impose the maximum set of conventional constraints and consequently we cannot solve for all the spin connection components given in Equation~\eqref{dependent}. Assuming that the maximum number of conventional constraints is imposed, we find the following solutions for the spin connection components: 
\begin{align}
    \omega_{C,AB}(\tau,T) &= \hat{T}_{C[A,B]} - \frac12 \hat{T}_{AB,C}\,,
    &\omega_{a,AB}(\tau, e,T) &= \hat{T}_{a[A,B]}\,,\label{omega1}\\[.1truecm]
    \omega_{c,ab}(e,E) &= \hat{E}_{c[a,b]} - \frac{1}{2} \hat{E}_{ab,c}\,,
    &\omega_{[A|,a|B]}(\tau, e, E) &= -\frac{1}{2} \hat{E}_{AB, a}\,,\label{omega2}\\[.1truecm]
    (\omega_{A,ab} + \omega_{[a,b]A})(\tau, e, E) &= \hat{E}_{A[a,b]}\,,
    &\omega_{(a,b)A}(\tau, e, E) &= \hat{E}_{A(a,b)}\,,\label{omega3}
\end{align}
where $\hat{T}_{\mu\nu}{}^A$ and $\hat{E}_{\mu\nu}{}^a$ should be replaced by their expressions given in \eqref{eq:defhatThatE}.

We now continue with the classification of the Galilean $p$-brane geometries. We will only classify the representations of $\coker \partial$,
 i.e.~the intrinsic torsion tensors, since these can give rise to geometric constraints.\,\footnote{A classification of the other torsion tensor components in the case of string Galilei geometry has been given in \cite{Bergshoeff:2022fzb}.}  The different intrinsic torsion  components given in Equation~\eqref{eq:intrinsictorsionSNC} transform under longitudinal Lorentz transformations, transversal spatial rotations and $p$-brane Galilean boosts. Under  Galilean boosts, some components of the intrinsic torsion tensors transform to other components, and hence, those torsion tensors cannot be set to zero independently from other torsion components. The way that these boost transformations act on the torsion components
are displayed in Figure \ref{fig:galilp2}.\,\footnote{Note that when compared to the Hasse diagram of $G$-submodules presented in Section \ref{sec:without}  (see Figure~\ref{eq:coker-d-as-g-module}), the arrows that indicate the way that boosts act are oriented the opposite way. This is due to the fact that in mathematics, for a vector $v = \partial_\mu v^\mu$ it is conventional to consider the change in the \textit{basis} $\partial_\mu$, while in physics one usually considers the change of the coefficients $v^\mu$.}
Since some of the intrinsic torsion components are identically vanishing for particles ($p=0$) or domain walls ($p=D-2$), we will discuss these two special cases separately, see Figures \ref{fig:galilp4} and \ref{fig:DW} below.

\begin{figure}[ht]
    \centering
    \includegraphics[width=.45\textwidth]{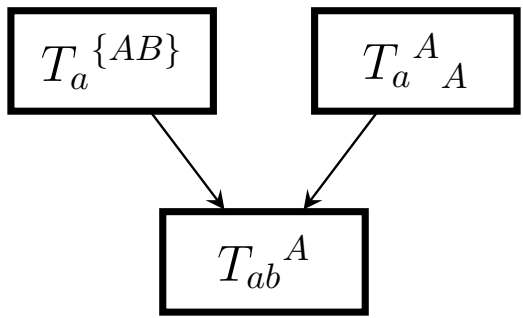}
  \caption{This figure indicates the non-vanishing intrinsic torsion components for $p\ne 0$ and $p \ne D-2$. The arrow indicates the direction in which the $p$-brane Galilean boost transformations act. For instance, the boost transformation of $T_a{}^{\{AB\}}$ gives $T_{ab}{}^A$, but not the other way around.}
  \label{fig:galilp2}
\end{figure}

Figure \ref{fig:galilp2} shows that, besides generic intrinsic torsion and zero intrinsic torsion, one may consider  three boost-invariant sets of constraints whose geometric interpretation are as follows:
\vskip .2truecm

\noindent ${\bf T_{ab}{}^A=0}$.\ \ According to the Frobenius theorem, this constraint implies  that the foliation by transverse submanifolds of dimension $D-p-1$ is integrable, i.e.~given two transverse vectors ${\bf X}$ and ${\bf Y}$ with $\tau_\mu{}^A X^\mu=\tau_\mu{}^A Y^\mu=0$,  the commutator of these two vector fields is also transverse. The proof follows from the fact the longitudinal part of the commutator is given by
\begin{equation}
\tau_\mu{}^A \bigg(X^\lambda(\partial_\lambda Y^\mu) -(\partial_\lambda X^\mu) Y^\lambda \bigg) =
\tau_\mu{}^A\bigg(X^\lambda (\nabla_\lambda Y^\mu) - (\nabla_\lambda X^\mu) Y^\lambda + 2 X^\lambda Y^\rho\Gamma^\mu_{[\lambda\rho]}  \bigg) \,.
\end{equation}
Using the Vielbein postulate \eqref{eq:VielbpostSNC}, one can pull the $\tau_\mu{}^A$ in the first two terms at the right-hand-side  through the covariant derivative after which these two terms vanish upon using the fact that ${\bf X}$ and ${\bf Y}$ are transverse vectors. In the last term we use the fact that transverse vectors only have transverse components, i.e.~$X^\mu = e_a{}^\mu X^a$ and $Y^\mu = e_a{}^\mu{} Y^a$, after which this term becomes proportional to the intrinsic torsion component $T_{ab}{}^A$ which we assume is zero. \footnote{We note that the arguments used to conclude that the different terms are zero coincide with those in Lemma \ref{lem:nabla-preserves-E} and Proposition \ref{prop:E-involutive}. }

\vskip .2truecm

\noindent ${\bf  T_a{}^{\{AB\}}} = {\bf T_{ab}{}^A=0} $.\ \  It is convenient to first consider the constraint $ T_a{}^{(AB)}=0$ without taking the traceless part:
\begin{equation}\label{combined}
T_a{}^{(AB)}= e_a{}^\mu \tau^{(A|\nu|}\big(\partial_\mu\tau_\nu{}^{B)} - \partial_\nu\tau_\mu{}^{B)}\big) = 0\,.
\end{equation}
Using the orthogonality condition $e_a{}^\mu \tau_\mu{}^A=0$ one can show that this equation is equivalent to
\begin{equation}
K_{ABa}= \tau_A{}^\mu \tau_B{}^\nu \, K_{\mu\nu a}=0\,,
\end{equation}
where $K_{\mu\nu a}=K_{\nu\mu a}$ is the Lie derivative of $\tau_{\mu\nu}$ with respect to $e_a{}^\lambda$:
\begin{equation}
K_{\mu\nu a} = e_a{}^\lambda\partial_\lambda \tau_{\mu\nu} + 2 \big(\partial_{(\mu} e_a{}^\lambda\big)\, \tau_{\lambda \nu)}\,.
\end{equation}
Using boost symmetry this leads to the following boost-invariant set of constraints\,\footnote{To show that $K_{bBa}=0$, one needs to use the boost transformation of the constraint $T_a{}^{(AB)}=0$, i.e.~$T_{ab}{}^A=0$.}
\begin{equation}
K_{ABa} = K_{bBa} = K_{bca}=0
\end{equation}
 and hence that the full Lie derivative is zero:
 \begin{equation}
 K_{\mu\nu a}=0\,.
 \end{equation}

We now consider the traceless part of
Equation~\eqref{combined}:
\begin{equation}
T_a{}^{(AB)} - \frac{1}{p+1}\eta^{AB}\eta_{CD}T_a{}^{(CD)}= 0\,.
\end{equation}
This leads to an extra term in the above derivation giving the following modified constraint
\begin{equation}
K_{\mu\nu a} =  \frac{1}{p+1}\big(\tau^{\rho\sigma}K_{\rho\sigma a}\big) \tau_{\mu\nu}\,.
\end{equation}
This shows that $e_a{}^\mu$ are conformal Killing vectors with respect to the longitudinal metric $\tau_{\mu\nu}$.

\vskip .2truecm

\noindent ${\bf T_a{}^A{}_A} = {\bf T_{ab}{}^A=0}$. To clarify the geometric implication of these constraints we first define the worldvolume $(p+1)$-form
\begin{equation}
\Omega = \epsilon_{A_1\cdots A_{p+1}}\tau_{\mu_1}{}^{A_1}\cdots \tau_{\mu_{p+1}}{}^{A_{p+1}}\,.
\end{equation}
We will now show that this worldvolume form is closed, i.e.
\begin{equation}\label{absolute}
d\Omega=0\,.
\end{equation}
Taking the exterior derivative of $\Omega$,  one finds an expression that involves the curl
$\tau_{\rho \mu_1}{}^{A_1}\equiv 2 \partial_{[\rho} \tau_{\mu_1]}{}^{A_1}$. Due to the intrinsic torsion constraint $T_{ab}{}^A=0$ and the fact that the non-intrinsic torsion component $T_{AB}{}^C$ has been set to zero to solve for some of the spin connection components, the only nonzero component is given by
\begin{equation}\label{traceless}
\tau_{\rho \mu_1}{}^{A_1} = e_\rho{}^a \tau_{\mu_1 B} T_a{}^{\{BA_1\}}\,.
\end{equation}
Writing all the flat indices on the longitudinal Vielbeine as an epsilon symbol, one now has two epsilon symbols that combine into the longitudinal Minkowski metric  as follows:
\begin{equation}
\epsilon_{A_1 A_2\cdots A_{p+1}}\epsilon_{B}{}^{A_2\cdots A_{p+1}} \sim \eta_{A_1B}\,.
\end{equation}
The resulting Minkowski metric $\eta_{A_1 B}$ projects out the trace of the traceless intrinsic torsion components in \eqref{traceless}, which is zero and hence we find that $d\Omega=0$.

Equation \eqref{absolute} is the natural generalization of the notion of absolute time for a particle to the notion of an absolute worldvolume for a $p$-brane: independent of how a $p$-brane transgresses from one transverse submanifold to another transverse submanifold the worldvolume swept out by this $p$-brane is the same. More details about this for the case of strings, i.e. $p=1$, leading to the notion of an absolute worldsheet, can be found in \cite{Bergshoeff:2022fzb}.
\vskip .3truecm

The above geometrical constraints then lead, for $p\ne 0$ and $p\ne D-2$,  to the following five distinct  $p$-brane Galilean geometries.

\begin{equation*}
\textrm{\bf Five  $p$-brane  Galilean Geometries}
\end{equation*}

\begin{description}
\item{1.} The intrinsic torsion is unconstrained.

\item{2.} ${\bf T_{ab}{}^A=0}$: the foliation by transverse submanifolds is integrable.

\item{3.} ${\bf T_a{}^{\{AB\}}= T_{ab}{}^A=0}$: the spacetime manifold is foliated by transverse submanifolds  and the vectors $e_a{}^\mu$ are conformal Killing vectors with respect to the longitudinal metric.

\item {4.} ${\bf T_a{}^A{}_A = T_{ab}{}^A=0}$: the foliation is integrable and the worldvolume is absolute.

\item{5.} ${\bf T}_{\mu\nu}{\bf {}^A=0}$: the foliation is integrable, the vectors $e_a{}^\mu$ are conformal Killing vectors with respect to the longitudinal metric and the worldvolume is absolute.
\end{description}

The cases of particles ($p=0$) and domain walls ($p=D-2$) are  special.\,\footnote{Note that for $D=2$ these two special cases coincide.} In the particle case  we find that  there are three distinct geometries (see  Figure \eqref{fig:galilp4}).

\begin{figure}[ht]
    \centering
    \includegraphics[width=.2\textwidth]{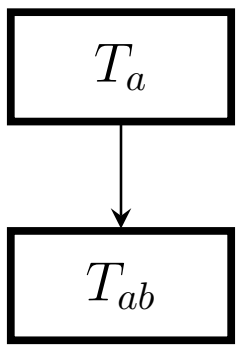}
    \caption{This figure indicates the non-zero intrinsic torsion components for $p=0$ where, with $A=0$, we have written $T_a = T_a{}^0{}_0$ and $T_{ab} = T_{ab}{}^0$.}\label{fig:galilp4}
\end{figure}

\begin{equation*}
\textrm{\bf Three particle Galilean Geometries}
\end{equation*}

\begin{description}
\item{1.} The intrinsic torsion is unconstrained.

\item{2.} ${\bf T_{ab}=0}$: the foliation by transverse submanifolds is integrable. A torsion with this constraint is called \emph{twistless torsional}. Alternatively, the foliation is called \emph{hypersurface orthogonal}. This geometry occurs in Lifshitz holography \cite{Christensen:2013lma}.

\item{3.} ${\bf T}_{\mu\nu}{\bf =0}$: the foliation is integrable and time is absolute.
\end{description}

On the other hand, in the domain wall case there are four distinct geometries (see Figure \eqref{fig:DW}).
\vskip .2truecm

\begin{figure}[ht]
    \centering
    \includegraphics[width=0.45\textwidth]{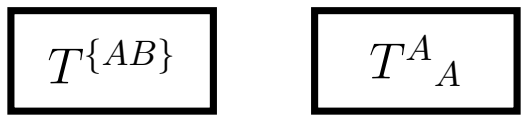}
  \caption{This figure indicates the non-zero intrinsic torsion components for $p=D-2$ where, with $a=z$, we have written $T^{\{AB\}} = T_z{}^{\{AB\}}$ and $T^A{}_A = T_z{}^A{}_A$.}\label{fig:DW}
\end{figure}

\begin{equation*}
\textrm{\bf Four domain wall  Galilean Geometries}
\end{equation*}

\begin{description}
\item{1.} The intrinsic torsion is unconstrained.

\item{2.} ${\bf T{}^{\{AB\}} =0}$: the vector $e_z{}^\mu = e^\mu$ is a  conformal Killing vector with respect to the longitudinal metric.

\item{3.} ${\bf T{}^A{}_A=0}$: the worldvolume is absolute.

\item{4.} ${\bf T{}^{(AB)}=0}$: the vector $e_z{}^\mu = e^\mu$ is a  conformal Killing vector with respect to the longitudinal metric and the worldvolume is absolute.
\end{description}

This finishes our classification of the $p$-brane Galilean  geometries. In the next section we will show how some of these geometries arise when taking special limits of general relativity.


\section{\texorpdfstring{$p$}{p}-brane Galilean Gravity}
\label{sec:galgrav}

In this section we will generalize the `particle' limit of general relativity that we considered in \cite{Bergshoeff:2017btm} to a  so-called `$p$-brane' limit such that we end up with a gravity theory with an underlying $p$-brane Galilean  geometry. This will lead to a  `$p$-brane Galilei gravity' model that for $p=0$ reduces to the Galilei gravity theory constructed in \cite{Bergshoeff:2017btm}. We will show how for general $p$ we will get a gravity realization  of several of the geometries that we found in the previous section.

Before we start we would like to emphasize that the results obtained below can also be applied to construct $p$-brane Carroll gravity theories. This is due to a formal duality between Galilei geometry and Carroll geometry from a brane point of view \cite{Barducci:2018wuj,Bergshoeff:2020xhv}. More explicitly, we will make use of the following duality:
\begin{equation}\label{duality1}
p-\textrm{brane\ Galilean geometry} \hskip .5truecm \longleftrightarrow\hskip .5truecm (D-p-2)-\textrm{brane\ Carroll\ geometry}\,.
\end{equation}
Keeping the convention that the time direction is always one of the longitudinal directions characterized by the indices $A$ and that the transverse directions are labeled by the index $a$, the duality \eqref{duality1} between Galilei and Carroll geometry implies that the corresponding Vielbein fields  are related by the following formal interchange:
\begin{equation}
(\tau_\mu{}^A\,, e_\mu{}^a)\hskip .5truecm \longleftrightarrow \hskip .5truecm (e_\mu{}^b\,, \tau_\mu{}^B)\,,
\end{equation}
where range $A$ = range $b$ and range $a$ = range $B$. A special example of this duality is that domain wall Galilean gravity  is dual to particle Carroll gravity via the formal interchange
\begin{equation}\label{duality3}
{\rm domain\ wall\ Galilean\ gravity\ with}\ (\tau_\mu{}^A, e_\mu) \leftrightarrow {\rm particle\ Carroll\ gravity\ with}\ (e_\mu{}^a\,,\tau_\mu)
\end{equation}
with range $A$ = range $a = D-1$.

A special role in the discussion below is played by branes with one or two transverse directions. To distinguish them from the generic case, we will call these special branes domain walls and defect branes, respectively. We will indicate the generic case with three or more transverse directions, i.e.~with $0 \le p \le D-4$, as $p$-branes.

To move towards $p\,$-brane Galilei gravity, we start with the $D$-dimensional Einstein-Hilbert action
\begin{align}\label{EH}
S_{\rm EH} & = - \frac{1}{16 \pi G_N} \int E E_{\hat{A}}{}^\mu E_{\hat{B}}{}^\nu{} R_{\mu\nu}{}^{\hat{A}\hat{B}}(\Omega)
\end{align}
in a first-order formulation. Here, $G_N$ is Newton's constant, $E_\mu{}^{\hat A}\ (\mu,\hat A = 0,1,\cdots,D-1)$ is the relativistic Vielbein and we have defined the inverse Vielbein $E_{\hat{A}}{}^\mu$ by
\begin{equation}
E_\mu{}^{\hat A} E_{\hat{B}}{}^\mu = \delta^{\hat A}_{\hat B}\,,\hskip 2truecm E_\mu{}^{\hat A} E_{\hat{B}}{}^\nu = \delta_\mu^\nu\,.
\end{equation}
Furthermore, the curvature $R_{\mu\nu}{}^{\hat{A}\hat{B}}(\Omega)$ is defined in terms of the relativistic spin connection field $\Omega_\mu{}^{\hat A\hat B}$ as follows:
\begin{equation}
R_{\mu\nu}{}^{\hat{A}\hat{B}}(\Omega) = 2\partial_{[\mu}\Omega_{\nu]}{}^{\hat A\hat B} -2 \Omega_{[\mu}{}^{\hat B\hat C} \Omega_{\nu]}{}^{\hat A}{}_{\hat C}\,.
\end{equation}

To define the $p$-brane Galilean limit we decompose the index $\hat A$ into a longitudinal index $A\ (A=0,1,\cdots ,p)$ and a transverse index $a\ (a = p+1, \cdots ,D-1)$, i.e. $\hat A = (A,a)$. We then redefine the Vielbeine and spin connection fields with a dimensionless contraction parameter $\omega$ as follows:
\begin{equation}
    \begin{aligned}\label{redef1}
        E_\mu{}^A &= \omega\, \tau_\mu{}^A \, \hspace{.75cm} &&\Omega_\mu{}^{Aa} \;=\; \omega^{-1} \omega_\mu{}^{Aa} \,\hspace{.75cm} &&E_\mu{}^a = e_\mu{}^a \,\\ 
        \Omega_\mu{}^{ab} \;&=\; \omega_\mu{}^{ab} \, &&\Omega_\mu{}^{AB} \;=\; \omega_\mu{}^{AB}\,.
    \end{aligned}
\end{equation}
Performing these redefinitions in the action \eqref{EH} and redefining Newton's constant in such a way that the leading power in $\omega$ is $\omega^0$, we obtain, for general values of $p$, three separate curvature terms $R_{\mu\nu}{}^{AB}(M), R_{\mu\nu}{}^{Aa}(G)$ and $R_{\mu\nu}{}^{ab}(J)$ that scale with relative powers $\omega^{-2}, \omega^{-2}$ and $\omega^0$, respectively. An exception is formed by  domain walls, i.e.~$p=D-2$, in which case the $R_{\mu\nu}{}^{ab}(J)$ curvature vanishes and, upon making an adapted redefinition of Newton's constant, we end up with two curvature terms: $R_{\mu\nu}{}^{AB}(M)$ and $R_{\mu\nu}{}^{Aa}(G)$, which scale like $\omega^0$. We will discuss the generic case of $p$-branes below first and the special cases of defect branes and domain walls next.
\vskip .3truecm

{
\noindent ${\bf p}${\bf-branes\ $(0\le p \le D-4)$.}\
In this case, after taking the limit $\omega \to \infty$,  the action takes the form
\begin{align}\label{NRaction1}
S_{p{\rm -brane}} & = - \frac{1}{16 \pi G_{\rm NL}} \int e\, e_a{}^\mu e_b{}^\nu R_{\mu\nu}{}^{ab}(J)\,,
\end{align}
where, before taking the limit, we have redefined $G_{\rm N} = \omega^{p+1} G_{\rm NL}$ and  $e=\det{(\tau_\mu{}^A\,, e_\mu{}^a)}$. The non-relativistic  curvature $R_{\mu\nu}{}^{ab}(J)$ is given by
\begin{equation}\label{R(J)}
R_{\mu\nu}{}^{ab}(J) = 2\partial_{[\mu}\omega_{\nu]}{}^{ab} - 2\omega_{[\mu}{}^{ac}\omega_{\nu]c}{}^b\,.
\end{equation}
One may verify that the action \eqref{NRaction1} is invariant under the following emerging local anisotropic scale transformations:
\begin{equation}\label{scale}
\delta \tau_\mu{}^A = -\frac{D-p-3}{p+1}\lambda(x)\,\tau_\mu{}^A\,,\hskip 2truecm \delta e_\mu{}^a = \lambda(x)\, e_\mu{}^a\,.
\end{equation}
This implies that, compared with the relativistic case, one field is lacking. Consequently, one should consider the action \eqref{NRaction1} as a pseudo-action that reproduces all non-relativistic equations of motion except for one.

Not all components of the spin connection $\omega_\mu{}^{ab}$ are determined by the equations of motion corresponding to this action. Any component that does not occur in the quadratic spin connection term given in the curvature \eqref{R(J)} becomes a Lagrange multiplier imposing a geometric constraint.  To identify these components, we decompose the curved index of the spin connection as follows:
\begin{equation}
\omega_\mu{}^{ab} = \tau_\mu{}^C \omega_C{}^{ab} + e_\mu{}^c \omega_c{}^{ab}\,.
\end{equation}
Substituting this decomposition into the action \eqref{NRaction1}, it is easy to see that the only surviving term quadratic in $\omega_\mu{}^{ab}$ is a term quadratic in the  $\omega_c{}^{ab}$ components. The other possible terms involving $\omega_A{}^{ab}$ are projected out by the inverse transverse Vielbeine in front of the curvature term in \eqref{NRaction1}. This $\omega_A{}^{ab}$ component occurs only linearly in the action and  has become a Lagrange multiplier imposing the following  geometric constraint:
\begin{equation}
T_{ab}{}^A =0\,.
\end{equation}
This is case 2 in the  classification of $p$-brane Galilean geometries given in Section \ref{sec:TNC2}.
\vskip .3truecm

\noindent {\bf{defect branes} $(p=D-3)$.}\  This case is special, because the group of transverse rotations is abelian. There are therefore no quadratic spin connection terms in the first-order Einstein-Hilbert action \eqref{EH} and the action \eqref{EH} can be written as
\begin{align}\label{defect}
S_{\rm defect \ brane} & = - \frac{1}{16 \pi G_N}\int e\Big( T_{ab}{}^C\omega_C{}^{ab} -2\, T_{aC}{}^C\omega_b{}^{ab}\Big)\,.
\end{align}
We see that all transverse rotation components of the spin connection are independent Lagrange multipliers leading to the following set of geometric constraints:
\begin{equation}
T_{ab}{}^A = T_{aA}{}^A =0\,.
\end{equation}
This corresponds to case 4 in the classification of $p$-brane Galilean geometries given in Section \ref{sec:TNC2}.  Note that in this case we have $D-p-3=0$ and hence only the transverse Vielbeine $e_\mu{}^a$ transform under the anisotropic scale transformations \eqref{scale}.
A special case amongst the defect branes is the 3$D$ particle in which case the above geometric constraints are equivalent to setting all torsion components to zero, i.e.,
\begin{equation}
T_{\mu\nu}=0\,.
\end{equation}
This is case 3 in the  classification of particle Galilean geometries given in Section \ref{sec:TNC2}.  This special case corresponds to the 3$D$ Chern-Simons gravity action discussed in \cite{Bergshoeff:2017btm}. Using the identity $e\epsilon^{ab}e_a{}^\mu e_b{}^\nu = 2\epsilon^{\mu\nu\rho}\tau_\rho$, one can show that the particle defect brane action is proportional to the following Chern-Simons action:
\begin{align}\label{defect2}
S_{\rm Chern-Simons} & = - \frac{1}{16 \pi G_N}\int \epsilon^{\mu\nu\rho} \tau_\mu\partial_\nu\omega_\rho\,,
\end{align}\noeqref{defect2}
where we have written $ \omega_\mu{}^{ab} =\epsilon^{ab}\omega_\mu$. Note that not all gauge fields corresponding to the Galilei algebra occur in this action. This is due to the fact that the Galilei algebra has  a degenerate invariant bilinear form.

\vskip .3truecm

\noindent {\bf{domain walls} $(p=D-2)$.}\ This case is special because there is only one transverse direction $a=z$ and hence $\omega_\mu{}^{ab} =0$. Writing $e_\mu\coloneqq e_\mu{}^z$ and $\omega_\mu{}^A\coloneqq \omega_\mu{}^{zA}$ we obtain,  after taking the limit $\omega \to \infty$,  the following action:
\begin{align}\label{NRaction2}
S_{\rm domain \ wall} & = - \frac{1}{16 \pi G_{\rm NL}} \int e\, \biggl(
\tau_A{}^\mu \tau_B{}^\nu R_{\mu\nu}{}^{AB}(M) + 2  e^\mu \tau_A{}^\nu  R_{\mu\nu}{}^{A}(G)\biggr)\,,
\end{align}
where, before taking the limit, we have redefined $G_{\rm N} = \omega^{p-1} G_{\rm NL}$ and where the non-relativistic  curvatures are given by
\begin{align}
R_{\mu\nu}{}^{AB}(M) &= 2\,\partial_{[\mu}\omega_{\nu]}{}^{AB}  -2\omega_{[\mu}{}^{AC}\omega_{\nu]C}{}^B\,,\\
R_{\mu\nu}{}^{A}(G) &=  2\,\partial_{[\mu}\omega_{\nu]}{}^{A} - 2\omega_{[\mu}{}^{AC}\omega_{\nu]C}\,.
\end{align}
Note that the action \eqref{NRaction2}  does not exhibit a local scale symmetry like it did for the general $p$-brane case. This implies that  this action is a true action yielding all non-relativistic equations of motion.

To see which flat spin connection components become Lagrange multipliers, we first make the following decompositions:\footnote{We use a notation where we only once omit the single spatial transverse index $z$ to indicate a boost spin connection component, i.e.~$\omega_\mu{}^{zA} = \omega_\mu{}^A$. If we further decompose the curved index $\mu$ we keep writing the $z$ index. In this way we have the simple rule that every spin connection component with three indices refers to a transverse rotation component of the spin connection, whereas each connection with only two indices refers to a boost spin connection component. \label{notation2}}
\begin{equation}\label{decomp}
\omega_\mu{}^{AB} = \tau_\mu{}^C \omega_C{}^{AB} + e_\mu \omega_z{}^{AB}\,,\hskip 1,5truecm \omega_\mu{}^A = \tau_\mu{}^C\omega_C{}^A + e_\mu\omega_z{}^A\,.
\end{equation}
Substituting these decompositions into the action \eqref{NRaction2} we see that the component $\omega_C{}^A$ can be the Lagrange multiplier for a geometric constraint provided the contraction
\begin{equation}
\omega_z{}^{AB}\omega_{AB}
\end{equation}
vanishes. Decomposing $\omega^{AB}$ further as
\begin{equation}
\omega^{AB} = \omega^{[AB]} + \omega^{\{AB\}} + \frac{1}{p+1}\eta^{AB}\omega,
\end{equation}
we conclude that the  components $\omega^{\{AB\}}$and $\omega$ are Lagrange multipliers. From the action \eqref{NRaction2}, we derive that they impose the  geometric constraints
\begin{equation}
T{}^{\{AB\}}= T{}^A{}_A=0\,.
\end{equation}
Thus the total symmetric intrinsic torsion $T^{(AB)}$ is zero, corresponding to case 4 in the classification of domain wall Galilean geometries given in Section \ref{sec:TNC2}.

The second-order formulation of the above $p$-brane Galilean gravity theories has been given in \cite{Bergshoeff:2017btm} for particle Galilei, i.e.~for $p=0$, and for particle Carroll, i.e.~upon using the duality \eqref{duality3}, for domain wall Galilei gravity. The particle case $p=0$ can easily be extended to the general $p$-brane case with $0 \le p \le D-4$. We first discuss this general case. Among the equations of motion corresponding to the action \eqref{NRaction1},  there are two conventional constraints:\,\footnote{The first conventional constraint is invariant under the anisotropic scale transformations \eqref{scale} by itself, the second term in the second conventional constraint acts like  the transverse components of a dependent dilatation gauge field.}
\begin{equation}\label{conventional1}
E_{ab}{}^c - \frac{2}{D-p-2} E_{[a|d|}{}^d \delta_{b]}^c = 0\,, \hskip 1truecm
E_{ab}{}^b + \frac{D-p-2}{D-p-3}\, T_{aC}{}^C =0\,,
\end{equation}
where $E_{\mu\nu}{}^a$ is defined by the right-hand-side of Equation~\eqref{eq:torsionVielbSNC2}. These conventional constraints can be used to solve for the transverse rotation part $\omega_\mu{}^{ab}$ of the spin connection, except for the component $\omega_C{}^{ab}$.
Due to the transverse projections of $E_{\mu\nu}{}^a$, this spin connection component does not occur in the above conventional constraints.
The solution for the components $\omega_c{}^{ab}$ is dilatation-covariant and can be written as \cite{Bergshoeff:2017btm}
\begin{align} \label{Dcovariant}
  & \omega_c{}^{ab}(e,T) = \omega_c{}^{ab}(e) - \frac{2}{D - p - 3} \,\delta_c^{[a} T^{b]}{}_C{}^C\,, \nonumber \\
  & \text{with} \ \ \ \omega_c{}^{ab}(e) \equiv e^{a \mu} e^{b \nu} \partial_{[\mu} e_{\nu] c} - 2 e_c{}^{\mu} e^{[a|\nu|} \partial_{[\mu} e_{\nu]}{}^{b]} \,.
\end{align}
We note that, after solving for $\omega_c{}^{ab}$,  the dependent spin connection components $\omega_c{}^{ab}(e,T)$ obtain extra terms in their boost transformation rule, since the conventional constraints \eqref{conventional1} are not invariant under boosts. Using the conventional constraints \eqref{conventional1}, one can rewrite the first-order action \eqref{NRaction1} in the following second-order form:
\begin{align}\label{NRaction1secondorder}
\hskip -.1truecm S^{\rm 2nd-order}_{p{\rm -brane}} &= - \frac{1}{16 \pi G_{\rm NL}} \int e\,\Big( \JR{\omega_a{}^{bc}(e,T)\,\omega_b{}^a{}_c (e,T)- \omega_a{}^{ac}(e,T)\,\omega_b{}^b{}_c (e,T)} +  T_{ab}{}^C\,\omega_C{}^{ab}\Big),
\end{align}
where we have explicitly indicated which are the dependent and which are the independent spin connection fields.

The case of defect branes is special in the sense that all components of the spin connection fields in the action \eqref{defect} occur as independent Lagrange multipliers.  Since there are no spin connection components to be solved for, one cannot consider a second-order formulation in this case.

In the case of domain walls, the conventional constraints that follow from the domain wall Galilean gravity action \eqref{NRaction2} can be found from the particle Carroll gravity case discussed in \cite{Bergshoeff:2017btm} by using the duality \eqref{duality3} between these two gravity theories. In this way one  finds the following conventional constraints:
\begin{equation}\label{conventional2}
T_z{}^{[AB]} = T_{AB}{}^C = E_{\mu\nu}=0\,.
\end{equation}
These equations can be used to solve for the longitudinal rotation components $\omega_\mu{}^{AB}$ of the spin connection and for
the boost spin connection components $\omega_\mu{}^A$, except for the component $\omega^{(AB)}$. The explicit solutions are given by 
 \begin{equation}
    \begin{aligned}\label{dependent1}
    \omega_A{}^{BC}(\tau) &= -\tau_A{}^{[BC]} + \tfrac{1}{2}\, \tau^{BC,A}  \,\\[.1truecm]
    \omega_z{}^{AB}(\tau,e) &= -e^\mu\tau_\mu{}^{[AB]}\,\\[.1truecm]
    \omega^{[AB]}(\tau,e) &= \tau^{A \mu} \tau^{B \nu} \partial_{[\mu}\, e_{\nu]}\,\\[.1truecm]
    \omega_z{}^A (\tau,e)&= 2 e^\mu \tau^{A \nu}\partial_{[\mu}\, e_{\nu]}\,,
    \end{aligned}
\end{equation}
with $\tau_{\mu\nu}{}^A \equiv 2 \partial_{[\mu} \tau_{\nu]}{}^A$. Note that the above dependent spin connection components acquire extra terms under boost transformations since the conventional constraints \eqref{conventional2} are not invariant under boosts. An exception are the (unprojected) longitudinal rotation components of the spin connection $\omega_\mu{}^{AB}(\tau)$, since they are independent of $e_\mu$ and therefore are invariant under boosts. Using the conventional constraints \eqref{conventional2}, one can rewrite the first-order action \eqref{NRaction2} in the following second-order form:
\begin{equation}
    \begin{aligned}\label{NRaction2ndorder}
        S^{\rm 2nd-order}_{{\rm domain \ wall}} & = - \frac{1}{16 \pi G_{\rm NL}} \int e\, \Big(
        \omega_A{}^B{}_C(\tau)\,\omega_B{}^{AC}(\tau) -\omega_A{}^{AC}(\tau)\,\omega_B{}^B{}_C(\tau)  \\[.1truecm]
        &\hskip 2.5truecm +2\omega_{z,AB}(\tau,e)\,\omega^{[AB]}(\tau,e) \JR{- 2 \omega_A{}^{AC}(\tau)\,\omega_{zC}(\tau,e)} \\[.1truecm]
         &\hskip 2.5truecm -2T_C{}^C\omega_A{}^A +2 T^{(AB)}\omega_{(AB)}\Big)\,,
\end{aligned}
\end{equation}
where we have explicitly indicated which spin connections components are dependent and which are independent.

It is instructive to compare the above second-order actions \eqref{NRaction1secondorder} and \eqref{NRaction2ndorder}  with the actions that one obtains by directly taking the limit of  general relativity in a second-order formulation.
This should answer the following puzzle:
\vskip .2truecm}
{\it  How do we obtain independent spin connection components acting as Lagrange multipliers if we take the non-relativistic  limit of general relativity in
a second-order formulation where all spin connection components have already been solved for?}
\vskip .2truecm

The answer to this puzzle lies in the fact that when taking a limit we need to do a different calculation than before, since in a second-order formulation the spin connection components are not redefined according to Equation~\eqref{redef1}, but according to their dependent expressions (see below). At leading order, both in the general $p$-brane case and in the special defect brane and domain wall cases, a different invariant is found that is given by  the square of a boost-invariant intrinsic torsion tensor component in each case.
At this point one can do one of three things:
\vskip .2truecm

\noindent 1.\ One can take these new invariants as a limit of general relativity in the second-order formulation by redefining Newton's constant such that the leading term scales as $\omega^0$. Following the convention in the literature, we will call these invariants `electric' gravity theories. A noteworthy feature of these  electric invariants is that they are independent of the spin connection and correspond to geometries without any geometric constraints. Due to the absence of a spin connection, these invariants do not have a first-order formulation.
\vskip .2truecm

\noindent 2. The second option is to  first tame the leading divergence by performing a Hubbard-Stratonovich transformation. One then finds an invariant at  sub-leading order.
The Hubbard-Stratonovich transformation is based upon the fact that any quadratic divergence of the form $\omega^2 X^2$ for some $X$
can be tamed by  introducing an auxiliary field $\lambda$ and rewriting the quadratic divergence in the equivalent form
\begin{equation}\label{trick}
-\frac{1}{\omega^2}\lambda^2 -2\lambda X\,.
\end{equation}
Solving for $\lambda$ and substituting this solution back, one finds the original quadratic divergence $\omega^2 X^2$. The transformation rule of this auxiliary field, before taking the limit, follows from its solution $\lambda = -\omega^2X$.
After taking the limit $\omega \to \infty$,  $\lambda$ becomes a Lagrange multiplier imposing the constraint $X=0$. The transformation  also applies if $X$ and $\lambda$ carry  longitudinal and/or transverse indices.
We will show below that in this way one finds a gravity theory in a second-order formulation that
\begin{enumerate}[label=(\roman*)]
    \item for $p$-branes is not quite the same as the second-order formulation of $p$-brane Galilean gravity that we found above,
    \item for defect branes gives an action different from the defect brane action found above, which is not truly second-order in the sense that the dependent spin connection components can be redefined away into a Lagrange multiplier, and
    \item for domain walls gives precisely the second-order formulation of the domain wall Galilean gravity theory found above.
\end{enumerate}
\noindent 3.\ A third option that we will not discuss here is to cancel the leading divergence by adding a $(p+1)$-form field to the Einstein-Hilbert term and to redefine it such that the leading divergence is cancelled. This should lead to a Newton-Cartan version of $p$-brane Galilei gravity. 
\vskip .2truecm

We continue with options 1 and 2. Before taking the limit in a second-order formulation, it is convenient to write the Einstein-Hilbert action \eqref{EH} in the following second-order form

\begin{align}\label{EH2}
S_{\rm EH} & = - \frac{1}{16 \pi G_N} \int  E\,\Big (\Omega_{\hat A}{}^{\hat B\hat C} \Omega_{\hat B}{}^{\hat A}{}_{\hat C}   - \Omega_{\hat A}{}^{\hat A\hat C}\Omega_{\hat B}{}^{\hat B}{}_{\hat C} \Big )\,,
\end{align}
where the dependent part of the spin connection is defined by the equation
\begin{equation}
\partial_{[\mu}E_{\nu]}{}^{\hat A} - \Omega_{[\mu}{}^{\hat A}{}_{\hat B} E_{\nu]}{}^{\hat B} = 0\,,
\end{equation}
whose solution is given by
\begin{equation}
\Omega_{\hat A\hat B\hat C} = \frac{1}{2} \bar E_{\hat B\hat C,\hat A}-\bar E_{\hat A[\hat B,\hat C]} \,.
\end{equation}
Here we have used the notation where
\begin{equation}
\bar E_{\mu\nu}{}^{\hat A} = 2\partial_{[\mu}E_{\nu]}{}^{\hat A},
\end{equation}
and where the comma in $\bar E_{\hat B\hat C,\hat A}$ indicates that the first two indices are anti-symmetric.

Writing $\hat A = (A,a)$ and redefining the Vielbein fields as
\begin{equation}
E_\mu{}^A = \omega\tau_\mu{}^A\,\hskip 2truecm E_\mu{}^a = e_\mu{}^a
\end{equation}
induces  expansions  of the different spin connection components. It turns out that for $p$-branes and defect branes  only the expansion of the transverse rotation components of the spin connection contributes to the action. This expansion is given by
\begin{align}
\Omega_{Cab} &= \frac{\omega}{\IR{2}}\, T_{ab,C} +  \IR{\frac{1}{\omega}(\omega_{C,ab} + \omega_{[a,b]C})(\tau,e)} \equiv \frac{\omega}{\IR{2}}\, T_{ab,C} - \frac{2}{\omega}\tau_C{}^{\mu} e_{[a|}{}^{\nu} \partial_{[\mu} e_{\nu]|b]} \, \\[.1truecm]
\Omega_{c,ab} &=  \omega_{c,ab}(e) \equiv e_a{}^{\mu} e_b{}^{\nu} \partial_{[\mu} e_{\nu] c} - 2 e_c{}^{\mu} e_{[a|}{}^{\nu} \partial_{[\mu} e_{\nu]|b]} \,\\
\Omega_{A,Ba} &=\IR{ -T_{a(A,B)} - \frac{1}{\omega^2} \omega_{[A,|a|B]}(\tau, e)} \equiv -T_{a(A,B)} - \frac{1}{\omega^2} \tau_A{}^\mu \tau_B{}^\nu \partial_{[\mu} e_{\nu] a} \,, \\
\Omega_{a,Ab} &=\IR{ \frac{\omega}{2}T_{ab,A} - \frac{1}{\omega} \omega_{(a,b)A}}(\tau, e) \equiv \frac{\omega}{2} T_{ab,A} + \frac{2}{\omega} \tau_A{}^\mu e_{(a|}{}^\nu \partial_{[\mu} e_{\nu]|b)} \,, \\
\Omega_{C,AB} &= \IR{\frac{1}{\omega} \omega_{C,AB}(\tau)} \equiv \frac{1}{\omega} \left(\tau_A{}^\mu \tau_B{}^\nu \partial_{[\mu} \tau_{\nu] C} - 2 \tau_C{}^\mu \tau_{[A|}{}^\nu \partial_{[\mu} \tau_{\nu]|B]} \right)\\
\Omega_{a,AB} &=\IR{\omega_{a,AB}(\tau, e) + \frac{1}{\omega^2} \omega_{[A,|a| B]}(\tau, e)} \equiv - 2 e_a{}^\mu \tau_{[A|}{}^\nu \partial_{[\mu} \tau_{\nu]|B]} + \frac{1}{\omega^2} \tau_A{}^\mu \tau_B{}^\nu \partial_{[\mu} e_{\nu]a} \,,
\end{align}
where $T_{ab}{}^A$ are the intrinsic torsion components of Figure \ref{fig:galilp2}.

In the case of domain walls, we only need the expansion of the longitudinal rotation and boost components of the spin connection. For this special case, we find the following expansions:
\begin{align}
\Omega_{C,AB} &= \IR{\frac{1}{\omega} \omega_{C,AB}(\tau)} \equiv \frac{1}{\omega} \left(\tau_A{}^\mu \tau_B{}^\nu \partial_{[\mu} \tau_{\nu] C} - 2 \tau_C{}^\mu \tau_{[A|}{}^\nu \partial_{[\mu} \tau_{\nu]|B]} \right) \,\label{redef2}\\[.1truecm]
\Omega_{z,AB} &= \IR{\omega_{z,AB}(\tau,e) + \frac{1}{\omega^2} \omega_{[AB]}(\tau, e)} \equiv - 2 e^\mu \tau_{[A|}{}^\nu \partial_{[\mu} \tau_{\nu]|B]} + \frac{1}{\omega^2} \tau_A{}^\mu \tau_B{}^\nu \partial_{[\mu} e_{\nu]}\,\label{redef3}\\[.1truecm]
\Omega_{AB} &=  T_{(AB)} \IR{+} \frac{1}{\omega^2} \omega_{[AB]}(\tau,e)\label{redef4} \equiv T_{(AB)} + \frac{1}{\omega^2} \tau_A{}^\mu \tau_B{}^\nu \partial_{[\mu} e_{\nu]} \,\\[.1truecm]
\Omega_{zA} &=  \frac{1}{\omega} \omega_{zA}(\tau, e) \equiv \frac{2}{\omega} e^\mu \tau_A{}^\nu \partial_{[\mu} e_{\nu]} \label{redef5} \,,
\end{align}\noeqref{redef3, redef4}
where  $T^{(AB)}$ are the intrinsic torsion components of Figure \ref{fig:galilp4}. Note that the spin connection components $\omega^{(AB)}$ do not occur in the above expressions, since, according to the previous section, these particular spin connection components are independent and therefore cannot  arise in the limit of a second-order formulation of general relativity where all spin connections have been solved for.

We now continue to compare the second-order actions for  the cases of general $p$-branes, defect branes and domain walls  with the second-order actions that we obtained before when taking the limit of general relativity in a first-order formulation.

\vskip .3truecm

\noindent {\bf $p$-branes.}\ \
Substituting the redefinitions \eqref{redef2}-\eqref{redef5} into the Einstein-Hilbert action \eqref{EH2} and following option 1 above, we redefine $G_{\rm N} = \omega^{p+3}\,G_{\rm NL}$ and obtain the following electric $p$-brane Galilei gravity action:
\begin{equation}\label{S_2}
S_{{\rm electric}\ p{\rm -brane}} =  - \frac{1}{16 \pi G_{\rm NL}} \int  \IR{\frac{e}{4}}\, T_{ab}{}^A T^{ab}{}_A\,.
\end{equation}
Since this action is quadratic in the intrinsic torsion tensor $T_{ab}{}^A$, a general class of solutions to the equations of motion corresponding to this action is given by all manifolds with $T_{ab}{}^A=0$, i.e.~all manifolds that have an integrable foliation by the transverse submanifolds.

The second option is that we redefine $G_{\rm N} = \omega^{p+1}\,G_{\rm NL}$ and tame the leading divergence by performing the
Hubbard-Stratonovich transformation \eqref{trick}, introducing a Lagrange multiplier $\lambda_C{}^{ab}$ imposing the constraint $T_{ab}{}^A=0$. In that case one finds in subleading order, which has now become the leading order, the following magnetic $p$-brane Galilei gravity action:
\begin{equation}
    \begin{aligned}\label{magneticGalilei}
        S_{{\rm magnetic}\ p{\rm -brane}} &= \IR{ - \frac{1}{16 \pi G_{\rm NL}} \int e\,\Biggl[ \omega^{a,bc} (e) \omega_{b,ac}(e) - \omega_b{}^{ba}(e) \omega_c{}^c{}_a(e) + 2T^a{}_A{}^A \omega_b{}^b{}_a}\\
        & \hspace{2.5cm} \IR{- T^a{}_A{}^A T_{aB}{}^B  + T^{a(A,B)}T_{a(A,B)}}\\
        & \hspace{2.5cm} \IR{+ \Big( \lambda^{A,ab} - (\omega^{A,[ba]} + \omega^{[b,a]A})(\tau, e) \Big) T_{ab,A} \Biggr]}.
    \end{aligned}
\end{equation}
Comparing with the previously found second-order $p\,$-brane action, we find, in fact,
\begin{equation}
    \begin{aligned}
        S_{\text{magnetic $p\,$-brane}} &= S_{\text{$p\,$-brane}}^{\text{2nd-order}}\\
        &\hspace{.25cm} \IR{- \frac{1}{16\pi G_{NL}}\int e \Biggl[ T_a{}^{\{BC\}} T^a{}_{\{BC\}} + \frac{(D-2)}{(p+1)(D-p-3)} T^a{}_A{}^A T_{aB}{}^B\Biggr]}\,,
    \end{aligned}
\end{equation}
where it is understood that in the action $S^{\rm 2nd-order}_{p\rm -brane}$, see Equation~\eqref{NRaction1secondorder}, the term  $T_{ab}{}^C\omega_C{}^{ab}$ has been replaced by
\begin{equation}
T_{ab}{}^C\Big(\lambda_C{}^{ab} + \omega_C{}^{ab}(e)\Big)\,.
\end{equation}
This answers the puzzle that we posed above: the role of the independent spin connection $\omega_C{}^{ab}$ that arises when taking the limit in a first-order formulation is taken over by (a redefinition of) the Lagrange multiplier field $\lambda_C{}^{ab}$ that arises when taking the limit in a second-order formulation after performing  a Hubbard-Stratonovich transformation.
 From the above, we see that
\begin{equation}
S_{{\rm magnetic}\ p{\rm -brane}} \ne S^{\rm 2nd-order}_{p{\rm -brane} }  \,.
\end{equation}
The additional $T_a{}^{(BC)} T^a{}_{(BC)}$ term plays a crucial role when we consider the limit in the case of domain walls, see below. Another consequence of this  term is that it breaks the anisotropic dilatation symmetry \eqref{scale}. Due to this the magnetic $p$-brane Galilei action \eqref{magneticGalilei} is, unlike the second-order $p$-brane Galilei action \eqref{NRaction1secondorder},  a true action without a missing equation of motion. One lesson we draw from this is that taking the limit in a first-order formulation and then going to a second-order formulation is not always the same as first going to a second-order formulation and then taking the limit.

\vskip .3truecm

\noindent {\bf {defect branes}\ $(p=D-3)$.} In the case of defect branes there are only independent spin connections and intrinsic tensors. The action for the electric defect brane is given by the same action \eqref{S_2} as for $p$-branes, \IR{while the action for the magnetic defect brane is given by the action \eqref{magneticGalilei}, with the only difference being that all quadratic spin connection terms vanish identically.} Because there are only independent spin connection components, there is no comparison that can be made here.


\vskip .3truecm

\noindent {\bf{domain walls}\ $(p=D-2)$.}
This case is dual to Carroll gravity, which has been discussed recently as part of a so-called Carroll expansion of general relativity \cite{Hansen:2021fxi}.
The discussion that follows here has some overlap, although in a different language, with that of \cite{Hansen:2021fxi}. A special feature of domain walls is that the intrinsic tensor $T_{ab}{}^A$ is identically zero. Similarly, the whole magnetic $p$-brane Galilean action vanishes except for the $T$-squared terms.
We can again follow options 1 and 2 for this leading term. Following option 1 and
redefining  $G_{\rm N} = \omega^{p+1}G_{\rm NL}$ one finds the following electric invariant:\footnote{Our notation for the domain wall intrinsic tensor components is explained in the caption of Figure \ref{fig:DW}.}
\begin{align}
S_{{\rm electric\ domain\ wall}}^{(1)} =
 - \frac{1}{16 \pi G_{\rm NL}} \int e\, \Biggl[ T^{(AB)} T_{(AB)} \IR{- T_A{}^A T_B{}^B} \Biggr] \,.
\end{align}
 The Carroll gravity dual to this is known in the literature as `electric Carroll gravity' and was first found in \cite{Henneaux:1979vn} in a Hamiltonian formulation, and has recently been re-investigated in \cite{Hansen:2021fxi,Henneaux:2021yzg,Campoleoni:2022ebj,Figueroa-OFarrill:2022pus}. A similar action occurs in \cite{Hartong:2015xda}.
 Again, it is independent of the spin connection and therefore it did not arise in the first-order formulation. Solutions to the equations of motion have been discussed in \cite{Hansen:2021fxi}. Using the geometric interpretation of the constraint $T^{(AB)}=0$, we find that a general solution to the equations of motion corresponding to this invariant is given by all manifolds where the vector $e_z{}^\mu= e^\mu$ is a conformal Killing vector with respect to the longitudinal metric and where the  worldvolume is absolute.

\IR{We should note that this electric limit is not unique, since we can apply the Hubbard-Stratonovich transformation to either $T$-squared term separately, i.e. 
\begin{align}
    -T_A{}^A T_B{}^B &\to -2\omega^{-2}\lambda T_A{}^A + \omega^{-4}\lambda^2,
\end{align}
or
\begin{align}
    T^{(AB)}T_{(AB)} &\to 2\omega^{-2}\lambda^{(AB)}T_{(AB)} - \omega^{-4}\lambda^{(AB)}\lambda_{(AB)}.
\end{align}
These two procedures result in two additional electric domain wall actions,
\begin{align}
    S^{(2)}_{\text{electric domain wall}} &= -\frac{1}{16\pi G_{NL}}\int eT^{(AB)}T_{(AB)}
\end{align}
and
\begin{align}
    S^{(3)}_{\text{electric domain wall}} &= -\frac{1}{16\pi G_{NL}}\int -e T_A{}^A T_B{}^B
\end{align}
respectively.}

Following option 2 we redefine $G_{\rm N} = \omega^{p-1}G_{\rm NL}$ and tame the leading divergence by performing the Hubbard-Stratonovich transformation
\eqref{trick} \IR{on both terms simultaneously, introducing a Lagrange multiplier $\lambda_{(AB)}$, that imposes the constraint $T^{(AB)}=0$}. Substituting the redefinitions \eqref{redef2}-\eqref{redef5} for $p=D-2$ into the Einstein-Hilbert action \eqref{EH2},
one ends up  with the action
\begin{align}\label{NRaction2secondorder}
S_{\rm magnetic\ domain\ wall} & = S^{\rm 2nd-order}_{{\rm domain\ wall}}\,,
\end{align}
where it is understood that $S^{\rm 2nd-order}_{{\rm domain\ wall}}$ is the action given in Equation~\eqref{NRaction2ndorder}, but with the terms $-T_C{}^C \omega_A{}^A + T^{(AB)}\omega_{(AB)}$ replaced by
\begin{equation}
T^{(AB)}\, \lambda_{(AB)}\,.
\end{equation}
The Carroll gravity dual to the domain wall Galilei gravity with a magnetic invariant \eqref{NRaction2secondorder} is known in the literature as `magnetic Carroll gravity'. It was obtained in a Hamiltonian formulation in \cite{Henneaux:2021yzg} and has recently been discussed in \cite{Campoleoni:2022ebj,Hansen:2021fxi,deBoer:2023fnj}.


\section{Conclusions} \label{sec:conclusions}

In this paper we have classified $p$-brane Galilean and Carrollian
geometries via the intrinsic torsion of the corresponding
$G$-structures. We presented our results both in the Cartan-theoretic
language of contemporary mathematics and in a formulation much more
familiar to theoretical physicists. We hope that we contributed in
this way to an improved communication between these two communities.

Our classification is incomplete for the cases of stringy
Galilean and Carrollian geometries, since some of the representations
appearing in the space of intrinsic torsions are further reducible: in
the stringy case (and independent on dimension), one can use a lightcone
frame for the two longitudinal directions \cite{Bergshoeff:2021tfn},
and these transform independently under boosts. We hope to extend our
classification to these cases as well elsewhere.

To extend the $p$-brane Galilean geometry studied in this work to a
$p$-brane Newton-Cartan geometry underlying non-relativistic
$p$-branes the following three ingredients need to be added:
\begin{enumerate}
\item The geometry underlying non-relativistic $p$-branes contains an
  additional $(p+1)$-form $b_{\mu_1 \cdots \mu_{p+1}}$, whose
  mathematical description seems to require, according to some of the
  literature, the introduction of the notion of gerbes. It transforms
  under $p$-brane boost transformations and plays an important role in
  describing the geometry.\,\footnote{In the case of membranes in D=11
    dimensions, one needs to add a 3-form $c_{\mu\nu\rho}$ with a
    4-form curvature which allows a self-duality condition with
    respect to the transverse rotation group SO(8) of the type
    discussed above.}
\item The frame fields of $p$-brane  Newton-Cartan geometry transform
  under an emergent anisotropic local scale symmetry
  \cite{Bergshoeff:2021bmc}, like the one we found for $p$-brane
  Galilean gravity, see Equation~\eqref{scale}. This extra gauge symmetry
  requires an additional dilatation gauge field $b_\mu$ beyond the
  spin connection. It has the effect that one should define new
  dilatation-covariant tensor components
  \begin{equation}
    \tilde T{}_a{}^A{}_A = T_a{}^A{}_A  - b_a\,.
  \end{equation}
This has the effect that the new torsion tensor components $\tilde
T{}_a{}^A{}_A$ are conventional instead of intrinsic. Setting these
components to zero leads to an equation that  can be used to solve for
the transverse components $b_a$ of the dilatation gauge field in terms
of $T_a{}^A{}_A$.
\item In the presence of supersymmetry, the $p$-brane Newton-Cartan
  geometry  needs to be embedded into a so-called supergeometry which
  also contains fermionic intrinsic torsion tensor components. To
  enable this embedding, we need to impose constraints on some of the
  intrinsic torsion tensor components.
\end{enumerate}

Writing Carroll geometry as a special domain-wall case of $p$-brane
Galilean geometry allows one to export techniques from Galilean to
Carroll geometry. An  example of this is the construction of a Carroll
fermion as the limit of a relativistic fermion. Carroll fermions have
been discussed in \cite{Bagchi:2022owq,Bagchi:2022eui}. In the
Galilean case, to obtain a domain-wall Galilean fermion from a
relativistic fermion $\psi$, one first redefines \cite{Gomis:2004pw}
\begin{equation}\label{redefpsi}
  \psi = \sqrt{\omega}\, \psi_+ + \frac{1}{\sqrt{\omega}}\, \psi_-\,,
\end{equation}
where the projected spinors are defined by
\begin{equation}
  \psi_{\pm} = \frac{1}{2}\big(1 \pm \Gamma_{01\cdots D-2}\big)\psi,
\end{equation}
and then one takes the limit $\omega\to\infty$.  Using the
duality between Galilean and Carroll geometry given in
Equation~\eqref{duality1}, this suggests that one can obtain a Carroll
fermion from a relativistic fermion $\psi$ by redefining $\psi$ as in
\eqref{redefpsi}, with the projected spinors given by
\begin{equation}
  \psi_{\pm} = \frac{1}{2}\big(1 \pm \Gamma_{12\cdots D-1}\big)\psi\,,
\end{equation}
where $12\cdots D-1$ refer to the $D-1$ directions transverse to the
Carroll particle. It would be interesting to compare these Carroll
fermions with the ones introduced in
\cite{Bagchi:2022owq,Bagchi:2022eui}.

It would be interesting to consider a Carroll limit of general
relativity in $D$ spacetime dimensions \`a la Newton--Cartan using the
third option discussed in the last section by writing it as a Galilean
domain wall. This would require the addition of a $(D-1)$-form field
that is dual to a cosmological constant. Such a cosmological constant
arises in the 10$D$ massive Romans supergravity theory
\cite{Romans:1985tz} that has been formulated in terms of a 9-form
potential \cite{Bergshoeff:1996ui}. This would suggest a limit defined
by the redefinition
\begin{equation}
  C_{\mu_1\cdots \mu_9} = \epsilon_{A_1 \cdots A_9} \tau_{\mu_1}{}^{A_1}\cdots \tau_{\mu_9}{}^{A_9} + c_{\mu_1\cdots \mu_9}\,.
\end{equation}
It would be interesting to see whether the Carroll theory dual to this could
lead to a 10$D$ Carrollian supergravity theory.

Finally, we have only considered the (intrinsic) torsion of the
$G$-structures and, as has been amply demonstrated in
(pseudo-)Riemannian geometry, the classification of holonomy groups of adapted
connections becomes a very convenient further organising principle for
the geometries.  Not much is known beyond the strictly Riemannian and
Lorentzian geometries when it comes to the possible holonomy
representations on a pseudo-Riemannian manifold.  The problem in the
case of geometries with nontrivial intrinsic torsion seems to be
harder still, but one worthwhile of investigation.  Is there a
Berger-like classification for adapted connections to a
Galilean or Carrollian structure?


\section*{Acknowledgments}
\label{sec:acknowledgments}

This work was started during a visit of JMF to Groningen in December
2022 and he would like to thank EAB for the invitation and, as well as
the other Groningen-based members of this collaboration, for the
hospitality.

KVH is funded by the Fundamentals of the Universe program at the
University of Groningen.


%

\printbibliography[heading=bibintoc]

\end{document}